
\documentclass[11pt,a4paper]{article}
\usepackage{amssymb}
\usepackage{graphicx}
\usepackage{amsmath}
\usepackage{makeidx}
\usepackage{indentfirst}
\usepackage[T1]{fontenc}
\usepackage[utf8]{inputenc}

\setcounter{MaxMatrixCols}{10}

\textheight=25.0cm
\textwidth=18.5cm
\topmargin- 20mm
\oddsidemargin=-10mm
\evensidemargin=-10mm
\newtheorem{assumption}{Assumption}[section]

\newcounter{assumptionnum}[section]
\setcounter{assumptionnum}{0}

\newcounter{resultnum}[section]
\setcounter{resultnum}{0}
\newtheorem{consequence}{Consequence}[section]

\newcounter{consequencenum}[section]
\setcounter{consequencenum}{0}
\newtheorem{conclusion}{Conclusion}[section]

\newcounter{conclusionnum}[section]
\setcounter{conclusionnum}{0}

\newcounter{conditionnum}[section]
\setcounter{conditionnum}{0}

\newcounter{conjecturenum}[section]
\setcounter{conjecturenum}{0}
\newtheorem{example}{Example}[section]

\newcounter{examplenum}[section]
\setcounter{examplenum}{0}

\newcounter{exercisenum}[section]
\setcounter{exercisenum}{0}
\newtheorem{lemma}{Lemma}[section]

\newcounter{lemmanum}[section]
\setcounter{lemmanum}{0}

\newcounter{notationnum}[section]
\setcounter{notationnum}{0}
\newtheorem{theorem}{Theorem}[section]

\newcounter{theoremnum}[section]
\setcounter{theoremnum}{0}
\newtheorem{definition}{Definition}[section]

\newcounter{definitionnum}[section]
\setcounter{definitionnum}{0}
\newtheorem{corollary}{Corollary}[section]

\newcounter{corollarynum}[section]
\setcounter{corollarynum}{0}
\newtheorem{remark}{Remark}[section]

\newcounter{remarknum}[section]
\setcounter{remarknum}{0}

\newcounter{propositionnum}[section]
\setcounter{propositionnum}{0}

\newcounter{acknowledgementnum}[section]
\setcounter{acknowledgementnum}{0}

\newcounter{algorithmnum}[section]
\setcounter{algorithmnum}{0}

\newcounter{axiomnum}[section]
\setcounter{axiomnum}{0}

\newcounter{casenum}[section]
\setcounter{casenum}{0}

\newcounter{claimnum}[section]
\setcounter{claimnum}{0}

\newcounter{conseqnum}[section]
\setcounter{conseqnum}{0}
\newtheorem{convention}{Convention}[section]

\newcounter{conventionnum}[section]
\setcounter{conventionnum}{0}

\newcounter{summarynum}[section]
\setcounter{summarynum}{0}
\newtheorem{principle}{Principle}[section]

\newcounter{principlenum}[section]
\setcounter{principlenum}{0}

\newcounter{problemnum}[section]
\setcounter{problemnum}{0}
\newenvironment{proof}[1][]{\textbf{Proof.} }{}

\begin{document}

\title{Quasi--Stationary Solutions in Gravity Theories with Modified
Dispersion Relations and Finsler-Lagrange-Hamilton Geometry\\
{\quad}}
\date{January 30, 2020}
\author{{\Large \vspace{.1 in} {\ \textbf{\ Lauren\c{t}iu Bubuianu}}\thanks{%
email: laurentiu.bubuianu@tvr.ro }} \\
{\textit{TVR Ia\c{s}i, \ 33 Lasc\v{a}r Catargi street; and University
Apollonia, 2 Muzicii street, 700107 Ia\c{s}i, Romania }}\\
${\ \quad }$ \\
and \\
${\ \quad }$ \\
{\Large \vspace{.1 in} \textbf{\ Sergiu I. Vacaru}\thanks{%
emails: sergiu.vacaru@gmail.com and sergiuvacaru@mail.fresnostate.edu ;
 \newline
 \textit{Address for post correspondence in 2019-2020 as a visitor senior researcher at YF CNU Ukraine is:\ } 37 Yu. Gagarin street, ap. 3, Chernivtsi, Ukraine, 58008}}\\
{\small \textit{Physics Department, California State University at Fresno, Fresno, CA 93740, USA;}} \\
 {\small \textit{Yuriy Fedkovych Chernivtsi National University, Institute of Applied-Physics and Computer Sciences,}}\\
 {\small \textit{Depart. Theoretical Physics and Computer Modelling, 101  Storozhynetska street,
   Chernivtsi, 58029, Ukraine}}
}
\maketitle

\begin{abstract}
Modified gravity theories, MGTs, with modified (nonlinear) dispersion
relations, MDRs, encode via indicator functionals possible modifications and
effects of quantum gravity; in string/brane, noncommutative and/or
nonassociative gravity theories etc. MDRs can be with global and/or local
Lorentz invariance violations, LIVs, determined by quantum fluctuations,
random, kinetic, statistical and/or thermodynamical processes etc. Such MGTs
with MDRs and corresponding models of locally anisotropic spacetime and
curved phase spaces can be geometrized in an axiomatic form for theories
constructed on (co) tangent bundles with base spacetime Lorentz manifolds.
In certain canonical nonholonomic variables, the geometric/physical objects
are defined equivalently as in generalized Einstein-Finsler and/or
Lagrange-Hamilton spaces. In such Finsler like MGTs, the coefficients of
metrics and connections depend both on local Lorentz spacetime coordinates
and, additionally, on (co) fiber velocity and/or momentum type variables.
The main goal of this work is to elaborate on a nonholonomic diadic "shell
by shell" formulation of MGTs with MDRs, with a conventional (2+2)+(2+2)
splitting of total phase space dimensions, when the (dual) modified
Einstein-Hamilton equations can be decoupled in general forms. We show how
this geometric formalism allows us to construct various classes of exact and
parametric solutions determined by generating and integration functions and
effective sources depending, in principle, on all phase space coordinates.
There are derived certain most general and important formulas for nonlinear
quadratic elements and studied the main geometric and physical properties of
quasi-stationary generic off-diagonal and diagonalizable phase spaces. This
work provides a self-consistent geometric and analytic method for
constructing in our further partner papers different types of black hole
solutions for theories with MDRs and LIVs and elaborating various
applications in modern cosmology and astrophysics.

\vskip4pt

\textbf{Keywords:} Modified dispersion relations; Lorentz invariance
violation; generalized Einstein-Finsler gravity; Lagrange-Hamilton geometry;
diadic phase space geometry; exact solutions in modified gravity; black
holes with modified dispersion relations.

\vskip4pt

PACS:\ 04.90.+e, 04.50.-h, 04.20.Jb, 02.40.-k

\vskip4pt

MSC2010:\ 83D05, 83D99, 83C15, 83C57, 53B50, 53B40, 53B30, 53B15, 53C07,
53C50, 53C60, 53C80
\end{abstract}

\tableofcontents


\section{Introduction}

Modified gravity theories, MGTs, with nonlinear (i.e modified) dispersion
relations, MDRs, have played an increasingly important role for research in
modern particle physics, classical and quantum gravity, QG, accelerating
cosmology and astrophysics. Geometric methods, axiomatic approaches, various
applications and important developments are summarized and discussed in a
series of our recent works \cite%
{v18a,gvvepjc14,bubuianucqg17,vbubuianu17,vacaruplb16,vacaruepjc17}. We also
cite \cite{capoz,nojod1,clifton}, for reviews on MGTs and applications, and
\cite%
{vap97,vnp97,amelino98,vapny01,castro07,mavromatos11,mavromatos13a,kostelecky11,kostelecky16,vplb10,basilakos13}%
, for various theoretical and phenomenological works involving possible
(local) Lorentz invariance violations, LIVs.

In \cite{v18a}, we provided a comprehensive geometric study of physical
principles and axiomatic approaches to MGTs with MDRs and LIVs. We concluded
that such theories can be formulated as self-consistent causal extensions of
general relativity (GR, i.e. the Einstein gravity theory) if the geometric
constructions are extended on tangent and cotangent Lorentz bundles, $TV$
and $T^{\ast }V.$ Such relativistic phase space models are elaborated on a
base spacetime, considered as a four dimensional, 4-d, Lorentz manifold $V$
enabled with a pseudo-Riemannian metric structure $g_{ij}(x^{k}),$ with $%
i,j,k,...=1,2,3,4;$ for instance, of local signature $(+,+,+,-).$ A metric
tensor in GR, $g_{ij}(x^{k})$, is defined by a solution of standard Einstein
equations.\footnote{%
In a more general approach, we can consider on $V$ other type MGTs, for
instance, models with Lagrange density determined by a functional $f$ of a
Ricci scalar, $f(R).$} On total (phase) spaces, we suppose that the metric
structures $g_{\alpha \beta }(x^{k},v^{a}),$ or $g_{\alpha \beta
}(x^{k},p_{a}),$ is determined by MDRs in some forms depending additionally
on velocity/momentum like (co) fiber coordinates, when indices $\alpha
=(i,a),\beta =(j,b)$ etc. run values $1,2,...,7,8.$ Such locally anisotropic
metrics (in our geometric constructions, we shall use also certain classes
of (non) linear connections) can be derived naturally for various
semi-classical commutative and/or noncommutative models of MGTs and/or QG
theories, when the MDRs can be written locally in a general form
\begin{equation}
c^{2}\overrightarrow{\mathbf{p}}^{2}-E^{2}+c^{4}m^{2}=\varpi (E,%
\overrightarrow{\mathbf{p}},m;\ell _{P}).  \label{mdrg}
\end{equation}%
In this formula, an \textbf{indicator} of modifications $\varpi (...)$
encodes possible contributions of modified/generalized physical theories
with LIVs etc. In general, an indicator $\varpi (x^{i},E,\overrightarrow{%
\mathbf{p}},m;\ell _{P})$ depends on spacetime coordinates $x^{i}$ on base
spacetime manifold and can be computed, chosen and studied following
theoretical and/or phenomenological arguments and experimental/ observations
data.\footnote{%
For $\varpi =0,$ the formula (\ref{mdrg}) transforms into a standard
quadratic dispersion relation for a relativistic point particle, or
perturbations of a scalar field, with (effective) mass $m$, energy $E$, and
momentum $p_{\acute{\imath}}$ (for $\acute{\imath}=1,2,3$) propagating in a
four dimensional, 4-d, flat Minkowski spacetime. Here we note that certain
modifications of the special relativity theory, SRT, and GR could be
consequences of some (deformed) modified symmetries, for instance, with
(non) commutative deformed Poincar\'{e} transforms, quantum groups and
interactions, LIVs etc. Such values may involve, for instance, effective
energy-momentum variables $p_{a}=(p_{\acute{\imath}},p_{4}=E),%
\overrightarrow{\mathbf{p}}=\{p_{\acute{\imath}}\},$ (for $a=5,6,7,8$), at
the Planck scale $\ell _{p}:=\sqrt{\hbar G/c^{3}}\sim 10^{-33}cm,$ or other
scale parameters. In this work, we fix the light velocity, $c=1,$ for a
respective system of physical units.}

In corresponding nonholonomic variables, MGTs with MDRs can be elaborated
and studied as certain Finsler-Lagrange-Hamilton gravity theories modelled
on $TV$ and/or $T^{\ast}V.$ For well-defined conditions, such generalized
non-Riemannian geometries can be constructed in certain equivalent forms
when the fundamental geometric objects and effective gravitational and
matter field equations are naturally determined by $\varpi
(x^{i},p_{a},...). $ In \cite{v18a}, we proved that modified
Einstein-Finsler/-Lagrange/-Hamilton equations can derived following
geometric and/or variational principles, when the geometric and analytic
calculi are adapted to nonlinear connection structures. For MGTs elaborated
on $TV$, higher order generalizations and extra dimension nonholonomic
manifolds\footnote{%
equivalently, such spaces are called anholonomic, i.e. enabled with
non-integrable distributions of geometric objects and local frames; in this
work, we elaborate on geometric constructions which are adapted to nonlinear
connection, N-connection, structures}, the modified Einstein-Finsler
equations can be solved in very general forms following the anholonomic
frame deformation method, AFDM. We cite \cite%
{gvvepjc14,bubuianucqg17,vbubuianu17,vacaruplb16,vacaruepjc17} and
references therein for details and various examples of locally anisotropic
black hole and cosmological solutions. Our program for research on classical
and quantum MGTs with MDRs involves extensions of the AFDM for gravitational
and matter field theories and geometric evolution equations formulated on
cotangent bundles $T^{\ast }V.$ Such constructions include models with
generalized Einstein-Hamilton spaces and their supersymmetric/
noncommutative/ nonassocitative and other type extensions. Various classes
of exact and parametric solutions which can be constructed by geometric and
analytic methods depend, in general, on all spacetime and phase space
variables $(x^{k},p_{a}).$

The goal of this work is to elaborate on the nonholonomic diadic geometric
formulation\footnote{%
"diad" is from a corresponding Latin word; in our approach, it means a
splitting of coordinates and indices of physical variables into certain
conventional shells of dimension 2 when the total dimension is of type
2+2+2..., or 3+2+2+...; when in certain adapted frames the components of
geometric objects of higher dimension depend only on components of the same
and lower dimension} of MGTs with MDRs and extensions of the GR on (co)
tangent Lorentz bundles. This formalism allow us to develop the AFDM for
constructing exact solutions of physically important systems of nonlinear
partial differential equations, PDEs. We shall consider such conventional
diadic splitting: (2+2) for a base spacetime Lorentz manifold $V$; and
(2+2)+(2+2) decompositions on $T^{\ast }V $, when the total dimension $\dim
(T^{\ast }V)=8$. In correspondingly defined diadic nonholonomic variables,
the modified Einstein-Hamilton equations can be decoupled and integrated in
very general forms. In general, it is possible to construct in explicit form
various classes of physically important generic off-diagonal solutions
determined by generating and integration functions depending, in principle,
on all phase space variables. Such methods and solutions will be applied in
our further works on MGTs and applications in modern cosmology and
astrophysics.

We suppose that readers are familiar with standard results on mathematical
relativity and particle physics (including the geometry of vector bundles
and metric and connection structures) summarized in typical monographs on
the Einstein gravity theory \cite{hawrking73,misner73,wald82} and a
collection of most important exact solutions in GR theory \cite{kramer03}.
One can be consulted also some standard monographs on (generalized) Finsler
geometry and nonholonomic manifolds \cite%
{cartan35,rund59,asanov85,matsumoto86,bejancu90,bao00,bejancu03}. Such works
were written by mathematicians or physicists who devoted their research to
Finsler spaces and applications involving in the bulk local Euclidean
signatures and the Riemann-Finsler geometry. Those works have not discussed
possible generalizations of the Einstein gravity with Finsler like
modifications following modern MGTs approaches for metric--affine gravity,
elaborating classical and quantum (for various noncommutative and
supersymmetric generalizations) models with MDRs, applications in
accelerating cosmology etc. Physicists and mathematicians interested in
research in such recent directions of mathematical physics and geometric
methods in physics may consider the results, methods and critics summarized
in monograph \cite{vmon06} and a recent unconventional review \cite{v18a},
on the axiomatic approaches and historical remarks on relativistic
Finsler-Lagrange-Hamilton spaces and applications.

This article can be considered as a second partner of \cite{v18a} when the
axiomatic approach to MGTs with MDRs is completed with new geometric methods
for constructing exact solutions in such theories with additional momentum
like variables. We prove that there are general decoupling and integration
properties of fundamental field equations on cotangent Lorentz bundles, and
related Einstein-Hamilton gravity theories. Such geometric methods and
applications were elaborated for nonholonomic manifolds in (super) string
and/or noncommutative gravity theories and certain tangent bundle
(generalized Finsler-Lagrange spaces) models, see \cite%
{gvvepjc14,bubuianucqg17,vbubuianu17,vacaruplb16,vacaruepjc17}. In this
work, we concentrate on a comprehensive study of quasi-stationary exact and
parametric solutions when in certain systems of reference the coefficients
of fundamental geometric objects (metrics, connections etc.) do not depend
on time like coordinates (at least for base manifold projections) and
possess a Killing symmetry, for instance, on a momentum-like variable.%
\footnote{%
in principle, using the AFDM, we can generate more general classes of
solutions without Killing symmetries and dependence on all phase space and
spacetime coordinates but such constructions are more cumbersome and it is
not clear what physical importance may have such "very general" nonholonomic
phase spaces} It is shown how exact (non) vacuum off-diagonal solutions with
matter field and effective sources and conventional cosmological constants
can be constructed in general forms. We state certain conditions for
extracting Levi-Civita, LC, configurations with zero torsion, and analyze
corresponding nonlinear symmetries relating different classes of metrics and
connections with diadic structure. It should be noted that the material is
arranged in a form which is typical for "geometry and physics" articles with
definitions, corollaries, theorems etc. Certain proofs are sketched in the
main part or provided in appendices (or with references to previous works
where similar constructions are provided). Explicit applications in modern
cosmology and astrophysics (for instance, for constructing cosmological
scenarios with inflation and acceleration, and black hole solutions,
involving momentum like variables) with be provided in our further works,
see also further references and discussions.

This paper is organized as follows:\ In section \ref{s2}, we elaborate on a
nonholonomic diadic geometric formulation of gravity theories with MDRs and
LIVs modelled on (co) tangent Lorentz bundles. There are defined
corresponding (non) linear connection and metric structures and computed
curvature, torsion and nonmetricity tensors (and related Ricci and Einstein
tensors, scalar curvatures). We prove that there are such nonholonomic
variables when the geometry of such curved phase spaces can be equivalently
modelled as pseudo (with local pseudo-Euclidean signature) Lagrange-Hamilton
(in particular, Finsler and their cotangent dual) spaces. 

In section \ref{s3}, there are defined in adapted form the Lagrange
densities for gravitational and scalar matter fields with locally
anisotropic phase space interactions on cotangent Lorentz bundles. We prove
that modified Einstein equations with MDRs can be formulated as equivalent
generalized Einstein-Lagrange/ -Hamilton equations. Such formulas can be
derived in equivalent N-adapted variational and abstract geometric forms,
see related details in \cite{v18a}. We consider canonical formulations with
nonholonomic 4+4 and diadic (2+2)+(2+2) splitting of locally anisotropic
spacetime total phase space gravitational and scalar field moving equations.

Section \ref{s4} contains further developments of the AFDM: There are
formulated some important theorems on general decoupling and integrability
of generalized Einstein equations with MDR and their equivalent nonlinear
dynamical equations for modified Einstein-Hamilton systems. We prove that
corresponding systems of nonlinear PDEs split into coupled "shell by shell"
two dimensional systems of nonlinear equations on base and conventional
fiber spaces. For such splitting, such nonlinear systems with decoupling can
be solved in explicit general form for generic off-diagonal ansatz with
certain spacetime and phase space Killing symmetries for gravitational and
(effective) matter field sources. We define an important class of nonlinear
symmetries relating generation functions to generating (effective) sources
and shell cosmological constants.

Section \ref{s5} is devoted to a study of nonholonomic deformations of phase
space metrics into exact and parametric solutions. There are studied
diagonal phase space configurations; (off-) diagonal phase space vacuum
solutions, and how such locally anisotropic solutions can model explicit
examples of Einstein-Hamilton spaces and Finsler-Lagrange geometry.

A summary of main results and final remarks are presented in section \ref{s6}%
.

Finally, we also note that the proofs of theorems are sketched in some forms
being accessible both for researchers on mathematical physics and
phenomenology of particle physics and gravity. It is preferred the so-called
abstract geometric method and emphasized the possibility of alternative
adapted variational proofs. In Appendix \ref{appendixa}, there are stated
certain important N-adapted coefficient formulas which are necessary for
proofs of decoupling properties and integrability of physically important
systems of nonlinear PDEs. In some subsections, the geometric and analytic
computations are provided in detailed forms in order to show such methods
can be applied in MGTs. For convenience, we provide in Appendix \ref%
{appendixb} a brief summary on the geometry of relativistic
Finsler-Lagrange-Hamilton spaces formulated in canonical (so-called "tilde")
variables and show how such geometric structures are naturally determined by
indicators of MDRs and LIVs.

\section{Nonholonomic Diadic Geometry of (co) Tangent Lorentz Bundles}

\label{s2} This section contains an introduction to the geometry of
nonholonomic diadic structures with conventional splitting of dimensions
(2+2)+(2+2) on (co) tangent Lorentz bundles. Such a formalism is important
for constructing exact solutions and quantizing various modifications of the
Einstein gravity theory and related Lagrange-Hamilton gravity theories
canonically defined by MDRs, see reviews of relevant former results, methods
and applications in \cite%
{v18a,gvvepjc14,bubuianucqg17,vbubuianu17,vacaruplb16,vacaruepjc17}. We
elaborate both a coordinate free formulation adapted coefficient formalism
and provide necessary coordinate formulas in Appendix \ref{appendixa}. Basic
concepts on relativistic Finsler-Lagrange-Hamilton are outlined in Appendix %
\ref{appendixb}.

\subsection{Geometric preliminaries}

\label{ssassumpts} Standard theories of particle physics based on special
relativity, SR, and general relativity, GR, principles are elaborated for
(correspondingly) a four dimensional, 4-d, Lorentz spacetime manifold $V$
and respective (co) tangent bundles, $TV$ and/or $T^{\ast }V.$ The standard
curved spacetime geometry is stated following this

\begin{assumption}
\label{assumptqelorentz} \textsf{[standard quadratic line elements on total
spaces of (co) tangent Lorentz bundles with flat (co) fiber metrics] } The
corresponding quadratic elements determined by flat (co) fiber metrics but
curved base spacetime and total phase space metrics with signature $%
(+++-;+++-)$, can be written in local coordinate form as
\begin{eqnarray}
ds^{2} &=&g_{\alpha \beta }(x^{k})du^{\alpha }du^{\beta
}=g_{ij}(x^{k})dx^{i}dx^{j}+\eta _{ab}dy^{a}dy^{b},\mbox{ for }y^{a}\sim
dx^{a}/d\tau ;\mbox{ and/ or }  \label{lqe} \\
d\ ^{\shortmid }s^{2} &=&\ ^{\shortmid }g_{\alpha \beta }(x^{k})d\
^{\shortmid }u^{\alpha }d\ ^{\shortmid }u^{\beta
}=g_{ij}(x^{k})dx^{i}dx^{j}+\eta ^{ab}dp_{a}dp_{b},\mbox{ for }p_{a}\sim
dx_{a}/d\tau .  \label{lqed}
\end{eqnarray}
\end{assumption}

In above formulas, we can consider $g=\{g_{ij}(x^{k})\}$ as a solution of
the Einstein equations for the Levi-Civita connection $\nabla .$ We note
that even in the GR theory we work on higher order (co) tangent Lorentz
bundles $TV,T^{\ast }V,TTV,TT^{\ast }V$ etc. and not only on the base
Lorentz manifold $V$. This is because the gravitational and matter field
equations contain at least second derivatives of metrics, connections,
tensor and spinor fields. Nevertheless, the solutions for such fields can be
always parameterized in certain forms depending only on spacetime
coordinates $x=\{x^{k}\}$ (for instance, $g_{ij}(x),\Gamma
_{jk}^{i}(x),A_{k}(x)$) if the extensions on (co) fiber spaces is performed
with the flat Minkowski metric. This holds for a zero MDR indicator $\varpi $
(\ref{mdrg}) and/or in any fixed point on a (co) fiber space, $%
v^{a}=v_{(0)}^{a}$ (or $p_{a}=p_{a}^{(0)}),$ when normal (co) fiber
coordinates are considered in vicinity of such points. That why we consider
that in (\ref{lqe}) and (\ref{lqed}), respectively, the (co) vertical metric
$\eta _{ab}$ and its dual $\eta ^{ab}$ are standard Minkowski metrics, $\eta
_{ab}=diag[1,1,1,-1].$ The metric structures on tangent and cotangent
bundles enabled with above type quadratic elements can be parameterized
equivalently by the same h-components of $g_{\alpha \beta }(x^{k})$ when $\
^{\shortmid }g_{ij}(x^{k})=g_{ij}(x^{k}).$ Relativistic curves $x^{a}(\tau )$
on $V$ are parameterized by a positive parameter $\tau .$ In principle, we
consider negative values of parameters for curves but we follow the
convention that parameterizations are introduced in such ways when at least
locally, in vicinity of a point of a Lorentz manifold, the relativistic
curves can be defined in a causal form with $ds=cd\tau $ for $\tau $ being a
relativistic time like parameter taken to be positive.\footnote{%
Index and coordinate conventions for geometric objects:
\par
\begin{itemize}
\item \textbf{\ 8-d and 4-d indices:}\ For local coordinates on a tangent
bundle $TV,$ we shall write \ $u^{\alpha }=(x^{i},v^{a}),$ (or in brief, $%
u=(x,v)),$ when indices $i,j,k,...=1,2,3,4$ and $a,b,c,...=5,6,7,8;$ when
for cumulative indices $\alpha ,\beta ,...=1,2,...8.$ Similarly, on a
cotangent bundle $T^{\ast }V,$ we write $\ ^{\shortmid }u^{\alpha
}=(x^{i},p_{a}),$ (or in brief, $\ ^{\shortmid }u=(x,p)),$ where $%
x=\{x^{i}\} $ are considered as coordinated for a base Lorentz manifold $V$
(as in GR). The coordinate $x^{4}=t$ is considered as time like one and $%
p_{8}=E$ is an energy type one. If necessary, we shall work with 3+1
decompositions when, for instance, $x^{\grave{\imath}},$ for $\grave{\imath}%
=1,2,3,$ are used for space coordinates; and $p_{\grave{a}},$ for $\grave{a}%
=5,6,7,$ are used for momentum like coordinates.
\par
\item \textbf{Diadic indices:} conventional (2+2)+(2+2) splitting of indices
are labeled following such rules: $\alpha _{1}=(i_{1}),$ $\alpha
_{2}=(i_{1},i_{2}=a_{2}),\beta _{2}=(j_{1},j_{2}=b_{2});\alpha
_{3}=(i_{3},a_{3}),\beta _{3}=(j_{3},b_{3}),...;\alpha
_{4}=(i_{4},a_{4}),\beta _{4}=(j_{4},b_{4}),$ for $i_{1},j_{1}=1,2;$ $%
i_{3},j_{3}=1,2,3,4;$ $i_{4},j_{4}=1,2,3,4,5,6;$ and $%
a_{2},b_{2}=3,4;a=(a_{3},a_{4}),b=(b_{3},b_{4}),$ for $a_{3},b_{3}=5,6$ and $%
a_{4},b_{4}=7,8,$ etc.; such a splitting adapted to the splitting of $h$%
-space into 2-d horizontal and vertical subspaces, $\ \ _{1}h$ and $\ \
_{2}v,$ of (co) vertical spaces $v$ into $\ ^{3}v$ and $\ ^{4}v$ into
conventional four 2-d shells labeled with lef up or low abstract indices
like $\ ^{s}v,$ or $\alpha _{s}=(i_{s},a_{s})$ for $s=1,2,3,4$ referring to
ordered shells. Hereafter, we shall put shall labels on the left up, or left
low, on convenience. Right indices will get additional labels 1,2,3,4 if it
will be necessary. It is convenient to contract indices using $\alpha
_{2}=(i_{1},a_{2})=1,2,3,4;\alpha _{3}=(\alpha
_{2},a_{3})=1,2,3,4,5,6;\alpha =\alpha _{4}=(\alpha _{3},a_{4})=1,2,...,8.$
In diadic form, the coordinates will split as $%
x^{i}=(x^{i_{1}},y^{a_{2}}),v^{a}=(v^{a_{3}},v^{a_{4}});p_{a}=(p_{a_{3}},p_{a_{4}}).
$ We shall write $\ _{s}u=\{$ $u^{\alpha _{s}}=(x^{i_{s}},v^{a_{s}})\}$ and $%
\ _{s}^{\shortmid }u=\{\ ^{\shortmid }u^{\alpha
_{s}}=(x^{i_{s}},p_{a_{s}})\} $ for cumulative indices on corresponding $s$%
-shell.
\end{itemize}
}

The main consequence of Assumption \ref{assumptqelorentz} is that the
geometric and physical objects in standard physical theories (like GR and
particle physics) depend only on base spacetime coordinates and do not
involve, for instance, metrics and connections depending on velocity/
momentum variables. Even for certain constructions (for instance, for
elaborating relativistic kinetic theories) there are considered phase spaces
with effective velocity/momentum variables, one can be defined always
certain canonical lifts to total spaces of standard pseudo-Riemannian
metrics and Levi-Civita connections (metric compatible and with zero
torsion) depending only on base Lorentz manifold spacetime coordinates. For
general frame/coordinate transforms on total Lorentz (co) bundle spaces, the
total and (co) fiber metrics (\ref{lqe}) and (\ref{lqed}) transform
respectively into Sasaky type ones with coefficients depending, in
principle, on all phase space coordinates $(x,v)$ and/or $(x,p),$ see
formulas (\ref{dmt})-(\ref{dmcts}) and discussions therein.

\begin{definition}
\textsf{[Nonlinear connections and nonholonomic h-v, h-cv and diadic
splitting] } \label{defnc}Nonlinear connection, N--connection, structures
for $TV,$ or $T^{\ast }V,$ are defined by a Whitney sum $\oplus $ of
conventional $h$- and $v$--distributions, or $h$ and $cv$--distributions,
when
\begin{equation}
\mathbf{N}:T\mathbf{TV}=hTV\oplus vTV\mbox{ or }\ \ ^{\shortmid }\mathbf{N}:T%
\mathbf{T}^{\ast }\mathbf{V}=hT^{\ast }V\oplus vT^{\ast }V.  \label{ncon}
\end{equation}%
A (2+2)+(2+2) splitting is stated by for respective N--connections when%
\begin{eqnarray}
\ _{s}\mathbf{N}:\ _{s}T\mathbf{TV} &=&\ ^{1}hTV\oplus \ ^{2}hTV\oplus \
^{3}vTV\oplus \ ^{4}vTV\mbox{ and }  \notag \\
\ _{s}^{\shortmid }\mathbf{N}:\ \ _{s}T\mathbf{T}^{\ast }\mathbf{V} &=&\
^{1}hT^{\ast }V\oplus \ ^{2}hT^{\ast }V\oplus \ ^{3}vT^{\ast }V\oplus \
^{4}vT^{\ast }V,  \label{ncon2}
\end{eqnarray}%
where the low left label $s=1,2,3,4$ is for a "shell" splitting and
corresponding shell with rank 2 of  fibers.\footnote{%
A corresponding nonholonomic diadic splitting is important for decoupling
(modified) Einstein equations and generating exact solutions in explicit
form, see next sections. Let us briefly motivate why we need such shell by
shell 2+2+2+2 nonholonomic geometric constructions. MGTs with MDRs extending
GR are modeled on total 8-d $TV$ and $T^{\ast }V.$ Metrics on such total
bundles can be considered as $8\times 8$ dimensional symmetric matrices with
coefficients depending on 8-d phase space coordinates. To elaborate, for
instance, on generalizations of black hole solutions in GR into similar
objects on $TV$ and/or $T^{\ast }V$ (for stationary configurations) we have
to work with certain block off-diagonal symmetric matrices when the
coefficients depend at least on two coordinates, like $(r,v)$ or $(r,p)$ for
a radial base coordinate $r$ and additional (co) fiber coordinates like $v$
and/or $p.$ For more general classes of quasi-stationary solutions, the
constructions are more cumbersome. Nevertheless, in our previous works, we
proved that modified Einstein equations can be decoupled in generic
off-diagonal form for coefficients depending on all spacetime and extra
dimension coordinates using conventional splitting 2+2+2+2+.... The idea was
to introduce two dimensional shells, to construct generic off-diagonal
solutions for 2+2 dimensions, with metrics depending, in principle, on all
spacetime coordinates; then to extend them nonholonomically with mixing of
frames and coordinates to 2+2+2 dimensions, and then to other extra
dimensions. The decoupling of modified Einstein equations could not be
proven for any type shell/ diadic splitting or for general 3+3+..., or
4+4+..., splitting; and for any generalized connections or for the
Levi-Civita connections. The frames, metrics and connections for
nonholonomic shall diadic structures have to be correspondingly defined in
order to obtain a general decoupling of physically important systems of
nonlinear PDEs. Then, constructing some general classes of solutions in
explicit forms, we can re-introduce general covariant transforms, re-define
the connection structure in an another necessary form. We can also generate
and study some important classes of solutions with special symmetries,
interesting asymptotic behavior etc. For momentum like variables, we show
this in the next sections. In this section, we outline a nonholonomic diadic
geometry with adapted nonlinear and linear connections structures which provides possibilities to decouple and solve modified gravitational
equations. Readers should follow our conventions even they are familiar with other type ones for another geometric models. The point is that in our approach we
were able to elaborate the anholonomic frame deformation method, AFDM, which
allows us to construct very general classes of exact solutions in commutative
and noncommutative gravity theories, string gravity and extra dimensions,
Finsler-Lagrange-Hamilton models etc., see \cite%
{v18a,gvvepjc14,bubuianucqg17,vbubuianu17,vacaruplb16,vacaruepjc17,vijtp10,bvepjc18,bvap19,vcqg18}%
. Other geometric methods allowed only generating certain particular classes
of black hole, wormholes, and cosmological solutions which higher order
symmetries and usually parameterized by diagonal metrics.}
\end{definition}

Such decompositions into Whitney sums (mathematicians consider alternatively
certain classes of exact sequences in order to define N-connection
nonholonomic distributions, see \cite{matsumoto86,bao00,vmon06}) define
splitting into conventional 2-dim nonholonomic distributions of $TTV$ and $%
TT^{\ast }V,$ when%
\begin{eqnarray*}
\dim (\ ^{1}hTV) &=&\dim (\ ^{2}hTV)=\dim (\ ^{3}vTV)=(\ ^{4}vTV)=2%
\mbox{
and } \\
\dim (\ ^{1}hT^{\ast }V) &=&\dim (\ ^{2}hT^{\ast }V)=\dim (\ ^{3}vT^{\ast
}V)=(\ ^{4}vT^{\ast }V)=2.
\end{eqnarray*}%
In above formulas, left up labels like $\ ^{1}h,\ ^{3}v$ etc. state that we
using nonholonomic (equivalently, anholonomic and/or non-integrable)
distributions we split respective 8-d total spaces into 2-d shells 1,2,3 and
4. We cite \cite{v18a,gvvepjc14} for details and historical remarks on
N-connection geometry and applications in physics.\footnote{\label%
{fnotnconcoef2} The formulas (\ref{ncon}) can be written respectively in
local form, $\mathbf{N}=N_{i}^{a}\frac{\partial }{\partial v^{a}}\otimes
dx^{i}$ or $\ ^{\shortmid }\mathbf{N}=\ ^{\shortmid }N_{ia}\frac{\partial }{%
\partial p_{a}}\otimes dx^{i},$ using the N-connection coefficients $\mathbf{%
N}=\{N_{i}^{a}\}$ or $\ ^{\shortmid }\mathbf{N}=\{\ ^{\shortmid }N_{ia}\};$
the geometric objects on $\mathbf{V,}T\mathbf{V,}T^{\ast }\mathbf{V}$ will
be labeled by "bold face" symbols if they can be written in N-adapted form;
the up label bar $\ "^{\shortmid }"$ will be used if it will be necessary to
emphasize that certain geometric objects are defined on cotangent bundles.
The nonholonomic dyadic splitting with N-connections (\ref{ncon2}) are
defined locally by such coefficients
\begin{eqnarray*}
\ _{s}\mathbf{N} &=&%
\{N_{i_{1}}^{a_{2}}(x^{i_{1}},y^{a_{2}}),N_{i_{1}}^{a_{3}}(x^{i_{1}},y^{a_{2}},v^{b_{3}}), N_{i_{1}}^{a_{4}}(x^{i_{1}},y^{a_{2}},v^{b_{3}},v^{b_{4}}),\ N_{i_{2}}^{a_{3}}(x^{i_{1}},y^{a_{2}},v^{b_{3}}),N_{i_{2}}^{a_{4}}(x^{i_{1}}, y^{a_{2}},v^{b_{3}},v^{b_{4}}), \\
&&N_{a_{3}}^{a_{4}}(x^{i_{1}},y^{a_{2}},v^{b_{3}},v^{b_{4}})\}, \\
\ _{s}^{\shortmid }\mathbf{N}& =&\{\ ^{\shortmid
}N_{i_{1}}^{i_{2}}(x^{i_{1}},y^{a_{2}}),\ ^{\shortmid
}N_{i_{1}a_{3}}(x^{i_{1}},y^{a_{2}},p_{b_{3}}),\ ^{\shortmid
}N_{i_{1}a_{4}}(x^{i_{1}},y^{a_{2}},p_{b_{3}},p_{b_{4}}),
\ ^{\shortmid }N_{i_{2}a_{3}}(x^{i_{1}},y^{a_{2}},p_{b_{3}}),\ ^{\shortmid
}N_{i_{2}a_{4}}(x^{i_{1}},y^{a_{2}},p_{b_{3}},p_{b_{4}}),\\
&&\ ^{\shortmid }N_{\
a_{4}}^{a_{3}}(x^{i_{1}},y^{a_{2}},p_{b_{3}},p_{b_{4}})\}.
\end{eqnarray*}%
}

\begin{lemma}
\textsf{[N-adapted/dyadic (co) frames ] } \label{lemadap} N--connection
structures (\ref{ncon}) and (\ref{ncon2}) defines respective canonical
systems of N--adapted bases%
\begin{eqnarray*}
&&\mathbf{e}_{\alpha }\mbox{ on }\ T\mathbf{TV}\mbox{ and }\mathbf{e}%
^{\alpha }\mbox{ on }T^{\ast }\mathbf{TV;}\ ^{\shortmid }\mathbf{e}_{\alpha }%
\mbox{ on }\ T\mathbf{T}^{\ast }\mathbf{V}\mbox{ and }\ ^{\shortmid }\mathbf{%
e}^{\alpha }\mbox{ on }T^{\ast }\mathbf{T}^{\ast }\mathbf{V;} \\
\mbox{ and \  } &&\mathbf{e}_{\alpha _{s}}\mbox{ on }\ \ \ _{s}T\mathbf{TV}%
\mbox{ and }\mathbf{e}^{\alpha _{s}}\mbox{ on }\ \ _{s}T^{\ast }\mathbf{TV;}%
\ ^{\shortmid }\mathbf{e}_{\alpha _{s}}\mbox{ on }\ \ \ _{s}T\mathbf{T}%
^{\ast }\mathbf{V}\mbox{ and }\ ^{\shortmid }\mathbf{e}^{\alpha _{s}}%
\mbox{
on }\ \ _{s}T^{\ast }\mathbf{T}^{\ast }\mathbf{V.}
\end{eqnarray*}
\end{lemma}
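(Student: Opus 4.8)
The plan is to exhibit the adapted (co)frames explicitly in a local chart, to verify the three defining properties --- that they form a basis, that frame and coframe are mutually dual, and that they are adapted to the Whitney decompositions --- and then to invoke the standard transition rules on overlaps of charts of the (co)tangent bundle. First I would treat $T\mathbf{TV}$. Starting from the coordinate basis $\partial _{\alpha }=(\partial _{i}=\partial /\partial x^{i},\ \partial _{a}=\partial /\partial v^{a})$ with dual $du^{\alpha }=(dx^{i},dv^{a})$, I would use the N--connection coefficients $\{N_{i}^{a}\}$ of (\ref{ncon}) (see footnote \ref{fnotnconcoef2}) to form the elongated operators
\begin{equation*}
\mathbf{e}_{\alpha }=\big( \mathbf{e}_{i}=\partial _{i}-N_{i}^{a}\partial _{a},\ \mathbf{e}_{a}=\partial _{a}\big) ,\qquad \mathbf{e}^{\alpha }=\big( \mathbf{e}^{i}=dx^{i},\ \mathbf{e}^{a}=dv^{a}+N_{i}^{a}dx^{i}\big) ,
\end{equation*}
and then check by a one--line contraction that $\mathbf{e}^{\alpha }(\mathbf{e}_{\beta })=\delta _{\beta }^{\alpha }$, that $\mathrm{span}\{\mathbf{e}_{a}\}=vTV$, and that $\mathrm{span}\{\mathbf{e}_{i}\}$ is a $4$--d complement to be identified with $hTV$; this is precisely the splitting (\ref{ncon}). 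I would also record the anholonomy relations $[\mathbf{e}_{\alpha },\mathbf{e}_{\beta }]=W_{\alpha \beta }^{\gamma }\mathbf{e}_{\gamma }$, with coefficients built from $\partial _{a}N_{i}^{b}$ and the N--connection curvature $\Omega _{ij}^{a}=\mathbf{e}_{i}N_{j}^{a}-\mathbf{e}_{j}N_{i}^{a}$, which shows that the frames are genuinely nonholonomic in general.

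For the cotangent bundle $T\mathbf{T}^{\ast }\mathbf{V}$ the construction repeats after dualising the fiber slot: with the coefficients $\{\ ^{\shortmid }N_{ia}\}$ of (\ref{ncon}) I would set $\ ^{\shortmid }\mathbf{e}_{i}=\partial _{i}+\ ^{\shortmid }N_{ia}\,\partial /\partial p_{a}$, $\ ^{\shortmid }\mathbf{e}^{a}=\partial /\partial p_{a}$, with dual $\ ^{\shortmid }\mathbf{e}^{i}=dx^{i}$ and $\ ^{\shortmid }\mathbf{e}_{a}=dp_{a}-\ ^{\shortmid }N_{ia}dx^{i}$, and then run the identical verification against $hT^{\ast }V\oplus vT^{\ast }V$.

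Finally I would pass to the diadic $(2+2)+(2+2)$ splitting (\ref{ncon2}) by iterating the above shell by shell, using the shell N--coefficients of footnote \ref{fnotnconcoef2}. At $s=1$ one keeps $\mathbf{e}_{i_{1}}=\partial _{i_{1}}$ on $\ ^{1}hTV$; at $s=2$ one elongates with $N_{i_{1}}^{a_{2}}$ to get $\mathbf{e}_{i_{1}}=\partial _{i_{1}}-N_{i_{1}}^{a_{2}}\partial _{a_{2}}$ and keeps $\mathbf{e}_{a_{2}}=\partial _{a_{2}}$; at $s=3,4$ one elongates the already elongated horizontal operators with the new coefficients $N_{i_{s-1}}^{a_{s}}$, inserts the mixed fiber terms $N_{a_{s-1}}^{a_{s}}$, and keeps $\mathbf{e}_{a_{s}}=\partial _{a_{s}}$, the dual co--bases being the corresponding nested sums $\mathbf{e}^{a_{s}}=dv^{a_{s}}+N_{i_{s-1}}^{a_{s}}dx^{i_{s-1}}+\ldots $; the cotangent diadic frames follow by dualising the fiber slot on each shell. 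The hard part --- essentially the only point that is not routine adapted--frame calculus --- will be the bookkeeping of this nested structure: one has to verify that the elongations belonging to distinct shells fit together in a triangular way, so that $\mathbf{e}^{\alpha _{s}}(\mathbf{e}_{\beta _{s}})=\delta _{\beta _{s}}^{\alpha _{s}}$ holds at every shell and each $\mathrm{span}\{\mathbf{e}_{i_{s}},\mathbf{e}_{a_{s}}\}$ reproduces the corresponding Whitney summand of (\ref{ncon2}). This goes through precisely because, as displayed in footnote \ref{fnotnconcoef2}, the N--coefficient of any shell depends only on the coordinates of that shell and the lower ones, so the derivatives added at higher shells do not disturb the duality and splitting already established at the lower ones; cf. \cite{v18a,gvvepjc14,vmon06} for the non--diadic prototype of the construction.
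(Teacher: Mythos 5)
Your proposal is correct and follows essentially the same route as the paper, whose "proof" simply points to the explicit N-elongated (co)frames listed in Appendix A.1 (formulas (\ref{nadap})--(\ref{nadapbdss})) together with the anholonomy relations (\ref{anhrel}); your added remarks on checking duality and on the triangular shell-by-shell bookkeeping only make explicit what the paper leaves implicit. The one deviation is a sign convention on the cotangent bundle: the paper takes $\ ^{\shortmid }\mathbf{e}_{i}=\partial _{i}-\ ^{\shortmid }N_{ia}\,\partial /\partial p_{a}$ with dual $\ ^{\shortmid }\mathbf{e}_{a}=dp_{a}+\ ^{\shortmid }N_{ia}dx^{i}$, opposite to yours, which is harmless for this lemma (it amounts to relabeling $\ ^{\shortmid }N_{ia}\rightarrow -\ ^{\shortmid }N_{ia}$) but should be aligned with (\ref{nadapd}) if your frames are to match the coefficient formulas used later in the paper.
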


The proofs follows from N-adapted constructions (considered canonical for
transforms which are linear on N--coefficients) in Appendix \ref{appendixa1}.

\begin{assumption}
\textsf{[d-metrics on (co) tangent Lorentz bundles and dyadic decomposition]}
\label{assumpt3} The total spaces of tangent, $\mathbf{TV},$ and cotangent, $%
\mathbf{T}^{\ast }\mathbf{V},$ Lorentz bundles used for elaborating physical
theories with MDR--generalizations of the Einstein gravity can be enabled,
respectively, with pseudo-Riemannian metric, $\mathbf{g},$ and $\
^{\shortmid }\mathbf{g,}$ structures and respective shell decompositions, $\
_{s}\mathbf{g},$ and $\ _{s}^{\shortmid }\mathbf{g}$. Using frame transforms
in N--adapted form, respective metric structures can be parameterized as
distinguished metrics (d-metrics) in such forms:
\begin{eqnarray}
\mathbf{g} &=&\mathbf{g}_{\alpha \beta }(x,y)\mathbf{\mathbf{e}}^{\alpha }%
\mathbf{\otimes \mathbf{e}}^{\beta }=g_{ij}(x)e^{i}\otimes e^{j}+\mathbf{g}%
_{ab}(x,y)\mathbf{e}^{a}\otimes \mathbf{e}^{a}\mbox{ and/or }  \label{dmt} \\
\ ^{\shortmid }\mathbf{g} &=&\ ^{\shortmid }\mathbf{g}_{\alpha \beta }(x,p)\
^{\shortmid }\mathbf{\mathbf{e}}^{\alpha }\mathbf{\otimes \ ^{\shortmid }%
\mathbf{e}}^{\beta }=g_{ij}(x)e^{i}\otimes e^{j}+\ ^{\shortmid }\mathbf{g}%
^{ab}(x,p)\ ^{\shortmid }\mathbf{e}_{a}\otimes \ ^{\shortmid }\mathbf{e}_{b}.
\label{dmct}
\end{eqnarray}%
and, for shell decompositions, as so-called, s-metrics,
\begin{eqnarray}
\mathbf{g} &=&\ _{s}\mathbf{g=g}_{\alpha _{s}\beta _{s}}(\ _{s}u)\mathbf{%
\mathbf{e}}^{\alpha _{s}}\mathbf{\otimes \mathbf{e}}^{\beta
_{s}}=g_{i_{s}j_{s}}(\ _{s}x)e^{i_{s}}\otimes e^{j_{s}}+\mathbf{g}%
_{a_{s}b_{s}}(\ _{s}x,\ _{s}v)\mathbf{e}^{a_{s}}\otimes \mathbf{e}^{a_{s}}%
\mbox{ and/or }  \label{dmts} \\
\ ^{\shortmid }\mathbf{g} &=&\ _{s}^{\shortmid }\mathbf{g=}\ ^{\shortmid }%
\mathbf{g}_{\alpha _{s}\beta _{s}}(\ _{s}^{\shortmid }u)\ ^{\shortmid }%
\mathbf{\mathbf{e}}^{\alpha _{s}}\mathbf{\otimes \ ^{\shortmid }\mathbf{e}}%
^{\beta _{s}}=\ ^{\shortmid }g_{ij}(\ _{s}^{\shortmid }x)e^{i_{s}}\otimes
e^{j_{s}}+\ ^{\shortmid }\mathbf{g}^{a_{s}b_{s}}(\ _{s}^{\shortmid }x,\
_{s}p)\ ^{\shortmid }\mathbf{e}_{a_{s}}\otimes \ ^{\shortmid }\mathbf{e}%
_{b_{s}}.  \label{dmcts}
\end{eqnarray}
\end{assumption}

The metrics (\ref{dmt})-(\ref{dmcts}) define generalizations for phase space
with curved typical fibers and generalize respectively the classes of
metrics contained in nonlinear quadratic elements of type (\ref{nqe}), or (%
\ref{nqed}).

For geometric constructions on $TV$ and their dual, we can consider general
vierbein transforms $e_{\alpha }=e_{\ \alpha }^{\underline{\alpha }%
}(u)\partial /\partial u^{\underline{\alpha }}$ and $e^{\beta }=e_{\
\underline{\beta }}^{\beta }(u)du^{\underline{\beta }}$, where the local
coordinate indices are underlined in order to distinguish them from
arbitrary abstract ones. In such formulas, we can consider nonsingular frame
transforms when a matrix $e_{\ \underline{\beta }}^{\beta }$ is inverse to $%
e_{\ \alpha }^{\underline{\alpha }}$. For a total space nonsingular metric
structure, we can define orthonormalized bases and dual bases. Similarly, we
can perform various frame transforms $\ ^{\shortmid }e_{\alpha }=\
^{\shortmid }e_{\ \alpha }^{\underline{\alpha }}(\ ^{\shortmid }u)\partial
/\partial \ ^{\shortmid }u^{\underline{\alpha }}$ and $\ ^{\shortmid
}e^{\beta }=\ ^{\shortmid }e_{\ \underline{\beta }}^{\beta }(\ ^{\shortmid
}u)d\ ^{\shortmid }u^{\underline{\beta }}$ for geometric objects on $T^{\ast
}V$ and their dual. It should be noted that there are not used boldface
symbols for such transforms because an arbitrary decomposition (for
instance, one can be considered as particular cases certain coordinate
dyadic 2+2+2+2 splitting) is not adapted to a N--connection structure. Frame
transforms can be also constructed/considered in such adapted forms when
certain N-connection and other type symmetry structures are encoded. For
instance, we can always prescribe a regular Lagrange or Hamilton generating
function on a Lorentz manifold $V$ and define canonical nonholonomic
variables (with "tilde" on geometric objects) as it is briefly reviewed in
Appendix \ref{appendixb} and references therein. We can consider metric and
N-connection structures in general forms (without "tilde") on $TV$ and/or $%
T^{\ast }V$. It is possible always to introduce necessary type atlases of
covering carts on such (co) tangent bundles and construct frame/coordinate
transforms expressing the coefficients of geometric objects (with respect to
general coordinate frame) into respective sets of coefficients with "tilde"
(with respected to N-adapted frames and d-metric structure induced by
"mechanical" generating functions). In a similar form, any geometric data
with "tilde" can be transformed into general ones (or with nonholonomic
shell dyadic structure) considering respective general frame (or frames for
a dyadic splitting). This way we prove

\begin{theorem}
\textsf{[equivalent representations of d- and s-metrics as canonical
d-metrics for Lagrange-Hamilton spaces]} Using frame/coordinate transforms,
we can establish such equivalence of respective geometric data%
\begin{eqnarray}
(\mathbf{N};\mathbf{e}_{\alpha },\mathbf{e}^{\alpha };\mathbf{g}_{\alpha
\beta })\mbox{ and/or }(\ _{s}\mathbf{N};\mathbf{e}_{\alpha _{s}},\mathbf{e}%
^{\alpha _{s}};\mathbf{g}_{\alpha _{s}\beta _{s}}) &\longleftrightarrow &(%
\widetilde{L},\ \ \widetilde{\mathbf{N}};\widetilde{\mathbf{e}}_{\alpha },%
\widetilde{\mathbf{e}}^{\alpha };\widetilde{g}_{jk},\widetilde{g}^{jk})
\label{eqgd} \\
(\ ^{\shortmid }\mathbf{N};\ ^{\shortmid }\mathbf{e}_{\alpha },\ ^{\shortmid
}\mathbf{e}^{\alpha };\ \ ^{\shortmid }\mathbf{g}_{\alpha \beta })%
\mbox{
and/or }(\ \ _{s}^{\shortmid }\mathbf{N};\ \ ^{\shortmid }\mathbf{e}_{\alpha
_{s}},\ \ ^{\shortmid }\mathbf{e}^{\alpha _{s}};\ \ \ ^{\shortmid }\mathbf{g}%
_{\alpha _{s}\beta _{s}}) &\longleftrightarrow &(\widetilde{H},\ ^{\shortmid
}\widetilde{\mathbf{N}};\ ^{\shortmid }\widetilde{\mathbf{e}}_{\alpha },\
^{\shortmid }\widetilde{\mathbf{e}}^{\alpha };\ \ ^{\shortmid }\widetilde{g}%
^{ab},\ \ ^{\shortmid }\widetilde{g}_{ab}),  \notag
\end{eqnarray}
where tilde values are defined by canonical geometric objects on
Lagrange-Hamilton spaces, see Appendix \ref{appendixb} and canonical
d-metrics $\widetilde{\mathbf{g}}_{\alpha \beta }$ (\ref{cdms}) and $\
^{\shortmid }\widetilde{\mathbf{g}}_{\alpha \beta }$ (\ref{cdmds}). We can
establish $L$--dual variables for data (\ref{eqgd}).
\end{theorem}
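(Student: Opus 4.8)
The plan is to exhibit the frame/coordinate transforms asserted in (\ref{eqgd}) explicitly, by matching the generic d- and s-metric parameterizations (\ref{dmt})--(\ref{dmcts}) to the canonical Lagrange--Hamilton data recalled in Appendix \ref{appendixb}, and then to identify the two families of data under a fiberwise Legendre transform.

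First, for the tangent-bundle case I would start from data $(\mathbf{N};\mathbf{e}_{\alpha},\mathbf{e}^{\alpha};\mathbf{g}_{\alpha\beta})$ with nondegenerate vertical block $\mathbf{g}_{ab}(x,y)$ (the regularity hypothesis) and introduce an effective generating function: $\widetilde{L}:=\mathbf{g}_{ab}(x,y)\,y^{a}y^{b}$ in the homogeneous (Finsler-type) case, and more generally a regular $\widetilde{L}(x,y)$ whose Hessian $\frac{1}{2}\,\partial^{2}\widetilde{L}/\partial y^{a}\partial y^{b}$ reproduces $\mathbf{g}_{ab}$ after a vertical vierbein transform $e_{\ a}^{\underline a}(u)$ that absorbs the residual fiber-affine freedom. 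From $\widetilde{L}$ one computes the canonical (Cartan-type) N-connection $\widetilde{N}^{a}_{\ i}$ via its semispray/Euler--Lagrange structure; the difference $N^{a}_{\ i}-\widetilde{N}^{a}_{\ i}$ is then reabsorbed by an N-adapted transform of the kind constructed in Appendix \ref{appendixa1}, which by construction preserves the $h$--$v$ splitting. This produces the canonical N-adapted (co)frames $\widetilde{\mathbf{e}}_{\alpha},\widetilde{\mathbf{e}}^{\alpha}$ and the Sasaki-type canonical d-metric $\widetilde{\mathbf{g}}_{\alpha\beta}$ of (\ref{cdms}), which establishes the first line of (\ref{eqgd}); the h-block $g_{ij}(x)$ is carried along by the horizontal part of the same transform.

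For the cotangent case I would repeat the argument with the vertical block $\ ^{\shortmid}\mathbf{g}^{ab}(x,p)$: pick a regular Hamilton generating function $\widetilde{H}(x,p)$ (for instance $\widetilde{H}:=\ ^{\shortmid}\mathbf{g}^{ab}(x,p)\,p_{a}p_{b}$ in the homogeneous case) with $\frac{1}{2}\,\partial^{2}\widetilde{H}/\partial p_{a}\partial p_{b}$ reproducing $\ ^{\shortmid}\mathbf{g}^{ab}$ up to a covertical transform, and identify the canonical N-connection $\ ^{\shortmid}\widetilde{N}_{ij}$ built from $\widetilde{H}$ through the canonical almost-symplectic structure of $T^{\ast}V$, obtaining $\ ^{\shortmid}\widetilde{\mathbf{g}}_{\alpha\beta}$ of (\ref{cdmds}) and the second line of (\ref{eqgd}). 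The $L$-duality is then the classical Legendre correspondence: set $p_{a}=\frac{1}{2}\,\partial\widetilde{L}/\partial y^{a}$ and $\widetilde{H}(x,p)+\widetilde{L}(x,y)=p_{a}y^{a}$, and verify by the chain rule that $\widetilde{g}^{ab}\leftrightarrow\ ^{\shortmid}\widetilde{g}_{ab}$, $\widetilde{N}^{a}_{\ i}\leftrightarrow\ ^{\shortmid}\widetilde{N}_{ij}$ and $\widetilde{\mathbf{e}}_{\alpha}\leftrightarrow\ ^{\shortmid}\widetilde{\mathbf{e}}_{\alpha}$, so that the two data sets in (\ref{eqgd}) are $L$-dual.

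Finally, the diadic (shell-by-shell) statement follows by iterating the above on each $2$-dimensional shell of (\ref{ncon2}): shell $1$ carries the h-block $g_{i_{1}j_{1}}(x^{k_{1}})$ already in canonical GR form; for $s=2,3,4$ the $2\times2$ block $\mathbf{g}_{a_{s}b_{s}}$ (respectively $\ ^{\shortmid}\mathbf{g}^{a_{s}b_{s}}$) depends only on coordinates of shells $\le s$, so one solves the corresponding two-variable Hessian equation for a shell generating function $\ ^{s}\widetilde{L}$ (or $\ ^{s}\widetilde{H}$) and reabsorbs the shell N-coefficients by transforms that are block-triangular with respect to the shell filtration, hence preserve the splitting (\ref{ncon2}). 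The main obstacle is precisely this first step — the local representability of a generic (a priori non-Hessian) vertical metric block as a Hessian of a scalar up to frame transform; this is where the phrase ``up to frame/coordinate transform'' does the real work, and it must be handled carefully with respect to the regularity (regular Lagrangian/Hamiltonian) assumption and the fact that the generating functions are fixed only modulo fiber-affine ambiguities. Granting that, the rest is bookkeeping of N-adapted transforms together with the standard Lagrange--Hamilton Legendre duality.
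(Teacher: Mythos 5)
Your proposal is correct in substance, but it routes the argument through a construction the paper does not actually perform. The paper's own justification is much more economical: the canonical data $(\widetilde{L},\widetilde{\mathbf{N}};\widetilde{\mathbf{e}}_{\alpha},\ldots)$ and $(\widetilde{H},\ ^{\shortmid}\widetilde{\mathbf{N}};\ldots)$ are already \emph{fixed} by the MDR indicator through the Hessians (\ref{hessls}), (\ref{hesshs}) and the semispray/almost-symplectic constructions of Appendix \ref{appendixb}, so the theorem reduces to the assertion that any two nondegenerate (s-)metrics of the same signature, together with their N-connections, are related by vierbein transforms $e_{\ \alpha}^{\underline{\alpha}}$, possibly arranged in N-adapted form as in Appendix \ref{appendixa1}. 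The paper treats the relation $\mathbf{g}_{\alpha_s\beta_s}=e_{\ \alpha_s}^{\alpha}e_{\ \beta_s}^{\beta}\widetilde{\mathbf{g}}_{\alpha\beta}$ as an underdetermined algebraic system in the frame coefficients (made explicit later, in the proof of the Corollary on s-metrics in Hamilton variables: 12 independent quadratic equations versus 64 frame coefficients), which is always solvable up to residual frame/coordinate freedom. You instead go in the opposite direction, trying to manufacture a generating function $\widetilde{L}$ (or $\widetilde{H}$) whose Hessian reproduces the given vertical block, and you correctly flag the Hessian-representability question as the delicate point of that route. But note that in the paper's formulation this "main obstacle" dissolves: one never needs the given $\mathbf{g}_{ab}$ to be (frame-equivalent to) the Hessian of a function one constructs from it; one only needs pointwise linear-algebraic equivalence of two same-signature quadratic forms, which is automatic. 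What your route buys is a more explicit dictionary (including the Legendre matching $\widetilde{g}^{ab}\leftrightarrow\ ^{\shortmid}\widetilde{g}_{ab}$ and the shell-by-shell triangular structure of the transforms, both of which the paper leaves implicit); what the paper's route buys is brevity and the avoidance of the inverse variational problem altogether.
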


It is convenient to work with canonical d-metric structures $\widetilde{%
\mathbf{g}}$ and $\ ^{\shortmid }\widetilde{\mathbf{g}}$ if we are going to
provide a geometric mechanic interpretation (in particular, in Finsler like
variables) for propagating of probing particles in phase spaces and when the
fundamental geometric objects are explicitly determined by a MDR (\ref{mdrg}%
). Such data are also important for defining almost symplectic variables and
performing different models, for instance, of deformation quantization, see
\cite{v18a} and references therein. In another turn, s-metrics are necessary
for constructing exact solutions in MGTs with LIVs.

\subsection{D- and s-connections and their curvatures}

In this subsection, we analyze which classes of linear connections and
respective covariant derivative operators can be constructed in N-adapted
forms. Such geometric objects will be applied for formulating classical and
quantum gravity and matter field theories with MDRs, and which allow
decoupling and integrability in nonholonomic dyadic form of generalized
Einstein equations.

\subsubsection{Distinguished connections and N-adapted distortions}

Let us denote by $D$ be a linear connection on $\mathbf{TV.}$ We can define
a linear connection $\ ^{\shortmid }D$ on $\mathbf{T}^{\ast }\mathbf{V}$ as
follows: $\ ^{\shortmid }D_{\ ^{\shortmid }\mathbf{X}}\ ^{\shortmid }\mathbf{%
Y}:=(D_{\mathbf{X}}\mathbf{Y})^{\ast }=\ ^{\shortmid }(D_{\mathbf{X}}\mathbf{%
Y}),$ where $\ ^{\shortmid }\mathbf{X}$ and $\ ^{\shortmid }\mathbf{Y}$ are
d-vector fields on $\mathbf{T}^{\ast }\mathbf{V.}$ Inversely, considering a
linear connection $\ \ ^{\shortmid }D$ on $\mathbf{T}^{\ast }\mathbf{V},$ it
is possible to construct a linear connection $\ ^{\circ }D$ on $\mathbf{TV},$
when $\ ^{\circ }D_{\mathbf{X}}\mathbf{Y}:=(\ ^{\shortmid }D_{\ ^{\shortmid }%
\mathbf{X}}\ ^{\shortmid }\mathbf{Y})^{\circ },$ for any d-vector fields $%
\mathbf{X}$ and $\mathbf{Y}$ on $\mathbf{TV.}$ Such $D$ and $\ ^{\shortmid
}D $ can be defined if respective (co) tangent Lorentz bundles are enabled
with shell by shell N-connection structure (\ref{ncon2}) and/or canonical
Lagrange-Hamilton variables and $\mathcal{L}$ --duality stated by Legendre (%
\ref{legendre}) and inverse Legendre (\ref{invlegendre}) transforms. All
such geometric constructions can be performed/ proved in a rigorous
differential geometric form using the so-called "pullback" bundle formalism
correspondingly adapted to Lagrange-Hamilton and/or nonholonomic dyadic
structures. In this work, we follow the axiomatic approach and "shell by
shell" extension geometric generating solutions techniques elaborated in
\cite{v18a,bvepjc18,gvvepjc14,bubuianucqg17,vacaruepjc17,vijtp10,bvap19}.
This will be enough for decoupling the modified Einstein equations with MDRs
and constructing exact solutions in sections \ref{s4} and \ref{s5}.

On (co) tangent bundles, we can elaborate different geometries of affine
(linear) connections and respective covariant derivatives. Such
constructions can be performed in different forms which are adapted, or not
adapted, to certain prescribed N--connection and/or dyadic structures.

\begin{definition}
\textsf{[d-connections and s-connections as N-adapted linear connections] }
A distinguished connection (d--connection) is a linear connection $\mathbf{D}
$ on $\mathbf{TV}$ (or $\ ^{\shortmid }\mathbf{D}$ on $\mathbf{T}^{\ast }%
\mathbf{V})$ which preserves under parallelism a respective N--connection
splitting (\ref{ncon}). Such a d-connection is a shell type one, i.e. a
s-connection, $\ _{s}\mathbf{D}$ (or $\ _{s}^{\shortmid }\mathbf{D),}$ if it
preserve under parallelism a respective shell N--connection structure (\ref%
{ncon2}).
\end{definition}

The coefficients of d--connections $\mathbf{D}$ and $\ ^{\shortmid }\mathbf{D%
}$ can be defined in corresponding N-adapted forms with respect to
N--adapted frames Lemma (\ref{lemadap}), $\ \mathbf{D}_{\mathbf{e}_{\beta }}%
\mathbf{e}_{\gamma }:=\mathbf{\Gamma }_{\ \beta \gamma }^{\alpha }\mathbf{e}%
_{\alpha }$ and $\mathbf{\ ^{\shortmid }D}_{\mathbf{\ ^{\shortmid }e}_{\beta
}}\ \mathbf{^{\shortmid }e}_{\gamma }:=\mathbf{\ ^{\shortmid }\Gamma }_{\
\beta \gamma }^{\alpha }\mathbf{\ ^{\shortmid }e}_{\alpha },$ where (for a
h-v splitting)
\begin{equation*}
\mathbf{D}_{\mathbf{e}_{k}}\mathbf{e}_{j}:=L_{\ jk}^{i}\mathbf{e}_{i},%
\mathbf{D}_{\mathbf{e}_{k}}e_{b}:=\acute{L}_{\ bk}^{a}e_{a},\mathbf{D}%
_{e_{c}}\mathbf{e}_{j}:=\acute{C}_{\ jc}^{i}\mathbf{e}_{i},\mathbf{D}%
_{e_{c}}e_{b}:=C_{\ bc}^{a}e_{a}
\end{equation*}%
and (for a h-cv splitting)
\begin{equation*}
\ \ ^{\shortmid }\mathbf{D}_{\ ^{\shortmid }\mathbf{e}_{k}}\ ^{\shortmid }%
\mathbf{e}_{j}:=\ ^{\shortmid }L_{\ jk}^{i}\ ^{\shortmid }\mathbf{e}_{i},\
^{\shortmid }\mathbf{D}_{\mathbf{e}_{k}}\ ^{\shortmid }e^{b}:=-\ ^{\shortmid
}\acute{L}_{a\ k}^{\ b}\ ^{\shortmid }e^{a},\ ^{\shortmid }\mathbf{D}_{\
^{\shortmid }e^{c}}\ ^{\shortmid }\mathbf{e}_{j}:=\ ^{\shortmid }\acute{C}%
_{\ j}^{i\ c}\ ^{\shortmid }\mathbf{e}_{i},\ ^{\shortmid }\mathbf{D}_{\
^{\shortmid }e^{c}}\ ^{\shortmid }e^{b}:=-\ ^{\shortmid }C_{a}^{\ bc}\
^{\shortmid }e^{a}.
\end{equation*}%
Summarizing above formulas, we conclude that the N-adapted coefficients of
d-connections on (co) tangent Lorentz bundles are respectively parameterized
\begin{eqnarray*}
\mathbf{D} &=&\left( \mathbf{\ }_{h}\mathbf{D,\ }_{v}\mathbf{D}\right)
\mathbf{=\{\Gamma }_{\ \beta \gamma }^{\alpha }=(L_{\ jk}^{i},\acute{L}_{\
bk}^{a},\acute{C}_{\ jc}^{i},C_{\ bc}^{a})\}\mbox{ and } \\
\ ^{\shortmid }\mathbf{D} &=&\left( \mathbf{\ }_{h}^{\shortmid }\mathbf{D,\ }%
_{v}^{\shortmid }\mathbf{D}\right) \mathbf{=\{}\ ^{\shortmid }\mathbf{\Gamma
}_{\ \beta \gamma }^{\alpha }=(\ ^{\shortmid }L_{\ jk}^{i},\ ^{\shortmid }%
\acute{L}_{a\ k}^{\ b},\ ^{\shortmid }\acute{C}_{\ j}^{i\ c},\ ^{\shortmid
}C_{a}^{\ bc})\},
\end{eqnarray*}%
where $\ _{h}\mathbf{D}=(L_{\ jk}^{i},\acute{L}_{\ bk}^{a}),\ _{v}\mathbf{D}%
=(\acute{C}_{\ jc}^{i},C_{\ bc}^{a})$ and $\ _{h}^{\shortmid }\mathbf{D}=(\
^{\shortmid }L_{\ jk}^{i},\ ^{\shortmid }\acute{L}_{a\ k}^{\ b}),$ $\
_{v}^{\shortmid }\mathbf{D}=(\ ^{\shortmid }\acute{C}_{\ j}^{i\ c},\
^{\shortmid }C_{a}^{\ bc}).$

In similar forms, we can define d-connections and distortion d-tensors and
compute their N-adapted coefficients for every shell decomposition of
tangent bundles. Here we note that the nonholonomic structures and
coefficient formulas of geometric objects for respective 4+4 and dyadic
decompositions (the last variant results in shell by shell splitting of
nonlinear systems of PDEs, see sections \ref{s4} and \ref{s5}) are
different. For instance, we provide explicit formulas for a dyadic
decomposition on $\mathbf{T}^{\ast }\mathbf{V}$ defined by a $\
_{s}^{\shortmid }\mathbf{N}$ (\ref{ncon2}). The coefficients of a
d--connection on a s-shell, $\ _{s}^{\shortmid }\mathbf{D=\{\ ^{\shortmid}%
\mathbf{D}_{\ ^{\shortmid }\mathbf{e}_{\beta _{s}}}=\ ^{\shortmid }\mathbf{D}%
}_{\beta _{s}}\}$ can be defined and computed N-adapted forms with respect
to $\ ^{\shortmid }\mathbf{e}_{\alpha _{s}}$ and $\ ^{\shortmid }\mathbf{e}%
^{\alpha _{s}},$ see (\ref{nadapbds}),
\begin{eqnarray*}
\ ^{\shortmid }\mathbf{D}_{\ ^{\shortmid }\mathbf{e}_{\beta _{s}}}\
^{\shortmid }\mathbf{e}_{\gamma _{s}}:= \ ^{\shortmid }\mathbf{\Gamma }_{\
\beta _{s}\gamma _{s}}^{\alpha _{s}}\mathbf{\ ^{\shortmid }e}_{\alpha
_{s}},\ \mbox{ where } &&\ ^{\shortmid }\mathbf{D}_{\ ^{\shortmid }\mathbf{e}%
_{k_{s}}}\ ^{\shortmid }\mathbf{e}_{j_{s}}:=\ ^{\shortmid }L_{\
j_{s}k_{s}}^{i_{s}}\ ^{\shortmid }\mathbf{e}_{i_{s}},\ ^{\shortmid }\mathbf{D%
}_{\mathbf{e}_{k_{s}}}\ ^{\shortmid }e^{b_{s}}:=-\ ^{\shortmid }\acute{L}%
_{a_{s}\ k_{s}}^{\ b_{s}}\ ^{\shortmid }e^{a_{s}}, \\
&&\ ^{\shortmid }\mathbf{D}_{\ ^{\shortmid }e^{c_{s}}}\ ^{\shortmid }\mathbf{%
e}_{j_{s}}:=\ ^{\shortmid }\acute{C}_{\ j_{s}}^{i_{s}\ c_{s}}\ ^{\shortmid }%
\mathbf{e}_{i_{s}},\ ^{\shortmid }\mathbf{D}_{\ ^{\shortmid }e^{c_{s}}}\
^{\shortmid }e^{b_{s}}:=-\ ^{\shortmid }C_{a_{s}}^{\ b_{s}c_{s}}\
^{\shortmid }e^{a_{s}}.
\end{eqnarray*}%
The N-adapted coefficients of d-connections on the 3d shell of cotangent
Lorentz bundles can be parameterized in the form
\begin{equation*}
\ _{s}^{\shortmid }\mathbf{D}=\left( \ _{^{s}h}^{\shortmid }\mathbf{D,\ }_{\
^{s}v}^{\shortmid }\mathbf{D}\right) =\{\ ^{\shortmid }\mathbf{\Gamma }_{\
\beta _{s}\gamma _{s}}^{\alpha _{s}}=(\ ^{\shortmid }L_{\
j_{s}k_{s}}^{i_{s}},\ ^{\shortmid }\acute{L}_{a_{s}\ k_{s}}^{\ b_{s}},\
^{\shortmid }\acute{C}_{\ j_{s}}^{i_{s}\ c_{s}},\ ^{\shortmid }C_{a_{s}}^{\
b_{s}c_{s}})\},
\end{equation*}
where $\ _{^{s}h}^{\shortmid }\mathbf{D}=(\ ^{\shortmid }L_{\
j_{s}k_{s}}^{i_{s}},\ ^{\shortmid }\acute{L}_{a_{s}\ k_{s}}^{\ b_{s}})$ and $%
\ _{\ ^{s}v}^{\shortmid }\mathbf{D}=(\ ^{\shortmid }\acute{C}_{\
j_{s}}^{i_{s}\ c_{s}},\ ^{\shortmid }C_{a_{s}}^{\ b_{s}c_{s}}).$

Similar N-adapted formulas can be derived for $\ _{s}\mathbf{D=\{\mathbf{\ D}%
_{\mathbf{\ e}_{\beta _{s}}}=\mathbf{\ D}}_{\beta _{s}}\mathbf{\}=\{\mathbf{%
\Gamma }_{\ \beta _{s}\gamma _{s}}^{\alpha _{s}}\}.}$

\begin{lemma}
\label{lemmadist} \textsf{[distortion of linear connection structures ]} %
\label{ldist}Let us consider on $\mathbf{TV}$ a linear connection $%
\underline{D}$ (which is not obligatory a d-connection) and a d-connection $%
\mathbf{D}$. Such values on $\mathbf{T}^{\ast }\mathbf{V,}$ are respectively
denoted $\ ^{\shortmid }\underline{D}$ and $\ ^{\shortmid }\mathbf{D}$ and
can be related by corresponding distortion d-tensors%
\begin{eqnarray}
\mathbf{Z} &=& \mathbf{D}-\underline{D}\mbox{ and/or } \ ^{\shortmid }%
\mathbf{Z}:=\ ^{\shortmid }\mathbf{D}-\ ^{\shortmid }\underline{D};
\label{dist} \\
\ _{s}\mathbf{Z} &:=&\ _{s}\mathbf{D}-\underline{D}\mbox{ and/or }\
_{s}^{\shortmid }\mathbf{Z}:=\ _{s}^{\shortmid }\mathbf{D}-\ ^{\shortmid }%
\underline{D}.  \label{sdist}
\end{eqnarray}
\end{lemma}

\begin{proof}
For simplicity, let us sketch the proofs for tangent and cotangent bundles.
Fixing respective N-adapted frames, such distortion d-tensors{\small
\begin{eqnarray*}
\mathbf{Z}_{\ \beta \gamma }^{\alpha } &=&\mathbf{\Gamma }_{\ \beta \gamma
}^{\alpha }-\underline{\Gamma }_{\ \beta \gamma }^{\alpha }=\{Z_{\
jk}^{i}=L_{\ jk}^{i}-\underline{L}_{\ jk}^{i},\acute{Z}_{\ bk}^{a}=\acute{L}%
_{\ bk}^{a}-\underline{\acute{L}}_{\ bk}^{a},\acute{Z}_{\ jc}^{i}=\acute{C}%
_{\ jc}^{i}-\underline{\acute{C}}_{\ jc}^{i},Z_{\ bc}^{a}=C_{\ bc}^{a}-%
\underline{C}_{\ bc}^{a}\}\mbox{ and } \\
\ ^{\shortmid }\mathbf{Z}_{\ \beta \gamma }^{\alpha } &=&\ ^{\shortmid }%
\mathbf{\Gamma }_{\ \beta \gamma }^{\alpha }-\ ^{\shortmid }\underline{%
\mathbf{\Gamma }}_{\ \beta \gamma }^{\alpha } \\
&=&\{\ ^{\shortmid }Z_{\ jk}^{i}=\ ^{\shortmid }L_{\ jk}^{i}-\ ^{\shortmid }%
\underline{L}_{\ jk}^{i},\ ^{\shortmid }\acute{Z}_{a\ k}^{\ b}=\ ^{\shortmid
}\acute{L}_{a\ k}^{\ b}-\ ^{\shortmid }\underline{\acute{L}}_{a\ k}^{\ b},\
^{\shortmid }\acute{Z}_{\ j}^{i\ c}=\ ^{\shortmid }\acute{C}_{\ j}^{i\ c}-\
^{\shortmid }\underline{\acute{C}}_{\ j}^{i\ c},\ ^{\shortmid }Z_{a}^{\
bc}=\ ^{\shortmid }C_{a}^{\ bc}-\ ^{\shortmid }\underline{C}_{a}^{\ bc}\},
\end{eqnarray*}%
} can be constructed in explicit form by considering corresponding
differences of N-adapted coefficients for linear and d--connections. Similar
formulas can be introduced for the $s$-shell distortion of linear
connections,
\begin{equation*}
\ _{s}^{\shortmid }\mathbf{Z}:=\ _{s}^{\shortmid }\mathbf{D}-\
_{s}^{\shortmid }\underline{D}\ \mbox{ for }\ ^{\shortmid }\mathbf{Z}_{\
\beta _{s}\gamma _{s}}^{\alpha _{s}}=\ ^{\shortmid }\mathbf{\Gamma }_{\
\beta _{s}\gamma _{s}}^{\alpha _{s}}-\ ^{\shortmid }\underline{\mathbf{%
\Gamma }}_{\ \beta _{s}\gamma _{s}}^{\alpha _{s}},
\end{equation*}%
where we omit formulas for h,v and cv-coefficients, which are similar to (%
\ref{dist}) and (\ref{sdist}). $\square $
\end{proof}

\vskip5pt

Let us provide certain physical motivations for above Definition, Lemma and
related formulas. It is not possible to decouple and integrate in general
form the (modified) Einstein equations in MGTs and GR if we work only with
the Levi-Civita or Lagrange-Hamilton type linear connections. General
decoupling properties can be proven for certain classes of more special
d-connections which are adapted to shell dyadic decompositions (see sections %
\ref{s4} and \ref{s5}; and, for other type higher dimension gravity
theories, \cite{gvvepjc14,bubuianucqg17,vacaruepjc17,vijtp10,bvap19}). MGTs
with MDRs are formulated, in general, in not  s-adapted dyadic variables
but using arbitrary frame, metric and (non) linear connection structures.
Lemma \ref{lemmadist} states that we can work equivalently with different
classes of connections determined by the same metric structure but adapted
to different nonholonomic frame structures. Such linear connections are
related by respective d-tensors. For instance, a class of d-connections is
convenient for decoupling physically important systems of nonlinear PDEs and
another one for formulating analogous Lagrange--Hamilton gravity theories
and/or almost symplectic and respective classical, thermodynamic, or quantum
models. The main conclusion of this subsection is that we can use any
s-connection adapted to a nonholonomic dyadic structure when general
off-diagonal solutions can be found in explicit form. Then the constructions
can be generalized in covariant form (for corresponding distortion
d-tensors) for other types of linear connection structures.

\subsubsection{Distinguished and shell curvature, torsion and nonmetricity
tensors}

\begin{definition}
\textbf{-Theorem}\footnote{%
In mathematical physics, there are used terms like Definition-Theorem / -
Lemma / - Corollary etc. for such definitions (new ideas, concepts, or
conventions) which are motivated by certain explicit geometric constructions
and/or requesting formulation of some theorems and respective mathematical
proofs.}\ \textsf{[curvature, torsion and nonmetricity of d- and
s-connections ] } Any d--connection $\mathbf{D,}$ or $\ ^{\shortmid }\mathbf{%
D,}$ is characterized by respective curvature $(\mathcal{R},$ or $\
^{\shortmid }\mathcal{R}),$ torsion $(\mathcal{T},$ or $\ ^{\shortmid }%
\mathcal{T}),$ and nonmetricity, $(\mathcal{Q},$ or $\ ^{\shortmid }\mathcal{%
Q})$, d-tensors defined and computed in standard forms:
\begin{eqnarray}
\mathcal{R}(\mathbf{X,Y})&:= &\mathbf{D}_{\mathbf{X}}\mathbf{D}_{\mathbf{Y}}-%
\mathbf{D}_{\mathbf{Y}}\mathbf{D}_{\mathbf{X}}-\mathbf{D}_{\mathbf{[X,Y]}},
\label{dcurvabstr} \\
\mathcal{T}(\mathbf{X,Y})&:=&\mathbf{D}_{\mathbf{X}}\mathbf{Y}-\mathbf{D}_{%
\mathbf{Y}}\mathbf{X}-[\mathbf{X,Y}]\mbox{ and }\mathcal{Q}(\mathbf{X}):=%
\mathbf{D}_{\mathbf{X}}\mathbf{g},  \notag \\
\mbox{ or }\ ^{\shortmid }\mathcal{R}(\ ^{\shortmid }\mathbf{X,\ ^{\shortmid
}Y})&:= &\ ^{\shortmid }\mathbf{D}_{\ ^{\shortmid }\mathbf{X}}\ ^{\shortmid }%
\mathbf{D}_{\ ^{\shortmid }\mathbf{Y}}-\ ^{\shortmid }\mathbf{D}_{\
^{\shortmid }\mathbf{Y}}\ ^{\shortmid }\mathbf{D}_{\ ^{\shortmid }\mathbf{X}%
}-\ ^{\shortmid }\mathbf{D}_{\mathbf{[\ ^{\shortmid }X,\ ^{\shortmid }Y]}},
\notag \\
\ ^{\shortmid }\mathcal{T}(\ ^{\shortmid }\mathbf{X,\ ^{\shortmid }Y})&:= &\
^{\shortmid }\mathbf{D}_{\ ^{\shortmid }\mathbf{X}}\ ^{\shortmid }\mathbf{Y}%
-\ ^{\shortmid }\mathbf{D}_{\ ^{\shortmid }\mathbf{Y}}\ ^{\shortmid }\mathbf{%
X}-[\ ^{\shortmid }\mathbf{X,\ ^{\shortmid }Y}]\mbox{ and }\ ^{\shortmid }%
\mathcal{Q}(\ ^{\shortmid }\mathbf{X}):=\ ^{\shortmid }\mathbf{D}_{\
^{\shortmid }\mathbf{X}}\ ^{\shortmid }\mathbf{g}.  \notag
\end{eqnarray}%
For s--connections $\ _{s}\mathbf{D}$ and/or $\ _{s}^{\shortmid }\mathbf{D,}$
we can consider similar values defined by following formulas curvature, $\
^{\shortmid }\mathcal{R}$ and/or $\ _{s}^{\shortmid }\mathcal{R},$ torsion, $%
\ ^{\shortmid }\mathcal{T}$ and/or$\ _{s}^{\shortmid }\mathcal{T},$ and
nonmetricity, $\ ^{\shortmid }\mathcal{Q},$ or $\ _{s}^{\shortmid }\mathcal{Q%
},$ d-tensors defined following such formulas
\begin{eqnarray}
\ _{s}\mathcal{R}(\ _{s}\mathbf{X,}\ _{s}\mathbf{Y})&:= &\ _{s}\mathbf{D}_{\
_{s}\mathbf{X}}\mathbf{D}_{\ _{s}\mathbf{Y}}-\ _{s}\mathbf{D}_{\ _{s}\mathbf{%
Y}}\ _{s}\mathbf{D}_{\ _{s}\mathbf{X}}-\ _{s}\mathbf{D}_{\mathbf{[\ _{s}X,\
_{s}Y]}},  \label{scurvabstr} \\
\ _{s}\mathcal{T}(\ _{s}\mathbf{X,\ _{s}Y})&:= &\ _{s}\mathbf{D}\ _{s\mathbf{%
X}}\ _{s}\mathbf{Y}-\ _{s}\mathbf{D}_{\ _{s}\mathbf{Y}}\ _{s}\mathbf{X}-[\
_{s}\mathbf{X,\ _{s}Y}]\mbox{ and }\ _{s}\mathcal{Q}(\ _{s}\mathbf{X}):=\
_{s}\mathbf{D}_{\ _{s}\mathbf{X}}\ _{s}\mathbf{g},  \notag \\
\mbox{ or }\ _{s}^{\shortmid }\mathcal{R}(\ _{s}^{\shortmid }\mathbf{X,\
_{s}^{\shortmid }Y})&:= &\ _{s}^{\shortmid }\mathbf{D}_{\ _{s}^{\shortmid }%
\mathbf{X}}\ _{s}^{\shortmid }\mathbf{D}_{\ _{s}^{\shortmid }\mathbf{Y}}-\
_{s}^{\shortmid }\mathbf{D}_{\ _{s}^{\shortmid }\mathbf{Y}}\ _{s}^{\shortmid
}\mathbf{D}_{\ _{s}^{\shortmid }\mathbf{X}}-\ _{s}^{\shortmid }\mathbf{D}_{%
\mathbf{[}\ _{s}^{\shortmid }\mathbf{X,}\ _{s}^{\shortmid }\mathbf{Y]}},
\notag \\
\ _{s}^{\shortmid }\mathcal{T}(\ _{s}^{\shortmid }\mathbf{X,}\
_{s}^{\shortmid }\mathbf{Y})&:= &\ _{s}^{\shortmid }\mathbf{D}_{\
_{s}^{\shortmid }\mathbf{X}}\ _{s}^{\shortmid }\mathbf{Y}-\ _{s}^{\shortmid }%
\mathbf{D}_{\ _{s}^{\shortmid }\mathbf{Y}}\ _{s}^{\shortmid }\mathbf{X}-[\
_{s}^{\shortmid }\mathbf{X,}\ _{s}^{\shortmid }\mathbf{Y}]\mbox{ and }\
_{s}^{\shortmid }\mathcal{Q}(\ _{s}^{\shortmid }\mathbf{X}):=\
_{s}^{\shortmid }\mathbf{D}_{\ _{s}^{\shortmid }\mathbf{X}}\ _{s}^{\shortmid
}\mathbf{g}.  \notag
\end{eqnarray}
\end{definition}

We put labels "s" in geometric formulas (\ref{scurvabstr}) in order to
emphasize that such abstract formulas encode geometrically a nonholonomic
dyadic structure. This means that on any shall $s=1,2,3,4$ the respective
shell curvature, torsion and nonmetricity d-tensors are defined and computed
following (\ref{dcurvabstr}) but using extensions of respective d-metrics
and d-connections on previous shells. This prescription performed in
explicit coefficient form with respect nonholonomic dyadic frames will allow
to prove an important decoupling property of modified Einstein equations
with MDRs, see bellow the Theorem \ref{theordecoupl} (and to prove, in
general, the main results in sections \ref{s4} and \ref{s5}). We are not
able to find such properties if we work with general d-connections for 4+4
splitting as in abstract formulas (\ref{dcurvabstr}) and/or their general/
particular coefficient forms. So, it is important to distinguish if certain
geometric/physical objects encode (or there are adapted) to certain dyadic
shell structure even the formulas are introduced in abstract form. In our
works, we use left labels "s".

The N--adapted coefficients for the curvature, torsion and nonmetricity
d-tensors (\ref{dcurvabstr}) and cotangent part of (\ref{scurvabstr}) are
provided in Corollary \ref{acorolcurv} in Appendix \ref{appendixa}.
Hereafter, we shall provide explicit \ geometric or coordinate formulas only
for cotangent bundle objects with a shell decomposition of equations if that
will not result in ambiguities. This is motivated also for $\mathbf{TV}$ and
$\mathbf{T}^{\ast }\mathbf{V}$ with established $L$-duality structure. In
such cases, for instance, we can transform formulas for $\ _{s}^{\shortmid }%
\mathcal{R}(\ _{s}^{\shortmid }\mathbf{X,\ _{s}^{\shortmid }Y})$ into those
for $\ _{s}\mathcal{R}(\ _{s}\mathbf{X,}\ _{s}\mathbf{Y})$ by omitting
labels "$\ ^{\shortmid }$". We eliminate shell splitting and get respective
formulas (\ref{dcurvabstr}) if omit $\ "_{s}".$ Details on coefficients
formulas for $\mathbf{TV,}$ $\mathbf{T}^{\ast }\mathbf{V}$ and dyadic
structures on$\ _{s}\mathbf{TV}$ can be found in \cite{v18a,gvvepjc14}. In
this article, we shall emphasize new geometric constructions for $\mathbf{T}%
^{\ast }\mathbf{V}$ and $_{s}\mathbf{T}^{\ast }\mathbf{V.}$

\subsubsection{The Ricci and Einstein d--tensors}

The Ricci tensor for a d--connection and/or a s-connection on a (co) tangent
bundle can be constructed in standard form by contracting, for instance, the
first and forth indices of respective curvature d-tensors $\mathcal{R}$ and$%
\ ^{\shortmid }\mathcal{R}$ in (\ref{dcurvabstr}) and, for instance, in $\
_{s}^{\shortmid }\mathcal{R}$ (\ref{scurvabstr}).

\begin{definition}
\textbf{-Theorem\ } \textsf{[Ricci tensors for d-- and s-connections ] } The
Ricci d--tensors are defined and computed as $Ric=\{\mathbf{R}_{\alpha \beta
}:=\mathbf{R}_{\ \alpha \beta \tau }^{\tau }\},$ for a d-connection $\mathbf{%
D}$, and $\ ^{\shortmid }Ric=\{\ ^{\shortmid }\mathbf{R}_{\alpha \beta }:=\
^{\shortmid }\mathbf{R}_{\ \alpha \beta \tau }^{\tau }\},$ for a
d-connection $\ ^{\shortmid }\mathbf{D.}$ For a shell decomposition, we can
define (for instance, on $_{s}\mathbf{T}^{\ast }\mathbf{V)}$ $\
_{s}^{\shortmid }Ric=\{\ ^{\shortmid }\mathbf{R}_{\alpha _{s}\beta _{s}}:=\
^{\shortmid }\mathbf{R}_{\ \alpha _{s}\beta _{s}\tau _{s}}^{\tau _{s}}\}.$
\end{definition}

In N-adapted form using respective formulas (\ref{dcurv}), one proves

\begin{corollary}
\textsf{[computation of Ricci d-tensors] } The N-adapted coefficients of the
Ricci d--tensors of a d-connection in a (co) tangent Lorentz bundle are
parameterized in $h$- and/or $v$-, or $cv$-form, and $h_{s}$-$cv_{s}$-forms
by formulas
\begin{eqnarray}
\mathbf{R}_{\alpha \beta } &=&\{R_{hj}:=R_{\ hji}^{i},\ \ R_{ja}:=-P_{\
jia}^{i},\ R_{bk}:=P_{\ bka}^{a},R_{\ bc}=S_{\ bca}^{a}\},\mbox{and/ or }
\label{dricci} \\
\ ^{\shortmid }\mathbf{R}_{\alpha \beta } &=&\{\ ^{\shortmid }R_{hj}:=\
^{\shortmid }R_{\ hji}^{i},\ ^{\shortmid }R_{j}^{\ a}:=-\ ^{\shortmid }P_{\
ji}^{i\ \ \ a},\ \ ^{\shortmid }R_{\ k}^{b}:=\ ^{\shortmid }P_{a\ k}^{\ b\ \
a},\ ^{\shortmid }R_{\ }^{bc}=\ ^{\shortmid }S_{a\ }^{\ bca}\},
\label{driccid} \\
\ ^{\shortmid }\mathbf{R}_{\alpha _{s}\beta _{s}} &=&\{\ ^{\shortmid
}R_{h_{s}j_{s}}:=\ ^{\shortmid }R_{\ h_{s}j_{s}i_{s}}^{i_{s}},\ ^{\shortmid
}R_{j_{s}}^{\ a_{s}}:=-\ ^{\shortmid }P_{\ j_{s}i_{s}}^{i_{s}\ \ \ a_{s}},\
\ ^{\shortmid }R_{\ k_{s}}^{b_{s}}:=\ ^{\shortmid }P_{a_{s}\ k_{s}}^{\
b_{s}\ \ a_{s}},\ ^{\shortmid }R_{\ }^{b_{s}c_{s}}=\ ^{\shortmid }S_{a_{s}\
}^{\ b_{s}c_{s}a_{s}}\},  \label{driccisd}
\end{eqnarray}
\end{corollary}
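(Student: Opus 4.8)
The statement is a direct corollary of the N--adapted form of the curvature d--tensor (formulas (\ref{dcurv}), Corollary \ref{acorolcurv} of Appendix \ref{appendixa}) together with the abstract definition $\mathbf{R}_{\alpha \beta }:=\mathbf{R}_{\ \alpha \beta \tau }^{\tau }$ of the Ricci d--tensor as the $\tau $--contraction of the first and fourth slots of the curvature d--tensor (\ref{dcurvabstr}). So the plan is simply to insert the known block decomposition of $\mathbf{R}_{\ \beta \gamma \delta }^{\alpha }$ into this contraction and to record which blocks survive for each h/v (or h/cv) type of the free indices $\alpha ,\beta $.

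First I would recall that, with respect to the N--adapted (co)frames of Lemma \ref{lemadap}, the curvature d--tensor of a d--connection $\mathbf{D}$ (resp. $\ ^{\shortmid }\mathbf{D}$, resp. the shell version on $\ _{s}\mathbf{T}^{\ast }\mathbf{V}$) has \emph{only} the h-- and v--type blocks $R_{\ hjk}^{i},\ P_{\ hja}^{i},\ P_{\ bjc}^{a},\ S_{\ bcd}^{a}$ (and the corresponding $\ ^{\shortmid }R,\ ^{\shortmid }P,\ ^{\shortmid }S$ with momentum indices raised, and the shell--labelled $\ ^{\shortmid }R_{\ h_{s}j_{s}i_{s}}^{i_{s}}$ etc.), all mixed blocks of other index types being absent by construction. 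Since in a $(2+2)$ splitting the summed index runs separately over an h--range and a v--range, $\mathbf{R}_{\alpha \beta }=\mathbf{R}_{\ \alpha \beta i}^{i}+\mathbf{R}_{\ \alpha \beta a}^{a}$, the contraction automatically sorts itself according to the type of $\alpha ,\beta $. Running through the four patterns: for $\alpha =h,\beta =j$ only the h--block contributes after summing on $i$, giving $R_{hj}=R_{\ hji}^{i}$; for $\alpha =j,\beta =a$ only the mixed block survives, giving (with the sign fixed by the ordering convention in (\ref{dcurv})) $R_{ja}=-P_{\ jia}^{i}$; for $\alpha =b,\beta =k$ one gets $R_{bk}=P_{\ bka}^{a}$; and for $\alpha =b,\beta =c$ only the v--block survives, $R_{bc}=S_{\ bca}^{a}$. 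This reproduces (\ref{dricci}). The cotangent--bundle case (\ref{driccid}) is obtained verbatim, the only extra bookkeeping being that the $v$--indices are now upper and that the defining relations of $\ ^{\shortmid }\mathbf{D}$ (e.g. $\ ^{\shortmid }\mathbf{D}_{\mathbf{e}_{k}}\ ^{\shortmid }e^{b}=-\ ^{\shortmid }\acute{L}_{a\ k}^{\ b}\ ^{\shortmid }e^{a}$) carry an extra sign, which is already absorbed into the form of the $\ ^{\shortmid }P,\ ^{\shortmid }S$ blocks in (\ref{dcurv}). For the diadic $(2+2)+(2+2)$ case, note that on each shell $s$ the construction is formally the same h$_{s}$--cv$_{s}$ computation with cumulative indices $\alpha _{s}=(i_{s},a_{s})$; hence (\ref{driccisd}) follows by applying the identical argument shell by shell.

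The computation is otherwise routine; the main (modest) obstacle is purely conventional: one must check that the listed curvature blocks in (\ref{dcurv}) are exhaustive --- i.e. that no block such as $R_{\ hja}^{a}$ or $R_{\ bki}^{i}$ of ``wrong'' type is present --- so that the off--block Ricci components $R_{ja},R_{bk}$ really are the single mixed contractions claimed, and that the signs $R_{ja}=-P_{\ jia}^{i}$ (and their cotangent and shell analogues) are consistent with the index/sign conventions fixed when the N--adapted curvature coefficients were written down in Appendix \ref{appendixa}. No torsion or nonmetricity input is needed; the result is a pure consequence of the N--adapted splitting of the curvature d--tensor and the definition of $Ric$ as its $\tau $--contraction.
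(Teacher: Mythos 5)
Your proposal is correct and follows essentially the same route as the paper: the text justifies the corollary precisely by contracting the N--adapted curvature blocks listed in formulas (\ref{dcurv}) of Corollary \ref{acorolcurv}, exactly the $\tau$--contraction and block--sorting you describe, applied verbatim to the cotangent and shell cases. Your added remark about checking exhaustiveness of the blocks and sign conventions is a reasonable elaboration of what the paper leaves implicit, but it is not a different method.
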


We note that in above Definition-Theorem for the Ricci d-tensors we use
total space indices $\alpha, \beta ...$ but the formulas in this Corollary
split into irreducible N-adapted coefficients with base manifold indices, $%
i,j,...$; (co) fiber indices, $a,b, ...$; or respective shell by shell
indices (with label s). This can be proved by explicit computations of the
coefficients of corresponding Ricci d-tensors with respect to N-adapted
bases, for instance, formula (\ref{dricci}). For tangent bundles,
Lagrange-Finsler spaces, higher dimension nonholonomic manifolds etc., such
cumbersome formulas can be found in \cite%
{gvvepjc14,bubuianucqg17,vacaruepjc17,vijtp10,bvap19} and references
therein. The formulas (\ref{driccid}) and (\ref{driccisd}) are introduced in
this work following general and abstract index definitions and similarities
for cotangent Lorentz bundles and their nonholonomic dyadic splitting. We
omit cumbersome coefficient formulas and their proofs which are similar
(they are dual but with possible Legendre symmetries of Lagrange-Hamilton
type) to those considered in our previous/ cited works.

If a (co) tangent bundle is enabled both with a d-connection, $\mathbf{D}$
(or$\ ^{\shortmid }\mathbf{D),}$ and d-metric, $\mathbf{g}$ (\ref{dmt}) (or $%
\ ^{\shortmid }\mathbf{g}$ (\ref{dmct})), we can introduce nonholonomic
Ricci scalars:

\begin{definition}
\textbf{-Theorem\ } \textsf{[scalar curvature of d-connections and
s-connections] } The scalar curvature of a d-connection $\mathbf{D,}$ or $\
^{\shortmid }\mathbf{D,}$ or (for instance) $\ _{s}^{\shortmid }\mathbf{D,} $
can be defined and computed respectively for the inverse d-metric $\mathbf{g}%
^{\alpha \beta },$ or $\ ^{\shortmid }\mathbf{g}^{\alpha \beta },$ or $\
^{\shortmid }\mathbf{g}^{\alpha _{s}\beta _{s}},$
\begin{eqnarray*}
\ _{d}R &:&=\mathbf{g}^{\alpha \beta }\mathbf{R}_{\alpha \beta
}=g^{ij}R_{ij}+g^{ab}R_{ab}=R+S;\ \mbox{ or }\ _{d}^{\shortmid }R:=\
^{\shortmid }\mathbf{g}^{\alpha \beta }\ ^{\shortmid }\mathbf{R}_{\alpha
\beta }=\ ^{\shortmid }g^{ij}\ ^{\shortmid }R_{ij}+\ ^{\shortmid }g^{ab}\
^{\shortmid }R_{ab}=\ ^{\shortmid }R+\ ^{\shortmid }S; \\
\mbox{ or }\ _{s}^{\shortmid }\overline{R} &:&=\ ^{\shortmid }\mathbf{g}%
^{\alpha _{s}\beta _{s}}\ ^{\shortmid }\mathbf{R}_{\alpha _{s}\beta _{s}}=\
^{\shortmid }g^{i_{s}j_{s}}\ ^{\shortmid }R_{i_{s}j_{s}}+\ ^{\shortmid
}g^{a_{s}b_{s}}\ ^{\shortmid }R_{a_{s}b_{s}}=\ _{1}^{\shortmid }R+\
_{2}^{\shortmid }S+\ _{3}^{\shortmid }S+\ _{4}^{\shortmid }S,
\end{eqnarray*}%
with respective h-- and v--components, and s-components
\begin{eqnarray}
R &=&g^{ij}R_{ij},S=g^{ab}S_{ab},\mbox{ or }\ ^{\shortmid }R=\ ^{\shortmid
}g^{ij}\ ^{\shortmid }R_{ij},\ ^{\shortmid }S=\ ^{\shortmid }g_{ab}\
^{\shortmid }S^{ab},\mbox{ or }  \notag \\
\ _{1}^{\shortmid }R &=&\ ^{\shortmid }g^{i_{1}j_{1}}\ ^{\shortmid
}R_{i_{1}j_{1}},\ _{2}^{\shortmid }S=\ ^{\shortmid }g^{a_{2}b_{2}}\
^{\shortmid }R_{a_{2}b_{2}},\ _{3}^{\shortmid }S=\ ^{\shortmid
}g_{a_{3}b_{3}}\ ^{\shortmid }S^{a_{3}b_{3}},\ _{4}^{\shortmid }S=\
^{\shortmid }g_{a_{4}b_{4}}\ ^{\shortmid }S^{a_{4}b_{4}}.  \label{riccidscal}
\end{eqnarray}
\end{definition}

We note that in above formulas the indices on the 2d shell are labeled as in
tangent bundles of total dimension 4 even the constructions are for a 8-d
total dimension of $_{s}\mathbf{T}^{\ast }\mathbf{V.}$ The indices on shells
$s=3$ and 4 are dual ones as for 2-d typical cofibers. On $\mathbf{TV}$
and/or $\mathbf{T}^{\ast }\mathbf{V,}$ we can consider canonical d-metrics $%
\widetilde{\mathbf{g}}$ (\ref{cdms}) and/or $\ ^{\shortmid }\widetilde{%
\mathbf{g}}$ (\ref{cdmds}) determined directly by an indicator of MDRs.

As dual shall extensions of constructions from \cite{v18a,gvvepjc14} (and
using above Definitions-Theorems and Corollaries), we formulate

\begin{definition}
\textbf{-Theorem\ }\label{dteinstdt} \textsf{[the Einstein tensors for
d-connections and s-connection] } By construction, the respective Einstein
d-tensors on $\mathbf{TV,}$ $\mathbf{T}^{\ast }\mathbf{V,}$\ $_{s}\mathbf{T}%
^{\ast }\mathbf{V,}$ are defined:
\begin{equation*}
En=\{\mathbf{E}_{\alpha \beta }:=\mathbf{R}_{\alpha \beta }-\frac{1}{2}%
\mathbf{g}_{\alpha \beta }\ _{d}R\},\ ^{\shortmid }En=\{\ ^{\shortmid }%
\mathbf{E}_{\alpha \beta }:=\ ^{\shortmid }\mathbf{R}_{\alpha \beta }-\frac{1%
}{2}\ ^{\shortmid }\mathbf{g}_{\alpha \beta }\ _{d}^{\shortmid }R\},\
_{s}^{\shortmid }En=\{\ ^{\shortmid }\mathbf{E}_{\alpha _{s}\beta _{s}}:=\
^{\shortmid }\mathbf{R}_{\alpha _{s}\beta _{s}}-\frac{1}{2}\ ^{\shortmid }%
\mathbf{g}_{\alpha _{s}\beta _{s}}\ _{s}^{\shortmid }\overline{R}\}.
\end{equation*}
\end{definition}

\begin{proof}
Such proofs follow from explicit constructions on regions of some atlases
covering respectively $\mathbf{TV}$ and/or $\mathbf{T}^{\ast }\mathbf{V,}$
and \ $_{s}\mathbf{T}^{\ast }\mathbf{V,}$ using N-adapted coefficients (\ref%
{dcurvabstr}) and cotangent part of (\ref{scurvabstr}); and (\ref{dricci})
and/or (\ref{driccid}), and (\ref{driccisd}). $\square $\vskip5pt
\end{proof}

\subsubsection{Physically important d- and s-connections and their
distortions}

Such linear connections are similar to those in Finsler-Lagrange-Hamilton
geometry \cite{v18a,gvvepjc14,vmon06} with "tilde" variables encoding
directly the contributions of certain MDRs indicators and which can be used
for elaborating respective geometric mechanics models, or as almost K\"{a}%
hler geometries which can be used deformation quantization of such theories.
The canonical d--connections and s-connections, in their turn, are
convenient for constructing exact solutions in various MGTs and nonholonomic
geometric evolution models.

\begin{definition}
\textbf{-Theorem\ } \label{phidc} \textsf{[physically important
d-connections and s-connections for decoupling of (generalized) Einstein
equations] } There are such metric compatible d- and s-connections on (co)
tangent Lorentz bundles (which encode in direct or indirect forms MDRs (\ref%
{mdrg}), with and a possible $\mathcal{L}$--duality) being characterized
respectively by such geometric conditions:
\begin{eqnarray}
\lbrack \mathbf{g,N]} &\mathbf{\simeq }&\mathbf{[}\widetilde{\mathbf{g}},%
\widetilde{\mathbf{N}}]\mathbf{\simeq \lbrack }\ _{s}\mathbf{g,}\ _{s}%
\mathbf{N]}  \label{canondcl} \\
&\Longrightarrow &\left\{
\begin{array}{ccccc}
\nabla : &  & \nabla \mathbf{g}=0;\ \mathbf{T[\nabla ]}=0, &  & %
\mbox{Lagrange LC--connection}; \\
\widehat{\mathbf{D}}: &  & \widehat{\mathbf{D}}\ \mathbf{g}=0;\ hv\widehat{%
\mathbf{T}}=0, &  & \mbox{canonical Lagrange
d-connection}; \\
\widetilde{\mathbf{D}}: &  & \widetilde{\mathbf{D}}\widetilde{g}=0,hv%
\widetilde{\mathbf{T}}=0 &  & \mbox{Cartan-Lagrange d-connection}; \\
\ _{s}\widehat{\mathbf{D}}: &  & \ _{s}\widehat{\mathbf{D}}\ \ _{s}\mathbf{g}%
=0;\ h_{s}v_{s^{\prime }}\widehat{\mathbf{T}}=0,s\neq s^{\prime } &  & %
\mbox{ canonical dyadic s-connection},%
\end{array}%
\right.  \notag
\end{eqnarray}%
and/or
\begin{eqnarray}
\lbrack \ ^{\shortmid }\mathbf{g,\ ^{\shortmid }N]} &\mathbf{\simeq }&%
\mathbf{[}\ ^{\shortmid }\widetilde{\mathbf{g}},\ ^{\shortmid }\widetilde{%
\mathbf{N}}]\mathbf{\simeq \mathbf{[}}\ _{s}^{\shortmid }\mathbf{\mathbf{g,}}%
\ _{s}^{\shortmid }\mathbf{\mathbf{N]}}  \label{canondch} \\
&\Longrightarrow &\left\{
\begin{array}{ccccc}
\ ^{\shortmid }\nabla : &  & \ ^{\shortmid }\nabla \ ^{\shortmid }\mathbf{g}%
=0;\ \ ^{\shortmid }\mathbf{T[\ ^{\shortmid }\nabla ]}=0, &  & %
\mbox{Hamilton LC-connection}; \\
\ ^{\shortmid }\widehat{\mathbf{D}}: &  & \ ^{\shortmid }\widehat{\mathbf{D}}%
\ \ ^{\shortmid }\mathbf{g}=0;\ hcv\ ^{\shortmid }\widehat{\mathbf{T}}=0, &
& \mbox{canonical Lagrange
d-connection}; \\
\ ^{\shortmid }\widetilde{\mathbf{D}}: &  & \ ^{\shortmid }\widetilde{%
\mathbf{D}}\ ^{\shortmid }\widetilde{g}=0,hcv\ ^{\shortmid }\widetilde{%
\mathbf{T}}=0, &  & \mbox{Cartan-Hamilton d-connection}; \\
\ _{s}^{\shortmid }\widehat{\mathbf{D}}: &  & \ _{s}^{\shortmid }\widehat{%
\mathbf{D}}\ \ _{s}^{\shortmid }\mathbf{g}=0;\ h_{s}cv_{s^{\prime }}\
^{\shortmid }\widehat{\mathbf{T}}=0,s\neq s^{\prime } &  &
\mbox{ dual canonical dyadic s-connection}.%
\end{array}%
\right.  \notag
\end{eqnarray}
\end{definition}

\begin{proof}
It is similar to that sketched for the Definition 2.10 and Theorem in ref. \cite%
{v18a} being omitted the constructions for almost symplectic structures but
extended "shell by shell" for dyadic decompositions and respective d- and
s-metrics/-connections. $\square $\vskip5pt
\end{proof}

MGTs with MDRs on (co) tangent bundles are characterized by multi-connection
structures. Each such linear connection, in principle, can be derived by a
metric structure following certain geometrically/ physically motivated
principles or technical goals. For instance, to construct exact solutions
following the AFDM is most convenient to work with the canonical
d-connections $\widehat{\mathbf{D}}$ and $^{\shortmid }\widehat{\mathbf{D}}$
but with additional dyadic shell by shell splitting. Following the
conditions of Lemma \ref{ldist}, we prove

\begin{theorem}
\textsf{[existence of physically important distortions of connections for
respective MGTs with MDRs] } \label{thdistr} \newline
There are unique distortions relations
\begin{eqnarray}
\widehat{\mathbf{D}} &=&\nabla +\widehat{\mathbf{Z}},\widetilde{\mathbf{D}}%
=\nabla +\widetilde{\mathbf{Z}},\mbox{ and }\widehat{\mathbf{D}}=\widetilde{%
\mathbf{D}}+\mathbf{Z,}\mbox{  determined by }(\mathbf{g\simeq }\widetilde{%
\mathbf{g}}\mathbf{,N\mathbf{\simeq }\widetilde{\mathbf{N}})};
\label{candistr} \\
\ ^{\shortmid }\widehat{\mathbf{D}} &=&\ ^{\shortmid }\nabla +\ ^{\shortmid }%
\widehat{\mathbf{Z}},\ ^{\shortmid }\widetilde{\mathbf{D}}=\ ^{\shortmid
}\nabla +\ ^{\shortmid }\widetilde{\mathbf{Z}},\mbox{ and }\ ^{\shortmid }%
\widehat{\mathbf{D}}=\ ^{\shortmid }\widetilde{\mathbf{D}}+\ ^{\shortmid }%
\mathbf{Z,}\mbox{ determined by }(\ ^{\shortmid }\mathbf{g\simeq }\
^{\shortmid }\widetilde{\mathbf{g}}\mathbf{,\ ^{\shortmid }N\mathbf{\simeq }%
\ ^{\shortmid }\widetilde{\mathbf{N}})};  \notag \\
\ _{s}^{\shortmid }\widehat{\mathbf{D}} &=&\ ^{\shortmid }\nabla +\
_{s}^{\shortmid }\widehat{\mathbf{Z}},\ \mbox{ and }\ _{s}^{\shortmid }%
\widehat{\mathbf{D}}=\ _{s}^{\shortmid }\widehat{\mathbf{D}}+\
_{s}^{\shortmid }\widehat{\mathbf{Z}}\mathbf{,}\mbox{ determined by }(\
^{\shortmid }\mathbf{g\simeq }\ ^{\shortmid }\widetilde{\mathbf{g}}\mathbf{%
\simeq \ _{s}^{\shortmid }\mathbf{g},\ ^{\shortmid }N\mathbf{\simeq }\
^{\shortmid }\widetilde{\mathbf{N}}\mathbf{\simeq }\ _{s}^{\shortmid }N)};
\notag
\end{eqnarray}%
for distortion d-tensors $\widehat{\mathbf{Z}},\widetilde{\mathbf{Z}},$ and $%
\mathbf{Z,}$ on $T\mathbf{TV;}$ dual distortion d-tensors $\ ^{\shortmid }%
\widehat{\mathbf{Z}},\ ^{\shortmid }\widetilde{\mathbf{Z}},$ and $\
^{\shortmid }\mathbf{Z,}$ on $T\mathbf{T}^{\ast }\mathbf{V;}$ and dual
distortion s-tensor $\ _{s}^{\shortmid }\widehat{\mathbf{Z}},\
_{s}^{\shortmid }\widetilde{\mathbf{Z}},$ and $\ _{s}^{\shortmid }\mathbf{Z,}
$ on $\ _{s}T\mathbf{T}^{\ast }\mathbf{V.}$
\end{theorem}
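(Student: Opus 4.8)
The plan is to reduce everything to the classical Finsler--Lagrange--Hamilton distortion results already invoked for the non-shell case and then to iterate the argument shell by shell. First I would observe that all three pieces of the theorem have the same logical shape: given a metric (resp. d-metric, s-metric) and a compatible (N-adapted, resp. s-adapted) nonlinear connection, the canonical d-connections $\widehat{\mathbf{D}}$, $\widetilde{\mathbf{D}}$ (and their duals and shell versions) are, by Definition-Theorem \ref{phidc}, \emph{uniquely} fixed by their metric-compatibility and torsion-vanishing conditions, while the Levi-Civita connection $\nabla$ is likewise unique. Hence for each pair the difference $\widehat{\mathbf{D}}-\nabla$, $\widetilde{\mathbf{D}}-\nabla$, $\widehat{\mathbf{D}}-\widetilde{\mathbf{D}}$ exists as a tensorial object: this is exactly the content of Lemma \ref{ldist}, which tells us that the difference of any two linear connections (one of which may fail to be a d-connection, as is the case for $\nabla$) is a distortion d-tensor with the h--v / h--cv component decomposition displayed in its proof. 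So existence of $\widehat{\mathbf{Z}}$, $\widetilde{\mathbf{Z}}$, $\mathbf{Z}$ and their barred/shell analogues is immediate from Lemma \ref{ldist}; only uniqueness and the explicit "determined by $(\mathbf{g}\simeq\widetilde{\mathbf{g}},\mathbf{N}\simeq\widetilde{\mathbf{N}})$" clause require argument.

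For uniqueness I would argue as follows. Fix the geometric data $(\mathbf{g},\mathbf{N})$ (equivalently, via the Theorem on equivalent representations, the canonical data $(\widetilde{L},\widetilde{\mathbf{N}})$ determined by the MDR indicator $\varpi$). The LC-connection $\nabla$ is uniquely determined by $\mathbf{g}$ through the Koszul formula; the canonical d-connection $\widehat{\mathbf{D}}$ is uniquely determined by the four conditions $\widehat{\mathbf{D}}\mathbf{g}=0$ together with the prescribed vanishing of the mixed $hv$ torsion components (and analogously $\widetilde{\mathbf{D}}$ by $\widetilde{\mathbf{D}}\widetilde{g}=0$, $hv\widetilde{\mathbf{T}}=0$). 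Since each connection on the right of a distortion relation is uniquely pinned down by the data, and since the distortion d-tensor is by definition the pointwise difference of their coefficient matrices with respect to one fixed N-adapted frame, the distortion is unique. Concretely, I would write $\widehat{\mathbf{Z}}_{\ \beta\gamma}^{\alpha}=\widehat{\mathbf{\Gamma}}_{\ \beta\gamma}^{\alpha}-{}_{\nabla}\Gamma_{\ \beta\gamma}^{\alpha}$, note that the right-hand side transforms tensorially under N-adapted frame changes because the two inhomogeneous (non-tensorial) terms in the transformation laws of $\widehat{\mathbf{\Gamma}}$ and ${}_{\nabla}\Gamma$ coincide, and record that its h-- and v--blocks are computed by the formulas in the proof of Lemma \ref{ldist} specialized to $\underline{D}=\nabla$. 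The same computation, performed on the cotangent bundle with the up-bar labels and the $hcv$ torsion conditions, yields $\ ^{\shortmid}\widehat{\mathbf{Z}}$, $\ ^{\shortmid}\widetilde{\mathbf{Z}}$, $\ ^{\shortmid}\mathbf{Z}$; by the $\mathcal{L}$-duality of Legendre type it suffices to do one side.

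For the shell statement I would proceed by induction on $s=1,2,3,4$. The base case $s=1$ is just the 2-d horizontal piece, where the distortion reduces to the ordinary LC versus canonical-d comparison on $\ ^1hT^{\ast}V$. For the inductive step, assuming the distortion s-tensor has been constructed and shown unique on $\ _{s-1}^{\shortmid}\mathbf{T}^{\ast}\mathbf{V}$, I would use that the s-connection $\ _{s}^{\shortmid}\widehat{\mathbf{D}}$ restricted to the lower shells agrees with $\ _{s-1}^{\shortmid}\widehat{\mathbf{D}}$ (this is what "preserves under parallelism the shell N-connection structure (\ref{ncon2})" buys us), so the new coefficients appear only in the $\alpha_s$-indexed blocks that mix the $s$-th 2-d shell with the previous ones and with itself; these are fixed uniquely by $\ _{s}^{\shortmid}\widehat{\mathbf{D}}\ _{s}^{\shortmid}\mathbf{g}=0$ and $h_scv_{s'}\ ^{\shortmid}\widehat{\mathbf{T}}=0$ for $s\neq s'$, exactly as in Definition-Theorem \ref{phidc}. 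Setting $\ _{s}^{\shortmid}\widehat{\mathbf{Z}}:=\ _{s}^{\shortmid}\widehat{\mathbf{\Gamma}}-\ ^{\shortmid}{}_{\nabla}\Gamma$ then gives the claimed relation $\ _{s}^{\shortmid}\widehat{\mathbf{D}}=\ ^{\shortmid}\nabla+\ _{s}^{\shortmid}\widehat{\mathbf{Z}}$, and likewise $\ _{s}^{\shortmid}\widehat{\mathbf{D}}=\ _{s}^{\shortmid}\widetilde{\mathbf{D}}+\ _{s}^{\shortmid}\widehat{\mathbf{Z}}$ after subtracting the Cartan--Hamilton s-connection instead.

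The main obstacle I anticipate is not existence but the bookkeeping of \emph{which} torsion components are prescribed to vanish at each shell and the verification that the resulting linear system for the "new" connection coefficients in the inductive step is nondegenerate — i.e. that metric-compatibility plus the selected $h_scv_{s'}$-torsion constraints really do determine $\ _{s}^{\shortmid}\widehat{\mathbf{\Gamma}}$ uniquely rather than over- or under-determining it. This is the standard Finsler-geometry fact that the Koszul-type formula for the canonical d-connection has a unique solution once the nonholonomy (anholonomy) coefficients $w_{\alpha\beta}^{\gamma}$ of the N-adapted frame are given; I would discharge it by citing the analogous verification in \cite{v18a,gvvepjc14} and remarking that the diadic splitting only refines, but does not alter, the algebra. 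The rest — tensoriality of the differences, the $\mathcal{L}$-dual transfer to $T^{\ast}V$, and the explicit component formulas — is routine and can be relegated to Appendix \ref{appendixa}. $\square$
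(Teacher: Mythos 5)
Your proposal is correct and follows essentially the same route as the paper: existence and tensoriality of the distortions come from Lemma \ref{ldist}, uniqueness comes from the fact that $\nabla$, $\widehat{\mathbf{D}}$, $\widetilde{\mathbf{D}}$ and their dual/shell analogues are each uniquely pinned down by the metric-compatibility and torsion conditions of Definition-Theorem \ref{phidc}, and the explicit components are algebraic combinations of the canonical coefficients (\ref{canlc})--(\ref{candcons34}) inserted into the torsion formulas (\ref{dtors}). The shell-by-shell induction you add is a harmless elaboration of what the paper dispatches with the remark that "similar nontrivial torsion components can be computed for cotangent bundles and respective diadic decompositions."
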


We emphasize that above Theorem (and all further results involving this
theorem and related distortion formulas) contains nontrivial statements
because there are involved canonical d-connections and their distortions
canonically determined by nontrivial MDRs (\ref{mdrg}) and respective
canonical d-metric and N-connection structures (all labeled by symbols with
"tilde"). This proves that MDRs can be characterized both geometrically and
physically by certain unique distortion d-tensors (from respective Levi -
Civita connections, which is the standard one in GR). The distortion
formulas (\ref{candistr}) allow us to work with different types of
d-connections and linear connections which are convenient for constructing
exact solutions or elaborating analogous mechanical models. Here we note
that the d--tensor $\widehat{\mathbf{Z}}$ in above formulas is an algebraic
combination of coefficients $\widehat{\mathbf{T}}_{\ \alpha \beta }^{\gamma
}[\mathbf{g,N}]$ computed by introducing formulas (\ref{canlc}) into (\ref%
{dtors}). Similar nontrivial torsion components can be computed for
cotangent bundles and respective dyadic decompositions. We formulate this

\begin{consequence}
MDRs (\ref{mdrg}) are characterized by respective canonical d-tensors $%
\widehat{\mathbf{Z}}[\widetilde{\mathbf{g}},\widetilde{\mathbf{N}}],%
\widetilde{\mathbf{Z}}[\widetilde{\mathbf{g}},\widetilde{\mathbf{N}}],$ and $%
\mathbf{Z}[\widetilde{\mathbf{g}},\widetilde{\mathbf{N}}],$ for Lagrange
models; $\ ^{\shortmid }\widehat{\mathbf{Z}}[\ ^{\shortmid }\widetilde{%
\mathbf{g}},\ ^{\shortmid }\widetilde{\mathbf{N}}],\ ^{\shortmid }\widetilde{%
\mathbf{Z}}[\ ^{\shortmid }\widetilde{\mathbf{g}},\ ^{\shortmid }\widetilde{%
\mathbf{N}}],$ and $\ ^{\shortmid }\mathbf{Z}[\ ^{\shortmid }\widetilde{%
\mathbf{g}},\ ^{\shortmid }\widetilde{\mathbf{N}}],$ for Hamilton models, as
in Appendix \ref{appendixb}.
\end{consequence}

In result, we can draw this

\begin{conclusion}
\textsf{[equivalent geometric and physically important data for modeling
phase spaces with MDRs] } \newline
The phase space geometry can be described in equivalent forms (up to
respective nonholonomic deformations of the linear connection structures and
nonholonomic frame transforms) by such data{\small
\begin{equation}
\begin{array}{ccccc}
\mbox{MDRs} & \nearrow & (\mathbf{g,N,}\widehat{\mathbf{D}})\leftrightarrows
(L:\widetilde{\mathbf{g}}\mathbf{,}\widetilde{\mathbf{N}},\widetilde{\mathbf{%
D}}) & \leftrightarrow (\ _{s}\mathbf{g,\ _{s}N,}\ _{s}\widehat{\mathbf{D}})
& \leftrightarrow \lbrack (\mathbf{g[}N],\nabla )],\mbox{ on }\mathbf{TV} \\
\mbox{indicator }\varpi &  & \updownarrow \mbox{ possible }\mathcal{L}%
\mbox{-duality }\& & \mbox{dyadic decomposition} & \updownarrow
\mbox{ not
N-adapted } \\
\mbox{ see (\ref{mdrg})} & \searrow & (\ ^{\shortmid }\mathbf{g,\
^{\shortmid }N,}\ ^{\shortmid }\widehat{\mathbf{D}})\leftrightarrows (H:\
^{\shortmid }\widetilde{\mathbf{g}},\ ^{\shortmid }\widetilde{\mathbf{N}},\
^{\shortmid }\widetilde{\mathbf{D}}) & \leftrightarrow (\ _{s}^{\shortmid }%
\mathbf{g,\ _{s}^{\shortmid }N,}\ _{s}^{\shortmid }\widehat{\mathbf{D}}) &
\leftrightarrow \lbrack (\ ^{\shortmid }\mathbf{g}[\ ^{\shortmid }N],\
^{\shortmid }\nabla )],\mbox{on}\mathbf{T}^{\ast }\mathbf{V}.%
\end{array}
\label{phspgd}
\end{equation}%
}
\end{conclusion}

This Conclusion complete for dyadic splitting the Conclusion 2.4 in \cite%
{v18a}. Following the conventions of that work, we say that certain
geometric constructions are canonical (i.e. formulated in canonical
nonholonomic variables) if they are performed for "hat", or "tilde",
d-connections and related geometric objects with possible respective dyadic
decompositions uniquely derived for certain Lagrange-Hamilton fundamental
generating functions (in particular, for a Finsler metric $F$).

We can always to impose certain (in general, nonholonomic) constraints of
type $\widehat{\mathbf{Z}}=0,$ when $\widehat{\mathbf{D}}_{|\widehat{\mathbf{%
Z}}=0}\simeq \nabla $, i.e. both connections are given by same coefficients
in some adapted frames, even $\widehat{\mathbf{D}}\neq \nabla .$ Similar
LC-conditions can imposed on cotangent bundles and for dyadic splitting.
This is possible because the frame/coordinate transformation laws, and
possible dyadic splitting of nonlinear and distinguished / linear
connections are different from that of tensors. For instance, we can chose a
frame structure when different connections may be determined by the same set
of coefficients with respect to such a special frame and by different sets
of adapted coefficients in other systems of reference. Imposing such
conditions, we can always extract LC-configurations from more (general)
nonholonomic metric-affine and/or dyadic structures.

\begin{corollary}
\textsf{[extracting LC-configurations by additional (non) holonomic
constraints and dyadic conditions]} \newline
There are extracted LC-configurations from $\widehat{\mathbf{D}},$ $\
^{\shortmid }\widehat{\mathbf{D}},$ or $\ _{s}^{\shortmid }\widehat{\mathbf{D%
}},$\ for respective zero distortions ($\widehat{\mathbf{Z}},$ $\
^{\shortmid }\widehat{\mathbf{Z}},$ or $\ _{s}^{\shortmid }\widehat{\mathbf{Z%
}})$ if there are imposed zero torsion conditions, correspondingly, for $%
\widehat{\mathcal{T}}$ $=\{\widehat{\mathbf{T}}_{\ \alpha \beta }^{\gamma
}\}=0,$ $\ ^{\shortmid }\widehat{\mathcal{T}}$ $=\{\ ^{\shortmid }\widehat{%
\mathbf{T}}_{\ \alpha \beta }^{\gamma }\}=0,$ or $\ _{s}^{\shortmid }%
\widehat{\mathcal{T}}$ $=\{\ ^{\shortmid }\widehat{\mathbf{T}}_{\ \alpha
_{s}\beta _{s}}^{\gamma _{s}}\}=0,$ see $\ $(\ref{dtors}). Such conditions
are satisfied if
\begin{eqnarray}
\widehat{C}_{jb}^{i} &=&0,\Omega _{\ ji}^{a}=0\mbox{ and }\widehat{L}%
_{aj}^{c}=e_{a}(N_{j}^{c});\ ^{\shortmid }\widehat{C}_{j}^{i\ b}=0,\
^{\shortmid }\Omega _{\ aji}=0\mbox{ and }\ ^{\shortmid }\widehat{L}_{c\
j}^{\ a}=\ ^{\shortmid }e^{a}(\ ^{\shortmid }N_{cj});  \label{lccondh} \\
\ ^{\shortmid }\widehat{C}_{j_{s}}^{i_{s}\ b_{s}} &=&0,\ ^{\shortmid }\Omega
_{\ a_{s}j_{s}i_{s}}=0\mbox{ and }\ ^{\shortmid }\widehat{L}_{c_{s}\
j_{s}}^{\ a_{s}}=\ ^{\shortmid }e^{a_{s}}(\ ^{\shortmid }N_{c_{s}j_{s}}).
\label{lccondsd}
\end{eqnarray}
\end{corollary}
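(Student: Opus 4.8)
The plan is to prove the corollary by direct computation of the N-adapted torsion coefficients of the canonical d-connection and showing that the listed conditions are precisely those making all of them vanish. The argument proceeds in three parallel tracks (for $\widehat{\mathbf{D}}$ on $T\mathbf{TV}$, for $\ ^{\shortmid }\widehat{\mathbf{D}}$ on $T\mathbf{T}^{\ast }\mathbf{V}$, and for $\ _{s}^{\shortmid }\widehat{\mathbf{D}}$ on $\ _{s}T\mathbf{T}^{\ast }\mathbf{V}$), but they are structurally identical, so I would carry out the tangent-bundle case in detail and indicate that the cotangent and diadic cases follow by $\mathcal{L}$-duality and by restricting to each shell. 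First I would recall from (\ref{dtors}) the explicit N-adapted decomposition of $\widehat{\mathbf{T}}_{\ \alpha \beta }^{\gamma }$ into its $h(hh)$, $h(hv)$, $v(hh)$, $v(hv)$, and $v(vv)$ pieces: schematically $\widehat{T}_{\ jk}^{i}$ built from the antisymmetrization of $\widehat{L}_{\ jk}^{i}$, $\widehat{T}_{\ ja}^{i}=-\widehat{C}_{\ ja}^{i}$ (up to sign conventions), $\widehat{T}_{\ ji}^{a}=-\Omega_{\ ji}^{a}$ where $\Omega_{\ ji}^{a}=e_{i}(N_{j}^{a})-e_{j}(N_{i}^{a})$ is the N-connection curvature, $\widehat{T}_{\ aj}^{c}=\widehat{L}_{\ aj}^{c}-e_{a}(N_{j}^{c})$, and $\widehat{T}_{\ bc}^{a}$ built from the antisymmetrization of $\widehat{C}_{\ bc}^{a}$. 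Because the canonical d-connection already satisfies $hv\widehat{\mathbf{T}}=0$ by its defining property in (\ref{canondcl}), the $h(hh)$ and $v(vv)$ parts — equivalently $\widehat{T}_{\ jk}^{i}$ and $\widehat{T}_{\ bc}^{a}$ — vanish identically, so they impose no condition.

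Next I would read off from this decomposition exactly which of the remaining components can be nonzero: $\widehat{T}_{\ ja}^{i}$, which is (proportional to) $\widehat{C}_{jb}^{i}$; $\widehat{T}_{\ ji}^{a}$, which is $\Omega_{\ ji}^{a}$; and $\widehat{T}_{\ aj}^{c}$, which is $\widehat{L}_{aj}^{c}-e_{a}(N_{j}^{c})$. Setting $\widehat{\mathcal{T}}=0$ is therefore equivalent to the simultaneous vanishing of these three blocks, which is precisely the system (\ref{lccondh}): $\widehat{C}_{jb}^{i}=0$, $\Omega_{\ ji}^{a}=0$, and $\widehat{L}_{aj}^{c}=e_{a}(N_{j}^{c})$. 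Conversely, substituting these three relations back into every torsion component and using the already-established vanishing of the $hv$-pieces shows all $\widehat{\mathbf{T}}_{\ \alpha\beta}^{\gamma}=0$. It then remains to observe that, under the distortion $\widehat{\mathbf{D}}=\nabla+\widehat{\mathbf{Z}}$ of Theorem \ref{thdistr}, a torsion-free metric-compatible d-connection must coincide with $\nabla$ (uniqueness of the Levi-Civita connection), so $\widehat{\mathbf{Z}}=0$ on this configuration; this is the "extraction of LC-configurations" claimed in the statement. The cotangent-bundle identities (\ref{lccondh}, right half) and the shell identities (\ref{lccondsd}) are obtained by the same bookkeeping applied to $\ ^{\shortmid }\widehat{\mathbf{T}}_{\ \alpha\beta}^{\gamma}$ and to $\ ^{\shortmid }\widehat{\mathbf{T}}_{\ \alpha_{s}\beta_{s}}^{\gamma_{s}}$, with co-fiber indices raised and the sign conventions of the $cv$-blocks tracked; the diadic case additionally requires that the mixed-shell conditions $h_{s}v_{s'}\widehat{\mathbf{T}}=0$ for $s\neq s'$ (already part of the definition of $\ _{s}\widehat{\mathbf{D}}$) be invoked shell by shell.

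The main obstacle I anticipate is purely organizational rather than conceptual: keeping the index placement, the shell labels, and the sign conventions consistent across the three cases, especially for the cotangent bundle where the vertical coefficients carry \emph{upper} co-fiber indices and the N-connection curvature $\ ^{\shortmid }\Omega_{\ aji}$ is built from $\ ^{\shortmid }N_{aj}$ rather than its inverse. A secondary subtlety is making sure that the conditions (\ref{lccondh})--(\ref{lccondsd}) are not merely \emph{sufficient} but also \emph{necessary} — this follows because the three torsion blocks identified above are algebraically independent (they sit in different N-adapted tensor components), so none can be made to vanish by cancellation against another. I would therefore present the proof compactly: cite the coefficient formula (\ref{dtors}) from the appendix, state that $hv\widehat{\mathbf{T}}=0$ kills the $h(hh)$ and $v(vv)$ parts, identify the three surviving blocks, equate each to zero, and note that the cotangent and diadic versions are verbatim translations under $\mathcal{L}$-duality and shell restriction, with the remark that the resulting $\widehat{\mathbf{D}}$ then agrees with $\nabla$ (resp. $\ ^{\shortmid }\nabla$) by uniqueness of the Levi-Civita connection.
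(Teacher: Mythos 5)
Your proposal is correct and follows essentially the same route as the paper: substitute the canonical d-connection coefficients into the N-adapted torsion formula (\ref{dtors}), observe that $\widehat{T}_{\ jk}^{i}$ and $\widehat{T}_{\ bc}^{a}$ vanish identically by the symmetry of the canonical coefficients, and identify the three surviving blocks $\widehat{C}_{jb}^{i}$, $\Omega_{\ ji}^{a}$, and $\widehat{L}_{aj}^{c}-e_{a}(N_{j}^{c})$ whose vanishing is exactly (\ref{lccondh}), with the cotangent and diadic cases handled by the same bookkeeping. Your added remarks on necessity and on $\widehat{\mathbf{Z}}=0$ via uniqueness of the Levi-Civita connection are consistent extras that the paper leaves implicit.
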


\begin{proof}
The proofs should be considered for respective (co) tangent bundles and
shell by shell. For simplicity, we provide a sketch of such a proof on $%
\mathbf{TV.}$ Introducing (\ref{canondcl}) in (\ref{dtors}), we obtain zero
values for
\begin{equation*}
\widehat{T}_{\ jk}^{i}=\widehat{L}_{jk}^{i}-\widehat{L}_{kj}^{i},\widehat{T}%
_{\ ja}^{i}=\widehat{C}_{jb}^{i},\widehat{T}_{\ ji}^{a}=-\Omega _{\
ji}^{a},\ \widehat{T}_{aj}^{c}=\widehat{L}_{aj}^{c}-e_{a}(N_{j}^{c}),%
\widehat{T}_{\ bc}^{a}=\ \widehat{C}_{bc}^{a}-\ \widehat{C}_{cb}^{a},
\end{equation*}%
if the conditions (\ref{lccondh}) are satisfied. $\square $ \vskip5pt
\end{proof}

The equations (\ref{lccondh})-(\ref{lccondsd}) can be solved in explicit
form for manifolds/ bundle spaces of total dimensions 4-10, see references
in \cite{v18a} and next Section.

Introducing distortions from Theorem \ref{thdistr} into formulas (\ref%
{dcurvabstr}), we can prove in abstract and N-adapted forms:

\begin{theorem}
\textsf{[existence of canonical distortions of Riemannian and Ricci d-
and/or s-tensors determined by MDRs] } \label{thcandist}There are canonical
distortion relations encoding MDRs for respective canonical Lagrange-Finsler
and/or dyadic nonholonomic variables:
\begin{eqnarray*}
\mbox{For the curvature d-tensors,} &&\widehat{\mathcal{R}}[\mathbf{g},%
\widehat{\mathbf{D}} =\nabla +\widehat{\mathbf{Z}}]=\mathcal{R}[\mathbf{g}%
,\nabla ]+\widehat{\mathcal{Z}}[\mathbf{g},\widehat{\mathbf{Z}}], \\
&&\ ^{\shortmid }\widehat{\mathcal{R}}[\ ^{\shortmid }\mathbf{g},\
^{\shortmid }\widehat{\mathbf{D}}=\ ^{\shortmid }\nabla +\ ^{\shortmid }%
\widehat{\mathbf{Z}}]=\ ^{\shortmid }\mathcal{R}[\ ^{\shortmid }\mathbf{g},\
^{\shortmid }\nabla ]+\ ^{\shortmid }\widehat{\mathcal{Z}}[\ ^{\shortmid }%
\mathbf{g},\ ^{\shortmid }\widehat{\mathbf{Z}}], \\
&&\ _{s}^{\shortmid }\widehat{\mathcal{R}}[\ _{s}^{\shortmid }\mathbf{g},\
_{s}^{\shortmid }\widehat{\mathbf{D}} =\ ^{\shortmid }\nabla +\
_{s}^{\shortmid }\widehat{\mathbf{Z}}]=\ _{s}^{\shortmid }\mathcal{R}[\
_{s}^{\shortmid }\mathbf{g},\ ^{\shortmid }\nabla ]+\ _{s}^{\shortmid }%
\widehat{\mathcal{Z}}[\ _{s}^{\shortmid }\mathbf{g},\ _{s}^{\shortmid }%
\widehat{\mathbf{Z}}],
\end{eqnarray*}%
with respective distortion d-tensors $\ \widehat{\mathcal{Z}},$ on $\mathbf{%
TV;}$ $\ ^{\shortmid }\widehat{\mathcal{Z}},$ on $\mathbf{T}^{\ast }\mathbf{V%
};\ _{s}^{\shortmid }\widehat{\mathcal{Z}},$ on $\mathbf{T}^{\ast }\mathbf{V}%
;$
\begin{eqnarray*}
\mbox{For the Ricci d-tensors,} &&\widehat{R}ic[\mathbf{g},\widehat{\mathbf{D%
}} =\nabla +\widehat{\mathbf{Z}}]=Ric[\mathbf{g},\nabla ]+\widehat{Z}ic[%
\mathbf{g},\widehat{\mathbf{Z}}], \\
&&\ ^{\shortmid }\widehat{R}ic[\ ^{\shortmid }\mathbf{g},\ ^{\shortmid }%
\widehat{\mathbf{D}}=\ ^{\shortmid }\nabla +\ ^{\shortmid }\widehat{\mathbf{Z%
}}]=\ ^{\shortmid }Ric[\ ^{\shortmid }\mathbf{g},\ ^{\shortmid }\nabla ]+\
^{\shortmid }\widehat{Z}ic[\ ^{\shortmid }\mathbf{g},\ ^{\shortmid }\widehat{%
\mathbf{Z}}], \\
&&\ _{s}^{\shortmid }\widehat{R}ic[\ _{s}^{\shortmid }\mathbf{g},\
_{s}^{\shortmid }\widehat{\mathbf{D}} =\ ^{\shortmid }\nabla +\
_{s}^{\shortmid }\widehat{\mathbf{Z}}]=\ _{s}^{\shortmid }Ric[\
_{s}^{\shortmid }\mathbf{g},\ ^{\shortmid }\nabla ]+\ _{s}^{\shortmid }%
\widehat{Z}ic[\ _{s}^{\shortmid }\mathbf{g},\ _{s}^{\shortmid }\widehat{%
\mathbf{Z}}],
\end{eqnarray*}%
with respective distortion d-tensors $\ \widehat{Z}ic,$ on $\mathbf{TV;}$ $\
\ ^{\shortmid }\widehat{Z}ic,$ on $\mathbf{T}^{\ast }\mathbf{V};\
_{s}^{\shortmid }\widehat{Z}ic,$ on $\mathbf{T}^{\ast }\mathbf{V};$

For the scalar curvature of canonical d-connection $\widehat{\mathbf{D}},\
^{\shortmid }\widehat{\mathbf{D}},\ _{s}^{\shortmid }\widehat{\mathbf{D}},$%
\begin{eqnarray*}
&&\ _{d}^{\shortmid }\widehat{R}[\mathbf{g},\widehat{\mathbf{D}} =\nabla +%
\widehat{\mathbf{Z}}]=\mathcal{R}[\mathbf{g},\nabla ]+\ _{d}\widehat{Z}[%
\mathbf{g},\widehat{\mathbf{Z}}], \\
&&\ _{d}^{\shortmid }\widehat{R}[\ ^{\shortmid }\mathbf{g},\ ^{\shortmid }%
\widehat{\mathbf{D}}=\ ^{\shortmid }\nabla +\ ^{\shortmid }\widehat{\mathbf{Z%
}}]=\ _{d}^{\shortmid }R[\ ^{\shortmid }\mathbf{g},\ ^{\shortmid }\nabla ]+\
_{d}^{\shortmid }\widehat{Z}[\ ^{\shortmid }\mathbf{g},\ ^{\shortmid }%
\widehat{\mathbf{Z}}], \\
&&\ _{d}^{\shortmid }\widehat{R}[\ _{s}^{\shortmid }\mathbf{g},\
_{s}^{\shortmid }\widehat{\mathbf{D}} =\ ^{\shortmid }\nabla +\
_{s}^{\shortmid }\widehat{\mathbf{Z}}]=\ _{d}^{\shortmid }R[\ ^{\shortmid }%
\mathbf{g},\ ^{\shortmid }\nabla ]+\ _{d}^{\shortmid }\widehat{Z}[\
_{s}^{\shortmid }\mathbf{g},\ _{s}^{\shortmid }\widehat{\mathbf{Z}}],
\end{eqnarray*}%
with respective distortion scalar functionals $\ \ _{d}\widehat{Z},$ on $%
\mathbf{TV,}$ and $\ _{d}^{\shortmid }\widehat{Z},$ on $\mathbf{T}^{\ast }%
\mathbf{V.}$
\end{theorem}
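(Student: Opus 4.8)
The plan is to derive Theorem~\ref{thcandist} as an essentially algebraic consequence of the distortion relations of Theorem~\ref{thdistr} and the abstract definitions (\ref{dcurvabstr}) and (\ref{scurvabstr}). Since the three settings --- $\mathbf{TV}$, $\mathbf{T}^{\ast}\mathbf{V}$, and the diadic $\ _{s}\mathbf{T}^{\ast}\mathbf{V}$ --- are structurally identical, I would carry out the computation once for the canonical pair $(\widehat{\mathbf{D}},\nabla)$ on $\mathbf{TV}$ and then note that the cotangent and shell-by-shell versions follow by attaching the bar $\ ^{\shortmid}$ and/or the label $\ _{s}$, exactly as in the reduction convention stated after Definition--Theorem~\ref{dteinstdt}.

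\textbf{Curvature.} First I would substitute $\widehat{\mathbf{D}}=\nabla+\widehat{\mathbf{Z}}$ from (\ref{candistr}) into $\widehat{\mathcal{R}}(\mathbf{X},\mathbf{Y})=\widehat{\mathbf{D}}_{\mathbf{X}}\widehat{\mathbf{D}}_{\mathbf{Y}}-\widehat{\mathbf{D}}_{\mathbf{Y}}\widehat{\mathbf{D}}_{\mathbf{X}}-\widehat{\mathbf{D}}_{[\mathbf{X},\mathbf{Y}]}$ and expand. The terms containing only $\nabla$ reproduce $\mathcal{R}[\mathbf{g},\nabla]$; the remaining terms organize, schematically, into a part linear in $\nabla\widehat{\mathbf{Z}}$ (antisymmetrized in $\mathbf{X},\mathbf{Y}$) and a part quadratic in $\widehat{\mathbf{Z}}$, and these define the distortion d-tensor $\widehat{\mathcal{Z}}[\mathbf{g},\widehat{\mathbf{Z}}]$. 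That $\widehat{\mathcal{Z}}$ is genuinely tensorial follows from Lemma~\ref{ldist} ($\widehat{\mathbf{Z}}$ is the difference of two linear connections, hence a d-tensor) together with the fact that $\nabla$ applied to a d-tensor is again a d-tensor. The same algebra with $\widehat{\mathbf{D}}=\widetilde{\mathbf{D}}+\mathbf{Z}$, or with the $\ ^{\shortmid}$- and $\ _{s}$-decorated objects, gives the remaining curvature identities. In N-adapted coefficient form this is just plugging (\ref{candistr}) into the coefficient formulas behind (\ref{dcurvabstr}) (collected in Appendix~\ref{appendixa}), where the nonholonomy coefficients $w_{\ \alpha\beta}^{\gamma}$ are already incorporated.

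\textbf{Ricci and scalar curvature.} The contraction producing $Ric$ (first and fourth indices, as in the Definition--Theorem for Ricci tensors) is linear, so it commutes with the decomposition above and yields $\widehat{R}ic=Ric[\mathbf{g},\nabla]+\widehat{Z}ic$ with $\widehat{Z}ic$ the corresponding trace of $\widehat{\mathcal{Z}}$, and likewise for the barred and shell versions. For the scalar curvature I would invoke Definition--Theorem~\ref{phidc}: both $\widehat{\mathbf{D}}$ and $\nabla$ are compatible with the \emph{same} d-metric $\mathbf{g}\simeq\widetilde{\mathbf{g}}\simeq\ _{s}\mathbf{g}$, so one raises indices with the same inverse $\mathbf{g}^{\alpha\beta}$ (resp. $\ ^{\shortmid}\mathbf{g}^{\alpha\beta}$, $\ ^{\shortmid}\mathbf{g}^{\alpha_{s}\beta_{s}}$) in both traces; tracing $\widehat{R}ic$ then gives $\ _{d}\widehat{R}=\ _{d}R[\mathbf{g},\nabla]+\ _{d}\widehat{Z}$ with the scalar distortion functional $\ _{d}\widehat{Z}:=\mathbf{g}^{\alpha\beta}\widehat{Z}ic_{\alpha\beta}$, and analogously on $\mathbf{T}^{\ast}\mathbf{V}$ and shell by shell. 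To justify the adjective ``canonical'' and the dependence on MDR data, I would recall the remark after Theorem~\ref{thdistr}: for the hat--connections $\widehat{\mathbf{Z}}$ is an explicit algebraic combination of the canonical torsion coefficients $\widehat{\mathbf{T}}_{\ \alpha\beta}^{\gamma}[\mathbf{g},\mathbf{N}]$, which in canonical Lagrange--Hamilton variables are fixed by the indicator $\varpi$ of the MDR (\ref{mdrg}) via $\widetilde{\mathbf{g}},\widetilde{\mathbf{N}}$ (Appendix~\ref{appendixb}); hence $\widehat{\mathcal{Z}}$, $\widehat{Z}ic$, $\ _{d}\widehat{Z}$ are functionals of the same data.

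The step I expect to be the only real nuisance is bookkeeping rather than mathematics: keeping the $h$-, $v$-, $cv$-component splitting and the four shells consistent throughout, and checking that the pieces of $\widehat{\mathcal{Z}}$ arising from $\widehat{\mathbf{D}}_{[\mathbf{X},\mathbf{Y}]}$ (which involve the nonholonomic anchor and look non-tensorial term by term) do combine into a d-tensor. This is handled exactly as in the tangent-bundle constructions of \cite{v18a,gvvepjc14}: one works directly with the N-adapted curvature coefficients, where the nonholonomy structure constants have already been absorbed, so the tensorial character is manifest and the decomposition is forced term by term.
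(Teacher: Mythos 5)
Your proposal is correct and follows essentially the same route as the paper: the authors likewise obtain the theorem by "introducing distortions from Theorem \ref{thdistr} into formulas (\ref{dcurvabstr})" in abstract/N-adapted form, deferring the detailed tangent-bundle computation to Theorem 2.6 of \cite{v18a} and extending to $\mathbf{T}^{\ast}\mathbf{V}$ and the diadic shells by attaching the label $^{\shortmid}$ and the shell index. You simply spell out the expansion, the tensoriality of $\widehat{\mathcal{Z}}$, and the linearity of the contractions more explicitly than the paper's sketch does.
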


\begin{proof}
In \cite{v18a} (for Theorem 2.6), we explain and give references for
detailed proofs on $\mathbf{TV}$ with possible nonholonomic dyadic structure
and extra dimensions. We note that for constructions on $\mathbf{T}^{\ast }%
\mathbf{V}$ the formulas can be derived following our abstract geometric
approach with label "$\ ^{\shortmid }$". In this work, we shall construct
exact solutions for MGTs encoding MDRs working with the canonical
d-connection $\ ^{\shortmid }\widehat{\mathbf{D}}$ and/or $\ _{s}^{\shortmid
}\widehat{\mathbf{D}}$ which can be restricted to LC-configurations by
solving, respectively, the equations (\ref{lccondh}) and/or (\ref{lccondsd}).
\end{proof}

$\square $

\section{Modified Einstein-Hamilton Equations in N- and S-adapted Variables}

\label{s3} The goal of this section is to consider explicit examples of
generalized Lagrange densities for gravitational and matter fields on (co)
tangent Lorentz bundles with possible dyadic splitting. We formulate the
gravitational and matter field equations generalizing the Einstein equations
for nontrivial MDRs and LIVs. Such systems of nonlinear PDEs can be derived
both in abstract and/or N-adapted (in general, coordinate free) forms for
various types MGTs, see details in \cite{v18a,gvvepjc14,vmon06}. For
simplicity, we restrict our constructions only to modifications of GR on $%
\mathbf{T}^{\ast }\mathbf{V}$ for $\ _{s}^{\shortmid }\widehat{\mathbf{D}}$
which is necessary for proofs of general decoupling and integrability of
locally anisotropic gravitational field equations in Einstein-Hamilton MGTs.

\subsection{Lagrange densities and energy-momentum d- and s-tensors on (co)
tangent bundles}

\label{sslagrd}In this subsection, we analyse explicit examples of phase
space generalized Lagrange densities and derive respective energy-momentum
tensors. We consider arbitrary metric compatible d-connections $\mathbf{D},\
^{\shortmid }\mathbf{D}$ or $\ _{s}^{\shortmid }\mathbf{D.}$

\subsubsection{Scalar fields on (co) tangent Lorentz bundles and d-
/s-metrics}

Let us speculate on Lagrange densities for matter fields with locally
anisotropic interactions:

\begin{convention}
\label{convscfields}\textsf{[scalar fields and dyadic splitting on (co)
tangent bundles] } \newline
Scalar field locally anisotropic phase interactions can be modeled
respectively by Lagrange densities
\begin{equation}
\ ^{m}\mathcal{L}=\ ^{\phi }\mathcal{L}(\mathbf{g;}\phi )\mbox{ on }T\mathbf{%
V;}\ _{\shortmid }^{m}\mathcal{L}=\ _{\shortmid }^{\phi }\mathcal{L}(\
^{\shortmid }\mathbf{g};\ _{\shortmid }\phi )\mbox{ on }T^{\ast }\mathbf{V;}%
\ _{\shortmid s}^{m}\mathcal{L}=\ _{\shortmid s}^{\phi }\mathcal{L}(\
_{s}^{\shortmid }\mathbf{g};\ _{\shortmid }^{s}\phi )\mbox{ on }\
_{s}T^{\ast }\mathbf{V,}  \label{lagscf}
\end{equation}%
depending, for simplicity, only on respective d-/s-metrics ($\mathbf{g}_{\mu
\nu },$ $^{\shortmid }\mathbf{g}^{\alpha \beta },$ $_{s}^{\shortmid }\mathbf{%
g}^{\alpha \beta }$ - this allows to construct exact solutions in explicit
form in various gravity theories) on scalar fields $\ \phi =\phi (u),$ $\
_{\shortmid }\phi =\ _{\shortmid }\phi (\ ^{\shortmid }u),\ _{\shortmid
}^{s}\phi =\ _{\shortmid }^{s}\phi (\ _{s}^{\shortmid }u).$
\end{convention}

Performing a N-adapted variational calculus, we prove

\begin{consequence}
\textsf{[energy-momentum d-tensors for locally anisotropic interacting
scalar fields ] } \newline
The symmetric energy-momentum d-tensors for scalar fields on (co) tangent
bundles derived for respective Lagrange densities (\ref{lagscf}) are
computed for possible h- and v-, or cv-splitting. For instance, on cotangent
Lorentz bundle with nonholonomic dyadic decomposition%
\begin{eqnarray}
\ _{\shortmid }^{\phi }\mathbf{T}_{\alpha _{s}\beta _{s}} &=&-\frac{2}{\sqrt{%
|\ _{s}^{\shortmid }\mathbf{g}|}}\frac{\delta (\sqrt{|\ _{s}^{\shortmid }%
\mathbf{g}|}\ \ \ _{\shortmid s}^{\phi }\mathcal{L})}{\delta \ ^{\shortmid }%
\mathbf{g}^{\alpha _{s}\beta _{s}}}=\ \ _{\shortmid s}^{\phi }\mathcal{L}\
_{s}^{\shortmid }\mathbf{g}_{\alpha _{s}\beta _{s}}+2\frac{\delta (\ \ \
_{\shortmid s}^{\phi }\mathcal{L})}{\delta \ ^{\shortmid }\mathbf{g}^{\alpha
_{s}\beta _{s}}}  \label{emscdt} \\
&=&\{\ _{\shortmid }^{\phi }\mathbf{T}_{i_{s}j_{s}}=-\frac{2}{\sqrt{|\
_{s}^{\shortmid }\mathbf{g}|}}\frac{\delta (\sqrt{|\ _{s}^{\shortmid }%
\mathbf{g}|}\ \ \ _{\shortmid s}^{\phi }\mathcal{L})}{\delta \ ^{\shortmid }%
\mathbf{g}^{i_{s}j_{s}}}=\ldots ,\ _{\shortmid }^{\phi }\mathbf{T}%
^{a_{s}b_{s}}=-\frac{2}{\sqrt{|\ _{s}^{\shortmid }\mathbf{g}|}}\frac{\delta (%
\sqrt{|\ _{s}^{\shortmid }\mathbf{g}|}\ \ \ _{\shortmid s}^{\phi }\mathcal{L}%
)}{\delta \ ^{\shortmid }\mathbf{g}_{a_{s}b_{s}}}=\ldots \}.  \notag
\end{eqnarray}
\end{consequence}

Scalar field equations can be re-defined by an additional 3+1 splitting on
base and fiber spaces enabled with additional dyadic splitting, as certain
moving equations for ideal fluids. For instance, we can consider a velocity
d-vector $\mathbf{v}_{\alpha }$ subjected to the conditions $\mathbf{v}%
_{\alpha }\mathbf{v}^{\alpha }=1$ and $\mathbf{v}^{\alpha }\mathbf{D}_{\beta}%
\mathbf{v}_{\alpha }=0,$ for $\ ^{\phi }\mathcal{L}:=-p$ in a corresponding
local N--adapted frame, which in s-variables is extended on $\ _{s}T^{\ast }%
\mathbf{V}$.

\begin{remark}
\textsf{[energy-momentum s-tensors for locally anisotropic perfect liquids] }
\newline
Conventionally, the sources (\ref{emscdt}) can be approximated as a perfect
liquid matter (we use a left abstract index "l" for liquid) with respective
density and pressure
\begin{equation}
\ ^{l}\mathbf{T}_{\alpha _{s}\beta _{s}}=(\rho +p)\mathbf{v}_{\alpha _{s}}%
\mathbf{v}_{\beta _{s}}-p\mathbf{g}_{\alpha _{s}\beta _{s}}\mbox{ and/or }\
_{\shortmid }^{l}\mathbf{T}_{\alpha _{s}\beta _{s}}=(\ ^{\shortmid }\rho +\
^{\shortmid }p)\ ^{\shortmid }\mathbf{v}_{\alpha _{s}}\ ^{\shortmid }\mathbf{%
v}_{\beta _{s}}-\ ^{\shortmid }p\ ^{\shortmid }\mathbf{g}_{\alpha _{s}\beta
_{s}}.  \label{lemd}
\end{equation}
On Lorentz manifolds, such approximations are considered in standard
textbooks on GR and cosmology. We generalized them by analogy for locally
anisotropic higher order fluid configurations defined on (co) tangent
Lorentz bundles.
\end{remark}

We can generalize (\ref{emscdt}) for scalar fields subjected to the
condition to be solutions on certain types of equations depending in
explicit form on d-connections (Klein-Gordon, or other types, for instance,
certain hydrodynamical equations). Using respective nonholonomic variables,
we can re-write respective matter field equations and related formulas in
terms of "tilde/hat/dyadic" variables.

\subsubsection{Effective sources determined by distortions of d- and
s-connections}

It is convenient to work with the canonical d- and s-connections ($\widehat{%
\mathbf{D}},\ ^{\shortmid }\widehat{\mathbf{D}}$ and/or $\ _{s}^{\shortmid }%
\widehat{\mathbf{D}}\mathbf{)}$ which allow decoupling and integration in
very general forms of modified Einstein equations. Following the conditions
of Theorem \ref{thcandist}, we can work with distortions $\ _{s}^{\shortmid }%
\widehat{\mathbf{D}}=\nabla +\ _{s}^{\shortmid }\widehat{\mathbf{Z}}$ $=\
^{\shortmid }\widetilde{\mathbf{D}}+\ _{s}^{\shortmid }\widetilde{\mathbf{Z}}%
,$ computed for the same d- and s-metric structure $\ ^{\shortmid }\mathbf{g=%
}\ _{s}^{\shortmid }\mathbf{g}=\ ^{\shortmid }\mathbf{\tilde{g},}$ we can
express the distortions of the Ricci d-tensor $\ ^{\shortmid }\mathbf{R}%
_{\alpha _{s}\beta _{s}}$ (\ref{driccisd})
\begin{equation*}
\ ^{\shortmid }\widehat{\mathbf{R}}_{\alpha _{s}\beta _{s}}=\ ^{\shortmid
}R_{\alpha _{s}\beta _{s}}[\ _{s}^{\shortmid }\mathbf{g},\ ^{\shortmid
}\nabla ]+\ ^{\shortmid }\widehat{\mathbf{Z}}_{\alpha _{s}\beta _{s}}[\
_{s}^{\shortmid }\mathbf{g},\ ^{\shortmid }\nabla ]=\ ^{\shortmid }%
\widetilde{\mathbf{R}}_{\alpha _{s}\beta _{s}}[\ \ ^{\shortmid }\mathbf{%
\tilde{g}},\ \ ^{\shortmid }\widetilde{\mathbf{D}}]+\ ^{\shortmid }%
\widetilde{\mathbf{Z}}_{\alpha _{s}\beta _{s}}[\ \ ^{\shortmid }\mathbf{%
\tilde{g}},\ \ ^{\shortmid }\widetilde{\mathbf{D}}]
\end{equation*}%
and extract LC-configurations imposing additionally the condition (\ref%
{lccondsd}), $\ _{s}^{\shortmid }\mathbf{D}_{\mid \ _{s}^{\shortmid }%
\mathcal{T}=0}=\nabla .$

The effective energy-momenta determined by distortion s-tensors are defined
and computed
\begin{equation}
\varkappa \ _{e}^{\shortmid }\widehat{\mathbf{T}}_{\alpha _{s}\beta _{s}}=-\
^{\shortmid }\widehat{\mathbf{Z}}_{\alpha _{s}\beta _{s}}[\ _{s}^{\shortmid }%
\mathbf{g},\ ^{\shortmid }\nabla ]\mbox{ and }\varkappa \ _{e}^{\shortmid }%
\widetilde{\mathbf{T}}_{\alpha _{s}\beta _{s}}=-\ ^{\shortmid }\widetilde{%
\mathbf{Z}}_{\alpha _{s}\beta _{s}}[\ \ ^{\shortmid }\mathbf{\tilde{g}},\ \
^{\shortmid }\widetilde{\mathbf{D}}].  \label{effdtsourc}
\end{equation}%
We can compute the trace of such values, $\ _{e}^{\shortmid }\widehat{%
\mathbf{T}}$ and $\ _{e}^{\shortmid }\widetilde{\mathbf{T}}$ by contracting
respective indices.

Finally, we note that $\ _{e}^{\shortmid }\widehat{\mathbf{T}}_{\alpha
_{s}\beta _{s}}$ can be transformed into effective shell cosmological
constants $\ _{e}^{\shortmid s}\Lambda $ by redefinition of generating
functions, see next section. In another turn, we can use $\ _{e}^{\shortmid }%
\widetilde{\mathbf{T}}_{\alpha _{s}\beta _{s}}$ and respective
Lagrange-Hamilton variables if we wont to compute directly some MDR-effects
or model, for instance, polarizations of physical constants (BH horizons
etc.) in effective Lagrange-Hamilton variables.

\subsubsection{Actions for minimal nonholonomic dyadic extensions of GR}

Summarizing Lagrange densities introduced in previous subsections, we
formulate:

\begin{principle}
-\textbf{Convention} \label{pactminmodact}\textbf{\ [actions for minimally
MDR-modified dyadic systems]}:\newline
Locally anisotropic interactions of gravitational and scalar fields with
distortions on shells of (co)tangent Lorentz bundles (endowed with metric
compatible s-connections uniquely determined by respective s-metric
structures and admitting nonholonomic variables for distinguishing Hamilton
phase spaces) are described by respective actions%
\begin{equation*}
\mathbf{\ }\ _{\shortmid }^{s}\mathcal{S}=\ _{\shortmid s}^{\mathbf{g}}%
\mathcal{S}+\ _{\shortmid s}^{\phi }\mathcal{S}+\ \ _{\shortmid }^{e}%
\mathcal{S=}\frac{1}{16\pi }\int \delta \mathbf{\ ^{\shortmid }}u\sqrt{|%
\mathbf{\ _{s}^{\shortmid }g}|}(\ _{\shortmid s}^{\mathbf{g}}\mathcal{L+}\
_{\shortmid s}^{\phi }\mathcal{L}+\ \ _{\shortmid }^{e}\mathcal{L}),
\end{equation*}%
where$\ _{\ \shortmid s}^{\mathbf{g}}\mathcal{L}:=\ _{\shortmid }^{s}\mathbf{%
R}=\mathbf{\ ^{\shortmid }g}^{\alpha _{s}\beta _{s}}\mathbf{\ ^{\shortmid }R}%
_{\alpha _{s}\beta _{s}}[\mathbf{\ _{s}^{\shortmid }D}]$ and the Lagrange
densities for scalar and distortion fields are given by corresponding
formulas (\ref{lagscf}) and (\ref{effdtsourc}).
\end{principle}

Performing a N- and s-adapted variational calculus and summarizing previous
Consequences on energy-momentum d-tensors, we obtain this

\begin{consequence}
\label{conseymhs}\textsf{[sources for locally anisotropic scalar and
distortion fields] } \newline
On cotangent Lorentz bundles with dyadic decomposition, minimal
modifications of scalar and distortion systems encoding MDRs can be
characterized by respective sources
\begin{eqnarray}
\ _{\shortmid }^{\phi }\Upsilon _{\alpha _{s}\beta _{s}}&:= &\varkappa (\
_{\shortmid }^{\phi }\mathbf{T}_{\alpha _{s}\beta _{s}}-\frac{1}{2}\mathbf{\
^{\shortmid }g}_{\alpha _{s}\beta _{s}}\ _{\shortmid }^{\phi }T)\mbox{ and }%
\ \ _{\shortmid }^{e}\Upsilon _{\alpha _{s}\beta _{s}}:=\varkappa (\ \
_{\shortmid }^{e}\mathbf{T}_{\alpha _{s}\beta _{s}}-\frac{1}{2}\ ^{\shortmid
}\mathbf{g}_{\alpha _{s}\beta _{s}}\ \ \ _{\shortmid }^{e}\mathbf{T})\ ,
\notag \\
\mbox{ i.e. total source }\ _{\shortmid }\Upsilon _{\alpha _{s}\beta _{s}}
&=&\ _{\shortmid }^{\phi }\Upsilon _{\alpha _{s}\beta _{s}}+\ \ _{\shortmid
}^{e}\Upsilon _{\alpha _{s}\beta _{s}},  \label{totaldiadsourcd}
\end{eqnarray}%
where the energy-momentum tensors can be taken respectively by canonical
values (\ref{emscdt}) and (\ref{effdtsourc}).
\end{consequence}

We shall consider cosmological constants on every shall of dyadic
decomposition or on the total Lorentz cobundle (see section \ref{ssnonlsym}):

\begin{assumption}
\textsf{[effective cosmological constants on locally anisotropic phase
spaces] } \label{assumpt4} \newline
Generalized sources for matter fields (including possible effective sources
defined by distortion tensors of d-connections) can be approximated by
respective cosmological constants $\ _{\shortmid }\Lambda ,$ a general one
for the total space, or by a set of cosmological constants labelled shell by
shell, $\ _{\shortmid s}\Lambda ,$
\begin{equation*}
\ _{\shortmid }\Upsilon _{\ \gamma _{s}}^{\beta _{s}}=\ _{\shortmid }\Lambda
\delta _{\ \gamma _{s}}^{\beta _{s}}\mbox{ and/or }\ _{\shortmid }\Upsilon
_{\ \gamma _{s}}^{\beta _{s}}=\ _{\shortmid s}\Lambda \delta _{\ \gamma
_{s}}^{\beta _{s}},
\end{equation*}%
when, correspondingly, $\ $ $\ _{\shortmid }\Upsilon _{\alpha _{s}\beta
_{s}}=\ _{\shortmid }^{\phi }\Upsilon _{\alpha _{s}\beta _{s}}+\ \
_{\shortmid }^{e}\Upsilon _{\alpha _{s}\beta _{s}},$ or considering that
such sources are subjected to relations of additivity,
\begin{equation}
\ \mathbf{\ }\ _{\shortmid }\Lambda =\ _{\shortmid }^{\phi }\Lambda +\ \ \
_{\shortmid }^{e}\Lambda \mbox{ and/or }\ \ \ _{\shortmid s}\Lambda =\ \
_{\shortmid s}^{\phi }\Lambda +\ \ _{\shortmid s}^{e}\Lambda .
\label{additcconst}
\end{equation}
\end{assumption}

The cosmological constants considered above can be zero, positive, or
negative. They may compensate each other and result in (fictive) vacuum
configurations on a base space, or on some shells etc.

\subsection{Generalized Einstein equations in cotangent nonholonomic dyadic
variables}

\label{ssmeeq} We define and analyse properties of modified gravitational
field equations written in terms of different types of geometric variables
for s-metrics of type $\ \ _{s}^{\shortmid }\mathbf{g}$ (\ref{dmcts}) and
sources $\ _{\shortmid }\Upsilon _{\alpha _{s}\beta _{s}}$ (\ref%
{totaldiadsourcd}).

\subsubsection{Gravitational equations in canonical dyadic variables}

For any metric compatible geometric data $(\ _{s}\mathbf{T}^{\ast }\mathbf{%
V,\ }\ _{s}^{\shortmid }\mathbf{N,}\ \ _{s}^{\shortmid }\mathbf{g,\ }\
_{s}^{\shortmid }\mathbf{D})$ and prescribed source $\ _{\shortmid }\Upsilon
_{\alpha _{s}\beta _{s}}=\ _{\shortmid }^{\phi }\Upsilon _{\alpha _{s}\beta
_{s}}+\ _{e}^{\shortmid }\Upsilon _{\alpha _{s}\beta _{s}},$ we derive in
abstract geometric form \cite{misner73} (or following an N-adapted dyadic
calculus similar to that on Lorentz manifolds \cite{wald82}).\footnote{\label%
{fvariat} In GR, the Einstein equations with the Ricci tensor for $\nabla ,$
$R_{ij}=\Upsilon _{ij}$, can be derived by a variational calculus on a
Lorentz manifold $V$ using the action
\begin{equation*}
\mathcal{S}=\ ^{g}\mathcal{S}+\ ^{m}\mathcal{S}=\frac{1}{16\pi }\int d^{4}x%
\sqrt{|g_{ij}|}(\ ^{g}\mathcal{L+}\ ^{m}\mathcal{L}).
\end{equation*}%
The Lagrange density for gravitational fields is postulated in the form $\
^{g}\mathcal{L}(g_{ij},\nabla )=\frac{^{Pl}M^{2}}{2}R$, where $R$ is the
Ricci scalar of $\nabla .$ The Planck mass $^{Pl}M$ is determined by the
Newton constant $^{New}G$ (in this work, we can consider the units $%
^{New}G=1/16\pi $ with $^{Pl}M=(8\pi ^{New}G)^{-1/2}=\sqrt{2})$ which states
a constant $\varkappa $ for the matter source $\Upsilon _{ij}:=\varkappa
(T_{ij}-\frac{1}{2}g_{ij}T),$ where $T:=g^{ij}T_{ij},$ for $\ T_{kl}:=-\frac{%
2}{\sqrt{|\mathbf{g}_{ij}|}}\frac{\delta (\sqrt{|\mathbf{g}_{ij}|}\ \ ^{m}%
\mathcal{L})}{\delta \mathbf{g}^{kl}}$, with Lagrange density of matter
fields $\ ^{m}\mathcal{L}.$}

\begin{principle}
\textbf{-Theorem} \textsf{[nonholonomic dyadic modifications of Einstein
equations on cotangent Lorentz bundles] } \label{princtheinstctb} On $%
\mathbf{T}^{\ast }\mathbf{V,}$ the modified Einstein equations for $(\ \
_{s}^{\shortmid }\mathbf{g,\ \ }\ _{s}^{\shortmid }\mathbf{D}\
_{s}^{\shortmid }\mathbf{g}=0)$ and Ricci d-tensor (\ref{driccisd}) are
\begin{equation}
\mathbf{\ ^{\shortmid }R}_{\alpha _{s}\beta _{s}}[\mathbf{\ _{s}^{\shortmid
}D}]=\ _{\shortmid }\Upsilon _{\alpha _{s}\beta _{s}}.  \label{meinsteqctb}
\end{equation}
\end{principle}

\begin{proof}
Let us sketch possible two variants of proofs performed in geometric and
N-adapted variational forms in Refs. \cite{v18a,bvepjc18} for d-connections $%
\mathbf{\ D}$ and $\ ^{\shortmid }\mathbf{D,}$ respectively, for $\mathbf{TV}
$ and $\mathbf{T}^{\ast }\mathbf{V}$, see references therein and \cite%
{gvvepjc14,svvijmpd14,
rajpoot15,vacaruplb16,gheorghiuap16,bubuianucqg17,vbubuianu17,vacaruepjc17,vmon06,vijtp10}%
,  on similar results for (non) commutative/ supersymmetric/ higher order
nonholonomic and Lagrange spaces, string gravity models, and massive
gravity. Extending on cotangent Lorentz bundles enabled with dyadic shell
frame structure and arbitrary metric compatible frame connections the
variational calculus on Lorentz manifold $\mathbf{V}$ for the LC-connection $%
\nabla $ provided in footnote \ref{fvariat}, we can prove (\ref{meinsteqctb}%
) using $\ _{s}^{\shortmid }\mathbf{D}$ for $\mathbf{T}^{\ast }\mathbf{V}$.
These modified Einstein equations can be derived following geometric
principles for data $(\mathbf{V,}g,\nabla )$ for a Lorentz manifold
generalized for geometric data $(\ _{s}\mathbf{T}^{\ast }\mathbf{V,\ }\
_{s}^{\shortmid }\mathbf{N,}\ \ _{s}^{\shortmid }\mathbf{g,\ }\
_{s}^{\shortmid }\mathbf{D}).$ Such a theory encodes MDRs (\ref{mdrg}) if$\
\ _{s}^{\shortmid }\mathbf{g}$ is equivalent up to frame transforms to a $\
^{\shortmid }\widetilde{\mathbf{g}}$ (\ref{cdmds}).
\end{proof}

$\square $

Considering distortion d-tensors, $\mathbf{\ ^{\shortmid }D\rightarrow \
^{\shortmid }\check{D}}=\mathbf{\ ^{\shortmid }D+\ ^{\shortmid }Z}$ with a
respective re-definition of effective sources in (\ref{meinsteqctb}), we
prove

\begin{corollary}
\label{meccanvar} \textsf{[canonical form of locally anisotropic Einstein
equations in dyadic variables on cotangent Lorentz bundles ] } The modified
Einstein equations (\ref{meinsteqctb}) can be derived for a canonical
s-connection $\ _{s}^{\shortmid }\widehat{\mathbf{D}},$%
\begin{equation}
\ ^{\shortmid }\widehat{\mathbf{R}}_{\alpha _{s}\beta _{s}}[\
_{s}^{\shortmid }\widehat{\mathbf{D}}]= \ ^{\shortmid }\widehat{\Upsilon }%
_{\alpha _{s}\beta _{s}}.  \label{meinsteqtbcand}
\end{equation}
\end{corollary}

The gravitational field equations in MGTs written for the canonical
s-connection in the form $\ _{s}^{\shortmid }\widehat{\mathbf{D}}$ can be
decoupled and integrated in very general forms (see next section). This
motivates the Corollary \ref{meccanvar} stating the possibility to derives
and write equivalently the modified Einstein equations in MGTs with MDRs in
the form (\ref{meinsteqtbcand}).

It should be noted that in the bulk of MGTs (see reviews \cite{capoz,nojod1,clifton}), there are elaborated variational principles for metric-affine spaces with modified actions for gravitational and matter field interactons (in general, encoding metric compatible or noncompatible linear connections, massive gravity and exotic terms for modeling dark energy and dark matter effects). Corresponding gravitational and matter field equations for MGTs and nonstandard particle physics usually are represented in modified Einstein forms and as generalized Dirac, Yang-Mills-Higgs etc. equations which allow to study new classes of solutions and possible physical implications. They can be also derived from respective modifications of the Einstein-Hilber action. With respect to MGTs with MDRs and possible LIVs, see typical examples in
\cite{amelino98,vapny01,castro07,mavromatos11,mavromatos13a,kostelecky11,kostelecky16,basilakos13}, to elaborate on unified principles for variational formulation of such theories in a general form, and deriving respective field and motion equations, was not possible. Similar problems existed in (generalized) Finsler like gravity theories \cite{cartan35,rund59,asanov85,matsumoto86,bejancu90,bao00,bejancu03}, when modifications of the Einstein equations were proposed in certain heuristic or geometric forms but without an unified axiomatic formulation with generalized relativistic and causal principles and respective generalized variational principles. There were not elaborated  methods for constructing exact solutions in MGTs with MDRs  and not discussed possible implications, for instance, if there are black hole solutions in theories with MDRs. In a series of works  \cite{gvvepjc14,svvijmpd14,rajpoot15,vacaruplb16,gheorghiuap16,
bubuianucqg17,vbubuianu17,vacaruepjc17,vmon06,vijtp10}, we elaborated an unified approach to various classes of (non) commutative, (super) string, Clifford structures and MGTs, involving MDRs and LIVs and various types generalized Finsler-Lagrange structures. The approach is based on geometric methods for nonholonomic manifolds and bundle spaces and resulted in a new analytic formalism for constructing exact solutions in MGTs (see relevant details and explanations on the AFDM in the beginning of next section \ref{s4}).

In our partner works  \cite{v18a,bvepjc18}, we provide an axiomatic formulation of MGTs with MDRs which allows equivalent re-formulations in Finsler-Lagrange-Hamilton variables (which is important for quantization and elaborating thermodynamic like analogous theories) and application of the AFDM for constructing exact solutions. The main goal of this work is to construct such general classes of solutions for quasi-stationary configurations (there were published already certain explicit examples of black hole and cosmological spacetime quasicrystal solutions in \cite{bvap19,vcqg18}). Surprisingly, all such constructions can be derived from respective modifications of the Einstein-Hilbert action but on  respective nonholonomic (co) tangent Lorentz bundles enabled with nonlinear connections, d-metrics and/or s-metrics, d-connections and necessary type s-connections.  Such constructions contain not "just trivial statements that the Einstein equations retrain their form up to an arbitrary additional contribution ...". They involve, for instance, MDRs (\ref{mdrg}) and possible LIV effects which can not modelled in a general axiomatic form on Lorentz manifolds, or their metric-affine generalizations. In this article and our partner and cited works, the main idea is to consider MGTs on (co) tangent Lorentz bundles, with new concepts of nonholonomic relativistic phase spaces, new axiomatic / geometric/ variational formulations with causality and based on N-adapted and nonholonomic shell dyadic calculus.

\subsubsection{Modified Einstein Equations in Hamilton and dyadic variables}

We can consider other different variables and different types of effective
sources in above modified Einstein equations on $\ _{s}\mathbf{T}^{\ast }%
\mathbf{V}$, see (\ref{meinsteqctb}) or (\ref{meinsteqtbcand}). For
instance, considering respective nonholonomic frame transforms and
deformations of d-connections (\ref{candistr}) with distortions of the Ricci
s-tensors as in Theorem \ref{thcandist} and re-definition of effective
sources as in (\ref{effdtsourc}),
\begin{equation}
\ ^{\shortmid }\widetilde{\mathbf{R}}_{\alpha _{s}\beta _{s}}[\
_{s}^{\shortmid }\widetilde{\mathbf{D}}]=\ ^{\shortmid }\widetilde{\Upsilon }%
_{\alpha _{s}\beta _{s}}.  \label{modifeinsthameqdiad}
\end{equation}%
These equations are written in "tilde" variables as for the canonical
Hamilton spaces but with indices labeled shell by shell when canonical hat
values have been used for decoupling. Such equations provide models of phase
space gravity with vacuum (aether) modelled as a relativistic Hamilton
space. The Einstein-Hamilton dyadic equations (\ref{modifeinsthameqdiad})
can be used for describing, for instance, phase space black hole solutions
with Hamilton like degrees of freedom on horizons (we shall provide examples
in next sections).

\begin{remark}
\textsf{[ nonholonomic Hamilton-Cartan variables for locally anisotropic
Einstein equations with dyadic decomposition] } The system of nonlinear PDEs
(\ref{meinsteqctb}) or (\ref{meinsteqtbcand}) can be re-written equivalently
in terms of nonholonomic canonical variables with data $(\ ^{\shortmid }%
\mathbf{g},\ ^{\shortmid }\widehat{\mathbf{D}}),$ or $(\ ^{\shortmid }%
\widetilde{\mathbf{g}},\ ^{\shortmid }\widetilde{\mathbf{D}}).$ Such
equations can be used for constructing generic off-diagonal solutions, or
for performing deformation quantization after additional definition of
almost symplectic Hamilton variables.
\end{remark}

In general, MGTs with MDRs are different for different types of
d-connection, or s-connection, structures because the corresponding phase
spaces and effective sources and Ricci d- and s-tensors are completely
different.

\begin{remark}
\textsf{[(non) equivalence of Einstein-Lagrange and Einstein-Hamilton
theories ]} \newline
The models of locally anisotropic gravity defined by equations (\ref%
{meinsteqctb}), (\ref{meinsteqtbcand}), or (\ref{modifeinsthameqdiad}) are
different because they are derived on different phase spaces and for
different types of d-metric and d-connection structures. Nevertheless, the
geometric and physical data can be transformed equivalently from a tangent
bundle to a cotangent bundle, and inversely, if well-defined $L$-duality
maps (\ref{legendre}) and/or (\ref{invlegendre}) is considered.
\end{remark}

Let us motivate why in the last remark we consider the possibility of equivalent transforms of geometric and physical data for different type of MGTs with MDRs for theories on tangent and cotangent Lorentz bundles
formulated in canonical Lagrange and/or Hamilton variables. This is similar to the fact that the Lagrange and Hamilton formulation of mechanics are, in general, not equivalent. Nevertheless, such geometric mechanics models are equivalent for mechanical systems formulated in certain phase space variables for which well-defined Legendre transforms can be defined. Such results hold true for MGTs formulated in Lagrange-Hamilton variables as we
explain in Appendix \ref{appendixb} (for formulas (\ref{legendre}) and/or (\ref{invlegendre})), see also details in Refs. \cite{v18a,bvepjc18}.

In this work, we shall construct exact solutions for (\ref{meinsteqtbcand})
which posses a general decoupling property. We shall extract
LC-configurations from solutions of such equations if there will be imposed
additional zero-torsion constraints (\ref{lccondsd}).

\section{Decoupling and Integrability of Modified Einstein-Hamilton Equations%
}

\label{s4}In this section we develop the anholonomic frame deformation
method, AFDM, for constructing exact and parametric solutions in (modified)
gravity theories when the coefficients of generic off-diagonal metrics and
generalized connections may depend on all spacetime and/or phase space
coordinates. Our goal is to formulate certain general conditions for
decoupling and integrability of modified Einstein equations for spacetime
and phase spaces with one and two fiber space Killing symmetries. We shall
generate various classes of exact solutions in explicit form for MGTs on
cotangent Lorentz bundles and related models of Einstein-Hamilton gravity.
This geometric method was developed during last 20 years in a series of
works for constructing exact solutions in various theories of commutative,
noncommutative, spinor, supersymmetric, twistor, string, brane,
Finsler-Lagrange and higher order generalizations, of (super) gravity. We
cite \cite%
{gvvepjc14,svvijmpd14,rajpoot15,vacaruplb16,gheorghiuap16,bubuianucqg17,vbubuianu17,vacaruepjc17}
for reviews and new results; then \cite{vmon06,vijtp10}, for reviews of
works published during 1998-2008; and Appendix B in \cite{v18a} for a review
of 20 directions of research in such theories with a number of applications
in modern accelerating cosmology and astrophysics (including results
beginning 2009).\footnote{%
The AFDM was formulated as a geometric method of constructing exact
solutions of physically important systems of nonlinear PDEs, to study
nonholonomic geometric flows and elaborating new methods of geometric and
deformation quantization. It requests definition of respective classes of
nonholonomic distributions of geometric objects and nonholonomic frames,
with respective types 2+2+2+..., 3+1 and (3+1)+(2+2), 3+2+2+... Our approach
should not be confused with the well known Cartan moving frame method and
alternative constructions involving various types of tetradic, dyadic, and
Arnowit-Deser-Wheeler, ADM, formalisms. The main difference of the AFDM from
alternative ones is that in the first case there are considered deformations
both of the frame and nonlinear connection structures to certain
configurations when necessary type PDEs (for instance, modified Einstein
equations) can be decoupled and integrated in certain general forms. When
certain general classes of solutions have been constructed in explicit form,
we can impose necessary types of nonholonomic constraints on the nonlinear
and linear connection structures, respective generating and integration
functions, which allow to extract, for instance, LC-configurations,
elaborated on noncommutative models, and define equivalent configuration in
MGTs etc.}

\subsection{Off-diagonal ansatz for phase space metrics}

Similarly to 4-d and extra dimension geometric constructions in \cite%
{gvvepjc14,svvijmpd14,rajpoot15,vacaruplb16,gheorghiuap16,bubuianucqg17,
vbubuianu17,vacaruepjc17,vmon06,vijtp10}, we shall prove that it is possible
to integrate the system of generalized Einstein equations (\ref%
{meinsteqtbcand}) in MGT with nonholonomic dyadic variables. We can consider
general d-metric and s-metric parameterizations (\ref{dm2and2}) (with
equivalent off-diagonal forms (\ref{offds})) for nonholonomic spacetimes of
dimensions 3-10 and N-adapted coefficients depending, in principle, on all
spacetime and phase space coordinates. Such geometric calculi are cumbersome
and it is not clear what physical importance may have certain "very general"
classes of exact and parametric solutions. In this work, we develop the AFDM
for cotangent nonholonomic dyadic decompositions when the total phase space
s-metric structure may describe extensions of the BH stationary solutions in
GR to s-metrics with Killing symmetry on $\partial _{4}=\partial /\partial
y^{4}=\partial _{t}=\partial /\partial t$ on the shell $s=2;$ Killing
symmetry $\ ^{\shortmid }\partial ^{5}=\partial /\partial p_{5}$ on $s=3,$
and posses a general and/or shell Killing symmetry on $\ ^{\shortmid
}\partial ^{7}=\partial /\partial p_{7}$ on $\ _{s}\mathbf{T}^{\ast }\mathbf{%
V.}$ The last condition is crucial for a simplified proof of decoupling
property of generalized Einstein equations. In particular, such solutions
allow us to construct extensions of the Schwarzschild and Kerr metrics in GR
(with a prime diagonal metric $%
g_{ij}(r)=diag[...,...,(1-r/r_{g})^{-1},(1-r/r_{g})],$ in standard spacetime
spherical coordinates with $\ ^{\shortmid }u^{1}=x^{1}=r, \ ^{\shortmid
}u^{2}=x^{2}=\theta ,\ ^{\shortmid }u^{3}=y^{3}=\varphi ,\ ^{\shortmid
}u^{4}=y^{4}=t)$ to phase space solutions depending in explicit form on the
energy type variable $p_{8}=E$ and $p_{6},$ (also on $p_{5}$ via
N--connection coefficients) but not on $p_{7}.$ We shall study explicit
examples of new classes of black hole solutions in Einstein-Hamilton gravity
in our partner works.

\subsubsection{Quasi-stationary ansatz for s-metrics}

We consider generalizations of the linear quadratic element (\ref{lqed}) when

\begin{definition}
A quasi-stationary s-metric is defined by an ansatz
\begin{eqnarray}
&&ds^{2}=g_{1}(r,\theta )dr^{2}+g_{2}(r,\theta )d\theta ^{2}+g_{3}(r,\theta
,\varphi )\delta \varphi ^{2}+g_{4}(r,\theta ,\varphi )\delta t^{2}+
\label{ansatz1} \\
&&\ ^{\shortmid }g^{5}(r,\theta ,\varphi ,t,p_{6})(\ ^{\shortmid }\mathbf{e}%
_{5})^{2}+\ ^{\shortmid }g^{6}(r,\theta ,\varphi ,t,p_{6})(\ ^{\shortmid }%
\mathbf{e}_{6})+\ ^{\shortmid }g^{7}(r,\theta ,\varphi ,t,p_{6},E)(\
^{\shortmid }\mathbf{e}_{7})^{2}+\ ^{\shortmid }g^{8}(r,\theta ,\varphi
,t,p_{6},E)(\ ^{\shortmid }\mathbf{e}_{8})^{2}.  \notag
\end{eqnarray}
\end{definition}

The N-adapted frames in (\ref{ansatz1}) are parameterized in the form:
\begin{eqnarray}
\ ^{\shortmid }\mathbf{e}^{3} &=&\delta \varphi =d\varphi +\ _{2}^{\shortmid
}w_{1}(r,\theta ,\varphi )dr+\ _{2}^{\shortmid }w_{2}(r,\theta ,\varphi
)d\theta ,  \label{ansatz1nc} \\
\ ^{\shortmid }\mathbf{e}^{4} &=&\delta t=dt+\ _{2}^{\shortmid
}n_{1}(r,\theta ,\varphi )dr+\ _{2}^{\shortmid }n_{2}(r,\theta ,\varphi
)d\theta ,  \notag
\end{eqnarray}
\begin{eqnarray*}
\ ^{\shortmid }\mathbf{e}_{5} &=&\delta p_{5}=dp_{5}+\ _{3}^{\shortmid
}n_{1}(r,\theta ,\varphi ,t,p_{5})dr+\ _{3}^{\shortmid }n_{2}(r,\theta
,\varphi ,t,p_{5})d\theta +\ _{3}^{\shortmid }n_{3}(r,\theta ,\varphi
,t,p_{5})d\varphi +\ _{3}^{\shortmid }n_{4}(r,\theta ,\varphi ,t,p_{5})dt,
\notag \\
\ ^{\shortmid }\mathbf{e}_{6} &=&\delta p_{6}=dp_{6}+\ _{3}^{\shortmid
}w_{1}(r,\theta ,\varphi ,t,p_{5})dr+\ _{3}^{\shortmid }w_{2}(r,\theta
,\varphi ,t,p_{5})d\theta +\ _{3}^{\shortmid }w_{3}(r,\theta ,\varphi
,t,p_{5})d\varphi +\ _{3}^{\shortmid }w_{4}(r,\theta ,\varphi ,t,p_{5})dt,
\notag
\end{eqnarray*}
\begin{eqnarray*}
\ ^{\shortmid }\mathbf{e}_{7} &=&\delta p_{7}=dp_{7}+\ _{4}^{\shortmid
}n_{1}(r,\theta ,\varphi ,t,p_{5},p_{6},E)dr+\ _{4}^{\shortmid
}n_{2}(r,\theta ,\varphi ,t,p_{5},p_{6},E)d\theta +\ _{4}^{\shortmid
}n_{3}(r,\theta ,\varphi ,t,p_{5},p_{6},E)d\varphi  \notag \\
&&+\ _{4}^{\shortmid }n_{4}(r,\theta ,\varphi ,t,p_{5},p_{6},E)dt+\
_{4}^{\shortmid }n_{5}(r,\theta ,\varphi ,t,p_{5},p_{6},E)dp_{5}+\
_{4}^{\shortmid }n_{6}(r,\theta ,\varphi ,t,p_{5},p_{6},E)dp_{6},  \notag \\
\ ^{\shortmid }\mathbf{e}_{8} &=&\delta p_{8}=dp_{8}+\ _{4}^{\shortmid
}w_{1}(r,\theta ,\varphi ,t,p_{5},p_{6},E)dr+\ _{4}^{\shortmid
}w_{2}(r,\theta ,\varphi ,t,p_{5},p_{6},E)d\theta +\ _{4}^{\shortmid
}w_{3}(r,\theta ,\varphi ,t,p_{5},p_{6},E)d\varphi  \notag \\
&&+\ _{4}^{\shortmid }w_{4}(r,\theta ,\varphi ,t,p_{5},p_{6},E)dt+\
_{4}^{\shortmid }w_{5}(r,\theta ,\varphi ,t,p_{5},p_{6},E)dp_{5}+\
_{4}^{\shortmid }w_{6}(r,\theta ,\varphi ,t,p_{5},p_{6},E)dp_{6}.  \notag
\end{eqnarray*}%
The coefficients of s-metric (\ref{ansatz1}) and related N-connection (\ref%
{ansatz1nc}) coefficients consist certain particular examples of s-metrics (%
\ref{dmcts}) and/or (\ref{dm2and2}) with such parameterizations:
\begin{eqnarray*}
g_{i_{1}j_{i}} &=&diag[g_{i_{1}}(x^{k_{1}})],\mbox{ for }i_{1},j_{1}=1,2%
\mbox{ and }x^{k_{1}}=(x^{1}=r,x^{2}=\theta ); \\
g_{a_{2}b_{2}} &=&diag[g_{a_{2}}(x^{k_{1}},y^{3})],\mbox{ for }%
a_{2},b_{2}=3,4\mbox{ and }y^{3}=x^{3}=\varphi ,y^{4}=x^{4}=t; \\
\ ^{\shortmid }g^{a_{3}b_{3}} &=&diag[\ ^{\shortmid
}g^{a_{3}}(x^{k_{1}},y^{a_{2}},p_{6})],\mbox{ for }a_{3},b_{3}=5,6%
\mbox{ and
}\ ^{\shortmid }u^{5}=p_{5},\ ^{\shortmid }u^{6}=p_{6}; \\
\ ^{\shortmid }g^{a_{4}b_{4}} &=&diag[\ ^{\shortmid
}g^{a_{4}}(x^{k_{1}},y^{a_{2}},p_{a_{3}},E)],\mbox{ for }a_{4},b_{4}=7,8%
\mbox{ and }\ ^{\shortmid }u^{7}=p_{7},\ ^{\shortmid }u^{8}=p_{8}=E; \\
\ ^{\shortmid }N_{j_{1}}^{3} &=&\ _{2}^{\shortmid }w_{j_{1}}=\ ^{\shortmid
}w_{j_{1}}(r,\theta ,\varphi ),\ ^{\shortmid }N_{j_{1}}^{4}=\
_{2}^{\shortmid }n_{j_{1}}=\ ^{\shortmid }n_{j_{1}}(r,\theta ,\varphi ); \\
\ ^{\shortmid }N_{j_{2}\ 5} &=&\ _{3}^{\shortmid
}n_{j_{2}}(x^{k_{2}},p_{6})=\ ^{\shortmid }n_{j_{2}}(r,\theta ,\varphi
,t,p_{6}),\ ^{\shortmid }N_{j_{2}\ 6}=\ _{3}^{\shortmid
}w_{j_{2}}(x^{k_{2}},p_{6})=\ ^{\shortmid }w_{j_{2}}(r,\theta ,\varphi
,t,p_{6})\mbox{ for }j_{2}=1,2,3,4; \\
\ ^{\shortmid }N_{j_{3}7} &=&\ _{4}^{\shortmid }n_{j_{3}}(x^{k_{3}},p_{8})=\
^{\shortmid }n_{j_{3}}(r,\theta ,\varphi ,t,E),\ ^{\shortmid }N_{j_{3}8}=\
_{4}^{\shortmid }w_{j_{3}}(x^{k_{3}},p_{8})=\ ^{\shortmid
}w_{j_{3}}(r,\theta ,\varphi ,t,E)\mbox{ for }j_{3}=1,2,3,4,5,6.
\end{eqnarray*}
In GR \cite{hawrking73,misner73,wald82,kramer03}, the term "stationary" is used for metrics which in certain special coordinates do not depend on respective time like coordinates but contain, in principle, certain off-diagonal terms describing, for instance, rotation of Kerr black holes. The horizontal part (i.e. the first 4 terms for a Lorentz manifold base) in (\ref{ansatz1}) is of stationary type. Nevertheless, the cofiber part (next 5-8 terms) of that ansatz  may depend on a time like coordinate "t". This is because MDRs (\ref{mdrg}) may induce certain cofiber dynamics with dependence on "t", and on "E" (for instance, for so-called rainbow metrics). This motivates the term "quasi-stationary" for d-metric ansatz considered above. Fortunately, the AFDM with nonholonomic shell dyadic decompositions allows us to construct exact/parameteric quasi-stationary solutions as we shall demonstrate in Section \ref{s5}.
\begin{remark}
An ansatz (\ref{ansatz1}) is stationary if it s-metric the N-connection
coefficients are prescribed in such an adapted form that such coefficients
do not depend explicitly on the time like variables $t.$
\end{remark}

For stationary configurations both the 4-d base spacetime metric and typical
fiber/shells extensions do not change under $t$-relativistic
dynamics/ evolution. To study extensions of BH solutions (stationary
configurations) in GR for nontrivial stationary MDRs we can consider, for
simplicity, that additional momentum coordinates does not change the
stationary character of solutions. In a more general context, we can
consider that the BH configurations are extended to quasi-stationary
off-diagonal phase space metrics with possible dependencies on $t.$ Such
generalized (co) tangent bundle solutions can be rewritten in equivalent
Lagrange on Hamilton like variables. Here we note that parameterizations of
shell coordinates, and ansatz for s-metrics and N-connections are different
from those considered in our former works \cite%
{gvvepjc14,bubuianucqg17,vbubuianu17,vacaruplb16,vacaruepjc17}.

\subsubsection{Quasi-stationary ansatz for (effective) sources}

We shall be able to generate in explicit form exact off-diagonal solutions
of modified Einstein equations on cotangent Lorentz bundles (\ref%
{meinsteqtbcand}) following such a

\begin{convention}
\label{sourceparam}\textsf{[quasi-stationary (effective) sources for dyadic
splitting on cotangent bundles] } For nonholonomic dyadic decompositions,
the (effective) sources are parameterized in an N- and s-adapted diagonal
form%
\begin{equation}
\ ^{\shortmid }\widehat{\Upsilon }_{\ \beta _{s}}^{\alpha _{s}}=[\
_{1}^{\shortmid }\widehat{\Upsilon }(x^{k_{1}})\delta _{j_{1}}^{i_{1}},\
_{2}^{\shortmid }\widehat{\Upsilon }(x^{k_{1}},y^{3})\delta
_{b_{2}}^{a_{2}},\ _{3}^{\shortmid }\widehat{\Upsilon }(x^{k_{2}},p_{6})%
\delta _{b_{3}}^{a_{3}},\ _{4}^{\shortmid }\widehat{\Upsilon }%
(x^{k_{3}},p_{8})\delta _{b_{4}}^{a_{4}}].  \label{ansatzsourc}
\end{equation}
\end{convention}

We are not able to decouple and integrate in certain general/ explicit form modified Einstein equations for general types of sources. There are necessary additional assumptions on certain general parameterizations on effective and real matter sources which, in this work, are adapted to a nonholonomic dyadic shall frame structure, which will allow a shell by shell integration of the nonlinear system of PDEs (\ref{meinsteqtbcand}) for quasi-stationary ansatz for d-metrics.  Such an assumption is stated by above Convention when the values $\ _{s}^{\shortmid }\widehat{\Upsilon }$ in (\ref{ansatzsourc}) impose certain nonholonomic constraints on the base and cofiber dynamics of (effective) matter sources $\ _{\shortmid }\Upsilon _{\alpha _{s}\beta _{s}}$  (\ref{totaldiadsourcd}). Such sources can be related to a general non-diagonal source $\ ^{\shortmid }\widehat{\Upsilon }_{\alpha _{s}^{\prime }\beta _{s}^{\prime }}$ via frame transforms $\ ^{\shortmid }\widehat{\Upsilon }_{\alpha _{s}\beta
_{s}}=e_{\ \alpha _{s}}^{\alpha _{s}^{\prime }}e_{\ \beta _{s}}^{\beta
_{s}^{\prime }}\ _{\shortmid }\Upsilon _{\alpha _{s}^{\prime }\beta
_{s}^{\prime }}.$ For further constructions, we shall consider that we can
prescribe any $\ _{s}^{\shortmid }\widehat{\Upsilon }$ as \textbf{generating
sources} in order to express certain exact solutions in a general form. We
shall prove that such values and generating functions are related to
conventional cosmological constants via nonlinear symmetries.

In explicit form, $\ _{s}^{\shortmid }\widehat{\Upsilon }$ should be
considered as values determined by certain phenomenological/
experimental/observational data.

\subsubsection{ Ricci d--tensors with dyadic shell decomposition}

Let us introduce such short notations for partial derivatives: $\partial
_{1}q=q^{\bullet },\partial _{2}q=q^{\prime },\partial _{3}q=\partial
_{\varphi }q=q^{\diamond },\partial ^{6}q=\partial q/\partial p^{6},$ and $%
\partial ^{8}q=\partial q/\partial p^{8}=\partial q/\partial E=\partial
_{E}q=q^{\ast }.$ For quasi-stationary configurations, it is always possible%
\footnote{%
we can construct special classes of exact and parametric solutions if the
conditions of this section are not satisfied but the formulas are more
cumbersome and may not allow explicit integration of motion equations} to
define frame and coordinate transforms resulting in N-adapted
parameterizations of metrics in the form (\ref{ansatz1}) with $%
g_{4}^{\diamond }\neq 0,\partial ^{6}g^{5}\neq 0$ and $(g^{7})^{\ast }\neq
0. $ A tedious computation of the coefficients of the canonical
d--connection $\ _{s}^{\shortmid }\widehat{\mathbf{D}}=\{$ $\ ^{\shortmid }%
\widehat{\mathbf{\Gamma }}_{\ \alpha _{s}\beta _{s}}^{\gamma _{s}}\}$ (\ref%
{canondch}) using formulas (\ref{candcons34}) for the ansatz (\ref{ansatz1})
and (at the second step) the corresponding non-trivial coefficients of the
Ricci d--tensor $\ ^{\shortmid }\widehat{\mathbf{R}}_{\alpha _{s}\beta _{s}}$
(\ref{driccisd}), provides a proof of

\begin{lemma}
\label{lemmaricci}The nontrivial shell by shell N-adapted coefficients of $%
\widehat{\mathbf{R}}_{\alpha _{s}\beta _{s}}$ in modified Einstein equations
(\ref{meinsteqtbcand}) for ansatz (\ref{ansatz1}) are given by formulas:
\begin{eqnarray}
\ ^{\shortmid }\widehat{R}_{1}^{1} &=&\ ^{\shortmid }\widehat{R}_{2}^{2}=%
\frac{1}{2g_{1}g_{2}}[\frac{g_{1}^{\bullet }g_{2}^{\bullet }}{2g_{1}}+\frac{%
(g_{2}^{\bullet })^{2}}{2g_{2}}-g_{2}^{\bullet \bullet }+\frac{g_{1}^{\prime
}g_{2}^{\prime }}{2g_{2}}+\frac{\left( g_{1}^{\prime }\right) ^{2}}{2g_{1}}%
-g_{1}^{\prime \prime }]=-\ _{1}^{\shortmid }\widehat{\Upsilon },  \notag \\
\ ^{\shortmid }\widehat{R}_{3}^{3} &=&\ ^{\shortmid }\widehat{R}_{4}^{4}=%
\frac{1}{2g_{3}g_{4}}[\frac{\left( g_{4}^{\diamond }\right) ^{2}}{2g_{4}}+%
\frac{g_{3}^{\diamond }g_{4}^{\diamond }}{2g_{3}}-g_{4}^{\diamond \diamond
}]=-\ _{2}^{\shortmid }\widehat{\Upsilon },  \label{riccist2} \\
\ ^{\shortmid }\widehat{R}_{3k_{1}} &=&\frac{\ w_{k_{1}}}{2g_{4}}%
[g_{4}^{\diamond \diamond }-\frac{\left( g_{4}^{\diamond }\right) ^{2}}{%
2g_{4}}-\frac{(g_{3}^{\diamond })(g_{4}^{\diamond })}{2g_{3}}]+\frac{%
g_{4}^{\diamond }}{4g_{4}}(\frac{\partial _{k_{1}}g_{3}}{g_{3}}+\frac{%
\partial _{k_{1}}g_{4}}{g_{4}})-\frac{\partial _{k_{1}}(g_{3}^{\diamond })}{%
2g_{3}}=0;  \notag \\
\ ^{\shortmid }\widehat{R}_{4k_{1}} &=&\frac{g_{4}}{2g_{3}}%
n_{k_{1}}^{\diamond \diamond }+\left( \frac{3}{2}g_{4}^{\diamond }-\frac{%
g_{4}}{g_{3}}g_{3}^{\diamond }\right) \frac{\ n_{k_{1}}^{\diamond }}{2g_{3}}%
=0,  \notag
\end{eqnarray}%
on shells $s=1$ and $s=2,$with $i_{1},k_{1}...=1,2;$
\begin{eqnarray}
\ ^{\shortmid }\widehat{R}_{5}^{5} &=&\ ^{\shortmid }\widehat{R}_{6}^{6}=-%
\frac{1}{2g^{5}g^{6}}[\frac{(\partial ^{6}(g^{5}))^{2}}{2g^{5}}+\frac{%
(\partial ^{6}(g^{5}))(\partial ^{6}(g^{6}))}{2g^{6}}-\partial ^{6}\partial
^{6}(g^{5})]=-\ _{3}^{\shortmid }\widehat{\Upsilon },  \notag \\
\ ^{\shortmid }\widehat{R}_{5k_{2}} &=&\frac{g^{5}}{2g^{6}}\partial
^{6}\partial ^{6}n_{k_{2}}+\left( \frac{3}{2}\partial ^{6}g^{5}-\frac{g^{5}}{%
g^{6}}\partial ^{6}g^{6}\right) \frac{\ \partial ^{6}n_{k_{2}}}{2g^{6}}=0,
\label{riccist3} \\
\ ^{\shortmid }\widehat{R}_{6k_{2}} &=&\frac{\ w_{k_{2}}}{2g^{5}}[\partial
^{6}\partial ^{6}(g^{5})-\frac{\left( \partial ^{6}g^{5}\right) ^{2}}{2g^{5}}%
-\frac{(\partial ^{6}(g^{5}))(\partial ^{6}(g^{6}))}{2g^{6}}]+\frac{\partial
^{6}(g^{5})}{4g^{6}}(\frac{\partial _{k_{2}}g^{5}}{g^{5}}+\frac{\partial
_{k_{2}}g^{6}}{g^{6}})-\frac{\partial _{k_{2}}\partial ^{6}(g^{6})}{2g^{6}}%
=0;  \notag
\end{eqnarray}%
on shell $s=3$ with $i_{2},k_{2}...=1,2,3,4;$
\begin{eqnarray}
\ ^{\shortmid }\widehat{R}_{7}^{7} &=&\ ^{\shortmid }\widehat{R}_{8}^{8}=-%
\frac{1}{2g^{7}g^{8}}[\frac{((g^{7})^{\ast })^{2}}{2g^{7}}+\frac{%
(g^{7})^{\ast }\ (g^{8})^{\ast }}{2g^{6}}-(g^{7})^{\ast \ast }]=-\
_{4}^{\shortmid }\widehat{\Upsilon },  \notag \\
\ ^{\shortmid }\widehat{R}_{7k_{3}} &=&\frac{g^{7}}{2g^{8}}n_{k_{3}}^{\ast
\ast }+\left( \frac{3}{2}(g^{7})^{\ast }-\frac{g^{7}}{g^{8}}(g^{8})^{\ast
}\right) \frac{\ n_{k_{3}}^{\ast }}{2g^{8}}=0,  \label{riccist4a} \\
\ ^{\shortmid }\widehat{R}_{8k_{2}} &=&\frac{\ w_{k_{3}}}{2g^{7}}%
[(g^{7})^{\ast \ast }-\frac{\left( (g^{7})^{\ast }\right) ^{2}}{2g^{7}}-%
\frac{(g^{7})^{\ast }\ (g^{8})^{\ast }}{2g^{6}}]+\frac{(g^{7})^{\ast }}{%
4g^{8}}(\frac{\partial _{k_{3}}g^{7}}{g^{7}}+\frac{\partial _{k_{3}}g^{8}}{%
g^{8}})-\frac{\partial _{k_{3}}(g^{8})^{\ast }}{2g^{8}}=0  \notag
\end{eqnarray}%
on shell $s=4$ with $i_{3},k_{3}...=1,2,3,4,5,6.$
\end{lemma}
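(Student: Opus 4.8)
The plan is to carry out a direct N--adapted computation, organized shell by shell, which exploits the fact that the quasi-stationary ansatz (\ref{ansatz1}) together with the N--connection coefficients (\ref{ansatz1nc}) has been chosen precisely so that on each conventional 2--d shell the relevant geometric objects depend only on the coordinates of that shell and of the lower ones, and in particular are annihilated by the shell Killing vectors $\partial _{t}$ on $s=2$, $\ ^{\shortmid }\partial ^{5}$ on $s=3$, and $\ ^{\shortmid }\partial ^{7}$ on $s=4$. First I would compute the nonzero coefficients of the canonical s--connection $\ _{s}^{\shortmid }\widehat{\mathbf{D}}=\{\ ^{\shortmid }\widehat{\mathbf{\Gamma }}_{\ \alpha _{s}\beta _{s}}^{\gamma _{s}}\}$ of (\ref{canondch}) from the coordinate formulas (\ref{candcons34}) of Appendix \ref{appendixa}, applied to the ansatz. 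Because each h-- and (co)fiber block of the s--metric is diagonal, $g_{i_{s}j_{s}}=diag[g_{i_{s}}]$ and $\ ^{\shortmid }g^{a_{s}b_{s}}=diag[\ ^{\shortmid }g^{a_{s}}]$, most of the Christoffel--like terms drop out and the surviving $\widehat{L}$-- and $\widehat{C}$--type coefficients reduce to logarithmic derivatives $\partial _{\alpha _{s}}g/g$ and to linear combinations of the N--connection data $w_{k_{s}}$ and $n_{k_{s}}$; I would record these using the abbreviations $q^{\bullet },q^{\prime },q^{\diamond },\partial ^{6}q,q^{\ast }$ introduced before the statement.

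Next I would assemble the Ricci d--tensor components from (\ref{dcurv}) and the parameterization (\ref{driccisd}). The structural point is that on the h--subspace $s=1$ the metric depends only on $(r,\theta )$, so $\ ^{\shortmid }\widehat{R}_{h_{1}j_{1}}$ is just the standard two--dimensional Ricci expression, giving $\ ^{\shortmid }\widehat{R}_{1}^{1}=\ ^{\shortmid }\widehat{R}_{2}^{2}$; on each higher shell the diagonal pair $\ ^{\shortmid }\widehat{R}_{2s-1}^{2s-1}=\ ^{\shortmid }\widehat{R}_{2s}^{2s}$ collapses, after the cancellations enforced by the Killing symmetry of that shell (so that only one ``vertical'' derivative survives, namely $\partial _{\varphi }$ on $s=2$, $\partial ^{6}$ on $s=3$, $\partial _{E}=\ast $ on $s=4$), to the single second--order expression displayed in (\ref{riccist2})--(\ref{riccist4a}); and the mixed h--v components $\ ^{\shortmid }\widehat{R}_{3k_{1}},\ ^{\shortmid }\widehat{R}_{4k_{1}}$ and their shell analogues collapse to the first--order linear operators acting on $w_{k_{s}}$ and $n_{k_{s}}$. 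Matching these expressions with the diagonal source ansatz (\ref{ansatzsourc}) through the modified Einstein equations (\ref{meinsteqtbcand}) yields the stated right--hand sides $-\ _{s}^{\shortmid }\widehat{\Upsilon }$ on the diagonal and $0$ on the off--diagonal $\widehat{R}_{2s-1,k_{s-1}},\widehat{R}_{2s,k_{s-1}}$ components.

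A further observation that shortens the argument: the computation for shell $s+1$ is formally identical to that for shell $s$ with $(g_{3},g_{4},\varphi )$ replaced by $(\ ^{\shortmid }g^{5},\ ^{\shortmid }g^{6},p_{6})$ and then by $(\ ^{\shortmid }g^{7},\ ^{\shortmid }g^{8},E)$, together with the corresponding (co)fiber Killing variable. Hence the whole verification reduces to one generic 2--d ``block'' computation plus a bookkeeping of the diadic labels, and one must only check separately the genuinely two--dimensional $s=1$ block. I would present the $s=2$ block in some detail and then state that the $s=3,4$ formulas follow by this relabelling, as is consistent with the conventions in \cite{v18a,gvvepjc14}.

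The main obstacle I anticipate is not the differential algebra of any single block — that is routine once the canonical d--connection coefficients from (\ref{candcons34}) are in hand — but the index bookkeeping: one has to confirm that the anholonomy (non--integrability) coefficients $\ ^{\shortmid }\Omega $ feed only into the mixed components and produce exactly the vanishing equations $\ ^{\shortmid }\widehat{R}_{3k_{1}}=0$, $\ ^{\shortmid }\widehat{R}_{4k_{1}}=0$ (and their higher--shell analogues), while leaving the diagonal components as genuine PDEs in a single vertical variable. Verifying that the chosen three Killing symmetries kill precisely the cross terms needed for the stated decoupled form, and that no residual $p_{5},p_{7}$--dependence survives, is where the care lies.
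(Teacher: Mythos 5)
Your proposal follows essentially the same route as the paper: a direct N--adapted computation of the canonical s--connection coefficients from (\ref{candcons34}) applied to the ansatz (\ref{ansatz1})--(\ref{ansatz1nc}), followed by assembly of the Ricci components via (\ref{dcurv}) and (\ref{driccisd}), carried out shell by shell and using the fact that the $s=3,4$ blocks reproduce the $s=2$ block under relabelling of the vertical variable. The paper itself describes exactly this "tedious computation" (and defers the block-by-block details to the analogous tables in earlier works), so your plan is a faithful, correct reconstruction of it.
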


Similar details for nonholonomic extra dimension manifolds and tangent
bundles can be found in \cite%
{gvvepjc14,svvijmpd14,rajpoot15,gheorghiuap16,bubuianucqg17,vbubuianu17,vacaruepjc17,vmon06,vijtp10}
references therein.

Using the above formulas, we can compute the Ricci scalar (\ref{riccidscal})
for $\ _{s}^{\shortmid }\widehat{\mathbf{D}}$ (\ref{candcons34}) when for
the ansatz (\ref{ansatz1}) and prove the

\begin{corollary}
\textsf{[canonical Ricci s-scalar] } The Ricci scalar for the canonical
d-connection of quasi-stationary phase spaces is
\begin{equation*}
\ \ \ _{2}^{\shortmid }\widehat{R}sc=2(\ ^{\shortmid }\widehat{R}_{1}^{1}),\
_{2}^{\shortmid }\widehat{R}sc=2(\ ^{\shortmid }\widehat{R}_{1}^{1}+\
^{\shortmid }\widehat{R}_{3}^{3}),\ _{3}^{\shortmid }\widehat{R}sc=2(\
^{\shortmid }\widehat{R}_{1}^{1}+\ ^{\shortmid }\widehat{R}_{3}^{3}+\
^{\shortmid }\widehat{R}_{5}^{5}),\ \ _{4}^{\shortmid }\widehat{R}sc=2(\
^{\shortmid }\widehat{R}_{1}^{1}+\ ^{\shortmid }\widehat{R}_{3}^{3}+\
^{\shortmid }\widehat{R}_{5}^{5}+\ ^{\shortmid }\widehat{R}_{7}^{7}).
\end{equation*}
\end{corollary}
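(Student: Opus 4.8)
The plan is to obtain each $\ _{s}^{\shortmid }\widehat{R}sc$ by specializing the general trace formula (\ref{riccidscal}) for the scalar curvature of a canonical s-connection, $\ _{s}^{\shortmid }\overline{R}=\ ^{\shortmid }\mathbf{g}^{\alpha _{s}\beta _{s}}\ ^{\shortmid }\widehat{\mathbf{R}}_{\alpha _{s}\beta _{s}}$, to the quasi-stationary ansatz (\ref{ansatz1}) and then reading off the contributions from Lemma \ref{lemmaricci}. First I would observe that, with respect to the N-adapted coframe $\ ^{\shortmid }\mathbf{e}^{\alpha _{s}}$ of (\ref{ansatz1nc}), the s-metric (\ref{ansatz1}) is diagonal; hence the contraction $\ ^{\shortmid }\mathbf{g}^{\alpha _{s}\beta _{s}}\ ^{\shortmid }\widehat{\mathbf{R}}_{\alpha _{s}\beta _{s}}$ collapses to the sum of the diagonal mixed components $\sum_{\alpha _{s}}\ ^{\shortmid }\widehat{R}_{\alpha _{s}}^{\alpha _{s}}$, so the off-diagonal Ricci coefficients $\ ^{\shortmid }\widehat{R}_{3k_{1}},\ ^{\shortmid }\widehat{R}_{4k_{1}},\ ^{\shortmid }\widehat{R}_{5k_{2}},\dots$ of Lemma \ref{lemmaricci} do not enter the scalar (they are multiplied by vanishing inverse-metric entries and only survive later as independent field equations). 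Respecting the cumulative index ranges of Section \ref{s2} — $\alpha _{1}=1,2$; $\alpha _{2}=1,\dots ,4$; $\alpha _{3}=1,\dots ,6$; $\alpha _{4}=1,\dots ,8$ — this already gives $\ _{1}^{\shortmid }\widehat{R}sc=\ ^{\shortmid }\widehat{R}_{1}^{1}+\ ^{\shortmid }\widehat{R}_{2}^{2}$, then $\ _{2}^{\shortmid }\widehat{R}sc=\ _{1}^{\shortmid }\widehat{R}sc+\ ^{\shortmid }\widehat{R}_{3}^{3}+\ ^{\shortmid }\widehat{R}_{4}^{4}$, and similarly with the blocks $\{5,6\}$ and $\{7,8\}$ appended on the shells $s=3$ and $s=4$.

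Second, I would invoke the pairwise identities already recorded in Lemma \ref{lemmaricci}, namely $\ ^{\shortmid }\widehat{R}_{1}^{1}=\ ^{\shortmid }\widehat{R}_{2}^{2}$, $\ ^{\shortmid }\widehat{R}_{3}^{3}=\ ^{\shortmid }\widehat{R}_{4}^{4}$, $\ ^{\shortmid }\widehat{R}_{5}^{5}=\ ^{\shortmid }\widehat{R}_{6}^{6}$ and $\ ^{\shortmid }\widehat{R}_{7}^{7}=\ ^{\shortmid }\widehat{R}_{8}^{8}$. These equalities are a structural feature of the conventional 2-dimensional shells: on each such block the two diagonal Ricci components coincide, as follows from the form (\ref{candcons34}) of the coefficients of $\ _{s}^{\shortmid }\widehat{\mathbf{D}}$ for the diagonal-cofiber ansatz. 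Substituting them turns every two-term block $\ ^{\shortmid }\widehat{R}_{2k-1}^{2k-1}+\ ^{\shortmid }\widehat{R}_{2k}^{2k}$ into $2\,\ ^{\shortmid }\widehat{R}_{2k-1}^{2k-1}$, which yields exactly the four displayed formulas (the first entry, mislabelled $\ _{2}$ in the statement, being the $s=1$ base $h$-space scalar $2\,\ ^{\shortmid }\widehat{R}_{1}^{1}$).

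I do not expect a genuine obstacle here: the heavy part — the explicit N-adapted coefficients of $\ ^{\shortmid }\widehat{\mathbf{R}}_{\alpha _{s}\beta _{s}}$ for the ansatz (\ref{ansatz1}) — has already been carried out in Lemma \ref{lemmaricci}. The only points deserving care are bookkeeping ones: checking that the cumulative shell ranges $1,\dots ,2s$ are used so that no component is counted twice across shells, and confirming that the vanishing off-diagonal Ricci coefficients contribute nothing to the trace because the chosen N-adapted (co)frame diagonalizes the s-metric. Both are immediate from the conventions of Section \ref{s2}, so the corollary follows by direct substitution.
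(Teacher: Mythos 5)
Your proposal is correct and follows essentially the same route as the paper, which simply invokes the trace formula (\ref{riccidscal}) together with the diagonal N-adapted form of the ansatz (\ref{ansatz1}) and the pairwise equalities $\ ^{\shortmid }\widehat{R}_{1}^{1}=\ ^{\shortmid }\widehat{R}_{2}^{2}$, $\ ^{\shortmid }\widehat{R}_{3}^{3}=\ ^{\shortmid }\widehat{R}_{4}^{4}$, $\ ^{\shortmid }\widehat{R}_{5}^{5}=\ ^{\shortmid }\widehat{R}_{6}^{6}$, $\ ^{\shortmid }\widehat{R}_{7}^{7}=\ ^{\shortmid }\widehat{R}_{8}^{8}$ from Lemma \ref{lemmaricci}; you have merely spelled out the bookkeeping the paper leaves implicit. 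You also correctly spotted that the first entry in the displayed formula should carry the shell label $s=1$ rather than $s=2$.
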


This Corollary imposes additional N-adapted symmetries on the Einstein
d--tensor constructed following the Definition-Theorem \ref{dteinstdt}. In
result, we prove the

\begin{consequence}
\label{meinstcanon}\textsf{[Modified Einstein equations for the canonical
d-connection with dyadic decomposition] } \newline
On cotangent Lorentz bundles, the nontrivial components with possible
sources of the canonical d-equations \ (\ref{meinsteqtbcand}) \ can be
written in the form
\begin{equation}
\ ^{\shortmid }\widehat{R}_{1}^{1}=\ ^{\shortmid }\widehat{R}_{2}^{2}=-\
_{1}^{\shortmid }\widehat{\Upsilon ;}\ ^{\shortmid }\widehat{R}_{3}^{3}=\
^{\shortmid }\widehat{R}_{4}^{4}=-\ _{2}^{\shortmid }\widehat{\Upsilon };\
^{\shortmid }\widehat{R}_{5}^{5}=\ ^{\shortmid }\widehat{R}_{6}^{6}=-\
_{3}^{\shortmid }\widehat{\Upsilon };\ ^{\shortmid }\widehat{R}_{7}^{7}=\
^{\shortmid }\widehat{R}_{8}^{8}=-\ _{4}^{\shortmid }\widehat{\Upsilon };
\label{sourc1}
\end{equation}%
with (effective) source parameterized in the form $\ _{s}^{\shortmid }%
\widehat{\Upsilon }$ in (\ref{ansatzsourc}).
\end{consequence}

Finally, we note that similar formulas can be derived by other type of
d-connections (Finsler like ones or metric-affine etc.) but the canonical
(with hat ones) will allow us to decouple modified Einstein equations in
certain general forms.

\subsection{Decoupling of modified Einstein equations on cotangent Lorentz
bundles}

The first main result of this work is stated by

\begin{theorem}
\textsf{[decoupling property] } \label{theordecoupl}The gravitational field
equations (\ref{meinsteqtbcand}) with source $\ _{s}^{\shortmid }\widehat{%
\Upsilon }$ (\ref{ansatzsourc}), with Ricci s-tensors (\ref{driccisd})
computed for the canonical s-connection $\ _{s}^{\shortmid }\widehat{\mathbf{%
D}}$ with N-adapted coefficients defined by formulas (\ref{candcons12}) and (%
\ref{candcons34}) for respective s-metric and N-connection ansatz (\ref%
{ansatz1}) and (\ref{ansatz1nc}) decouple respectively: :
\begin{eqnarray}
s &=&1\mbox{ with }g_{i_{1}}=e^{\psi (r,\theta )},i_{1}=1,2  \notag \\
&&\psi ^{\bullet \bullet }+\psi ^{\prime \prime }=2\ _{1}^{\shortmid }%
\widehat{\Upsilon };  \label{eq1} \\
s &=&2\mbox{ with }\left\{
\begin{array}{c}
\alpha _{i_{1}}=g_{4}^{\diamond }\partial _{i_{1}}(\ _{2}^{\shortmid }\varpi
),\ _{2}^{\shortmid }\beta =g_{4}^{\diamond }(\ _{2}^{\shortmid }\varpi
)^{\diamond },\ _{2}^{\shortmid }\gamma =(\ln \frac{|g_{4}|^{3/2}}{|g_{3}|}%
)^{\diamond } \\
\mbox{ for }\ _{2}^{\shortmid }\Psi =\exp (\ _{2}^{\shortmid }\varpi ),\
_{2}^{\shortmid }\varpi =\ln |g_{4}^{\diamond }/\sqrt{|g_{3}g_{4}}|,%
\end{array}%
\right.  \notag \\
&&(\ _{2}^{\shortmid }\varpi )^{\diamond }g_{4}^{\diamond }=2g_{3}g_{4}\
_{2}^{\shortmid }\widehat{\Upsilon },  \label{e2a} \\
&&\ _{2}^{\shortmid }\beta \ ^{\shortmid }w_{j_{1}}-\alpha _{j_{1}}=0,
\label{e2b} \\
&&\ ^{\shortmid }n_{k_{1}}^{\diamond \diamond }+\ _{2}^{\shortmid }\gamma \
^{\shortmid }n_{k_{1}}^{\diamond }=0;  \label{e2c}
\end{eqnarray}%
\begin{eqnarray}
s &=&3\mbox{ with }\left\{
\begin{array}{c}
\alpha _{i_{2}}=(\partial ^{6}g^{5})\partial _{i_{2}}(\ _{3}^{\shortmid
}\varpi ),\ _{3}^{\shortmid }\beta =(\partial ^{6}g^{5})\partial ^{6}(\
_{3}^{\shortmid }\varpi ),\ _{3}^{\shortmid }\gamma =\partial ^{6}(\ln \frac{%
|g^{5}|^{3/2}}{|g^{6}|}) \\
\mbox{ for }\ _{3}^{\shortmid }\Psi =\exp (\ _{3}^{\shortmid }\varpi ),\
_{3}^{\shortmid }\varpi =\ln |\partial ^{6}g^{5}/\sqrt{|g^{5}g^{6}}%
|,(i_{2},j_{2},k_{2}=1,2,3,4)%
\end{array}%
\right.  \notag \\
&&\partial ^{6}(\ _{3}^{\shortmid }\varpi )\partial ^{6}g^{5}=2g^{5}g^{6}\
_{3}^{\shortmid }\widehat{\Upsilon },  \label{e3a} \\
&&\partial ^{66}(\ ^{\shortmid }n_{k_{2}})+\ _{3}^{\shortmid }\gamma
\partial ^{6}(\ ^{\shortmid }n_{k_{2}})=0,  \label{e3b} \\
&&\ _{3}^{\shortmid }\beta \ ^{\shortmid }w_{j_{2}}-\alpha _{j_{2}}=0;
\label{e3c} \\
s &=&4\mbox{ with }\left\{
\begin{array}{c}
\alpha _{i_{3}}=(g^{7})^{\ast }\partial _{i_{3}}(\ _{3}^{\shortmid }\varpi
),\ _{4}^{\shortmid }\beta =(g^{7})^{\ast }(\ _{4}^{\shortmid }\varpi
)^{\ast },\ _{4}^{\shortmid }\gamma =(\ln \frac{|g^{7}|^{3/2}}{|g^{8}|}%
)^{\ast } \\
\mbox{ for }\ _{4}^{\shortmid }\Psi =\exp (\ _{4}^{\shortmid }\varpi ),\
_{4}^{\shortmid }\varpi =\ln |(g^{7})^{\ast }/\sqrt{|g^{7}g^{8}}%
|,(i_{3},j_{3},k_{3}=1,2,...,6)%
\end{array}%
\right.  \notag \\
&&\partial ^{6}(\ _{4}^{\shortmid }\varpi )(g^{7})^{\ast }=2g^{7}g^{8}\
_{4}^{\shortmid }\widehat{\Upsilon },  \label{eq4a} \\
&&(\ ^{\shortmid }n_{k_{3}})^{\ast \ast }+\ _{4}^{\shortmid }\gamma (\
^{\shortmid }n_{k_{3}})^{\ast }=0,  \label{eq4b} \\
&&\ _{4}^{\shortmid }\beta \ ^{\shortmid }w_{j_{3}}-\alpha _{j_{3}}=0.
\label{eq4c}
\end{eqnarray}
\end{theorem}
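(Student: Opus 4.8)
The plan is to reduce the field equations (\ref{meinsteqtbcand}) to the stated hierarchy by substituting the explicit N-adapted Ricci coefficients of Lemma \ref{lemmaricci} into the component equations of Consequence \ref{meinstcanon} and then performing, on each shell, a single logarithmic change of the vertical/cofiber ``conformal'' potential. The decoupling property itself I would trace back to two structural facts already at hand: (i) for the canonical s-connection $\ _{s}^{\shortmid}\widehat{\mathbf{D}}$ the mixed-shell torsion components $h_{s}cv_{s'}\ ^{\shortmid}\widehat{\mathbf{T}}$ vanish for $s\neq s'$ (Definition-Theorem \ref{phidc}); and (ii) the quasi-stationary ansatz (\ref{ansatz1})--(\ref{ansatz1nc}) is set up with Killing directions $\partial_{t}$, $\ ^{\shortmid}\partial^{5}$, $\ ^{\shortmid}\partial^{7}$ and with the $s$-metric and N-connection coefficients on shell $s$ depending only on the coordinates of shells $\le s$. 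Together these force each coefficient in Lemma \ref{lemmaricci} on shell $s$ to be built solely from the data $(g_{i_{s}},\ ^{\shortmid}g^{a_{s}},\ ^{\shortmid}N)$ of shells $\le s$, so that the full system is triangular and splits into the four independent blocks (\ref{eq1}), (\ref{e2a})--(\ref{e2c}), (\ref{e3a})--(\ref{e3c}), (\ref{eq4a})--(\ref{eq4c}).

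For $s=1$ I would put $g_{i_{1}}=e^{\psi(r,\theta)}$; a one-line computation collapses the bracket in $\ ^{\shortmid}\widehat{R}_{1}^{1}=\ ^{\shortmid}\widehat{R}_{2}^{2}$ of (\ref{riccist2}) to $\frac{1}{2}e^{-\psi}(\psi^{\bullet\bullet}+\psi'')$, which yields (\ref{eq1}). For $s=2$ I would introduce $\ _{2}^{\shortmid}\varpi=\ln|g_{4}^{\diamond}/\sqrt{|g_{3}g_{4}|}|$ and record the two elementary identities
\begin{equation*}
g_{4}^{\diamond}(\ _{2}^{\shortmid}\varpi)^{\diamond}=g_{4}^{\diamond\diamond}-\frac{(g_{4}^{\diamond})^{2}}{2g_{4}}-\frac{g_{3}^{\diamond}g_{4}^{\diamond}}{2g_{3}},\qquad g_{4}^{\diamond}\,\partial_{i_{1}}(\ _{2}^{\shortmid}\varpi)=\partial_{i_{1}}g_{4}^{\diamond}-\frac{g_{4}^{\diamond}}{2}\Big(\frac{\partial_{i_{1}}g_{3}}{g_{3}}+\frac{\partial_{i_{1}}g_{4}}{g_{4}}\Big).
\end{equation*}
The first identity rewrites $\ ^{\shortmid}\widehat{R}_{3}^{3}=-\ _{2}^{\shortmid}\widehat{\Upsilon}$ as the first-order relation (\ref{e2a}); substituting the second into $\ ^{\shortmid}\widehat{R}_{3k_{1}}=0$ and clearing the common factor $1/(2g_{4})$ produces the algebraic system (\ref{e2b}) with $\ _{2}^{\shortmid}\beta=g_{4}^{\diamond}(\ _{2}^{\shortmid}\varpi)^{\diamond}$ and $\alpha_{i_{1}}=g_{4}^{\diamond}\partial_{i_{1}}(\ _{2}^{\shortmid}\varpi)$; and pulling $g_{4}/(2g_{3})$ out of $\ ^{\shortmid}\widehat{R}_{4k_{1}}=0$, together with $\ _{2}^{\shortmid}\gamma=(\ln|g_{4}|^{3/2}/|g_{3}|)^{\diamond}$, leaves the linear second-order ODEs (\ref{e2c}). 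Shells $s=3$ and $s=4$ I would handle by the same three steps, replacing $\diamond=\partial_{\varphi}$ by $\partial^{6}$ (cofiber indices $5,6$) and by $\ {}^{\ast}=\partial_{E}$ (indices $7,8$); because the relevant cofibers are $2$-dimensional, the index bookkeeping in (\ref{riccist3}) and (\ref{riccist4a}) is identical to the $2$-d shell computations already done, and the $\mathcal{L}$-duality remarks let me transcribe the cotangent formulas from the tangent ones without recomputation.

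The step I expect to be the main obstacle is not any single identity but the careful verification that, for the prescribed ansatz, every term in Lemma \ref{lemmaricci} that could couple a shell to a strictly higher one---for instance a $\partial_{E}$ acting on a shell-$(\le 3)$ quantity, or a $p_{7}$-derivative anywhere---is identically zero, so that the system is genuinely block-triangular rather than merely partially decoupled. This is precisely where the choice of Killing directions (notably the symmetry on $\ ^{\shortmid}\partial^{7}$) and the vanishing of the mixed-shell torsion of $\ _{s}^{\shortmid}\widehat{\mathbf{D}}$ are used in an essential way; a secondary but tedious point is keeping track of the sign conventions and of the dual raising/lowering of cofiber indices so that the $v$-type equations come out in the ``first-order potential'' form (\ref{e2a}), (\ref{e3a}), (\ref{eq4a}) rather than as the original second-order expressions.
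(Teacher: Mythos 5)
Your proposal is correct and follows essentially the same route as the paper: the paper's own proof simply states that the theorem "follows from Lemma \ref{lemmaricci} with a corresponding definition of coefficients" (deferring the shell computations to analogous tables in earlier works), which is exactly your substitution of the N-adapted Ricci components into Consequence \ref{meinstcanon} followed by the logarithmic change of variable $\ _{s}^{\shortmid}\varpi$ on each shell. Your added identities for $g_{4}^{\diamond}(\ _{2}^{\shortmid}\varpi)^{\diamond}$ and $g_{4}^{\diamond}\partial_{i_{1}}(\ _{2}^{\shortmid}\varpi)$, and your explicit remarks on why the Killing structure and the shell-ordered coordinate dependence make the system block-triangular, merely spell out what the paper leaves implicit.
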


\begin{proof}
It follows from Lemma \ref{lemmaricci} with a corresponding definition of
coefficients. The main difference is that on $\ _{s}\mathbf{T}^{\ast }%
\mathbf{V}$ we work with mixed conventional tangent bundle variables, $%
y^{a_{2}},$ and cotangent shell variables, $p_{a_{3}}$ and $p_{a_{4}}.$ So,
the equations (\ref{e2a})-(\ref{e2c}), for $s=2,$ are similar to those
presented in Table 3 in \cite{vbubuianu17} but the equations (\ref{e3a})-(%
\ref{e3c}), for $s=3,$ and (\ref{eq4a})-(\ref{eq4c}), for $s=4,$ are similar
to those presented in Table 4 in the same paper (the difference for our work
is that variables for shells $s=3,4$ are momentum type). $\square $\vskip5pt
\end{proof}

We note that above system of nonlinear PDEs possess an explicit decoupling
property: The equation (\ref{eq1}) is a 2-d Poisson equation which can be
integrated in general form for any prescribed source. Other groups of
equations corresponding to shells $s=2,3,4$ contain only partial derivatives
on $y^{3},p_{6},p_{8}$ for certain decoupled functions for the s-metric and
N-connection coefficients which can be integrated step by step. Similar
systems of PDEs have been considered in Theorem 3.1 \cite{vijtp10}, for
extra dimension gravity; the computations for Ricci d-tensors (38)-(45) and
the systems of shell PDEs (47)-(54) for higher dimension MGTs in \cite%
{gvvepjc14}; and for the nonholonomic motion equations and effective
gravitational equations in string gravity; similar proofs can be found in
\cite{vacaruepjc17} (see there section 2.3.2, with formulas (25)-(32), and
section 2.3.5 with formulas (39)-(48)).

\subsection{Generating exact solutions and integral varieties for phase
spaces}

Let us show how the systems of shell PDEs (\ref{eq1})-(\ref{eq4c}) can be
integrated in explicit form (for nontrivial nonholonomic induced torsion or
for LC-configurations). We shall write down respective quadratic line
elements for such integral varieties on cotangent Lorentz bundles.

\subsubsection{Off-diagonal solutions with nontrivial sources and
nonholonomic induced torsion}

For instance, we integrate in explicit form the system (\ref{eq4a})-(\ref%
{eq4c}) following the methods outlined in Section 2.3.6 of \cite%
{vacaruepjc17} applying respective re-parameterizations of locally
anisotropic cosmological solutions from Section 5 \cite{vbubuianu17}, in our
work modified for momentum type variables on typical fiber spaces of
cotangent bundle spaces. In appendix \ref{assgensolt}, we sketch the proof
of this

\begin{theorem}
\textsf{[general solutions with nonholonomic torsion] } \label{gensoltorsion}%
Generic off-diagonal quasi-stationary solutions with Killing symmetry on $%
p_{7}$ of the generalized Einstein equations decoupled in Theorem \ref%
{theordecoupl} are defined by such quadratic line elements on cotangent
Lorentz bundle:{\small
\begin{eqnarray}
ds^{2} &=&e^{\psi (x^{k_{1}})}[(dx^{1})^{2}+(dx^{2})^{2}]+\frac{[(\
_{2}^{\shortmid }\Psi )^{\diamond }]^{2}}{4(\ _{2}^{\shortmid }\widehat{%
\Upsilon })^{2}\{g_{4}^{[0]}-\int d\varphi \lbrack (\ _{2}^{\shortmid }\Psi
)^{2}]^{\diamond }/4(\ _{2}^{\shortmid }\widehat{\Upsilon })\}}\{d\varphi +%
\frac{\partial _{i_{1}}(\ _{2}^{\shortmid }\Psi )}{(\ _{2}^{\shortmid }\Psi
)^{\diamond }}dx^{i_{1}}\}^{2}+  \label{qeltors} \\
&&\{g_{4}^{[0]}-\int d\varphi \frac{\lbrack (\ _{2}^{\shortmid }\Psi
)^{2}]^{\diamond }}{4(\ _{2}^{\shortmid }\widehat{\Upsilon })}\}\{dt+[\
_{1}n_{k_{1}}+\ _{2}n_{k_{1}}\int d\varphi \frac{\lbrack (\ _{2}^{\shortmid
}\Psi )^{2}]^{\diamond }}{4(\ _{2}^{\shortmid }\widehat{\Upsilon }%
)^{2}|g_{4}^{[0]}-\int d\varphi \lbrack (\ _{2}^{\shortmid }\Psi
)^{2}]^{\diamond }/4(\ _{2}^{\shortmid }\widehat{\Upsilon })|^{5/2}}%
]dx^{k_{1}}\}+  \notag
\end{eqnarray}%
\begin{eqnarray*}
&&\{g_{5}^{[0]}-\int dp_{6}\frac{\partial ^{6}[(\ _{3}^{\shortmid }\Psi
)^{2}]}{4(\ _{3}^{\shortmid }\widehat{\Upsilon })}\}\{dp_{5}+[\
_{1}n_{k_{2}}+\ _{2}n_{k_{2}}\int dp_{6}\frac{\partial ^{6}[(\
_{3}^{\shortmid }\Psi )^{2}]}{4(\ _{3}^{\shortmid }\widehat{\Upsilon }%
)^{2}|g_{5}^{[0]}-\int dp_{6}\partial ^{6}[(\ _{3}^{\shortmid }\Psi
)^{2}]/4(\ _{3}^{\shortmid }\widehat{\Upsilon })|^{5/2}}]dx^{k_{2}}\}+ \\
&&\frac{[\partial ^{6}(\ _{3}^{\shortmid }\Psi )]^{2}}{4(\ _{3}^{\shortmid }%
\widehat{\Upsilon })^{2}\{g_{5}^{[0]}-\int dp_{6}\partial ^{6}[(\
_{3}^{\shortmid }\Psi )^{2}]/4(\ _{3}^{\shortmid }\widehat{\Upsilon })\}}%
\{dp_{6}+\frac{\partial _{i_{2}}(\ _{3}^{\shortmid }\Psi )}{\partial ^{6}(\
_{3}^{\shortmid }\Psi )}dx^{i_{2}}\}^{2}+ \\
&&\{g_{7}^{[0]}-\int dE\frac{[(\ _{4}^{\shortmid }\Psi )^{2}]^{\ast }}{4(\
_{4}^{\shortmid }\widehat{\Upsilon })}\}\{dp_{7}+[\ _{1}n_{k_{3}}+\
_{2}n_{k_{3}}\int dE\frac{[(\ _{4}^{\shortmid }\Psi )^{2}]^{\ast }}{4(\
_{4}^{\shortmid }\widehat{\Upsilon })^{2}|g_{7}^{[0]}-\int dE[(\
_{4}^{\shortmid }\Psi )^{2}]^{\ast }/4(\ _{4}^{\shortmid }\widehat{\Upsilon }%
)|^{5/2}}]dx^{k_{3}}\}- \\
&&\frac{[(\ _{4}^{\shortmid }\Psi )^{\ast }]^{2}}{4(\ _{4}^{\shortmid }%
\widehat{\Upsilon })^{2}\{g_{7}^{[0]}-\int dE[(\ _{4}^{\shortmid }\Psi
)^{2}]^{\ast }/4(\ _{4}^{\shortmid }\widehat{\Upsilon })\}}\{dE+\frac{%
\partial _{i_{3}}(\ _{4}^{\shortmid }\Psi )}{(\ _{4}^{\shortmid }\Psi
)^{\ast }}dx^{i_{3}}\}^{2},
\end{eqnarray*}%
for indices $%
i_{1},j_{1},k_{1},...=1,2;i_{2},j_{2},k_{2},...=1,2,3,4;i_{3},j_{3},k_{3},...=1,2,...6;y^{3}=\varphi ,y^{4}=t,p_{8}=E;
$ and
\begin{eqnarray*}
&&\mbox{generating functions: }\psi (x^{k_{1}});\ _{2}^{\shortmid }\Psi
(x^{k_{1}}y^{3});\ _{3}^{\shortmid }\Psi (x^{k_{2}},p_{6});\ _{4}^{\shortmid
}\Psi (x^{k_{3}},E); \\
&&\mbox{generating sources:}\ _{1}^{\shortmid }\widehat{\Upsilon }%
(x^{k_{1}});\ _{2}^{\shortmid }\widehat{\Upsilon }(x^{k_{1}}y^{3});\
_{3}^{\shortmid }\widehat{\Upsilon }(x^{k_{2}},p_{6});\ _{4}^{\shortmid }%
\widehat{\Upsilon }(x^{k_{3}},E); \\
&&\mbox{integr. functions: }g_{4}^{[0]}(x^{k_{1}}),\
_{1}n_{k_{1}}(x^{j_{1}}),\ _{2}n_{k_{1}}(x^{j_{1}});g_{5}^{[0]}(x^{k_{2}}),\
_{1}n_{k_{2}}(x^{j_{2}}),\ _{2}n_{k_{2}}(x^{j_{2}});g_{7}^{[0]}(x^{j_{3}}),\
_{1}n_{k_{3}}(x^{j_{3}}),\ _{2}n_{k_{3}}(x^{j_{3}}).
\end{eqnarray*}%
}
\end{theorem}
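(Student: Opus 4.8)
The plan is to integrate the decoupled ``shell by shell'' system (\ref{eq1})--(\ref{eq4c}) of Theorem \ref{theordecoupl} in closed form, one conventional $2$-shell at a time, and then to reassemble the resulting N-adapted coefficients into the quadratic line element (\ref{qeltors}). On each shell $s$ the decoupled equations come in three groups: a nonlinear ``vertical'' equation relating the two diagonal fiber coefficients to the generating source; an algebraic equation fixing the $w$-type N-connection coefficients; and a linear second-order equation, in the shell ``anisotropy'' coordinate, fixing the $n$-type N-connection coefficients. Because these groups are coupled only through the prescribed generating functions and generating sources, they can be solved successively, shell after shell.

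First I would treat the base shell $s=1$: equation (\ref{eq1}), $\psi^{\bullet\bullet}+\psi^{\prime\prime}=2\ _{1}^{\shortmid}\widehat{\Upsilon}$, is a two-dimensional Poisson equation which for any prescribed $\ _{1}^{\shortmid}\widehat{\Upsilon}(x^{k_{1}})$ is solved by standard Green-function techniques, yielding $g_{1}=g_{2}=e^{\psi(x^{k_{1}})}$ and the first bracket of (\ref{qeltors}). Next, on shell $s=2$ I would pass to the generating function $\ _{2}^{\shortmid}\Psi=\exp(\ _{2}^{\shortmid}\varpi)$ with $\ _{2}^{\shortmid}\varpi=\ln|g_{4}^{\diamond}/\sqrt{|g_{3}g_{4}|}|$; squaring this definition expresses $|g_{3}|$ algebraically through $g_{4}^{\diamond}$, $|g_{4}|$ and $\ _{2}^{\shortmid}\Psi$, and substituting into the vertical equation (\ref{e2a}) collapses it to the first-order relation $g_{4}^{\diamond}=-[(\ _{2}^{\shortmid}\Psi)^{2}]^{\diamond}/4(\ _{2}^{\shortmid}\widehat{\Upsilon})$, which integrates in $\varphi$ to $g_{4}=g_{4}^{[0]}-\int d\varphi\,[(\ _{2}^{\shortmid}\Psi)^{2}]^{\diamond}/4(\ _{2}^{\shortmid}\widehat{\Upsilon})$; back-substitution then recovers $g_{3}$ in the displayed form. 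Equation (\ref{e2b}) is purely algebraic and yields $\ ^{\shortmid}w_{j_{1}}=\partial_{j_{1}}(\ _{2}^{\shortmid}\Psi)/(\ _{2}^{\shortmid}\Psi)^{\diamond}$, while (\ref{e2c}), viewed as a linear first-order ODE for $\ ^{\shortmid}n_{k_{1}}^{\diamond}$ with integrating factor fixed by $\ _{2}^{\shortmid}\gamma$, integrates twice to $\ ^{\shortmid}n_{k_{1}}=\ _{1}n_{k_{1}}(x^{j_{1}})+\ _{2}n_{k_{1}}(x^{j_{1}})\int d\varphi\,(\ldots)$.

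The shells $s=3$ and $s=4$ are treated by exactly the same three steps, only with the working derivative $\diamond=\partial_{\varphi}$ replaced by $\partial^{6}=\partial/\partial p_{6}$ and the generating data by $\ _{3}^{\shortmid}\Psi,\ _{3}^{\shortmid}\widehat{\Upsilon}$ on $s=3$, and by ${}^{\ast}=\partial/\partial E$ and $\ _{4}^{\shortmid}\Psi,\ _{4}^{\shortmid}\widehat{\Upsilon}$ on $s=4$; the imposed Killing symmetry on $p_{7}$ is precisely what guarantees that the $s=4$ equations involve only $\partial_{E}$-derivatives of coefficients independent of $p_{7}$, so that no integration over $p_{7}$ ever appears and the $s=4$ block has the same structure as the others (this reuses the $s=3,4$ computations of \cite{vbubuianu17} and Section 2.3.6 of \cite{vacaruepjc17}, now with momentum-type fiber variables). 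Collecting the four blocks of diagonal coefficients $g_{i_{1}},g_{a_{2}},\ ^{\shortmid}g^{a_{3}},\ ^{\shortmid}g^{a_{4}}$ together with the $w$- and $n$-type N-connection coefficients, and writing the s-metric (\ref{dmcts}) through the elementary forms (\ref{ansatz1nc}), produces (\ref{qeltors}) with the listed generating functions, generating sources and integration functions; the full bookkeeping is recorded in Appendix \ref{assgensolt}.

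The main obstacle is the ``vertical'' equation on each shell, e.g. $(\ _{2}^{\shortmid}\varpi)^{\diamond}g_{4}^{\diamond}=2g_{3}g_{4}\ _{2}^{\shortmid}\widehat{\Upsilon}$, which is of Monge--Amp\`ere type and couples the two diagonal fiber coefficients nonlinearly. The decisive point making the scheme work is that the change of variable $\Psi=e^{\varpi}$, combined with the very definition of $\varpi$ as $\ln$ of $g_{4}^{\diamond}/\sqrt{|g_{3}g_{4}|}$, linearizes this equation into a first-order relation for $g_{4}$ alone whose source is the simple combination $[(\Psi)^{2}]^{\diamond}/\widehat{\Upsilon}$; one must then check the self-consistency of the substitution (that the $g_{3}$ reconstructed from $g_{4}$ and $\Psi$ reproduces the assumed $\varpi$) and that the non-degeneracy conditions $g_{4}^{\diamond}\neq 0$, $\partial^{6}g^{5}\neq 0$, $(g^{7})^{\ast}\neq 0$ behind Lemma \ref{lemmaricci} are preserved. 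A secondary, purely combinatorial difficulty is to ensure that the integration functions arising on each shell depend only on the coordinates of that and the lower shells, so that the resulting s-metric genuinely lies in the quasi-stationary ansatz class (\ref{ansatz1}); imposing in addition the constraints (\ref{lccondsd}) on such torsionful solutions then extracts the LC-configurations.
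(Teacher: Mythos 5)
Your proposal is correct and follows essentially the same route as the paper's proof in Appendix \ref{assgensolt}: shell-by-shell integration, with the 2-d Poisson equation on $s=1$, the reduction of the vertical equation to $g_{4}^{\diamond}=-[(\ _{2}^{\shortmid }\Psi )^{2}]^{\diamond }/4(\ _{2}^{\shortmid }\widehat{\Upsilon })$ followed by back-substitution for $g_{3}$, the algebraic solution for the $w$-coefficients, and the double integration for the $n$-coefficients, repeated with momentum derivatives on $s=3,4$. The only cosmetic difference is that the paper routes the $s=2$ computation through the auxiliary variable $\rho^{2}:=-g_{3}g_{4}$ while you substitute directly, which amounts to the same algebra.
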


We note that above generating functions and sources and integration
functions should be prescribed from additional considerations (compatibility
with experimental/observational data, boundary or asymptotic conditions,
quantum corrections) in order to elaborate on viable physical models. This
is a typical property of general integrals of systems of nonlinear PDEs if
such configurations are constructed in some general forms and not only for
special ansatz transforming PDEs into ODEs.

Finally, we emphasize that solutions (\ref{qeltors}) are with Killing
symmetry on $\partial ^{7}$ and can be used for nontrivial $\
_{s}^{\shortmid }\widehat{\Upsilon }$--sources which can be redefined via
nonlinear transforms of generating functions into effective cosmological
constants, see  subsection \ref{ssnonlsym}. For vacuum spacetime
and phase space configurations with zero $\ _{s}^{\shortmid }\widehat{%
\Upsilon }$--sources, we have to apply more special geometric methods for
generating exact solutions, see similar constructions for 10-d string
gravity in section 2.3.4 of \cite{vacaruepjc17} and (below) section \ref%
{ssvacuumfc}.

\subsubsection{Levi-Civita configurations for phase spaces}

\label{ssslconf}A generic off--diagonal solution (\ref{qeltors}) is
constructed for a canonical d--connections $\ _{s}^{\shortmid }\widehat{%
\mathbf{D}}.$ Such a solution is characterized by nonholonomically induced
d--torsion coefficients $\ _{s}^{\shortmid }\widehat{\mathbf{T}}_{\ \alpha
_{s}\beta _{s}}^{\gamma _{s}}\ $(\ref{dtors}) completely defined by the
N--connection and s--metric structures. The conditions (\ref{lccondsd}) for
a zero torsion can be satisfied by a subclass of nonholonomic distributions
on a cotangent Lorentz bundle with dyadic structure, $\ _{s}\mathbf{T}^{\ast
}\mathbf{V.}$ Resulting LC-configurations are determined by corresponding
parameterizations of the generating and integration functions and sources.
By straightforward computations on cotangent bundles (such details for
nonholonomic manifolds and tangent bundles are presented in Refs. \cite%
{gvvepjc14,svvijmpd14,rajpoot15,gheorghiuap16,bubuianucqg17,vbubuianu17,vacaruepjc17,vmon06,vijtp10}%
), we can verify that all d-torsion coefficients vanish if the coefficients
of N--adapted frames and $^{s}v$--components of d--metrics are subjected to
respective conditions,
\begin{eqnarray}
s=2: &&\ w_{i_{1}}^{\diamond }=\mathbf{e}_{i_{1}}\ln \sqrt{|\ g_{3}|},%
\mathbf{e}_{i_{1}}\ln \sqrt{|\ g_{4}|}=0,\partial _{i_{1}}w_{j_{1}}=\partial
_{j_{1}}w_{i_{1}}\mbox{ and }n_{i_{1}}^{\diamond }=0;  \notag \\
s=3: &&\ \partial ^{6}n_{i_{2}}=\ \mathbf{e}_{i_{2}}\ln \sqrt{|\ g^{6}|},\
\mathbf{e}_{i_{2}}\ln \sqrt{|\ g^{5}|}=0,\partial _{i_{2}}n_{j_{2}}=\partial
_{j_{2}}n_{i_{2}}\mbox{ and }\ \partial ^{6}w_{i_{2}}=0;  \label{zerot} \\
s=4: &&\ n_{i_{3}}^{\ast }=\mathbf{e}_{i_{3}}\ln \sqrt{|\ g^{8}|},\ \mathbf{e%
}_{i_{3}}\ln \sqrt{|\ g^{7}|}=0,\partial _{i_{3}}n_{j_{3}}=\partial
_{j_{3}}n_{i_{3}}\mbox{ and }\ n_{i_{3}}^{\ast }=0;  \notag
\end{eqnarray}%
\begin{eqnarray}
\mbox{and }\ s=2: &&\ n_{k_{1}}(x^{i_{1}})=0\mbox{ and }\partial
_{i_{1}}n_{j_{1}}(x^{k_{1}})=\partial _{j_{1}}n_{i_{1}}(x^{k_{1}});
\label{expcondn} \\
s=3: &&\ w_{k_{2}}(x^{i_{2}})=0\mbox{ and }\partial _{i_{2}}\
w_{j_{2}}(x^{k_{2}})=\partial _{j_{2}}\ w_{i_{2}}(x^{k_{2}});  \notag \\
s=4: &&\ w_{k_{3}}(x^{i_{3}})=0\mbox{ and }\partial _{i_{3}}\
w_{j_{3}}(x^{k_{3}})=\partial _{j_{3}}\ w_{i_{3}}(x^{k_{3}}).  \notag
\end{eqnarray}%
The solutions for on $w$- and $n$-functions derived from (\ref{zerot})
depend on the class of vacuum or non--vacuum metrics we try to construct,
see details in \cite{gvvepjc14}. Here we sketch such steps for constraining
respective classes of generating functions and generating sources:

Prescribing a generating function $\ _{2}^{\shortmid }\Psi =\
_{2}^{\shortmid }\check{\Psi}(x^{i_{1}},y^{3}),$ for which $[\partial
_{i_{1}}(\ _{2}^{\shortmid }\check{\Psi})]^{\diamond }=\partial _{i_{1}}(\
_{2}^{\shortmid }\check{\Psi})^{\diamond },$we solve the conditions for $%
w_{j_{1}}$ in (\ref{zerot}) in explicit form if $\ _{2}^{\shortmid }\widehat{%
\Upsilon }=const,$ or if $\ $such an effective source can be expressed as a
functional $\ \ _{2}^{\shortmid }\widehat{\Upsilon }(x^{i},y^{3})=\
_{2}^{\shortmid }\widehat{\Upsilon }[\ _{2}^{\shortmid }\check{\Psi}].$ On
the shell $s=3,$ we prescribe $\ _{3}^{\shortmid }\Psi =\ _{3}^{\shortmid }%
\check{\Psi}(x^{i_{2}},p^{6}),$ for which $\partial ^{6}[\partial _{i_{2}}(\
_{3}^{\shortmid }\check{\Psi})]=\partial _{i_{2}}\partial ^{6}(\
_{3}^{\shortmid }\check{\Psi}).$ Similarly, we can extend these formulas for
$s=4.$

The third conditions in (\ref{zerot}), for $s=2$, i.e. $\partial
_{i_{1}}w_{j_{1}}=\partial _{j_{1}}w_{i_{1}},$ are solved for any generating
function $\ _{2}^{\shortmid }\check{A}=\ _{2}^{\shortmid }\check{A}%
(x^{k},y^{3})$ for which $\ $%
\begin{equation*}
w_{i_{1}}=\check{w}_{i_{1}}=\partial _{i_{1}}\ _{2}^{\shortmid }\Psi /(\
_{2}^{\shortmid }\Psi )^{\diamond }=\partial _{i_{1}}\ _{2}^{\shortmid }%
\check{A}.
\end{equation*}%
Such formulas on the second shell are written $\ $%
\begin{equation*}
n_{i_{2}}=\check{n}_{i_{2}}=\partial _{i_{2}}\ _{3}^{\shortmid }\Psi
/\partial ^{6}(\ _{3}^{\shortmid }\Psi )=\partial _{i_{2}}\ _{3}^{\shortmid }%
\check{A}.
\end{equation*}%
Summarizing above constructions for shells $s=2,3,4,$ the LC-conditions for
generating functions correlated to functional, or constant values, for
generating sources, we obtain {\small
\begin{eqnarray}
s=2: &&\ _{2}^{\shortmid }\Psi =\ _{2}^{\shortmid }\check{\Psi}%
(x^{i_{1}},y^{3}),(\partial _{i_{1}}\ _{2}^{\shortmid }\check{\Psi}%
)^{\diamond }=\partial _{i_{1}}(\ _{2}^{\shortmid }\check{\Psi})^{\diamond },%
\check{w}_{i_{1}}=\partial _{i_{1}}(\ _{2}^{\shortmid }\check{\Psi})/(\
_{2}^{\shortmid }\check{\Psi})^{\diamond }=\partial _{i_{1}}(\
_{2}^{\shortmid }\check{A});n_{i_{1}}=\partial _{i_{1}}[\ ^{2}n(x^{k_{1}})];
\label{expconda} \\
&&\ _{2}^{\shortmid }\widehat{\Upsilon }(x^{i},y^{3})=\ _{2}^{\shortmid }%
\widehat{\Upsilon }[\ _{2}^{\shortmid }\check{\Psi}], \mbox{ or
}\ _{2}^{\shortmid }\widehat{\Upsilon }= const;  \notag \\
s=3: &&\ \ _{3}^{\shortmid }\Psi =\ _{3}^{\shortmid }\check{\Psi}%
(x^{i_{2}},p^{6}),\partial ^{6}[\partial _{i_{2}}(\ _{3}^{\shortmid }\check{%
\Psi})]=\partial _{i_{2}}\partial ^{6}(\ _{3}^{\shortmid }\check{\Psi});%
\check{w}_{i_{2}}=\partial _{i_{2}}(\ _{3}^{\shortmid }\Psi )/\partial
^{6}(\ _{3}^{\shortmid }\Psi )=\partial _{i_{2}}(\ _{3}^{\shortmid }\check{A}%
);\ n_{i_{2}}=\partial _{i_{2}}[\ ^{3}n(x^{k_{2}})];  \notag \\
&&\ _{3}^{\shortmid}\widehat{\Upsilon }(x^{i_{2}},p^{6})=\ _{3}^{\shortmid }%
\widehat{\Upsilon }[\ _{3}^{\shortmid }\check{\Psi}], \mbox{ or } \
_{3}^{\shortmid }\widehat{\Upsilon }=const;  \notag \\
s=4: &&\ \ _{4}^{\shortmid }\Psi =\ _{4}^{\shortmid }\check{\Psi}%
(x^{i_{3}},E),[\partial _{i_{3}}(\ _{4}^{\shortmid }\check{\Psi})]^{\ast
}=\partial _{i_{3}}(\ _{4}^{\shortmid }\check{\Psi})^{\ast };\check{w}%
_{i_{3}}=\partial _{i_{3}}(\ _{4}^{\shortmid }\Psi )/(\ _{4}^{\shortmid
}\Psi )^{\ast }=\partial _{i_{3}}(\ _{4}^{\shortmid }\check{A});\
n_{i_{3}}=\partial _{i_{3}}[\ ^{4}n(x^{k_{3}})];  \notag \\
&&\ _{4}^{\shortmid}\widehat{\Upsilon }(x^{i_{3}},E)=\ _{4}^{\shortmid }%
\widehat{\Upsilon }[\ _{4}^{\shortmid }\check{\Psi}],\mbox{ or } \
_{4}^{\shortmid }\widehat{\Upsilon }=const.  \notag
\end{eqnarray}
}

In result, we obtain a proof of this

\begin{consequence}
\label{ofdiagztsol}\textsf{[off-diagonal solutions with zero torsion] } A
quadratic nonlinear element (\ref{qeltors}) from Theorem \ref{gensoltorsion}
define an exact solution with zero torsion, i.e. LC-configurations which can
be modelled in Einstein-Hamilton variables, if the generating functions,
generating sources and certain integration functions for the coefficients of
s-metrics are prescribed to satisfy the conditions (\ref{expconda}).
\end{consequence}

The quadratic nonlinear line elements for LC-configurations stated by
Consequence \ref{ofdiagztsol} can be parameterized in such a form%
\begin{eqnarray}
ds_{LCst}^{2} &=&\check{g}_{\alpha \beta }(x^{k_{3}},E)du^{\alpha }du^{\beta
}=e^{\psi (x^{k_{1}})}[(dx^{1})^{2}+(dx^{2})^{2}]+  \label{qellc} \\
&&\frac{[(\ _{2}^{\shortmid }\check{\Psi})^{\diamond }]^{2}}{4(\
_{2}^{\shortmid }\widehat{\Upsilon }[\ _{2}^{\shortmid }\check{\Psi}%
])^{2}\{g_{4}^{[0]}-\int d\varphi \lbrack (\ _{2}^{\shortmid }\check{\Psi}%
)^{2}]^{\diamond }/4(\ _{2}^{\shortmid }\widehat{\Upsilon })\}}\{d\varphi
+[\partial _{i_{1}}(\ _{2}^{\shortmid }\check{A})]dx^{i_{1}}\}+  \notag \\
&&\{g_{4}^{[0]}-\int d\varphi \frac{\lbrack (\ _{2}^{\shortmid }\check{\Psi}%
)^{2}]^{\diamond }}{4(\ _{2}^{\shortmid }\widehat{\Upsilon }[\
_{2}^{\shortmid }\check{\Psi}])}\}\{dt+\partial _{i_{1}}[\
^{2}n(x^{k_{1}})]dx^{i_{1}}\}+  \notag
\end{eqnarray}%
\begin{eqnarray*}
&&\{g_{5}^{[0]}-\int dp_{6}\frac{\partial ^{6}[(\ _{3}^{\shortmid }\Psi
)^{2}]}{4(\ _{3}^{\shortmid }\widehat{\Upsilon }[\ _{3}^{\shortmid }\check{%
\Psi}])}\}\{dp_{5}+\partial _{i_{2}}[\ ^{3}n(x^{k_{2}})]dx^{i_{2}}\}+ \\
&&\frac{[\partial ^{6}(\ _{3}^{\shortmid }\Psi )]^{2}}{4(\ _{3}^{\shortmid }%
\widehat{\Upsilon }[\ _{3}^{\shortmid }\check{\Psi}])^{2}\{g_{5}^{[0]}-\int
dp_{6}\partial ^{6}[(\ _{3}^{\shortmid }\Psi )^{2}]/4(\ _{3}^{\shortmid }%
\widehat{\Upsilon }[\ _{3}^{\shortmid }\check{\Psi}])\}}\{dp_{6}+[\partial
_{i_{2}}(\ _{3}^{\shortmid }\check{A})]dx^{i_{2}}\}^{2}+ \\
&&\{g_{7}^{[0]}-\int dE\frac{[(\ _{4}^{\shortmid }\Psi )^{2}]^{\ast }}{4(\
_{4}^{\shortmid }\widehat{\Upsilon }[\ _{4}^{\shortmid }\check{\Psi}])}%
\}\{dp_{7}+\partial _{i_{3}}[\ ^{4}n(x^{k_{3}})]dx^{i_{3}}\}- \\
&&\frac{[(\ _{4}^{\shortmid }\Psi )^{\ast }]^{2}}{4(\ _{4}^{\shortmid }%
\widehat{\Upsilon }[\ _{4}^{\shortmid }\check{\Psi}])^{2}\{g_{7}^{[0]}-\int
dE[(\ _{4}^{\shortmid }\Psi )^{2}]^{\ast }/4(\ _{4}^{\shortmid }\widehat{%
\Upsilon }[\ _{4}^{\shortmid }\check{\Psi}])\}}\{dE+[\partial _{i_{3}}(\
_{4}^{\shortmid }\check{A})]dx^{i_{3}}\}^{2}.
\end{eqnarray*}

Here we emphasize that quasi-stationary s-metrics (\ref{qellc}) are generic
off-diagonal if at least on one shell there are nontrivial anholonomy
relations of type (\ref{anhrel}). Such exact solutions possess a Killing
symmetry on $\partial ^{7},$ i.e. there are such N-adapted coordinate
systems when the s-metrics do not depend on coordinate $p_{7}.$ There are
generated stationary configurations if the generating and integration
functions and generating sources on respective shells are chosen to not
depend on the $t$-coordinate.

\subsection{Nonlinear symmetries, generation functions and generation sources%
}

\label{ssnonlsym}The classes of solutions (\ref{qeltors}) (and (\ref{qellc})
for LC-configurations) posses important nonlinear shell symmetries relating
nontrivial generating functions and effective sources, see similar details
in \cite{gvvepjc14,vbubuianu17,vacaruepjc17} and references therein. Such
symmetries allow us to re-define the generating functions and introduce on
dyadic shells some nontrivial effective cosmological constants instead of
effective sources. In this subsection, we define and study nonlinear
symmetries of gravity theories encoding MDRs and equivalent
Einstein-Hamilton models, see appendix \ref{assgensolt}).

Let us consider the shell $s=2.$ We can change the generating data, $(\
_{2}^{\shortmid }\Psi ,\ \ _{2}^{\shortmid }\widehat{\Upsilon }%
)\leftrightarrow (\ _{2}^{\shortmid }\Phi ,\ _{2}^{\shortmid }\Lambda
=const),$ following formulas%
\begin{eqnarray}
\frac{\lbrack (\ _{2}^{\shortmid }\Psi )^{2}]^{\diamond }}{\ _{2}^{\shortmid
}\widehat{\Upsilon }} &=&\frac{[(\ _{2}^{\shortmid }\Phi )^{2}]^{\diamond }}{%
\ _{2}^{\shortmid }\Lambda },\mbox{ which can be
integrated as  }  \label{ntransf1} \\
(\ _{2}^{\shortmid }\Phi )^{2} &=&\ _{2}^{\shortmid }\Lambda \int dy^{3}(\
_{2}^{\shortmid }\widehat{\Upsilon })^{-1}[(\ _{2}^{\shortmid }\Psi
)^{2}]^{\diamond }\mbox{ and/or }(\ _{2}^{\shortmid }\Psi )^{2}=(\
_{2}^{\shortmid }\Lambda )^{-1}\int dy^{3}(\ _{2}^{\shortmid }\widehat{%
\Upsilon })[(\ _{2}^{\shortmid }\Phi )^{2}]^{\diamond }.  \label{ntransf2}
\end{eqnarray}%
The equation (\ref{ntransf1}) allows us to simplify the formula for solution
(\ref{g4}) and write $g_{4}=g_{4}^{[0]}-\frac{(\ _{2}^{\shortmid }\Phi )^{2}%
}{4\ _{2}^{\shortmid }\Lambda }$ in terms of a new generating function $\
_{2}^{\shortmid }\Phi $ and a nontrivial (effective) cosmological constant $%
\ _{2}^{\shortmid }\Lambda .$ In order to express the formulas (\ref{g3})
and (\ref{gn}), we have to express $(\ _{2}^{\shortmid }\Psi )^{\diamond }/\
_{2}^{\shortmid }\widehat{\Upsilon }$ in terms of data $(\ _{2}^{\shortmid
}\Phi ,\ _{2}^{\shortmid }\Lambda ).$ We write (\ref{ntransf1}) and the
second equation in (\ref{ntransf2}) in such forms%
\begin{equation*}
\frac{(\ _{2}^{\shortmid }\Psi )[(\ _{2}^{\shortmid }\Psi )]^{\diamond }}{\
_{2}^{\shortmid }\widehat{\Upsilon }}=\frac{[(\ _{2}^{\shortmid }\Phi
)^{2}]^{\diamond }}{2(\ _{2}^{\shortmid }\Lambda )}\mbox{ and }\
_{2}^{\shortmid }\Psi =|\ _{2}^{\shortmid }\Lambda |^{-1/2}\sqrt{|\int
dy^{3}(\ _{2}^{\shortmid }\widehat{\Upsilon })[(\ _{2}^{\shortmid }\Phi
)^{2}]^{\diamond }|}.
\end{equation*}%
Introducing the value of $\ _{2}^{\shortmid }\Psi $ in the first equation,
we get an important formula for redefinition of $(\ _{2}^{\shortmid }\Psi
)^{\diamond }$ in terms of $(\ _{2}^{\shortmid }\widehat{\Upsilon },\
_{2}^{\shortmid }\Phi ,\ _{2}^{\shortmid }\Lambda ),$ when
\begin{equation}
\frac{(\ _{2}^{\shortmid }\Psi )^{\diamond }}{\ _{2}^{\shortmid }\widehat{%
\Upsilon }}=\frac{[(\ _{2}^{\shortmid }\Phi )^{2}]^{\diamond }}{2\sqrt{|\
_{2}^{\shortmid }\Lambda \int dy^{3}(\ _{2}^{\shortmid }\widehat{\Upsilon }%
)[(\ _{2}^{\shortmid }\Phi )^{2}]^{\diamond }|}}.  \label{ntransf3}
\end{equation}%
In result, we can express the coefficients $g_{3}$ and $g_{4}$ in (\ref%
{qellc}) (up to certain classes of integration functions depending on
coordinates $x^{i_{1}})$ in some forms with explicit dependence on the shell
cosmological constant $\ _{2}^{\shortmid }\Lambda ,$ when the (effective)
generating source $\ _{2}^{\shortmid }\widehat{\Upsilon }$ is related a new
generating function $\ \ _{2}^{\shortmid }\Phi .$ So, the solutions of (\ref%
{e2a}), see (\ref{g3}) and (\ref{g4}), can be written in two equivalent
forms,
\begin{eqnarray*}
g_{3}[\ _{2}^{\shortmid }\Psi ] &=&-\frac{[(\ _{2}^{\shortmid }\Psi
)^{\diamond }]^{2}}{4(\ _{2}^{\shortmid }\widehat{\Upsilon })^{2}g_{4}[\
_{2}^{\shortmid }\Psi ]}=g_{3}[\ _{2}^{\shortmid }\Phi ]=-\frac{1}{g_{4}[\
_{2}^{\shortmid }\Phi ]}\frac{(\ _{2}^{\shortmid }\Phi )^{2}[(\
_{2}^{\shortmid }\Phi )^{\diamond }]^{2}}{|\ _{2}^{\shortmid }\Lambda \int
dy^{3}(\ _{2}^{\shortmid }\widehat{\Upsilon })[(\ _{2}^{\shortmid }\Phi
)^{2}]^{\diamond }|},\mbox{ where } \\
g_{4}[\ _{2}^{\shortmid }\Psi ] &=&g_{4}^{[0]}-\int d\varphi \frac{\lbrack
(\ _{2}^{\shortmid }\Psi )^{2}]^{\diamond }}{4(\ _{2}^{\shortmid }\widehat{%
\Upsilon })}=g_{4}[\ _{2}^{\shortmid }\Phi ]=g_{4}^{[0]}-\frac{(\
_{2}^{\shortmid }\Phi )^{2}}{4\ _{2}^{\shortmid }\Lambda }.
\end{eqnarray*}%
In their turn, two equivalent nonlinear formulas for $\ _{2}^{\shortmid
}\Psi \lbrack \ _{2}^{\shortmid }\widehat{\Upsilon },\ _{2}^{\shortmid }\Phi
,\ _{2}^{\shortmid }\Lambda ]$ and $g_{a_{2}}[\ _{2}^{\shortmid }\Psi
]=g_{a_{2}}[\ _{2}^{\shortmid }\Phi ],$ can be used for expressing in two
equivalent formulas for the solutions of the equations (\ref{e2b}) and (\ref%
{e2c}) (see also (\ref{gw}) and (\ref{gn})) for the N--adapted coefficients
of the s--metric and N-connections in (\ref{qellc}). In the second case, the
N--connection coefficients can be computed in more explicit forms using
integrals on $d\varphi $ for certain values determined by $(\
_{2}^{\shortmid }\Phi ,\ _{2}^{\shortmid }\Lambda )$ and respective
nonlinear transforms (\ref{ntransf1}), (\ref{ntransf2}) and (\ref{ntransf3}),%
\begin{eqnarray*}
w_{i_{1}}(x^{k_{1}},y^{3}) &=&\frac{\partial _{i_{1}}(\ _{2}^{\shortmid
}\Psi )}{(\ _{2}^{\shortmid }\Psi )^{\diamond }}=\frac{\partial _{i_{1}}[(\
_{2}^{\shortmid }\Psi )^{2}]}{[(\ _{2}^{\shortmid }\Psi )^{2}]^{\diamond }}=%
\frac{\partial _{i_{1}}\ \int dy^{3}(\ _{2}^{\shortmid }\widehat{\Upsilon }%
)\ [(\ _{2}^{\shortmid }\Phi )^{2}]^{\diamond }}{(\ _{2}^{\shortmid }%
\widehat{\Upsilon })\ [(\ _{2}^{\shortmid }\Phi )^{2}]^{\diamond }};%
\mbox{
and } \\
n_{k_{1}}(x^{k_{1}},y^{3}) &=&\ _{1}n_{k_{1}}+\ _{2}n_{k_{1}}\int dy^{3}\
\frac{g_{3}[\ _{2}^{\shortmid }\Phi ]}{|\ g_{4}[\ _{2}^{\shortmid }\Phi
]|^{3/2}} \\
&=&\ _{1}n_{k_{1}}+\ _{2}n_{k_{1}}\int dy^{3}\left( \frac{(\ _{2}^{\shortmid
}\Psi )^{\diamond }}{2\ _{2}^{\shortmid }\widehat{\Upsilon }}\right)
^{2}\left\vert g_{4}^{[0]}(x^{k_{1}})-\int dy^{3}\frac{[(\ _{2}^{\shortmid
}\Psi )^{2}]^{\diamond }}{4(\ _{2}^{\shortmid }\widehat{\Upsilon })}%
\right\vert ^{-5/2} \\
&=&\ _{1}n_{k_{1}}+\ _{2}n_{k_{1}}\int dy^{3}\frac{(\ _{2}^{\shortmid }\Phi
)^{2}[(\ _{2}^{\shortmid }\Phi )^{\diamond }]^{2}}{|\ _{2}^{\shortmid
}\Lambda \int dy^{3}(\ _{2}^{\shortmid }\widehat{\Upsilon })[(\
_{2}^{\shortmid }\Phi )^{2}]^{\diamond }|}\left\vert g_{4}^{[0]}-\frac{(\
_{2}^{\shortmid }\Phi )^{2}}{4\ _{2}^{\shortmid }\Lambda }\right\vert
^{-5/2}.
\end{eqnarray*}%
On proof of such formulas, see (\ref{g4}).

For the shell $s=4$ (we omit a similar proof for $s=3)$ $(\ _{4}^{\shortmid
}\Psi ,\ _{4}^{\shortmid }\widehat{\Upsilon })\leftrightarrow (\
_{4}^{\shortmid }\Phi ,\ _{4}^{\shortmid }\Lambda =const)$ following
\begin{eqnarray}
\frac{\lbrack (\ _{4}^{\shortmid }\Psi )^{2}]^{\ast }}{\ _{4}^{\shortmid }%
\widehat{\Upsilon }} &=&\frac{[(\ _{4}^{\shortmid }\Phi )^{2}]^{\ast }}{\
_{4}^{\shortmid }\Lambda },\mbox{ which can be inteegrated
as }  \label{ntransf14} \\
(\ _{4}^{\shortmid }\Phi )^{2} &=&\ _{4}^{\shortmid }\widetilde{\Lambda }%
\int dE(\ _{4}^{\shortmid }\widehat{\Upsilon })^{-1}[(\ _{4}^{\shortmid
}\Psi )^{2}]^{\ast }\mbox{ and/or }(\ _{4}^{\shortmid }\Psi )^{2}=(\
_{4}^{\shortmid }\Lambda )^{-1}\int dE(\ _{4}^{\shortmid }\widehat{\Upsilon }%
)[(\ _{4}^{\shortmid }\Phi )^{2}]^{\ast }.  \notag
\end{eqnarray}%
These formulas allow us to express the coefficients $g^{7}$ and $g^{8}$ and
related shell N-connection coefficients in (\ref{qellc}) (up to certain
classes of integration functions depending on coordinates $x^{i_{3}}),$ in
some forms with explicit dependence on the (effective) cosmological constant
$\ _{4}^{\shortmid }\Lambda ,$ when the (effective) generating source $\
_{4}^{\shortmid }\widehat{\Upsilon }$ is encoded into a new generating
function $\ _{4}^{\shortmid }\widetilde{\Psi }.$ Such solutions of the
system (\ref{eq4a})-(\ref{eq4c}) can be written in two equivalent forms by
introducing formulas (\ref{ntransf14}) into (\ref{gnw78}).

Summarising for shells $s=2,3,4$ above formulas, we prove two important
results:

\begin{theorem}
\textsf{[nonlinear symmetries of generating functions and sources] } \label%
{nsymgfs}Any quasi-stationary solution (\ref{qeltors}) of the generalized
Einstein equations for the canonical s-connection on the cotangent bundle
defined by the conditions of Theorem \ref{gensoltorsion} possess such
important nonlinear symmetries{\small
\begin{eqnarray}
s=2: &&\frac{[(\ _{2}^{\shortmid }\Psi )^{2}]^{\diamond }}{\ _{2}^{\shortmid
}\widehat{\Upsilon }}=\frac{[(\ _{2}^{\shortmid }\Phi )^{2}]^{\diamond }}{\
_{2}^{\shortmid }\Lambda },  \label{nonltransf} \\
&&\mbox{ i.e. }\ (\ _{2}^{\shortmid }\Phi )^{2}=\ _{2}^{\shortmid }\Lambda
\int dy^{3}(\ _{2}^{\shortmid }\widehat{\Upsilon })^{-1}[(\ _{2}^{\shortmid
}\Psi )^{2}]^{\diamond }\mbox{ and/or }(\ _{2}^{\shortmid }\Psi )^{2}=(\
_{2}^{\shortmid }\Lambda )^{-1}\int d^{3}(\ _{2}^{\shortmid }\widehat{%
\Upsilon })[(\ _{2}^{\shortmid }\Phi )^{2}]^{\diamond }.  \notag
\end{eqnarray}%
\begin{eqnarray}
s=3: &&\frac{\partial ^{6}[(\ _{3}^{\shortmid }\Psi )^{2}]}{\
_{3}^{\shortmid }\widehat{\Upsilon }}=\frac{\partial ^{6}[(\ _{3}^{\shortmid
}\Phi )^{2}]}{\ _{3}^{\shortmid }\Lambda },  \notag \\
&&\mbox{ i.e. }\ (\ _{3}^{\shortmid }\Phi )^{2}=\ _{3}^{\shortmid }\Lambda
\int dp_{6}(\ _{3}^{\shortmid }\widehat{\Upsilon })^{-1}[(\ _{3}^{\shortmid
}\Psi )^{2}]\mbox{ and/or }(\ _{3}^{\shortmid }\Psi )^{2}=(\ _{3}^{\shortmid
}\Lambda )^{-1}\int dp_{6}(\ _{3}^{\shortmid }\widehat{\Upsilon })[(\
_{3}^{\shortmid }\Phi )^{2}].  \notag
\end{eqnarray}%
} {\small
\begin{eqnarray}
s=4: &&\frac{[(\ _{4}^{\shortmid }\Psi )^{2}]^{\ast }}{\ _{4}^{\shortmid }%
\widehat{\Upsilon }}=\frac{[(\ _{4}^{\shortmid }\Phi )^{2}]^{\ast }}{\
_{4}^{\shortmid }\Lambda },  \notag \\
&&\mbox{ i.e. }\ (\ _{4}^{\shortmid }\Phi )^{2}=\ _{4}^{\shortmid }\Lambda
\int dE(\ _{4}^{\shortmid }\widehat{\Upsilon })^{-1}[(\ _{4}^{\shortmid
}\Psi )^{2}]^{\ast }\mbox{ and/or }(\ _{4}^{\shortmid }\Psi )^{2}=(\
_{4}^{\shortmid }\Lambda )^{-1}\int dE(\ _{4}^{\shortmid }\widehat{\Upsilon }%
)[(\ _{4}^{\shortmid }\Phi )^{2}]^{\ast }.  \notag
\end{eqnarray}%
}
\end{theorem}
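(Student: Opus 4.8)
The plan is to establish the three displayed identities one shell at a time, exploiting the fact that each shell's "$\varpi$-equation" — namely \eqref{e2a}, \eqref{e3a}, \eqref{eq4a} — has exactly the same structural shape: a product of a first derivative of the vertical $\varpi$-potential with a first derivative of the relevant $v$-metric coefficient equals twice the product of the two $v$-metric coefficients times the (effective) source. First I would recall, from the proof of Theorem \ref{gensoltorsion} (sketched in Appendix \ref{assgensolt}), the explicit solution formulas for the $v$-metric coefficients on shell $s=2$, namely $g_{4}=g_{4}^{[0]}-\int d\varphi\,[(\ _{2}^{\shortmid}\Psi)^{2}]^{\diamond}/4(\ _{2}^{\shortmid}\widehat{\Upsilon})$ and $g_{3}=-[(\ _{2}^{\shortmid}\Psi)^{\diamond}]^{2}/4(\ _{2}^{\shortmid}\widehat{\Upsilon})^{2}g_{4}$, together with the identification $(\ _{2}^{\shortmid}\Psi)^{2}=\exp(2\,{}_{2}^{\shortmid}\varpi)$ and $\exp({}_{2}^{\shortmid}\varpi)=|g_{4}^{\diamond}/\sqrt{|g_{3}g_{4}|}|$. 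Substituting these into \eqref{e2a} shows that the combination $[(\ _{2}^{\shortmid}\Psi)^{2}]^{\diamond}/(\ _{2}^{\shortmid}\widehat{\Upsilon})$ is, up to an overall constant, exactly $\partial_{\varphi}g_{4}$ — a quantity that does not reference the source at all once $g_{4}$ is fixed.

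The key observation is then that fixing the pair of $v$-metric coefficients $(g_{3},g_{4})$ does \emph{not} fix the pair $(\ _{2}^{\shortmid}\Psi,\ _{2}^{\shortmid}\widehat{\Upsilon})$ uniquely: any rescaling of the source can be absorbed into a redefinition of the generating function subject to the constraint that $[(\ _{2}^{\shortmid}\Psi)^{2}]^{\diamond}/(\ _{2}^{\shortmid}\widehat{\Upsilon})$ is preserved. Taking the "target" pair to be $(\ _{2}^{\shortmid}\Phi,\ _{2}^{\shortmid}\Lambda)$ with $\ _{2}^{\shortmid}\Lambda=\mathrm{const}$, I would impose $[(\ _{2}^{\shortmid}\Psi)^{2}]^{\diamond}/\ _{2}^{\shortmid}\widehat{\Upsilon}=[(\ _{2}^{\shortmid}\Phi)^{2}]^{\diamond}/\ _{2}^{\shortmid}\Lambda$ and integrate over $y^{3}=\varphi$ to obtain the two inverse relations $(\ _{2}^{\shortmid}\Phi)^{2}=\ _{2}^{\shortmid}\Lambda\int dy^{3}(\ _{2}^{\shortmid}\widehat{\Upsilon})^{-1}[(\ _{2}^{\shortmid}\Psi)^{2}]^{\diamond}$ and $(\ _{2}^{\shortmid}\Psi)^{2}=(\ _{2}^{\shortmid}\Lambda)^{-1}\int dy^{3}(\ _{2}^{\shortmid}\widehat{\Upsilon})[(\ _{2}^{\shortmid}\Phi)^{2}]^{\diamond}$. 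One must check consistency: substituting back shows $g_{4}[\ _{2}^{\shortmid}\Psi]=g_{4}^{[0]}-(\ _{2}^{\shortmid}\Phi)^{2}/4\ _{2}^{\shortmid}\Lambda=g_{4}[\ _{2}^{\shortmid}\Phi]$, and, using the auxiliary identity \eqref{ntransf3} for $(\ _{2}^{\shortmid}\Psi)^{\diamond}/\ _{2}^{\shortmid}\widehat{\Upsilon}$ in terms of $(\ _{2}^{\shortmid}\Phi,\ _{2}^{\shortmid}\Lambda)$, that $g_{3}$ and the $w$-, $n$-coefficients likewise go over into their $\Phi$-forms; hence the whole quasi-stationary s-metric \eqref{qeltors} is invariant, which is precisely what makes the transform a \emph{symmetry} of the solution and not merely a change of variables in an equation. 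For shells $s=3$ and $s=4$ the argument is verbatim the same with the substitutions $\diamond\to\partial^{6}$ (and a missing derivative, since on $s=3$ the structure of \eqref{e3a} carries $\partial^{6}g^{5}$ rather than $[(g^{5})]$ differentiated once more — I would be careful to track exactly whether one integrates $\partial^{6}$ or not, matching the form of \eqref{e3a}) and $\diamond\to\ast=\partial^{8}=\partial_{E}$ respectively, integrating over $p_{6}$ and over $E$; the Killing symmetry on $p_{7}$ plays no role here beyond having already been used to reduce the ansatz.

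The main obstacle I anticipate is not the shell-by-shell algebra, which is routine once the solution formulas of Theorem \ref{gensoltorsion} are in hand, but rather the bookkeeping needed to verify that the redefinition is \emph{globally} consistent across all three shells simultaneously and compatible with the zero-torsion (LC) reduction of Consequence \ref{ofdiagztsol}: one must confirm that prescribing $\ _{s}^{\shortmid}\widehat{\Upsilon}$ as a functional of $\ _{s}^{\shortmid}\check{\Psi}$ (as in \eqref{expconda}) is preserved under the transform, i.e. that $\ _{s}^{\shortmid}\Lambda=\mathrm{const}$ is an admissible choice within the LC-class and does not re-introduce torsion through the $w$- and $n$-coefficients. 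I would handle this by noting that the transform acts only on the vertical data of a single shell and leaves the horizontal $x^{i_{s}}$-dependence (through which the LC-constraints \eqref{expconda} are phrased) untouched, so the constant-$\Lambda$ configurations form a sub-family of the LC-configurations; this closes the proof.
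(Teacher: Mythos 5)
Your proposal is correct and follows essentially the same route as the paper: the paper likewise defines the transform by equating $[(\ _{s}^{\shortmid }\Psi )^{2}]^{\diamond }/\ _{s}^{\shortmid }\widehat{\Upsilon }$ with $[(\ _{s}^{\shortmid }\Phi )^{2}]^{\diamond }/\ _{s}^{\shortmid }\Lambda$, integrates to obtain the two inverse relations, derives the auxiliary identity (\ref{ntransf3}), and verifies that $g_{3},g_{4}$ and the $w$-, $n$-coefficients admit the two equivalent forms, treating $s=4$ explicitly and omitting $s=3$ as analogous. Your caution about tracking the $\partial ^{6}$ on shell $s=3$ is well placed, since the paper's own displayed $s=3$ integrals appear to drop that derivative (an apparent typo), whereas the structurally consistent form mirrors $s=2$ and $s=4$.
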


We can subject the \ sources and cosmological constants in (\ref{qeltors})
and (\ref{nonltransf}) to additive splitting conditions following formulas (%
\ref{totaldiadsourcd}) and Assumption \ref{assumpt4}, $\ $
\begin{equation}
\ _{\shortmid }\widehat{\Upsilon }_{\alpha _{s}\beta _{s}}=\ _{\shortmid
}^{\phi }\Upsilon _{\alpha _{s}\beta _{s}}+\ \ _{\shortmid }^{e}\Upsilon
_{\alpha _{s}\beta _{s}};\ \mathbf{\ }\ _{\shortmid }\Lambda =\ _{\shortmid
}^{\phi }\Lambda +\ \ \ _{\shortmid }^{e}\Lambda \mbox{ and/or }\ \ \
_{\shortmid s}\Lambda =\ \ _{\shortmid s}^{\phi }\Lambda +\ \ _{\shortmid
s}^{e}\Lambda .  \label{sheelsplitcan}
\end{equation}

\begin{consequence}
\label{offdiagcosmc}\textsf{[off-diagonal solutions with effective shell
cosmological constants] } The quadratic element for off-diagonal solutions
with Killing symmetry on $p_{7}$ determined by formula (\ref{qeltors}) under
conditions of Theorem \ref{gensoltorsion} can be written in equivalent form
encoding the nonlinear symmetries (\ref{nonltransf}) and respective
effective shell cosmological constants,{\small
\begin{eqnarray}
ds^{2} &=&g_{\alpha _{s}\beta _{s}}(x^{k},y^{3},p_{a_{3}},p_{a_{4}},\ \
_{s}^{\shortmid }\Phi ,_{\shortmid s}\Lambda )du^{\alpha _{s}}du^{\beta
_{s}}=e^{\psi (x^{k_{1}})}[(dx^{1})^{2}+(dx^{2})^{2}]  \label{offdiagcosmcsh}
\\
&&-\frac{(\ _{2}^{\shortmid }\Phi )^{2}[(\ _{2}^{\shortmid }\Phi )^{\diamond
}]^{2}}{|\ _{2}^{\shortmid }\Lambda \int dy^{3}(\ _{2}^{\shortmid }\widehat{%
\Upsilon })[(\ _{2}^{\shortmid }\Phi )^{2}]^{\diamond }|[g_{4}^{[0]}-(\
_{2}^{\shortmid }\Phi )^{2}/4\ _{2}^{\shortmid }\Lambda ]}\{dy^{3}+\frac{%
\partial _{i_{1}}\ \int dy^{3}(\ _{2}^{\shortmid }\widehat{\Upsilon })\ [(\
_{2}^{\shortmid }\Phi )^{2}]^{\diamond }}{(\ _{2}^{\shortmid }\widehat{%
\Upsilon })\ [(\ _{2}^{\shortmid }\Phi )^{2}]^{\diamond }}dx^{i_{1}}\}^{2}-
\notag \\
&&\{g_{4}^{[0]}-\frac{(\ _{2}^{\shortmid }\Phi )^{2}}{4\ _{2}^{\shortmid
}\Lambda }\}\{dt+[\ _{1}n_{k_{1}}+\ _{2}n_{k_{1}}\int dy^{3}\frac{(\
_{2}^{\shortmid }\Phi )^{2}[(\ _{2}^{\shortmid }\Phi )^{\diamond }]^{2}}{|\
_{2}^{\shortmid }\Lambda \int dy^{3}(\ _{2}^{\shortmid }\widehat{\Upsilon }%
)[(\ _{2}^{\shortmid }\Phi )^{2}]^{\diamond }|[g_{4}^{[0]}-(\
_{2}^{\shortmid }\Phi )^{2}/4\ _{2}^{\shortmid }\Lambda ]^{5/2}}]\}+  \notag
\end{eqnarray}%
\begin{eqnarray*}
&&\{g_{5}^{[0]}-\frac{(\ _{3}^{\shortmid }\Phi )^{2}}{4\ _{3}^{\shortmid
}\Lambda }\}\{dp_{5}+[\ _{1}n_{k_{2}}+\ _{2}n_{k_{2}}\int dp_{6}\frac{(\
_{3}^{\shortmid }\Phi )^{2}[\partial ^{6}(\ _{3}^{\shortmid }\Phi )]^{2}}{|\
_{3}^{\shortmid }\Lambda \int dp_{6}\ (\ _{3}^{\shortmid }\widehat{\Upsilon }%
)\partial ^{6}[(\ _{3}^{\shortmid }\Phi )^{2}]|[g_{5}^{[0]}-(\
_{3}^{\shortmid }\Phi )^{2}/4\ _{3}^{\shortmid }\Lambda ]^{5/2}}%
]dx^{k_{2}}\}- \\
&&\frac{(\ _{3}^{\shortmid }\Phi )^{2}[\partial ^{6}(\ _{3}^{\shortmid }\Phi
)]^{2}}{|\ _{2}^{\shortmid }\Lambda \int dp_{6}(\ _{3}^{\shortmid }\widehat{%
\Upsilon })\ \partial ^{6}[(\ _{3}^{\shortmid }\Phi )^{2}]|\ [g_{5}^{[0]}-(\
_{3}^{\shortmid }\Phi )^{2}/4\ _{3}^{\shortmid }\Lambda ]}\{dp_{6}+\frac{%
\partial _{i_{2}}\ \int dp_{6}(\ _{3}^{\shortmid }\widehat{\Upsilon })\
\partial ^{6}[(\ _{3}^{\shortmid }\Phi )^{2}]}{(\ _{3}^{\shortmid }\widehat{%
\Upsilon })\ \partial ^{6}[(\ _{3}^{\shortmid }\Phi )^{2}]}dx^{i_{2}}\}^{2}+
\\
&&\{g_{7}^{[0]}-\frac{(\ _{4}^{\shortmid }\Phi )^{2}}{4\ _{4}^{\shortmid
}\Lambda }\}\{dp_{7}+[\ _{1}n_{k_{3}}+\ _{2}n_{k_{3}}\int dE\frac{(\
_{4}^{\shortmid }\Phi )^{2}[(\ _{4}^{\shortmid }\Phi )^{\ast }]^{2}}{|\
_{2}^{\shortmid }\Lambda \int dE(\ _{4}^{\shortmid }\widehat{\Upsilon })[(\
_{4}^{\shortmid }\Phi )^{2}]^{\ast }|\ [g_{7}^{[0]}-(\ _{4}^{\shortmid }\Phi
)^{2}/4\ _{4}^{\shortmid }\Lambda ]^{5/2}}]dx^{k_{3}}\}- \\
&&\frac{(\ _{4}^{\shortmid }\Phi )^{2}[(\ _{4}^{\shortmid }\Phi )^{\ast
}]^{2}}{|\ _{4}^{\shortmid }\Lambda \int dE(\ _{4}^{\shortmid }\widehat{%
\Upsilon })[(\ _{4}^{\shortmid }\Phi )^{2}]^{\diamond }|[g_{7}^{[0]}-(\
_{4}^{\shortmid }\Phi )^{2}/4\ _{4}^{\shortmid }\Lambda ]}\{dE+\frac{%
\partial _{i_{3}}\ \int dE(\ _{4}^{\shortmid }\widehat{\Upsilon })\ [(\
_{4}^{\shortmid }\Phi )^{2}]^{\ast }}{(\ _{4}^{\shortmid }\widehat{\Upsilon }%
)[(\ _{4}^{\shortmid }\Phi )^{2}]^{\ast }}dx^{i_{3}}\}^{2},
\end{eqnarray*}%
for indices: $%
i_{1},j_{1},k_{1},...=1,2;i_{2},j_{2},k_{2},...=1,2,3,4;i_{3},j_{3},k_{3},...=1,2,...6;y^{3}=\varphi ,y^{4}=t,p_{8}=E;
$ and
\begin{eqnarray*}
&&\mbox{generating functions: }\psi (x^{k_{1}});\ _{2}^{\shortmid }\Phi
(x^{k_{1}}y^{3});\ _{3}^{\shortmid }\Phi (x^{k_{2}},p_{6});\ _{4}^{\shortmid
}\Phi (x^{k_{3}},E); \\
&&\mbox{generating sources:}\ _{1}^{\shortmid }\widehat{\Upsilon }%
(x^{k_{1}})=\ _{\shortmid 1}^{\phi }\Upsilon (x^{k_{1}})+\ \ _{\shortmid
1}^{e}\Upsilon (x^{k_{1}});\ _{2}^{\shortmid }\widehat{\Upsilon }%
(x^{k_{1}},y^{3})=\ _{\shortmid 2}^{\phi }\Upsilon (x^{k_{1}},y^{3})+\ \
_{\shortmid 2}^{e}\Upsilon (x^{k_{1}},y^{3}); \\
&&\ _{3}^{\shortmid }\widehat{\Upsilon }(x^{k_{2}},p_{6})=\ _{\shortmid
3}^{\phi }\Upsilon (x^{k_{2}},p_{6})+\ \ _{\shortmid 3}^{e}\Upsilon
(x^{k_{2}},p_{6});\ \ _{4}^{\shortmid }\widehat{\Upsilon }(x^{k_{3}},E)=\
_{\shortmid 4}^{\phi }\Upsilon (x^{k_{3}},E)+\ \ _{\shortmid 4}^{e}\Upsilon
(x^{k_{3}},E); \\
&&\mbox{integr. functions:}g_{4}^{[0]}(x^{k_{1}}),\
_{1}n_{k_{1}}(x^{j_{1}}),\ _{2}n_{k_{1}}(x^{j_{1}});g_{5}^{[0]}(x^{k_{2}}),\
_{1}n_{k_{2}}(x^{j_{2}}),\ _{2}n_{k_{2}}(x^{j_{2}});g_{7}^{[0]}(x^{j_{3}}),\
_{1}n_{k_{3}}(x^{j_{3}}),\ _{2}n_{k_{3}}(x^{j_{3}}).
\end{eqnarray*}%
}
\end{consequence}

It should be emphasized that, in general, sources $\ _{s}^{\shortmid }%
\widehat{\Upsilon }=\ _{\shortmid s}^{\phi }\Upsilon +\ \ _{\shortmid
s}^{e}\Upsilon $ are not eliminated completely from (\ref{offdiagcosmcsh})
even we redefined the generating functions to $(\ _{s}^{\shortmid }\Phi ,\
_{s}^{\shortmid }\Lambda ).$ Nevertheless, we can "absorb" completely such $%
\ _{s}^{\shortmid }\widehat{\Upsilon }$ into respective $(\ _{s}^{\shortmid
}\Phi ,\ _{s}^{\shortmid }\Lambda )$ for nonholonomic configurations with
respective shell sources $\ _{1}^{\shortmid }\widehat{\Upsilon }%
(x^{k_{1}}),\ _{2}^{\shortmid }\widehat{\Upsilon }(x^{k_{1}}),\
_{3}^{\shortmid }\widehat{\Upsilon }(x^{k_{2}})$ and $\ _{4}^{\shortmid }%
\widehat{\Upsilon }(x^{k_{3}})$ which depend only on respective shell
horizontal coordinates but not on respective (co) fiber "vertical"
coordinates. In such cases can integrate in explicit form on $dy^{3},dp_{7}$
and $dE$ and reduce multiples with $\ _{s}^{\shortmid }\widehat{\Upsilon }$
or to absorb such values into integration constants.

Finally, we note that effective cosmological constants $\ _{s}^{\shortmid
}\Lambda $ can be introduced for redefined generating functions for
LC-configurations and quasi-stationary s-metrics (\ref{qellc}).

\section{Nonholonomic Deformations into Exact and Parametric Solutions}

\label{s5}

There are 5 goals in this section:

\begin{enumerate}
\item We show that certain coefficients of s-metrics can be considered as shell generating functions which is important for constructing new classes of exact and parametric solutions. In explicit form, the solutions will be found for MGTs with MDRs with modified Einstein equations (\ref{meinsteqtbcand}) for quasi-stationary ansatz (\ref{ansatz1}) for  d-metrics.

\item It is provided a geometric procedure with conventional gravitational
polarization functions for nonholonomic deformations of arbitrary s-metrics
into certain classes of exact solutions.

\item We formulate a small parameter scheme for off-diagonal deformations
and diagonalization procedures for generating new classes of exact solutions.

\item There are studied three examples of vacuum phase space generic
off-diagonal metrics for canonical and Levi-Civita configurations.

\item There are provided main formulas for transforming exact and parametric
solutions on cotangent Lorentz bundles into formulas for Finsler like and
Hamilton variables and, inversely, when nonholonomic deformations with
certain classes of generating functions transform Einstein-Hamilton
configurations into quasi-stationary phase ones.
\end{enumerate}

\subsection{Coefficients of s-metrics as generation functions}

We can consider certain N-adapted coefficients of s-metrics as shell
generating functions following this

\begin{corollary}
\label{corolgenfg}\textsf{[some s-metric coefficients as generating
functions] } For an exact solution (\ref{qeltors}) and/or (\ref%
{offdiagcosmcsh}), we can consider as respective shell generating functions
the s-connection coefficients $%
g_{4}(x^{i_{1}},y^{3}),g^{5}(x^{i_{1}},y^{a_{1}},p_{6}),$\newline
$g^{7}(x^{i_{1}},y^{a_{1}},p_{a_{2}},E)$ and redefine correspondingly the
generating functions following formulas:%
\begin{equation*}
\begin{array}{ccccc}
\mbox{ shell }s=2: &  & [(\ _{2}^{\shortmid }\Psi )^{2}]^{\diamond }=-\int
dy^{3}(\ _{2}^{\shortmid }\widehat{\Upsilon })g_{4}^{\diamond }, &  & (\
_{2}^{\shortmid }\Phi )^{2}=-4\ _{2}^{\shortmid }\Lambda g_{4}; \\
&  &  &  &  \\
\mbox{ shell }s=3: &  & \partial ^{6}[(\ _{3}^{\shortmid }\Psi )^{2}]=-\int
dp_{6}(\ _{3}^{\shortmid }\widehat{\Upsilon })\ \partial ^{6}g^{5}, &  & (\
_{3}^{\shortmid }\Phi )^{2}=-4\ _{3}^{\shortmid }\Lambda g^{6}; \\
&  &  &  &  \\
\mbox{ shell }s=4: &  & [(\ _{4}^{\shortmid }\Psi )^{2}]^{\ast }=-\int dE(\
_{4}^{\shortmid }\widehat{\Upsilon })\ (g^{7})^{\ast }, &  & (\
_{8}^{\shortmid }\Phi )^{2}=-4\ _{3}^{\shortmid }\Lambda g^{8}.%
\end{array}%
\end{equation*}
\end{corollary}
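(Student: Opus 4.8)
The assertion is a re-reading, shell by shell, of the explicit integral formulas produced in Theorem \ref{gensoltorsion} together with the nonlinear symmetries of Theorem \ref{nsymgfs}, so the plan is mostly bookkeeping. First I would isolate on the shell $s=2$ the decoupled equation (\ref{e2a}): for the coefficient $g_{4}(x^{i_{1}},y^{3})$ of the ansatz (\ref{ansatz1}) it is equivalent to the first--order relation $g_{4}^{\diamond }=-[(\ _{2}^{\shortmid }\Psi )^{2}]^{\diamond }/4(\ _{2}^{\shortmid }\widehat{\Upsilon })$, which is exactly what underlies the expression for $g_{4}$ in (\ref{qeltors}). Because quasi-stationary configurations are assumed nondegenerate ($g_{4}^{\diamond }\neq 0$), this relation can be inverted: a quadrature in $y^{3}$ expresses $(\ _{2}^{\shortmid }\Psi )^{2}$ as a functional of $g_{4}$ and of the generating source $\ _{2}^{\shortmid }\widehat{\Upsilon }$ (up to an integration function of $x^{i_{1}}$), which is the first displayed formula on the shell $s=2$. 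Thus $g_{4}$ itself may be taken as the generating function and $\ _{2}^{\shortmid }\Psi $ recovered by the stated integral.

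Next I would combine this with the nonlinear symmetry (\ref{nonltransf}): in the $(\ _{2}^{\shortmid }\Phi ,\ _{2}^{\shortmid }\Lambda )$ description of Consequence \ref{offdiagcosmc} the same coefficient is purely algebraic, $g_{4}=g_{4}^{[0]}-(\ _{2}^{\shortmid }\Phi )^{2}/4\ _{2}^{\shortmid }\Lambda $, so absorbing the integration function $g_{4}^{[0]}(x^{k_{1}})$ into a shift of $\ _{2}^{\shortmid }\Phi $ (equivalently, working in the gauge $g_{4}^{[0]}=0$) yields $(\ _{2}^{\shortmid }\Phi )^{2}=-4\ _{2}^{\shortmid }\Lambda \,g_{4}$, the second displayed formula. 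It then remains to verify that the companion coefficient $g_{3}$ and the N-connection coefficients $w_{i_{1}},n_{k_{1}}$ are consistently re-expressed; this is automatic, since $(\ _{2}^{\shortmid }\Psi ,\ _{2}^{\shortmid }\widehat{\Upsilon })\leftrightarrow (\ _{2}^{\shortmid }\Phi ,\ _{2}^{\shortmid }\Lambda )\leftrightarrow g_{4}$ is merely a change of generating data inside the family (\ref{qeltors})--(\ref{offdiagcosmcsh}) already known to solve (\ref{meinsteqtbcand}), so no new integrability condition is generated.

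The shells $s=3$ and $s=4$ are treated in the same way, only replacing the vertical operator and coordinate: on $s=3$ one inverts $\partial ^{6}g^{5}=-\partial ^{6}[(\ _{3}^{\shortmid }\Psi )^{2}]/4(\ _{3}^{\shortmid }\widehat{\Upsilon })$ coming from (\ref{e3a}) and uses the corresponding symmetry in $p_{6}$; on $s=4$ one inverts the analogue of (\ref{eq4a}) with $\ ^{\ast }=\partial ^{8}$ and the variable $E$. The only points needing care are the nondegeneracy hypotheses $g_{4}^{\diamond }\neq 0$, $\partial ^{6}g^{5}\neq 0$, $(g^{7})^{\ast }\neq 0$ (which make the first--order relations genuinely invertible) and the tracking of the lower-shell integration functions, so that the two descriptions are related by an honest reparametrisation and not a purely formal substitution. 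Neither is a real obstacle: both are resolved exactly as in the corresponding steps of Theorems \ref{gensoltorsion} and \ref{nsymgfs}. If in addition the reparametrised configuration is required to be Levi-Civita, one simply imposes the zero-torsion constraints (\ref{zerot}) on the new generating data, as in Consequence \ref{ofdiagztsol}.
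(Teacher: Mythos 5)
Your proposal is correct and follows essentially the same route as the paper: differentiate/invert the quadrature $g_{4}^{\diamond }=-[(\ _{2}^{\shortmid }\Psi )^{2}]^{\diamond }/4(\ _{2}^{\shortmid }\widehat{\Upsilon })$ to recover $\ _{2}^{\shortmid }\Psi $ from prescribed $(g_{4},\ _{2}^{\shortmid }\widehat{\Upsilon })$, then invoke the nonlinear symmetries (\ref{nonltransf}) for the $(\ _{2}^{\shortmid }\Phi ,\ _{2}^{\shortmid }\Lambda )$ form, and repeat mutatis mutandis on shells $s=3,4$. Your explicit remark that $(\ _{2}^{\shortmid }\Phi )^{2}=-4\ _{2}^{\shortmid }\Lambda g_{4}$ requires absorbing the integration function $g_{4}^{[0]}$ into the new generating function is a detail the paper leaves implicit, and is a welcome clarification rather than a deviation.
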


\begin{proof}
For $s=2,$ we take the partial derivative on $y^{3}$ of formula (\ref{g4})
and obtain $g_{4}^{\diamond }=-[(\ _{2}^{\shortmid }\Psi )^{2}]^{\diamond
}/4(\ _{2}^{\shortmid }\widehat{\Upsilon }).$ Prescribing $g_{4}$ and $\
_{2}^{\shortmid }\widehat{\Upsilon },$ we can compute up to certain
integration functions a $\ _{2}^{\shortmid }\Psi $ from $[(\ _{2}^{\shortmid
}\Psi )^{2}]^{\diamond }=\int dy^{3}(\ _{2}^{\shortmid }\widehat{\Upsilon }%
)g_{4}^{\diamond }.$ Using nonlinear symmetries (\ref{nonltransf}), we can
repeat similar considerations and obtain $(\ _{2}^{\shortmid }\Phi )^{2}=-4\
_{2}^{\shortmid }\Lambda g_{4}.$ We omit similar considerations for the
shells $s=3,4$ which should be repeated for corresponding sets of
coordinates and shell generating functions, generating sources, and
effective shell cosmological constants. $\square $\vskip5pt
\end{proof}

In result, the exact solutions (\ref{qeltors}) and/or (\ref{offdiagcosmcsh})
can be rewritten equivalently in terms of generating data $%
(g_{4},g^{5},g^{7};\ _{s}^{\shortmid }\Lambda )$ encoding via nonlinear
symmetries (\ref{nonltransf}) information on $\ _{s}^{\shortmid }\widehat{%
\Upsilon }.$

\begin{consequence}
\label{offsolgenerfg}\textsf{[off-diagonal solutions with s-coefficient
generating functions and effective cosmological constants] } The quadratic
element (\ref{offdiagcosmcsh}) for off-diagonal solutions with Killing
symmetry on $p_{7}$ determined by formula (\ref{qeltors}) under conditions
of Theorem \ref{gensoltorsion}, Consequence \ref{offdiagcosmc} and Corollary %
\ref{corolgenfg} can be written in an equivalent form generated by data $%
(g_{4},g^{5},g^{7};\ _{s}^{\shortmid }\Lambda )$ and respective nonlinear
symmetries involving generating sources $\ _{s}^{\shortmid }\widehat{%
\Upsilon },$
\begin{eqnarray}
&&ds^{2}=g_{\alpha _{s}\beta
_{s}}(x^{k},y^{3},p_{a_{3}},p_{a_{4}};g_{4},g^{5},g^{7},_{\shortmid
s}\Lambda ;\ _{s}^{\shortmid }\widehat{\Upsilon })du^{\alpha _{s}}du^{\beta
_{s}}=e^{\psi (x^{k_{1}})}[(dx^{1})^{2}+(dx^{2})^{2}]
\label{offdsolgenfgcosmc} \\
&&-\frac{(g_{4}^{\diamond })^{2}}{|\int dy^{3}[(\ _{2}^{\shortmid }\widehat{%
\Upsilon })g_{4}]^{\diamond }|\ g_{4}}\{dy^{3}+\frac{\partial _{i_{1}}[\int
dy^{3}(\ _{2}^{\shortmid }\widehat{\Upsilon })\ g_{4}^{\diamond }]}{(\
_{2}^{\shortmid }\widehat{\Upsilon })\ g_{4}^{\diamond }}dx^{i_{1}}\}^{2}+
\notag \\
&&g_{4}\{dt+[\ _{1}n_{k_{1}}+\ _{2}n_{k_{1}}\int dy^{3}\frac{%
(g_{4}^{\diamond })^{2}}{|\int dy^{3}[(\ _{2}^{\shortmid }\widehat{\Upsilon }%
)g_{4}]^{\diamond }|\ [g_{4}]^{5/2}}]dx^{\acute{k}_{1}}\}+  \notag
\end{eqnarray}%
\begin{eqnarray*}
&&g^{5}\{dp_{5}+[\ _{1}n_{k_{2}}+\ _{2}n_{k_{2}}\int dp_{6}\frac{[\partial
^{6}(g^{5})]^{2}}{|\int dp_{6}\ \partial ^{6}[(\ _{3}^{\shortmid }\widehat{%
\Upsilon })g^{5}]|\ [g^{5}]^{5/2}}]dx^{k_{2}}\}- \\
&&\frac{[\partial ^{6}(g^{5})]^{2}}{|\int dp_{6}\ \partial ^{6}[(\
_{3}^{\shortmid }\widehat{\Upsilon })g^{5}]\ |\ g^{5}}\{dp_{6}+\frac{%
\partial _{i_{2}}[\int dp_{6}(\ _{3}^{\shortmid }\widehat{\Upsilon })\
\partial ^{6}(g^{5})]}{(\ _{3}^{\shortmid }\widehat{\Upsilon })\partial
^{6}(g^{5})}dx^{i_{2}}\}^{2}+ \\
&&g^{7}\{dp_{7}+[\ _{1}n_{k_{3}}+\ _{2}n_{k_{3}}\int dE\frac{[(g^{7})^{\ast
}]^{2}}{|\int dE\ [(\ _{4}^{\shortmid }\widehat{\Upsilon })g^{7}]^{\ast }|\
[g^{7}]^{5/2}}]dx^{k_{3}}\}- \\
&&\frac{[(g^{7})^{\ast }]^{2}}{|\int dE\ [(\ _{4}^{\shortmid }\widehat{%
\Upsilon })g^{7}]^{\ast }\ |\ g^{7}}\{dE+\frac{\partial _{i_{3}}[\int dE(\
_{4}^{\shortmid }\widehat{\Upsilon })\ (g^{7})^{\ast }]}{(\ _{4}^{\shortmid }%
\widehat{\Upsilon })(g^{7})^{\ast }}dx^{i_{3}}\}^{2},
\end{eqnarray*}%
where the signs and integration functions/constants have to be chosen in
order to get compatibility with certain experimental/ observational data or
corrections from generalized theories.
\end{consequence}

It should be noted that we can consider in explicit form nonsingular
solutions of type (\ref{offdsolgenfgcosmc}) if the values $\ _{s}^{\shortmid
}\widehat{\Upsilon }$ and/or $\ _{s}^{\shortmid }\Lambda $ are not zero. We
can apply the AFDM for vacuum phase space or Lorentz manifold configurations
but the geometric constructions are different (see section \ref%
{ssvacuumfc}\ and, for instance, the constructions related to formulas
(60)-(63) in section 2.3.6 of \cite{vacaruepjc17}).

\begin{remark}
\textsf{[off-diagonal solutions with effective shell cosmological constants
and zero torsion] } We can more special conditions stated by Consequence \ref%
{ofdiagztsol} which allow to generate LC-configurations using some data
\begin{equation*}
\ \check{g}_{4}(x^{i_{1}},y^{3}),\check{g}^{5}(x^{i_{1}},y^{a_{1}},p_{6}),%
\check{g}^{7}(x^{i_{1}},y^{a_{1}},p_{a_{2}},E)
\end{equation*}%
when the generating functions are of type (\ref{expconda}) (that why we put
inverse hat labels emphasizing that certain additional integrability
conditions are imposed on generating functions) and redefined following
formulas:%
\begin{equation*}
\begin{array}{ccccc}
\mbox{ shell }s=2: &  & [(\ _{2}^{\shortmid }\check{\Psi})^{2}]^{\diamond
}=\int dy^{3}(\ _{2}^{\shortmid }\widehat{\Upsilon })\check{g}_{4}^{\diamond
}, &  & (\ _{2}^{\shortmid }\check{\Phi})^{2}=-4\ _{2}^{\shortmid }\Lambda
\check{g}_{4}; \\
&  &  &  &  \\
\mbox{ shell }s=3: &  & \partial ^{6}[(\ _{3}^{\shortmid }\check{\Psi}%
)^{2}]=\int dp_{6}(\ _{3}^{\shortmid }\widehat{\Upsilon })\ \partial ^{6}%
\check{g}^{5}, &  & (\ _{3}^{\shortmid }\check{\Phi})^{2}=-4\
_{3}^{\shortmid }\Lambda \check{g}^{6}; \\
&  &  &  &  \\
\mbox{ shell }s=4: &  & [(\ _{4}^{\shortmid }\check{\Psi})^{2}]^{\ast }=\int
dE(\ _{4}^{\shortmid }\widehat{\Upsilon })\ (\check{g}^{7})^{\ast }, &  & (\
_{8}^{\shortmid }\check{\Phi})^{2}=-4\ _{3}^{\shortmid }\Lambda \check{g}%
^{8}.%
\end{array}%
\end{equation*}%
Corresponding s-metrics (\ref{offdsolgenfgcosmc}) transform into
\begin{eqnarray}
ds_{LC}^{2} &=&g_{\alpha _{s}\beta
_{s}}(x^{k},y^{3},p_{a_{3}},p_{a_{4}};g_{4},g^{5},g^{7},_{\shortmid
s}\Lambda ;\ _{s}^{\shortmid }\widehat{\Upsilon })du^{\alpha _{s}}du^{\beta
_{s}}=e^{\psi (x^{k_{1}})}[(dx^{1})^{2}+(dx^{2})^{2}]  \notag \\
&&-\frac{(g_{4}^{\diamond })^{2}}{|\int dy^{3}(\ _{2}^{\shortmid }\widehat{%
\Upsilon })[g_{4}]^{\diamond }|\ g_{4}}\{dy^{3}+\partial _{i_{1}}(\
_{2}^{\shortmid }\check{A})dx^{i_{1}}\}^{2}+g_{4}\{dt+\partial _{i_{1}}[\
^{2}n(x^{k_{1}})]dx^{i_{1}}\}+  \label{offidiagcosmconstlc} \\
&&g^{5}\{dp_{5}+\partial _{i_{2}}[\ ^{2}n(x^{k_{2}})]dx^{i_{2}}\}-\frac{%
[\partial ^{6}(g^{5})]^{2}}{|\int dp_{6}\ (\ _{3}^{\shortmid }\widehat{%
\Upsilon })\partial ^{6}[g^{5}]\ |\ g^{5}}\{dp_{6}+\partial _{i_{2}}(\
_{3}^{\shortmid }\check{A})dx^{i_{2}}\}^{2}+  \notag \\
&&g^{7}\{dp_{7}+\partial _{i_{3}}[\ ^{3}n(x^{k_{3}})]dx^{i_{3}}\}-\frac{%
[(g^{7})^{\ast }]^{2}}{|\int dE\ (\ _{4}^{\shortmid }\widehat{\Upsilon }%
)[g^{7}]^{\ast }\ |\ g^{7}}\{dE+\partial _{i_{2}}(\ _{4}^{\shortmid }\check{A%
})dx^{i_{3}}\}^{2}.  \notag
\end{eqnarray}
\end{remark}

Finally, we note that effective shell sources $\ _{s}^{\shortmid }\widehat{%
\Upsilon }$ can be absorbed into certain integration functions of (\ref%
{offdsolgenfgcosmc}) if such values do not depend on vertical shell
coordinates on respective shells. The sources $\ _{s}^{\shortmid }\widehat{%
\Upsilon }$ are encoded correspondingly in $\ _{s}^{\shortmid }\check{A}$
for (\ref{offidiagcosmconstlc}).

\subsection{Polarization functions for off-diagonal prime and target
s-metrics}

Let us consider a dual Lorentz bundle enabled with nonholonomic dyadic
structure, $_{s}\mathbf{T}^{\ast }\mathbf{V,}$ and a \textbf{prime } metric $%
\ ^{\shortmid }\mathbf{\mathring{g}}$ structure which can be written
equivalently in an off-diagonal form and/or (\ref{offds}) as a s-metric (\ref%
{dmcts}), see also the parametrization (\ref{dm2and2}),
\begin{eqnarray}
\ \ ^{\shortmid }\mathbf{\mathring{g}} &=&\ _{s}^{\shortmid }\mathbf{%
\mathring{g}}=\ ^{\shortmid }\mathring{g}_{\alpha _{s}\beta
_{s}}(x^{i_{s}},p_{a_{s}})d\ ^{\shortmid }u^{\alpha _{s}}\otimes d\
^{\shortmid }u^{\beta _{s}}=\ ^{\shortmid }\mathbf{\mathring{g}}_{\alpha
_{s}\beta _{s}}(\ _{s}^{\shortmid }u)\ ^{\shortmid }\mathbf{\mathbf{%
\mathring{e}}}^{\alpha _{s}}\mathbf{\otimes \ ^{\shortmid }\mathbf{\mathring{%
e}}}^{\beta _{s}}  \label{primedm} \\
&=&\ ^{\shortmid }\mathring{g}_{i_{s}j_{s}}(x^{k_{s}})e^{i_{s}}\otimes
e^{j_{s}}+\ ^{\shortmid }\mathbf{\mathring{g}}%
^{a_{s}b_{s}}(x^{i_{s}},p_{a_{s}})\ ^{\shortmid }\mathbf{\mathring{e}}%
_{a_{s}}\otimes \ ^{\shortmid }\mathbf{\mathring{e}}_{b_{s}},\mbox{ for }
\notag \\
\ ^{\shortmid }\mathbf{\mathring{e}}_{\alpha _{s}} &=&(\ ^{\shortmid }%
\mathbf{\mathring{e}}_{i_{s}}=\partial _{i_{s}}-\ ^{\shortmid }\mathring{N}%
_{i_{s}}^{b_{s}}(\ ^{\shortmid }u)\partial _{b_{s}},\ \ ^{\shortmid }{e}%
_{a_{s}}=\partial _{a_{s}})\mbox{ and }\ ^{\shortmid }\mathbf{\mathring{e}}%
^{\alpha _{s}}=(dx^{i_{s}},\mathbf{\mathring{e}}^{a_{s}}=dy^{a_{s}}+\
^{\shortmid }\mathring{N}_{i_{s}}^{a_{s}}(\ ^{\shortmid }u)dx^{i_{s}}).
\notag
\end{eqnarray}
Hereafter we shall label prime metrics and related geometric objects like
connections, frames etc. with a small circle on the left/right/up of
corresponding symbols. In general, a prime s-metric $\ ^{\shortmid }\mathbf{%
\mathring{g}}$ (\ref{primedm}) my not be a solution of certain gravitational
field equations in a MGT or GR with phase space extension. To study, for
instance, generalizations of physically important solutions in GR to $_{s}%
\mathbf{T}^{\ast }\mathbf{V}$ we can consider trivial embedding, for
instance, of black hole metrics into linear quadratic elements \ (\ref{lqed}%
). For nontrivial MDRs with an indicator $\varpi (x^{i},E,\overrightarrow{%
\mathbf{p}},m;\ell _{P})$ (\ref{mdrg}), we obtain nonholonomic deformations
to nonlinear quadratic elements determined by \textbf{target} s-metrics of
type $\ \ ^{\shortmid }\mathbf{g}=\ _{s}^{\shortmid }\mathbf{g}$ (\ref{dmcts}%
).

In this section, we consider general nonholonomic transforms of a prime
s-metric, $\ _{s}^{\shortmid }\mathbf{\mathring{g},}$ into a target one, $\
_{s}^{\shortmid }\mathbf{g}$ which is positively a solution of modified
Einstein equations (\ref{meinsteqtbcand}) for canonical s-connections (in
particular, defining quasi-stationary phase configurations).

\begin{definition}
$\eta $-polarization (or gravitational polarization) functions,
\begin{equation*}
\ _{s}^{\shortmid }\mathbf{\mathring{g}}\rightarrow \ _{s}^{\shortmid }%
\mathbf{g}=[\ ^{\shortmid }g_{\alpha _{s}}=\ ^{\shortmid }\eta _{\alpha
_{s}}\ ^{\shortmid }\mathring{g}_{\alpha _{s}},\ ^{\shortmid
}N_{i_{s-1}}^{a_{s}}=\ \ ^{\shortmid }\eta _{i_{s-1}}^{a_{s}}\ ^{\shortmid }%
\mathring{N}_{i_{s-1}}^{a_{s}}],
\end{equation*}%
are defined as transforms of a prime s-metric (\ref{primedm}) into certain
classes of exact/ parametric solutions of generalized Einstein equations
with MDRs.
\end{definition}

We note that in above formulas it is not considered a summation on repeating
indices because they are not 'up-low' type.

\begin{convention}
\label{gravitpolarizations}\textsf{[gravitational polarizations] } For
nonholonomic transforms of prime to target s-metrics, the $\eta $%
-polarizations are defined by formulas
\begin{eqnarray}
\ \ _{s}^{\shortmid }\mathbf{\mathring{g}} &\rightarrow &\ _{s}^{\shortmid }%
\mathbf{g}=\ ^{\shortmid }g_{i_{s}}(x^{k_{s}})dx^{i_{s}}\otimes dx^{i_{s}}+\
^{\shortmid }g_{a_{s}}(x^{i_{s}},p_{b_{s}})\ ^{\shortmid }\mathbf{e}%
^{a_{s}}\otimes \ ^{\shortmid }\mathbf{e}^{a_{s}}  \label{dmpolariz} \\
&=&\ ^{\shortmid }\eta _{i_{k}}(x^{i_{1}},y^{a_{2}},p_{a_{3}},p_{a_{4}})\
^{\shortmid }\mathring{g}%
_{i_{s}}(x^{i_{1}},y^{a_{2}},p_{a_{3}},p_{a_{4}})dx^{i_{s}}\otimes dx^{i_{s}}
\notag \\
&&+\ ^{\shortmid }\eta _{b_{s}}(x^{i_{1}},y^{a_{2}},p_{a_{3}},p_{a_{4}})\
^{\shortmid }\mathring{g}_{b_{s}}(x^{i_{1}},y^{a_{2}},p_{a_{3}},p_{a_{4}})\
^{\shortmid }\mathbf{e}^{b_{s}}[\eta ]\otimes \ ^{\shortmid }\mathbf{e}%
^{b_{s}}[\eta ],  \notag \\
\ ^{\shortmid }\mathbf{e}^{\alpha _{s}}[\eta ] &=&(dx^{i_{s}},\ ^{\shortmid }%
\mathbf{e}^{a_{s}}=dy^{a_{s}}+\ ^{\shortmid }\eta
_{i_{s}}^{a_{s}}(x^{i_{1}},y^{a_{2}},p_{a_{3}},p_{a_{4}})\ ^{\shortmid }%
\mathring{N}%
_{i_{s}}^{a_{s}}(x^{i_{1}},y^{a_{2}},p_{a_{3}},p_{a_{4}})dx^{i_{s}}),  \notag
\end{eqnarray}%
where the s-coefficients and N-elongated dual basis $\ ^{\shortmid }\mathbf{e%
}^{\alpha _{s}}[\eta ]$ is defined by formulas
\begin{eqnarray}
\ ^{\shortmid }g_{i_{1}}(x^{k_{1}}) &=&\ ^{\shortmid }\eta
_{k_{1}}(x^{i_{1}},y^{a_{2}},p_{a_{3}},p_{a_{4}})\ ^{\shortmid }\mathring{g}%
_{k_{1}}(x^{i_{1}},y^{a_{2}},p_{a_{3}},p_{a_{4}}),  \notag \\
\ ^{\shortmid }g_{b_{2}}(x^{i_{1}},y^{3}) &=&\ ^{\shortmid }\eta
_{b_{2}}(x^{i_{1}},y^{a_{2}},p_{a_{3}},p_{a_{4}})\ ^{\shortmid }\mathring{g}%
_{b_{1}}(x^{i_{1}},y^{a_{2}},p_{a_{3}},p_{a_{4}}),  \notag \\
\ ^{\shortmid }g_{a_{3}}(x^{i_{2}},p_{6}) &=&\ ^{\shortmid }\eta
^{a_{3}}(x^{i_{1}},y^{b_{2}},p_{b_{3}},p_{b_{4}})\ ^{\shortmid }\mathring{g}%
^{a_{3}}(x^{i_{1}},y^{b_{2}},p_{b_{3}},p_{b_{4}}),  \notag \\
\ ^{\shortmid }g_{a_{4}}(x^{i_{3}},E) &=&\ ^{\shortmid }\eta
^{a_{4}}(x^{i_{1}},y^{b_{2}},p_{b_{3}},p_{b_{4}})\ ^{\shortmid }\mathring{g}%
^{a_{4}}(x^{i_{1}},y^{a_{2}},p_{a_{3}},p_{a_{4}})  \notag \\
\ ^{\shortmid }N_{i_{1}}^{a_{2}}(x^{k_{1}},y^{3}) &=&\eta
_{i_{1}}^{a_{2}}(x^{i_{1}},y^{b_{2}},p_{b_{3}},p_{b_{4}})\ ^{\shortmid }%
\mathring{N}_{i_{1}}^{a_{2}}(x^{i_{1}},y^{b_{2}},p_{b_{3}},p_{b_{4}}),
\notag \\
\ ^{\shortmid }N_{i_{2}}^{a_{3}}(x^{k_{1}},y^{b_{2}},p_{6}) &=&\ ^{\shortmid
}\eta _{i_{2}a_{3}}(x^{i_{1}},y^{b_{2}},p_{b_{3}},p_{b_{4}})\ ^{\shortmid }%
\mathring{N}_{i_{2}a_{3}}(x^{i_{1}},y^{b_{2}},p_{b_{3}},p_{b_{4}}),  \notag
\\
\ ^{\shortmid }N_{i_{3}}^{a_{4}}(x^{k_{1}},y^{b_{2}},p_{a_{3}},E) &=&\eta
_{i_{3}a_{4}}(x^{i_{1}},y^{b_{2}},p_{b_{3}},p_{b_{4}})\mathring{N}%
_{i_{3}a_{4}}(x^{i_{1}},y^{b_{2}},p_{b_{3}},p_{b_{4}}).  \label{coeftargpol}
\end{eqnarray}
\end{convention}

We note that any of multiples of type in a $\ ^{\shortmid }\eta \ ^{\shortmid }\mathring{g}$ from right sides of (\ref{coeftargpol}) may depend, in principle, on extra shell coordinates but their products are subjected to
the condition that the target s-metrics (with the coefficients in the left sides) are adapted to the shell coordinates ordered form $s=1,2,3,4$. Let us discuss the geometric and physical meaning of $\eta$-coefficients introduced above. They describe nonholonomic deformations of certain prime d-metrics into certain target ones and can be defined in such forms that prime metrics are solutions of certain modified Einstein equations. We use the term "gravitational polarizations" because for $\eta$-deformations with a small parameter, we can generate solutions, for instance, of black hole/ ellipsoid type but with effective polarization of fundamental physical constants. There were considered many examples of such solutions with polarizations determined by noncommutative/ string / massive gravity / geometric flow etc. corrections  \cite{gvvepjc14,svvijmpd14,rajpoot15,gheorghiuap16,bubuianucqg17,vbubuianu17,vacaruepjc17,vmon06,vijtp10}.  The results of this section have been used recently in article \cite{bvap19} on black hole solutions in MGTs with MDRs and Finsler-Hamilton variables.

Following the Convention \ref{gravitpolarizations} for any prescribed prime
s-metric $\ _{s}^{\shortmid }\mathbf{\mathring{g},}$ we can consider as
generating functions a subclass of $\eta $-polarizations $\ ^{\shortmid
}\eta _{4}(x^{i_{1}},y^{3}),\ ^{\shortmid }\eta
^{5}(x^{i_{1}},y^{a_{1}},p_{6}),\ ^{\shortmid }\eta
^{7}(x^{i_{1}},y^{a_{1}},p_{a_{2}},E)$ which should be defined from the
condition that the target s-metric $\ _{s}^{\shortmid }\mathbf{g}$ is a
quasi-stationary solution (in this work) of the canonically modified
Einstein equations (\ref{meinsteqtbcand}). By straightforward computations,
we prove

\begin{corollary}
\label{nqelgravpol}\textsf{[nonlinear quadratic elements in terms of
gravitational polarization generating functions]} Quasi-stationary phase
configurations on cotangent Lorentz bundles are described in terms of $\eta $%
-polarizations by
\begin{eqnarray}
&&ds^{2}=g_{\alpha _{s}\beta
_{s}}(x^{k},y^{3},p_{a_{3}},p_{a_{4}};g_{4},g^{5},g^{7},_{\shortmid
s}\Lambda ;\ _{s}^{\shortmid }\widehat{\Upsilon })du^{\alpha _{s}}du^{\beta
_{s}}=e^{\psi (x^{k_{1}})}[(dx^{1})^{2}+(dx^{2})^{2}]  \label{offdiagpolf} \\
&&-\frac{[(\ ^{\shortmid }\eta _{4}\ ^{\shortmid }\mathring{g}%
_{4})^{\diamond }]^{2}}{|\int dy^{3}(\ _{2}^{\shortmid }\widehat{\Upsilon }%
)(\ ^{\shortmid }\eta _{4}\ ^{\shortmid }\mathring{g}_{4})^{\diamond }|\ (\
^{\shortmid }\eta _{4}\ ^{\shortmid }\mathring{g}_{4})}\{dy^{3}+\frac{%
\partial _{i_{1}}[\int dy^{3}(\ _{2}^{\shortmid }\widehat{\Upsilon })\ (\
^{\shortmid }\eta _{4}\ ^{\shortmid }\mathring{g}_{4})^{\diamond }]}{(\
_{2}^{\shortmid }\widehat{\Upsilon })(\ ^{\shortmid }\eta _{4}\ ^{\shortmid }%
\mathring{g}_{4})^{\diamond }}dx^{i_{1}}\}^{2}+  \notag \\
&&(\ ^{\shortmid }\eta _{4}\ ^{\shortmid }\mathring{g}_{4})\{dt+[\
_{1}n_{k_{1}}+\ _{2}n_{k_{1}}\int dy^{3}\frac{[(\ ^{\shortmid }\eta _{4}\
^{\shortmid }\mathring{g}_{4})^{\diamond }]^{2}}{|\int dy^{3}(\
_{2}^{\shortmid }\widehat{\Upsilon })(\ ^{\shortmid }\eta _{4}\ ^{\shortmid }%
\mathring{g}_{4})^{\diamond }|\ (\ ^{\shortmid }\eta _{4}\ ^{\shortmid }%
\mathring{g}_{4})^{5/2}}]dx^{\acute{k}_{1}}\}+  \notag
\end{eqnarray}%
\begin{eqnarray*}
&&(\ ^{\shortmid }\eta ^{5}\ ^{\shortmid }\mathring{g}^{5})\{dp_{5}+[\
_{1}n_{k_{2}}+\ _{2}n_{k_{2}}\int dp_{6}\frac{[\partial ^{6}(\ ^{\shortmid
}\eta ^{5}\ ^{\shortmid }\mathring{g}^{5})]^{2}}{|\int dp_{6}(\
_{3}^{\shortmid }\widehat{\Upsilon })\ \partial ^{6}(\ ^{\shortmid }\eta
^{5}\ ^{\shortmid }\mathring{g}^{5})|\ (\ ^{\shortmid }\eta ^{5}\
^{\shortmid }\mathring{g}^{5})^{5/2}}]dx^{k_{2}}\} \\
&&-\frac{[\partial ^{6}(\ ^{\shortmid }\eta ^{5}\ ^{\shortmid }\mathring{g}%
^{5})]^{2}}{|\int dp_{6}\ (\ _{3}^{\shortmid }\widehat{\Upsilon })\partial
^{6}(\ ^{\shortmid }\eta ^{5}\ ^{\shortmid }\mathring{g}^{5})\ |\ (\
^{\shortmid }\eta ^{5}\ ^{\shortmid }\mathring{g}^{5})}\{dp_{6}+\frac{%
\partial _{i_{2}}[\int dp_{6}(\ _{3}^{\shortmid }\widehat{\Upsilon })\
\partial ^{6}(\ ^{\shortmid }\eta ^{5}\ ^{\shortmid }\mathring{g}^{5})]}{(\
_{3}^{\shortmid }\widehat{\Upsilon })\partial ^{6}(\ ^{\shortmid }\eta ^{5}\
^{\shortmid }\mathring{g}^{5})}dx^{i_{2}}\}^{2}+
\end{eqnarray*}%
\begin{eqnarray*}
&&(\ ^{\shortmid }\eta ^{7}\ ^{\shortmid }\mathring{g}^{7})\{dp_{7}+[\
_{1}n_{k_{3}}+\ _{2}n_{k_{3}}\int dE\frac{[(\ ^{\shortmid }\eta ^{7}\
^{\shortmid }\mathring{g}^{7})^{\ast }]^{2}}{|\int dE\ (\ _{4}^{\shortmid }%
\widehat{\Upsilon })[(\ ^{\shortmid }\eta ^{7}\ ^{\shortmid }\mathring{g}%
^{7})]^{\ast }|\ [(\ ^{\shortmid }\eta ^{7}\ ^{\shortmid }\mathring{g}%
^{7})]^{5/2}}]dx^{k_{3}}\} \\
&&-\frac{[(\ ^{\shortmid }\eta ^{7}\ ^{\shortmid }\mathring{g}^{7})^{\ast
}]^{2}}{|\int dE\ (\ _{4}^{\shortmid }\widehat{\Upsilon })(\ ^{\shortmid
}\eta ^{7}\ ^{\shortmid }\mathring{g}^{7})^{\ast }\ |\ (\ ^{\shortmid }\eta
^{7}\ ^{\shortmid }\mathring{g}^{7})}\{dE+\frac{\partial _{i_{3}}[\int dE(\
_{4}^{\shortmid }\widehat{\Upsilon })\ (\ ^{\shortmid }\eta ^{7}\
^{\shortmid }\mathring{g}^{7})^{\ast }]}{(\ _{4}^{\shortmid }\widehat{%
\Upsilon })(\ ^{\shortmid }\eta ^{7}\ ^{\shortmid }\mathring{g}^{7})^{\ast }}%
dx^{i_{3}}\}^{2},
\end{eqnarray*}%
where the polarization functions are determined by generating data $[\psi
(x^{i_{1}}),^{\shortmid }\eta _{4}(x^{i_{1}},y^{3}),\ ^{\shortmid }\eta
^{5}(x^{i_{1}},y^{a_{1}},p_{6}),$ \newline
$\ ^{\shortmid }\eta ^{7}(x^{i_{1}},y^{a_{1}},p_{a_{2}},E)]$ and prime
s-metric coefficients $\ ^{\shortmid }\mathbf{\mathring{g}}_{\alpha
_{s}\beta _{s}}$ \ following such formulas%
\begin{eqnarray*}
\ ^{\shortmid }\eta _{1}\ ^{\shortmid }\mathring{g}_{1} &=&\ ^{\shortmid
}\eta _{2}\ ^{\shortmid }\mathring{g}_{2}=e^{\psi (x^{k_{1}})},\ ^{\shortmid
}\eta _{3}\ ^{\shortmid }\mathring{g}_{3}=-\frac{[(\ ^{\shortmid }\eta _{4}\
^{\shortmid }\mathring{g}_{4})^{\diamond }]^{2}}{|\int dy^{3}[(\
_{2}^{\shortmid }\widehat{\Upsilon })(\ ^{\shortmid }\eta _{4}\ ^{\shortmid }%
\mathring{g}_{4})]^{\diamond }|\ (\ ^{\shortmid }\eta _{4}\ ^{\shortmid }%
\mathring{g}_{4})}, \\
\ ^{\shortmid }\eta ^{6}\ ^{\shortmid }\mathring{g}^{6} &=&-\frac{[\partial
^{6}(\ ^{\shortmid }\eta ^{5}\ ^{\shortmid }\mathring{g}^{5})]^{2}}{|\int
dp_{6}(\ _{3}^{\shortmid }\widehat{\Upsilon })\ \partial ^{6}(\ ^{\shortmid
}\eta ^{5}\ ^{\shortmid }\mathring{g}^{5})|\ (\ ^{\shortmid }\eta ^{5}\
^{\shortmid }\mathring{g}^{5})},\ ^{\shortmid }\eta ^{8}\ ^{\shortmid }%
\mathring{g}^{8}=-\frac{[(\ ^{\shortmid }\eta ^{7}\ ^{\shortmid }\mathring{g}%
^{7})^{\ast }]^{2}}{|\int dE[(\ _{4}^{\shortmid }\widehat{\Upsilon })(\
^{\shortmid }\eta ^{7}\ ^{\shortmid }\mathring{g}^{7})^{\ast }\ |\ (\
^{\shortmid }\eta ^{7}\ ^{\shortmid }\mathring{g}^{7})};
\end{eqnarray*}%
\begin{eqnarray}
\ ^{\shortmid }\eta _{i_{1}}^{3}\ ^{\shortmid }\mathring{N}_{i_{1}}^{3} &=&%
\frac{\partial _{i_{1}}\ \int dy^{3}(\ _{2}^{\shortmid }\widehat{\Upsilon }%
)\ (\ ^{\shortmid }\eta _{4}\ ^{\shortmid }\mathring{g}_{4})^{\diamond }}{(\
_{2}^{\shortmid }\widehat{\Upsilon })\ (\ ^{\shortmid }\eta _{4}\
^{\shortmid }\mathring{g}_{4})^{\diamond }},  \label{noffdiagpolf} \\
\ \ ^{\shortmid }\eta _{k_{1}}^{4}\ ^{\shortmid }\mathring{N}_{k_{1}}^{4}
&=&\ _{1}n_{k_{1}}+\ _{2}n_{k_{1}}\int dy^{3}\frac{[(\ ^{\shortmid }\eta
_{4}\ ^{\shortmid }\mathring{g}_{4})^{\diamond }]^{2}}{|\int dy^{3}(\
_{2}^{\shortmid }\widehat{\Upsilon })(\ ^{\shortmid }\eta _{4}\ ^{\shortmid }%
\mathring{g}_{4})^{\diamond }|\ (\ ^{\shortmid }\eta _{4}\ ^{\shortmid }%
\mathring{g}_{4})^{5/2}},  \notag
\end{eqnarray}%
\begin{eqnarray*}
\ ^{\shortmid }\eta _{k_{2}5}\ ^{\shortmid }\mathring{N}_{k_{2}5} &=&\
_{1}n_{k_{2}}+\ _{2}n_{k_{2}}\int dp_{6}\frac{[\partial ^{6}(\ ^{\shortmid
}\eta ^{5}\ ^{\shortmid }\mathring{g}^{5})]^{2}}{|\int dp_{6}\ (\
_{3}^{\shortmid }\widehat{\Upsilon })\partial ^{6}(\ ^{\shortmid }\eta ^{5}\
^{\shortmid }\mathring{g}^{5})|\ (\ ^{\shortmid }\eta ^{5}\ ^{\shortmid }%
\mathring{g}^{5})^{5/2}}, \\
\ ^{\shortmid }\eta _{i_{2}6}\ ^{\shortmid }\mathring{N}_{i_{2}6} &=&\frac{%
\partial _{i_{2}}\ \int dp_{6}(\ _{3}^{\shortmid }\widehat{\Upsilon })\
\partial ^{6}(\ ^{\shortmid }\eta ^{5}\ ^{\shortmid }\mathring{g}^{5})}{\
_{3}^{\shortmid }\widehat{\Upsilon }\ \partial ^{6}(\ ^{\shortmid }\eta
^{5}\ ^{\shortmid }\mathring{g}^{5})},
\end{eqnarray*}%
\begin{eqnarray*}
\ ^{\shortmid }\eta _{k_{3}7}\ ^{\shortmid }\mathring{N}_{k_{3}7} &=&\
_{1}n_{k_{3}}+\ _{2}n_{k_{3}}\int dE\frac{[(\ ^{\shortmid }\eta ^{7}\
^{\shortmid }\mathring{g}^{7})^{\ast }]^{2}}{|\int dE\ (\ _{4}^{\shortmid }%
\widehat{\Upsilon })(\ ^{\shortmid }\eta ^{7}\ ^{\shortmid }\mathring{g}%
^{7})^{\ast }|\ (\ ^{\shortmid }\eta ^{7}\ ^{\shortmid }\mathring{g}%
^{7})^{5/2}}, \\
\ ^{\shortmid }\eta _{i_{3}8}\ ^{\shortmid }\mathring{N}_{i_{3}8} &=&\frac{%
\partial _{i_{3}}\ \int dE(\ _{4}^{\shortmid }\widehat{\Upsilon })\ (\
^{\shortmid }\eta ^{7}\ ^{\shortmid }\mathring{g}^{7})^{\ast }}{\
_{4}^{\shortmid }\widehat{\Upsilon }\ (\ ^{\shortmid }\eta ^{7}\ ^{\shortmid
}\mathring{g}^{7})^{\ast }}.
\end{eqnarray*}
\end{corollary}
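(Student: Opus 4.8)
The plan is to obtain the claimed representation by direct substitution of the $\eta$-polarization ansatz of Convention \ref{gravitpolarizations} into the representation (\ref{offdsolgenfgcosmc}) of quasi-stationary off-diagonal solutions already established in Consequence \ref{offsolgenerfg}, and then to verify that the resulting coefficients still solve the decoupled system of Theorem \ref{theordecoupl}. First I would recall that any quasi-stationary solution of (\ref{meinsteqtbcand}) with Killing symmetry on $p_{7}$ can be written with the shell data $(g_{4},g^{5},g^{7};\ _{s}^{\shortmid }\Lambda )$ in the role of generating functions, all the remaining s-metric and N-connection coefficients being reconstructed through Corollary \ref{corolgenfg}, the shell equations of Theorem \ref{theordecoupl} and the nonlinear symmetries of Theorem \ref{nsymgfs}. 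Fixing a prime s-metric $\ _{s}^{\shortmid }\mathbf{\mathring{g}}$ (\ref{primedm}) and substituting $g_{4}=\ ^{\shortmid }\eta _{4}\ ^{\shortmid }\mathring{g}_{4}$, $g^{5}=\ ^{\shortmid }\eta ^{5}\ ^{\shortmid }\mathring{g}^{5}$, $g^{7}=\ ^{\shortmid }\eta ^{7}\ ^{\shortmid }\mathring{g}^{7}$ turns these generating functions into the polarization functions $\ ^{\shortmid }\eta _{4}(x^{i_{1}},y^{3})$, $\ ^{\shortmid }\eta ^{5}(x^{i_{1}},y^{a_{1}},p_{6})$, $\ ^{\shortmid }\eta ^{7}(x^{i_{1}},y^{a_{1}},p_{a_{2}},E)$, and the line element (\ref{offdsolgenfgcosmc}) becomes, term by term, the asserted expression (\ref{offdiagpolf}).

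Second, I would derive the formulas (\ref{noffdiagpolf}) for the dependent quantities. The $v$-type coefficients $\ ^{\shortmid }\eta _{3}\ ^{\shortmid }\mathring{g}_{3}$, $\ ^{\shortmid }\eta ^{6}\ ^{\shortmid }\mathring{g}^{6}$, $\ ^{\shortmid }\eta ^{8}\ ^{\shortmid }\mathring{g}^{8}$ follow from the shell equations (\ref{e2a}), (\ref{e3a}), (\ref{eq4a}), i.e. from the relations expressing $g_{3}$ through $[(\ _{2}^{\shortmid }\Psi )^{\diamond }]^{2}$, $4(\ _{2}^{\shortmid }\widehat{\Upsilon })^{2}$ and $g_{4}$ and their $s=3,4$ analogues, once $\ _{s}^{\shortmid }\Psi $ has been expressed through $g_{4},g^{5},g^{7}$ by Corollary \ref{corolgenfg}; the N-connection polarizations $\ ^{\shortmid }\eta _{i_{s}}^{a_{s}}\ ^{\shortmid }\mathring{N}_{i_{s}}^{a_{s}}$ follow from the $w$- and $n$-equations (\ref{e2b})--(\ref{e2c}), (\ref{e3b})--(\ref{e3c}), (\ref{eq4b})--(\ref{eq4c}) by dividing the explicit $w_{i_{s}}$ and $n_{k_{s}}$ of Theorem \ref{gensoltorsion} by the corresponding $\ ^{\shortmid }\mathring{N}_{i_{s}}^{a_{s}}$. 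Since replacing a free generating function $g_{4}$ by $\ ^{\shortmid }\eta _{4}\ ^{\shortmid }\mathring{g}_{4}$ (and likewise on shells $s=3,4$) is merely a relabeling, the decoupled equations (\ref{eq1})--(\ref{eq4c}) are automatically satisfied and no new integrability condition is generated at this step; the nonlinear symmetries of Theorem \ref{nsymgfs} are also preserved, because they act only on the exchange $\ _{s}^{\shortmid }\Psi \leftrightarrow \ _{s}^{\shortmid }\Phi $ together with the sources, which is independent of the substitution $g\to \eta \mathring{g}$.

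Third, I would check the shell-compatibility constraint already flagged in Convention \ref{gravitpolarizations}: although $\ ^{\shortmid }\eta $ and $\ ^{\shortmid }\mathring{g}$ may individually depend on extra shell coordinates, their products must be adapted to the ordered shell coordinates, i.e. $\ ^{\shortmid }\eta _{4}\ ^{\shortmid }\mathring{g}_{4}$ depends only on $(x^{i_{1}},y^{3})$, $\ ^{\shortmid }\eta ^{5}\ ^{\shortmid }\mathring{g}^{5}$ only on $(x^{i_{2}},p_{6})$, and $\ ^{\shortmid }\eta ^{7}\ ^{\shortmid }\mathring{g}^{7}$ only on $(x^{i_{3}},E)$. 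This is exactly the condition that makes the coefficients of (\ref{offdiagpolf}) depend on the same variables as in Theorem \ref{gensoltorsion}, so it is not an extra hypothesis but a consistency requirement on the choice of prime s-metric and polarizations; I would state it explicitly as part of the conclusion.

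The main obstacle I anticipate is bookkeeping rather than conceptual: organizing the chain of reparameterizations $\ _{s}^{\shortmid }\widehat{\Upsilon }\leftrightarrow \ _{s}^{\shortmid }\Psi \leftrightarrow \ _{s}^{\shortmid }\Phi \leftrightarrow (g_{4},g^{5},g^{7})\leftrightarrow (\ ^{\shortmid }\eta _{4},\ ^{\shortmid }\eta ^{5},\ ^{\shortmid }\eta ^{7})$ consistently across the four shells, with the correct momentum-type (rather than velocity-type) differentiations $\partial ^{6}$ and $\ ^{\ast }=\partial _{E}$ on shells $s=3,4$, and carrying the prime factors $\ ^{\shortmid }\mathring{g}_{\alpha _{s}}$, $\ ^{\shortmid }\mathring{N}_{i_{s}}^{a_{s}}$ through every integral and partial derivative without sign or index-placement errors. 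Everything else is a routine substitution into formulas already proved in the preceding Theorems, Corollaries and Consequences.
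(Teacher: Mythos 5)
Your proposal is correct and follows essentially the same route as the paper: the corollary is obtained by substituting $g_{4}=\ ^{\shortmid }\eta _{4}\ ^{\shortmid }\mathring{g}_{4}$, $g^{5}=\ ^{\shortmid }\eta ^{5}\ ^{\shortmid }\mathring{g}^{5}$, $g^{7}=\ ^{\shortmid }\eta ^{7}\ ^{\shortmid }\mathring{g}^{7}$ into the quadratic element (\ref{offdsolgenfgcosmc}) of Consequence \ref{offsolgenerfg} and reading off the dependent coefficients from the already-integrated shell equations. Your explicit flagging of the shell-adaptedness condition on the products $\ ^{\shortmid }\eta \ ^{\shortmid }\mathring{g}$ matches the remark the paper attaches to Convention \ref{gravitpolarizations}.
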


Similar $\eta $-polarization redefinitions are possible for generating
functions $\ _{s}^{\shortmid }\Psi $ and $\ _{s}^{\shortmid }\widetilde{\Psi
}$ and respective solutions (\ref{qeltors}) and/or (\ref{offdiagcosmcsh})
(for LC-configurations, see (\ref{qellc}) and/or (\ref{offidiagcosmconstlc}%
)). We omit such constructions in this work.

\subsection{Nonholonomic transforms with a small parameter to
quasi-stationary s-metrics}

To compute possible nonholonomic deformations by MDR of physically important
solutions (for instance, to construct ellipsoid like configurations as in
various MGTs \cite{gvvepjc14,rajpoot15,vacaruepjc17,vmon06,vijtp10}) we can
consider certain decompositions linearised on a small parameter $\varepsilon
,0\leq \varepsilon <1.$

\begin{definition}
\label{defsmalpnd} \textsf{[gravitational polarizations with a small
parameter] } Small parametric $\varepsilon $--decompositions of the $\eta $%
-polarization functions (\ref{coeftargpol}) resulting in quasi-stationary
configurations defined by parameterizations
\begin{eqnarray}
\ ^{\shortmid }g_{i_{1}}(x^{k_{1}}) &=&\ ^{\shortmid }\eta _{i_{k}}\
^{\shortmid }\mathring{g}_{i_{1}}=\zeta _{i_{1}}(1+\varepsilon \chi
_{i_{1}})\ ^{\shortmid }\mathring{g}_{i_{1}}=  \label{edecomsm} \\
&=&\{\zeta _{i_{1}}(x^{i_{1}},y^{a_{2}},p_{a_{3}},p_{a_{4}})[1+\varepsilon
\chi _{i_{1}}(x^{i_{1}},y^{a_{2}},p_{a_{3}},p_{a_{4}})]\}\ ^{\shortmid }%
\mathring{g}_{i_{1}}(x^{i_{1}},y^{a_{2}},p_{a_{3}},p_{a_{4}}),  \notag \\
\ ^{\shortmid }g_{b_{2}}(x^{i_{1}},y^{3}) &=&\ ^{\shortmid }\eta _{b_{2}}\
^{\shortmid }\mathring{g}_{b_{1}}=\zeta _{b_{2}}(1+\varepsilon \ \chi
_{b_{2}})\ \ ^{\shortmid }\mathring{g}_{b_{1}}=  \notag \\
&=&\{\zeta _{b_{2}}(x^{i_{1}},y^{a_{2}},p_{a_{3}},p_{a_{4}})[1+\varepsilon \
\chi _{b_{2}}(x^{i_{1}},y^{a_{2}},p_{a_{3}},p_{a_{4}})]\}\ ^{\shortmid }%
\mathring{g}_{b_{1}}(x^{i_{1}},y^{a_{2}},p_{a_{3}},p_{a_{4}}),  \notag \\
\ ^{\shortmid }g_{a_{3}}(x^{i_{2}},p_{6}) &=&\ ^{\shortmid }\eta ^{a_{3}}\
^{\shortmid }\mathring{g}^{a_{3}}=\zeta ^{a_{3}}(1+\varepsilon \ \chi
^{a_{3}})\ ^{\shortmid }\mathring{g}^{a_{3}}=  \notag \\
&=&\{\zeta ^{a_{3}}(x^{i_{1}},y^{b_{2}},p_{b_{3}},p_{b_{4}})\ [1+\varepsilon
\ \chi ^{a_{3}}(x^{i_{1}},y^{b_{2}},p_{b_{3}},p_{b_{4}})]\}\ ^{\shortmid }%
\mathring{g}^{a_{3}}(x^{i_{1}},y^{b_{2}},p_{b_{3}},p_{b_{4}}),  \notag \\
\ ^{\shortmid }g_{a_{4}}(x^{i_{3}},E) &=&\ ^{\shortmid }\eta ^{a_{4}}\
^{\shortmid }\mathring{g}^{a_{4}}=\zeta ^{a_{4}}(1+\varepsilon \ \chi
^{a_{4}})\ ^{\shortmid }\mathring{g}^{a_{4}}=  \notag \\
&=&\{\zeta ^{a_{4}}(x^{i_{1}},y^{b_{2}},p_{b_{3}},p_{b_{4}})[1+\varepsilon \
\chi ^{a_{4}}(x^{i_{1}},y^{b_{2}},p_{b_{3}},p_{b_{4}})]\}\ ^{\shortmid }%
\mathring{g}^{a_{4}}(x^{i_{1}},y^{a_{2}},p_{a_{3}},p_{a_{4}})  \notag
\end{eqnarray}%
and (for N-connection coefficients)
\begin{eqnarray}
\ ^{\shortmid }N_{i_{1}}^{a_{2}}(x^{k_{1}},y^{3}) &=&\ ^{\shortmid }\eta
_{i_{1}}^{a_{2}}\ ^{\shortmid }\mathring{N}_{i_{1}}^{a_{2}}=\zeta
_{i_{1}}^{a_{2}}(1+\varepsilon \ \chi _{i_{1}}^{a_{2}})\ ^{\shortmid }%
\mathring{N}_{i_{1}}^{a_{2}}=  \label{edecompncs} \\
&=&\{\zeta
_{i_{1}}^{a_{2}}(x^{i_{1}},y^{b_{2}},p_{b_{3}},p_{b_{4}})[1+\varepsilon \
\chi _{i_{1}}^{a_{2}}(x^{i_{1}},y^{b_{2}},p_{b_{3}},p_{b_{4}})]\}\
^{\shortmid }\mathring{N}%
_{i_{1}}^{a_{2}}(x^{i_{1}},y^{b_{2}},p_{b_{3}},p_{b_{4}}),  \notag \\
\ ^{\shortmid }N_{i_{2}}^{a_{3}}(x^{k_{1}},y^{b_{2}},p_{6}) &=&\ ^{\shortmid
}\eta _{i_{2}a_{3}}\ ^{\shortmid }\mathring{N}_{i_{2}a_{3}}=\zeta
_{i_{2}a_{3}}(1+\varepsilon \chi _{i_{2}a_{3}})\ ^{\shortmid }\mathring{N}%
_{i_{2}a_{3}}=  \notag \\
&=&\{\ \zeta
_{i_{2}a_{3}}(x^{i_{1}},y^{b_{2}},p_{b_{3}},p_{b_{4}})[1+\varepsilon \chi
_{i_{2}a_{3}}(x^{i_{1}},y^{b_{2}},p_{b_{3}},p_{b_{4}})]\}\ ^{\shortmid }%
\mathring{N}_{i_{2}a_{3}}(x^{i_{1}},y^{b_{2}},p_{b_{3}},p_{b_{4}}),  \notag
\\
\ ^{\shortmid }N_{i_{3}}^{a_{4}}(x^{k_{1}},y^{b_{2}},p_{a_{3}},E) &=&\
^{\shortmid }\eta _{i_{3}a_{4}}\ ^{\shortmid }\mathring{N}%
_{i_{3}a_{4}}=\zeta _{i_{3}a_{4}}(1+\varepsilon \chi _{i_{3}a_{4}})\
^{\shortmid }\mathring{N}_{i_{3}a_{4}}=  \notag \\
&=&\{\zeta
_{i_{3}a_{4}}(x^{i_{1}},y^{b_{2}},p_{b_{3}},p_{b_{4}})[1+\varepsilon \chi
_{i_{3}a_{4}}(x^{i_{1}},y^{b_{2}},p_{b_{3}},p_{b_{4}})]\}\ ^{\shortmid }%
\mathring{N}_{i_{3}a_{4}}(x^{i_{1}},y^{b_{2}},p_{b_{3}},p_{b_{4}}),  \notag
\end{eqnarray}%
result in nonholonomic deformations of s-metrics on $_{s}\mathbf{T}^{\ast }%
\mathbf{V,}$
\begin{equation}
\ _{s}^{\shortmid }\mathbf{\mathring{g}}\rightarrow \ _{s}^{\shortmid
\varepsilon }\mathbf{g}=[\ ^{\shortmid }g_{\alpha _{s}}=\zeta _{\alpha
_{s}}(1+\varepsilon \chi _{\alpha _{s}})\ ^{\shortmid }\mathring{g}_{\alpha
_{s}},\ ^{\shortmid }N_{i_{s}}^{a_{s}}=\zeta
_{i_{s-1}}^{a_{s}}(1+\varepsilon \ \chi _{i_{s-1}}^{a_{s}})\ ^{\shortmid }%
\mathring{N}_{i_{s-1}}^{a_{s}}].  \label{epstargsm}
\end{equation}
\end{definition}

We can formulate and prove two important results on nonholonomic $%
\varepsilon $-deformations of prime metrics into target metrics, $\
_{s}^{\shortmid }\mathbf{\mathring{g}}\rightarrow \ _{s}^{\shortmid
\varepsilon }\mathbf{g.}$

\begin{theorem}
\label{epsilongeneration}\textsf{[generating exact solutions with small
parameters for generating functions and sources] } Nonholonomic small
parametric $\varepsilon $--decompositions stated by Definition \ref%
{defsmalpnd} transform a prime metric $\ _{s}^{\shortmid }\mathbf{\mathring{g%
}}$ into an target exact solution $\ _{s}^{\shortmid \varepsilon }\mathbf{g}$
similar to (\ref{offdiagpolf}) with linear parametric dependense on $%
\varepsilon $ if the $\zeta $- and $\chi $-coefficients for deformations of
s-metrics and N-connections in respective formulas (\ref{edecomsm}) \ and (%
\ref{edecompncs}) are generated by shell data
\begin{equation}
\ \ ^{\shortmid }\eta _{2}=\zeta _{2}(1+\varepsilon \chi _{2}),\ \
^{\shortmid }\eta _{4}=\zeta _{4}(1+\varepsilon \chi _{4}),\ ^{\shortmid
}\eta ^{5}=\zeta ^{5}(1+\varepsilon \ \chi ^{5}),\ ^{\shortmid }\eta
^{7}=\zeta ^{7}(1+\varepsilon \ \chi ^{7}),  \label{epsilongenfdecomp}
\end{equation}%
following formulas for $s=1:\ \zeta _{i_{1}}=(\mathring{g}%
_{i_{1}})^{-1}e^{\psi _{0}(x^{k_{1}})}$ and $\chi _{i_{1}}=(\mathring{g}%
_{i_{1}})^{-1}\ ^{\psi }\chi (x^{k_{1}})$, \newline
where $\zeta _{i_{1}}(1+\varepsilon \chi _{i_{1}})\ ^{\shortmid }\mathring{g}%
_{i_{1}}=e^{\psi (x^{k_{1}})}\approx e^{\psi _{0}(x^{k_{1}})(1+\varepsilon \
^{\psi }\chi (x^{k_{1}}))}\approx e^{\psi _{0}(x^{k_{1}})}(1+\varepsilon \
^{\psi }\chi )$ for $\psi _{0}(x^{k_{1}})$ and $\chi (x^{k_{1}})$ defined by
a solution of a 2-d Poisson equation (\ref{eq1});\newline
$s=2$ (with generating functions, $\zeta _{4},\chi _{4};$ generating source
and cosmological constant$,\ _{2}^{\shortmid }\widehat{\Upsilon },\
_{2}^{\shortmid }\Lambda ;$ integration functions$,\ _{1}n_{k_{1}},\
_{2}n_{k_{1}};$ prescribed data for a prime s-metric, $(\ ^{\shortmid }%
\mathring{g}_{3},\ ^{\shortmid }\mathring{g}_{4};\ ^{\shortmid }\mathring{N}%
_{i_{1}}^{3},\ ^{\shortmid }\mathring{N}_{k_{1}}^{4})):$%
\begin{eqnarray*}
\zeta _{3} &=&-\frac{4}{\ ^{\shortmid }\mathring{g}_{3}}\frac{[(|\ \zeta
_{4}\ ^{\shortmid }\mathring{g}_{4}|^{1/2})^{\diamond }]^{2}}{|\int
dy^{3}\{(\ _{2}^{\shortmid }\widehat{\Upsilon })(\ \zeta _{4}\ ^{\shortmid }%
\mathring{g}_{4})^{\diamond }\}|}\mbox{ and }\chi _{3}=\frac{(\chi _{4}|\
\zeta _{4}\ ^{\shortmid }\mathring{g}_{4}|^{1/2})^{\diamond }}{4(|\ \zeta
_{4}\ ^{\shortmid }\mathring{g}_{4}|^{1/2})^{\diamond }}-\frac{\int
dy^{3}\{[(\ _{2}^{\shortmid }\widehat{\Upsilon })\ (\zeta _{4}\ ^{\shortmid }%
\mathring{g}_{4})\chi _{4}]^{\diamond }\}}{\int dy^{3}\{(\ _{2}^{\shortmid }%
\widehat{\Upsilon })(\ \zeta _{4}\ ^{\shortmid }\mathring{g}_{4})^{\diamond
}\}}, \\
\zeta _{i_{1}}^{3} &=&\frac{\partial _{i_{1}}\ \int dy^{3}(\ _{2}^{\shortmid
}\widehat{\Upsilon })\ (\zeta _{4})^{\diamond }}{(\ ^{\shortmid }\mathring{N}%
_{i_{1}}^{3})(\ _{2}^{\shortmid }\widehat{\Upsilon })(\zeta _{4})^{\diamond }%
}\mbox{ and }\chi _{i_{1}}^{3}=\frac{\partial _{i_{1}}[\int dy^{3}(\
_{2}^{\shortmid }\widehat{\Upsilon })(\zeta _{4}\chi _{4})^{\diamond }]}{%
\partial _{i_{1}}\ [\int dy^{3}(\ _{2}^{\shortmid }\widehat{\Upsilon }%
)(\zeta _{4})^{\diamond }]}-\frac{(\zeta _{4}\chi _{4})^{\diamond }}{(\zeta
_{4})^{\diamond }}, \\
\zeta _{k_{1}}^{4} &=&\ (\ ^{\shortmid }\mathring{N}_{k_{1}}^{4})^{-1}[\
_{1}n_{k_{1}}+16\ _{2}n_{k_{1}}[\int dy^{3}\{\frac{\left( [(\ \zeta _{4}\
^{\shortmid }\mathring{g}_{4})^{-1/4}]^{\diamond }\right) ^{2}}{|\int
dy^{3}(\ _{2}^{\shortmid }\widehat{\Upsilon })(\zeta _{4}\ ^{\shortmid }%
\mathring{g}_{4})^{\diamond }|}]\mbox{ and } \\
\chi _{k_{1}}^{4} &=&\ -\frac{16\ _{2}n_{k_{1}}\int dy^{3}\frac{\left( [(\
\zeta _{4}\ ^{\shortmid }\mathring{g}_{4})^{-1/4}]^{\diamond }\right) ^{2}}{%
|\int dy^{3}(\ _{2}^{\shortmid }\widehat{\Upsilon })[(\zeta _{4}\
^{\shortmid }\mathring{g}_{4})]^{\diamond }|}\left( \frac{[(\ \zeta _{4}\
^{\shortmid }\mathring{g}_{4})^{-1/4}\chi _{4})]^{\diamond }}{2[(\ \zeta
_{4}\ ^{\shortmid }\mathring{g}_{4})^{-1/4}]^{\diamond }}+\frac{\int
dy^{3}[(\ _{2}^{\shortmid }\widehat{\Upsilon })(\zeta _{4}\chi _{4}\
^{\shortmid }\mathring{g}_{4})]^{\diamond }}{\int dy^{3}(\ _{2}^{\shortmid }%
\widehat{\Upsilon })(\zeta _{4}\ ^{\shortmid }\mathring{g}_{4})^{\diamond }}%
\right) }{\ _{1}n_{k_{1}}+16\ _{2}n_{k_{1}}[\int dy^{3}\frac{\left( [(\
\zeta _{4}\ ^{\shortmid }\mathring{g}_{4})^{-1/4}]^{\diamond }\right) ^{2}}{%
|\int dy^{3}(\ _{2}^{\shortmid }\widehat{\Upsilon })[(\zeta _{4}\
^{\shortmid }\mathring{g}_{4})]^{\diamond }|}]};
\end{eqnarray*}%
$s=3$ (with generating functions, $\zeta ^{5},\chi ^{5};$ generating source
and cosmological constant$,\ _{3}^{\shortmid }\widehat{\Upsilon },\
_{3}^{\shortmid }\Lambda ;$ integration functions$,\ _{1}n_{k_{3}},\
_{2}n_{k_{3}};$ prescribed data for a prime s-metric, $(\ ^{\shortmid }%
\mathring{g}^{5},\ ^{\shortmid }\mathring{g}^{6};\ ^{\shortmid }\mathring{N}%
_{k_{2}5},\ ^{\shortmid }\mathring{N}_{i_{2}6}):$%
\begin{eqnarray*}
\zeta ^{6} &=&-\frac{4}{\ ^{\shortmid }\mathring{g}^{6}}\frac{[\partial
^{6}(|\ \zeta ^{5}\ ^{\shortmid }\mathring{g}^{5}|^{1/2})]^{2}}{|\int
dp_{6}\{(\ _{3}^{\shortmid }\widehat{\Upsilon })\partial ^{6}(\zeta ^{5}\
^{\shortmid }\mathring{g}^{5})\}|}\mbox{ and }\chi ^{6}=\frac{\partial
^{6}(\chi ^{5}|\ \zeta ^{5}\ ^{\shortmid }\mathring{g}^{5}|^{1/2})}{%
4\partial ^{6}(|\ \zeta ^{5}\ ^{\shortmid }\mathring{g}^{5}|^{1/2})}-\frac{%
\int dp_{6}\{\partial ^{6}[(\ _{3}^{\shortmid }\widehat{\Upsilon })\ (\zeta
^{5}\ ^{\shortmid }\mathring{g}^{5})\chi ^{5}]\}}{\int dp_{6}\{(\
_{3}^{\shortmid }\widehat{\Upsilon })\partial ^{6}(\zeta ^{5}\ ^{\shortmid }%
\mathring{g}^{5})\}}, \\
\zeta _{i_{2}5} &=&\ (\ ^{\shortmid }\mathring{N}_{i_{2}5})^{-1}[\
_{1}n_{i_{2}}+16\ _{2}n_{i_{2}}[\int dp_{6}\{\frac{\left( \partial
^{6}[(\zeta ^{5}\ ^{\shortmid }\mathring{g}^{5})^{-1/4}]\right) ^{2}}{|\int
dp_{6}\ (\ _{3}^{\shortmid }\widehat{\Upsilon })\partial ^{6}(\zeta ^{5}\
^{\shortmid }\mathring{g}^{5})|}]\mbox{ and } \\
\chi _{i_{2}5} &=&\ -\frac{16\ _{2}n_{i_{2}}\int dp_{6}\frac{\left( \partial
^{6}[(\ \zeta ^{5}\ ^{\shortmid }\mathring{g}^{5})^{-1/4}]\right) ^{2}}{%
|\int dp_{6}\ (\ _{3}^{\shortmid }\widehat{\Upsilon })\partial ^{6}(\ \zeta
^{5}\ ^{\shortmid }\mathring{g}^{5})|}\left( \frac{\ \partial ^{6}[(\ \zeta
^{5}\ ^{\shortmid }\mathring{g}^{5})^{-1/4}\chi ^{5})]}{2\ \partial ^{6}[(\
\zeta ^{5}\ ^{\shortmid }\mathring{g}^{5})^{-1/4}]}+\frac{\int dp_{6}\
\partial ^{6}[(\ _{3}^{\shortmid }\widehat{\Upsilon })(\zeta ^{5}\
^{\shortmid }\mathring{g}^{5})\chi ^{5}]}{\int dp_{6}\ (\ _{3}^{\shortmid }%
\widehat{\Upsilon })\partial ^{6}(\zeta ^{5}\ ^{\shortmid }\mathring{g}^{5})}%
\right) }{\ _{1}n_{i_{2}}+16\ _{2}n_{i_{2}}[\int dp_{6}\frac{\left( \
\partial ^{6}[(\ \zeta ^{5}\ ^{\shortmid }\mathring{g}^{5})^{-1/4}]\right)
^{2}}{|\int dp_{6}\ (\ _{3}^{\shortmid }\widehat{\Upsilon })\partial
^{6}(\zeta ^{5}\ ^{\shortmid }\mathring{g}^{5})|}]};
\end{eqnarray*}%
\begin{equation*}
\zeta _{i_{2}6}=\frac{\partial _{i_{2}}\ \int dp_{6}(\ _{3}^{\shortmid }%
\widehat{\Upsilon })\ \partial ^{6}(\zeta ^{5})}{(\ ^{\shortmid }\mathring{N}%
_{i_{2}6})(\ _{3}^{\shortmid }\widehat{\Upsilon })\partial ^{6}(\zeta ^{5})}%
\mbox{ and }\chi _{i_{2}6}=\frac{\partial _{i_{2}}[\int dp_{6}(\
_{3}^{\shortmid }\widehat{\Upsilon })\partial ^{6}(\zeta ^{5}\ ^{\shortmid }%
\mathring{g}^{5})]}{\partial _{i_{2}}\ [\int dp_{6}(\ _{3}^{\shortmid }%
\widehat{\Upsilon })\partial ^{6}(\zeta ^{5})]}-\frac{\partial ^{6}(\zeta
^{5}\ ^{\shortmid }\mathring{g}^{5})}{\partial ^{6}(\zeta ^{5})},
\end{equation*}

$s=4$ (with generating functions, $\zeta ^{7},\chi ^{7};$ generating source
and cosmological constant$,\ _{4}^{\shortmid }\widehat{\Upsilon },\
_{4}^{\shortmid }\Lambda ;$ integration functions$,\ _{1}n_{k_{4}},\
_{2}n_{k_{4}};$ prescribed data for a prime s-metric, $(\ ^{\shortmid }%
\mathring{g}^{7},\ ^{\shortmid }\mathring{g}^{8};\ ^{\shortmid }\mathring{N}%
_{k_{3}7},\ ^{\shortmid }\mathring{N}_{i_{3}8}):$%
\begin{eqnarray*}
\zeta ^{8} &=&-\frac{4}{\ ^{\shortmid }\mathring{g}^{8}}\frac{[(|\ \zeta
^{7}\ ^{\shortmid }\mathring{g}^{7}|^{1/2})^{\ast }]^{2}}{|\int dE\{(\
_{4}^{\shortmid }\widehat{\Upsilon })[(\zeta ^{7}\ ^{\shortmid }\mathring{g}%
^{7})]^{\ast }\}|}\mbox{ and }\chi ^{8}=\frac{(\chi ^{7}|\ \zeta ^{7}\
^{\shortmid }\mathring{g}^{7}|^{1/2})^{\ast }}{4(|\ \zeta ^{7}\ ^{\shortmid }%
\mathring{g}^{7}|^{1/2})^{\ast }}-\frac{\int dE\{[(\ _{4}^{\shortmid }%
\widehat{\Upsilon })\ (\zeta ^{7}\ ^{\shortmid }\mathring{g}^{7})\chi
^{7}]^{\ast }\}}{\int dE\{(\ _{4}^{\shortmid }\widehat{\Upsilon })(\zeta
^{4}\ ^{\shortmid }\mathring{g}^{4})^{\ast }\}}, \\
\zeta _{i_{3}7} &=&\ (\ ^{\shortmid }\mathring{N}_{i_{3}7})^{-1}[\
_{1}n_{i_{3}}+16\ _{2}n_{i_{3}}[\int dE\{\frac{\left( [(\zeta ^{7}\
^{\shortmid }\mathring{g}^{7})^{-1/4}]^{\ast }\right) ^{2}}{|\int dE\ (\
_{4}^{\shortmid }\widehat{\Upsilon })(\zeta ^{7}\ ^{\shortmid }\mathring{g}%
^{7})^{\ast }|}]\mbox{ and } \\
\chi _{i_{3}7} &=&\ -\frac{16\ _{2}n_{i_{3}}\int dE\frac{\left( [(\ \zeta
^{7}\ ^{\shortmid }\mathring{g}^{7})^{-1/4}]^{\ast }\right) ^{2}}{|\int dE\
(\ _{4}^{\shortmid }\widehat{\Upsilon })(\ \zeta ^{7}\ ^{\shortmid }%
\mathring{g}^{7})^{\ast }|}\left( \frac{[(\ \zeta ^{7}\ ^{\shortmid }%
\mathring{g}^{7})^{-1/4}\chi ^{7})]^{\ast }}{2\ [(\ \zeta ^{7}\ ^{\shortmid }%
\mathring{g}^{7})^{-1/4}]^{\ast }}+\frac{\int dE\ (\ _{4}^{\shortmid }%
\widehat{\Upsilon })[(\zeta ^{7}\ ^{\shortmid }\mathring{g}^{7})\chi
^{7}]^{\ast }}{\int dE\ (\ _{4}^{\shortmid }\widehat{\Upsilon })(\zeta ^{7}\
^{\shortmid }\mathring{g}^{7})^{\ast }}\right) }{\ _{1}n_{i_{3}}+16\
_{2}n_{i_{3}}[\int dE\frac{\left( \ [(\ \zeta ^{7}\ ^{\shortmid }\mathring{g}%
^{7})^{-1/4}]^{\ast }\right) ^{2}}{|\int dE\ (\ _{4}^{\shortmid }\widehat{%
\Upsilon })(\zeta ^{7}\ ^{\shortmid }\mathring{g}^{7})^{\ast }|}]}, \\
\zeta _{i_{3}8} &=&\frac{\partial _{i_{3}}\ \int dE(\ _{4}^{\shortmid }%
\widehat{\Upsilon })\ (\zeta ^{7})^{\ast }}{(\ ^{\shortmid }\mathring{N}%
_{i_{3}8})(\ _{4}^{\shortmid }\widehat{\Upsilon })(\zeta ^{7})^{\ast }}%
\mbox{ and }\chi _{i_{3}7}=\frac{\partial _{i_{3}}[\int dE(\ _{4}^{\shortmid
}\widehat{\Upsilon })(\zeta ^{7}\ ^{\shortmid }\mathring{g}^{7})^{\ast }]}{%
\partial _{i_{3}}\ [\int dE(\ _{4}^{\shortmid }\widehat{\Upsilon })(\zeta
^{7})^{\ast }]}-\frac{(\zeta ^{7}\ ^{\shortmid }\mathring{g}^{7})^{\ast }}{%
(\zeta ^{7})^{\ast }}.
\end{eqnarray*}
\end{theorem}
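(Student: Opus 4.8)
The plan is to reduce Theorem \ref{epsilongeneration} to the exact solution formulas already established in Corollary \ref{nqelgravpol} by substituting the $\varepsilon$-linearized ansatz (\ref{epstargsm}) and collecting terms order by order in $\varepsilon$. First I would fix the shell structure: the claim is really four independent claims, one per shell $s=1,2,3,4$, and within each shell the computation is the $\varepsilon$-expansion of the corresponding block of the generic quasi-stationary solution (\ref{offdiagpolf}). For $s=1$ the statement is elementary: substituting $\ ^{\shortmid }g_{i_1}=\zeta_{i_1}(1+\varepsilon\chi_{i_1})\ ^{\shortmid }\mathring g_{i_1}=e^{\psi}$ with $\psi\approx\psi_0(1+\varepsilon\ ^{\psi}\chi)$ into the 2-d Poisson equation (\ref{eq1}) and keeping only the zeroth and first order terms gives the stated $\zeta_{i_1},\chi_{i_1}$ — this is just the linearization of $\psi^{\bullet\bullet}+\psi^{\prime\prime}=2\ _{1}^{\shortmid }\widehat{\Upsilon}$ around $\psi_0$, so the only content is that $\ ^{\psi}\chi$ solves the same Poisson operator with the appropriately linearized source.

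For the fiber shells $s=2,3,4$ the strategy is identical in form (the three cases differ only by which partial derivative $\diamond$, $\partial^6$, $\ast$ plays the role of the "vertical" derivative and which coordinates are horizontal), so I would prove $s=2$ in detail and remark that $s=3,4$ follow by the $\mathcal L$-duality and variable relabeling already invoked in the proof of Theorem \ref{theordecoupl}. For $s=2$: start from the exact coefficients $g_3[\ _{2}^{\shortmid }\Psi],g_4[\ _{2}^{\shortmid }\Psi]$ and the N-connection coefficients $w_{i_1},n_{k_1}$ as written in Theorem \ref{gensoltorsion} / Corollary \ref{corolgenfg}, impose $g_4=\ ^{\shortmid }\eta_4\ ^{\shortmid }\mathring g_4$ with $\ ^{\shortmid }\eta_4=\zeta_4(1+\varepsilon\chi_4)$, and compute $g_3,\ w_{i_1},\ n_{k_1}$ to first order in $\varepsilon$. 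Writing $g_4^{\diamond}=(\zeta_4\ ^{\shortmid }\mathring g_4)^{\diamond}(1+\varepsilon\chi_4)+\varepsilon(\zeta_4\ ^{\shortmid }\mathring g_4)(\chi_4)^{\diamond}+O(\varepsilon^2)$ and expanding the formula $g_3=-(g_4^{\diamond})^2/[\,|\int dy^3[(\ _{2}^{\shortmid }\widehat{\Upsilon})g_4]^{\diamond}|\,g_4\,]$ as $\zeta_3(1+\varepsilon\chi_3)\ ^{\shortmid }\mathring g_3$, I would read off $\zeta_3$ from the zeroth order and $\chi_3$ from the ratio of first to zeroth order — this yields exactly the displayed $\zeta_3,\chi_3$. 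The same bookkeeping applied to $w_{i_1}=\partial_{i_1}(\ _{2}^{\shortmid }\Psi)/(\ _{2}^{\shortmid }\Psi)^{\diamond}$ (using Corollary \ref{corolgenfg} to rewrite $(\ _{2}^{\shortmid }\Psi)^2$ in terms of $g_4$, hence of $\zeta_4,\chi_4$) gives $\zeta_{i_1}^3,\chi_{i_1}^3$, and applied to the $n$-integral gives $\zeta_{k_1}^4,\chi_{k_1}^4$; the factor $16$ and the power $-1/4$ in the $n$-formulas come from rewriting $g_3/|g_4|^{3/2}$ via $(|g_4|^{-1/4})^{\diamond}$, exactly as in the analogous tangent-bundle computations in \cite{gvvepjc14,vacaruepjc17}.

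The final check is consistency: one must verify that the $\varepsilon$-linearized coefficients still satisfy the decoupled system (\ref{eq1})--(\ref{eq4c}) to first order (so that the target is genuinely an exact solution, not merely an approximate one) — but this is immediate because the nonlinear symmetry transformations (\ref{ntransf1})--(\ref{ntransf3}) of Theorem \ref{nsymgfs} and the substitution itself preserve the structure of the equations, so linearizing commutes with solving. I expect the main obstacle to be purely organizational rather than conceptual: keeping the chain of substitutions $\eta\to g_4\to\ _{2}^{\shortmid }\Psi\to\ _{2}^{\shortmid }\Phi$ straight under the $\varepsilon$-expansion, and correctly tracking which of the (many) integration functions $\ _{1}n,\ _{2}n,g^{[0]}$ and effective sources $\ _{s}^{\shortmid }\widehat{\Upsilon}$ survive at each order on each shell. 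A secondary subtlety is that the products $\ ^{\shortmid }\eta\ ^{\shortmid }\mathring g$ may individually depend on extra-shell coordinates while the product must respect the ordered shell dependence $s=1,2,3,4$; one must check this compatibility is preserved by the $\zeta$-$\chi$ split, which follows from the same ordering property already noted after Convention \ref{gravitpolarizations}.
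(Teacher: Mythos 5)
Your proposal follows essentially the same route as the paper's own proof in Appendix A.6: substitute $\ ^{\shortmid}\eta=\zeta(1+\varepsilon\chi)$ into the $\eta$-polarization formulas (\ref{offdiagpolf})--(\ref{noffdiagpolf}), expand each s-metric and N-connection coefficient to first order in $\varepsilon$, and read off $\zeta$ from the zeroth-order part and $\chi$ from the ratio of first to zeroth order, treating $s=1$ as a linearized Poisson equation and $s=3,4$ by relabeling the vertical derivative and coordinates. The bookkeeping you describe (including the origin of the factor $16$ and the $-1/4$ power in the $n$-coefficients) matches the paper's computation, so the proposal is correct and not a genuinely different argument.
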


\begin{proof}
See a sketch of proof in appendix \ref{assepsilon}. $\square $\vskip5pt
\end{proof}

Introducing the $\varepsilon $-coefficients from this theorem instead of $%
\eta $-coefficients of (\ref{offdiagpolf}) we prove

\begin{consequence}
\label{nqelgravpoleps}\textsf{[nonlinear quadratic elements for
gravitational polarizations with small parameter] } The quasi-stationary
phase configurations on cotangent Lorentz bundles described in terms of $%
\eta $-polarization functions by formulas (\ref{offdiagpolf}) can be written
in terms of $\varepsilon $-coefficients%
\begin{equation*}
ds^{2}=g_{\alpha _{s}\beta
_{s}}(x^{k},y^{3},p_{a_{3}},p_{a_{4}};g_{4},g^{5},g^{7},_{\shortmid
s}\Lambda ;\ _{s}^{\shortmid }\widehat{\Upsilon },\varepsilon )du^{\alpha
_{s}}du^{\beta _{s}}=e^{\psi _{0}(x^{k_{1}})}(1+\varepsilon \ ^{\psi }\chi
)[(dx^{1})^{2}+(dx^{2})^{2}]+
\end{equation*}%
\begin{eqnarray*}
&&\{-\frac{4}{\ ^{\shortmid }\mathring{g}_{3}}\frac{[(|\ \zeta _{4}\
^{\shortmid }\mathring{g}_{4}|^{1/2})^{\diamond }]^{2}}{|\int dy^{3}\{(\
_{2}^{\shortmid }\widehat{\Upsilon })(\ \zeta _{4}\ ^{\shortmid }\mathring{g}%
_{4})^{\diamond }\}|}+\varepsilon \left[ \frac{(\chi _{4}|\ \zeta _{4}\
^{\shortmid }\mathring{g}_{4}|^{1/2})^{\diamond }}{4(|\ \zeta _{4}\
^{\shortmid }\mathring{g}_{4}|^{1/2})^{\diamond }}-\frac{\int dy^{3}\{(\
_{2}^{\shortmid }\widehat{\Upsilon })[(\zeta _{4}\ ^{\shortmid }\mathring{g}%
_{4})\chi _{4}]^{\diamond }\}}{\int dy^{3}\{(\ _{2}^{\shortmid }\widehat{%
\Upsilon })(\ \zeta _{4}\ ^{\shortmid }\mathring{g}_{4})^{\diamond }\}}%
\right] \}\ ^{\shortmid }\mathring{g}_{3} \\
&&\{dy^{3}+[\frac{\partial _{i_{1}}\ \int dy^{3}(\ _{2}^{\shortmid }\widehat{%
\Upsilon })\ (\zeta _{4})^{\diamond }}{(\ ^{\shortmid }\mathring{N}%
_{i_{1}}^{3})(\ _{2}^{\shortmid }\widehat{\Upsilon })(\zeta _{4})^{\diamond }%
}+\varepsilon \left( \frac{\partial _{i_{1}}[\int dy^{3}(\ _{2}^{\shortmid }%
\widehat{\Upsilon })(\zeta _{4}\chi _{4})^{\diamond }]}{\partial _{i_{1}}\
[\int dy^{3}(\ _{2}^{\shortmid }\widehat{\Upsilon })(\zeta _{4})^{\diamond }]%
}-\frac{(\zeta _{4}\chi _{4})^{\diamond }}{(\zeta _{4})^{\diamond }}\right)
](\ ^{\shortmid }\mathring{N}_{i_{1}}^{3})dx^{i_{1}}\}^{2}+
\end{eqnarray*}%
\begin{eqnarray}
&&\zeta _{4}(1+\varepsilon \ \chi _{4})\ ^{\shortmid }\mathring{g}%
_{4}\{dt+[\ (\ ^{\shortmid }\mathring{N}_{k_{1}}^{4})^{-1}\left[ \
_{1}n_{k_{1}}+16\ _{2}n_{k_{1}}[\int dy^{3}\{\frac{\left( [(\ \zeta _{4}\
^{\shortmid }\mathring{g}_{4})^{-1/4}]^{\diamond }\right) ^{2}}{|\int
dy^{3}[(\ _{2}^{\shortmid }\widehat{\Upsilon })(\zeta _{4}\ ^{\shortmid }%
\mathring{g}_{4})]^{\diamond }|}\right] -  \label{offdncelepsilon} \\
&&\varepsilon \frac{16\ _{2}n_{k_{1}}\int dy^{3}\frac{\left( [(\ \zeta _{4}\
^{\shortmid }\mathring{g}_{4})^{-1/4}]^{\diamond }\right) ^{2}}{|\int
dy^{3}[(\ _{2}^{\shortmid }\widehat{\Upsilon })(\zeta _{4}\ ^{\shortmid }%
\mathring{g}_{4})]^{\diamond }|}\left( \frac{[(\ \zeta _{4}\ ^{\shortmid }%
\mathring{g}_{4})^{-1/4}\chi _{4})]^{\diamond }}{2[(\ \zeta _{4}\
^{\shortmid }\mathring{g}_{4})^{-1/4}]^{\diamond }}+\frac{\int dy^{3}[(\
_{2}^{\shortmid }\widehat{\Upsilon })(\zeta _{4}\chi _{4}\ ^{\shortmid }%
\mathring{g}_{4})]^{\diamond }}{\int dy^{3}[(\ _{2}^{\shortmid }\widehat{%
\Upsilon })(\zeta _{4}\ ^{\shortmid }\mathring{g}_{4})]^{\diamond }}\right)
}{\ _{1}n_{k_{1}}+16\ _{2}n_{k_{1}}[\int dy^{3}\frac{\left( [(\ \zeta _{4}\
^{\shortmid }\mathring{g}_{4})^{-1/4}]^{\diamond }\right) ^{2}}{|\int
dy^{3}[(\ _{2}^{\shortmid }\widehat{\Upsilon })(\zeta _{4}\ ^{\shortmid }%
\mathring{g}_{4})]^{\diamond }|}]}](\ ^{\shortmid }\mathring{N}%
_{k_{1}}^{4})dx^{\acute{k}_{1}}\}+  \notag
\end{eqnarray}%
\begin{eqnarray*}
&&\zeta ^{5}(1+\varepsilon \ \chi ^{5})\ ^{\shortmid }\mathring{g}%
^{5}\{dp_{5}+[\ (\ ^{\shortmid }\mathring{N}_{i_{2}5})^{-1}\left[ \
_{1}n_{i_{2}}+16\ _{2}n_{i_{2}}[\int dp_{6}\{\frac{\left( \partial
^{6}[(\zeta ^{5}\ ^{\shortmid }\mathring{g}^{5})^{-1/4}]\right) ^{2}}{|\int
dp_{6}\ \partial ^{6}[(\ _{3}^{\shortmid }\widehat{\Upsilon })(\zeta ^{5}\
^{\shortmid }\mathring{g}^{5})]|}\right] - \\
&&\varepsilon \frac{16\ _{2}n_{i_{2}}\int dp_{6}\frac{\left( \partial
^{6}[(\ \zeta ^{5}\ ^{\shortmid }\mathring{g}^{5})^{-1/4}]\right) ^{2}}{%
|\int dp_{6}(\ _{3}^{\shortmid }\widehat{\Upsilon })\ \partial ^{6}[(\ \zeta
^{5}\ ^{\shortmid }\mathring{g}^{5})]|}\left( \frac{\ \partial ^{6}[(\ \zeta
^{5}\ ^{\shortmid }\mathring{g}^{5})^{-1/4}\chi ^{5})]}{2\ \partial ^{6}[(\
\zeta ^{5}\ ^{\shortmid }\mathring{g}^{5})^{-1/4}]}+\frac{\int dp_{6}\ (\
_{3}^{\shortmid }\widehat{\Upsilon })\ \partial ^{6}[(\zeta ^{5}\
^{\shortmid }\mathring{g}^{5})\chi ^{5}]}{\int dp_{6}(\ _{3}^{\shortmid }%
\widehat{\Upsilon })\ \partial ^{6}[(\zeta ^{5}\ ^{\shortmid }\mathring{g}%
^{5})]}\right) }{\ _{1}n_{i_{2}}+16\ _{2}n_{i_{2}}[\int dp_{6}\frac{\left( \
\partial ^{6}[(\ \zeta ^{5}\ ^{\shortmid }\mathring{g}^{5})^{-1/4}]\right)
^{2}}{|\int dp_{6}\ (\ _{3}^{\shortmid }\widehat{\Upsilon })\ \partial
^{6}[(\zeta ^{5}\ ^{\shortmid }\mathring{g}^{5})]|}]}](\ ^{\shortmid }%
\mathring{N}_{i_{2}5})dx^{i_{2}}\}+
\end{eqnarray*}%
\begin{eqnarray*}
&&\{-\frac{4}{\ ^{\shortmid }\mathring{g}^{6}}\frac{[\partial ^{6}(|\ \zeta
^{5}\ ^{\shortmid }\mathring{g}^{5}|^{1/2})]^{2}}{|\int dp_{6}\{(\
_{3}^{\shortmid }\widehat{\Upsilon })\partial ^{6}[(\zeta ^{5}\ ^{\shortmid }%
\mathring{g}^{5})]\}|}+\varepsilon \left[ \frac{\partial _{i_{2}}[\int
dp_{6}(\ _{3}^{\shortmid }\widehat{\Upsilon })\partial ^{6}(\zeta ^{5}\
^{\shortmid }\mathring{g}^{5})]}{\partial _{i_{2}}\ [\int dp_{6}(\
_{3}^{\shortmid }\widehat{\Upsilon })\partial ^{6}(\zeta ^{5})]}-\frac{%
\partial ^{6}(\zeta ^{5}\ ^{\shortmid }\mathring{g}^{5})}{\partial
^{6}(\zeta ^{5})}\right] \}\ ^{\shortmid }\mathring{g}^{6} \\
&&\{dp_{6}+[\frac{\partial _{i_{2}}\ \int dp_{6}(\ _{3}^{\shortmid }\widehat{%
\Upsilon })\ \partial ^{6}(\zeta ^{5})}{(\ ^{\shortmid }\mathring{N}%
_{i_{2}6})(\ _{3}^{\shortmid }\widehat{\Upsilon })\partial ^{6}(\zeta ^{5})}%
+\varepsilon \left( \frac{\partial _{i_{2}}[\int dp_{6}(\ _{3}^{\shortmid }%
\widehat{\Upsilon })\partial ^{6}(\zeta ^{5}\ ^{\shortmid }\mathring{g}^{5})]%
}{\partial _{i_{2}}\ [\int dp_{6}(\ _{3}^{\shortmid }\widehat{\Upsilon }%
)\partial ^{6}(\zeta ^{5})]}-\frac{\partial ^{6}(\zeta ^{5}\ ^{\shortmid }%
\mathring{g}^{5})}{\partial ^{6}(\zeta ^{5})}\right) ](\ ^{\shortmid }%
\mathring{N}_{i_{2}6})dx^{i_{2}}\}+
\end{eqnarray*}%
\begin{eqnarray*}
&&\zeta ^{7}(1+\varepsilon \ \chi ^{7})\ ^{\shortmid }\mathring{g}%
^{7}\{dp_{7}+[\ (\ ^{\shortmid }\mathring{N}_{i_{3}7})^{-1}\left[ \
_{1}n_{i_{3}}+16\ _{2}n_{i_{3}}[\int dE\{\frac{\left( [(\zeta ^{7}\
^{\shortmid }\mathring{g}^{7})^{-1/4}]^{\ast }\right) ^{2}}{|\int dE\ (\
_{4}^{\shortmid }\widehat{\Upsilon })[(\zeta ^{7}\ ^{\shortmid }\mathring{g}%
^{7})]^{\ast }|}\right] - \\
&&\varepsilon \frac{16\ _{2}n_{i_{3}}\int dE\frac{\left( [(\ \zeta ^{7}\
^{\shortmid }\mathring{g}^{7})^{-1/4}]^{\ast }\right) ^{2}}{|\int dE\ (\
_{4}^{\shortmid }\widehat{\Upsilon })[(\ \zeta ^{7}\ ^{\shortmid }\mathring{g%
}^{7})]^{\ast }|}\left( \frac{[(\ \zeta ^{7}\ ^{\shortmid }\mathring{g}%
^{7})^{-1/4}\chi ^{7})]^{\ast }}{2\ [(\ \zeta ^{7}\ ^{\shortmid }\mathring{g}%
^{7})^{-1/4}]^{\ast }}+\frac{\int dE\ (\ _{4}^{\shortmid }\widehat{\Upsilon }%
)[(\zeta ^{7}\ ^{\shortmid }\mathring{g}^{7})\chi ^{7}]^{\ast }}{\int dE\ (\
_{4}^{\shortmid }\widehat{\Upsilon })[(\zeta ^{7}\ ^{\shortmid }\mathring{g}%
^{7})]^{\ast }}\right) }{\ _{1}n_{i_{3}}+16\ _{2}n_{i_{3}}[\int dE\frac{%
\left( \ [(\ \zeta ^{7}\ ^{\shortmid }\mathring{g}^{7})^{-1/4}]^{\ast
}\right) ^{2}}{|\int dE\ (\ _{4}^{\shortmid }\widehat{\Upsilon })[(\zeta
^{7}\ ^{\shortmid }\mathring{g}^{7})]|^{\ast }}]}]\ (^{\shortmid }\mathring{N%
}_{i_{3}7})dx^{i_{3}}\}+
\end{eqnarray*}%
\begin{eqnarray*}
&&\{-\frac{4}{\ ^{\shortmid }\mathring{g}^{8}}\frac{[(|\ \zeta ^{7}\
^{\shortmid }\mathring{g}^{7}|^{1/2})^{\ast }]^{2}}{|\int dE\{(\
_{4}^{\shortmid }\widehat{\Upsilon })[(\zeta ^{7}\ ^{\shortmid }\mathring{g}%
^{7})]^{\ast }\}|}+\varepsilon \left[ \frac{(\chi ^{7}|\ \zeta ^{7}\
^{\shortmid }\mathring{g}^{7}|^{1/2})^{\ast }}{4(|\ \zeta ^{7}\ ^{\shortmid }%
\mathring{g}^{7}|^{1/2})^{\ast }}-\frac{\int dE\{(\ _{4}^{\shortmid }%
\widehat{\Upsilon })[(\zeta ^{7}\ ^{\shortmid }\mathring{g}^{7})\chi
^{7}]^{\ast }\}}{\int dE\{(\ _{4}^{\shortmid }\widehat{\Upsilon })[(\zeta
^{4}\ ^{\shortmid }\mathring{g}^{4})]^{\ast }\}}\right] \}\ ^{\shortmid }%
\mathring{g}^{8} \\
&&\{dE+[\frac{\partial _{i_{3}}\ \int dE(\ _{4}^{\shortmid }\widehat{%
\Upsilon })\ (\zeta ^{7})^{\ast }}{(\ ^{\shortmid }\mathring{N}_{i_{3}8})(\
_{4}^{\shortmid }\widehat{\Upsilon })(\zeta ^{7})^{\ast }}+\varepsilon
\left( \frac{\partial _{i_{3}}[\int dE(\ _{4}^{\shortmid }\widehat{\Upsilon }%
)(\zeta ^{7}\ ^{\shortmid }\mathring{g}^{7})^{\ast }]}{\partial _{i_{3}}\
[\int dE(\ _{4}^{\shortmid }\widehat{\Upsilon })(\zeta ^{7})^{\ast }]}-\frac{%
(\zeta ^{7}\ ^{\shortmid }\mathring{g}^{7})^{\ast }}{(\zeta ^{7})^{\ast }}%
\right) ](\ ^{\shortmid }\mathring{N}_{i_{3}8})dx^{i_{3}}\}.
\end{eqnarray*}
\end{consequence}

In formulas (\ref{coeftargpol}), we considered that $\eta $-polarization
functions may depend, in principle, on all phase space coordinates. For $%
\varepsilon $--polarizations, it is convenient to adapt the constructions
when respective shell coefficients depend only on the same and lower shall
coordinates. We can consider, for instance, that the prime metric$\
_{s}^{\shortmid }\mathbf{\mathring{g}}$ defines a physical important
stationary solution in a 4-d Einstein or MGT with coefficients $[\
^{\shortmid }\mathring{g}_{\alpha _{s}},\ ^{\shortmid }\mathring{N}%
_{i_{s-1}}^{a_{s}}]$ depending only on spacetime variables conventionally
split on shells $1$ and $2.$ In such cases, \ the target solutions $\
_{s}^{\shortmid \varepsilon }\mathbf{g}$ (\ref{epstargsm}) define possible
generalizations, for instance, of black hole solutions on phase spaces.

\subsection{Diagonal phase space configurations}

We can chose some special conditions on generating functions and sources
which allow to construct exact solutions in N-adapted diagonal form. In
general, such conditions depend on the type of (effective) sources,
integration functions, and prime metrics we consider.

\subsubsection{Diagonal s-metrics with constrained $\protect\eta$-functions}

We can chose special data $(\ ^{\shortmid }\eta _{4},\ ^{\shortmid }\eta
^{5},\ ^{\shortmid }\eta ^{7})$ which generate diagonal configurations.

\begin{definition}
\label{diagonasm} \textsf{[diagonal s-metric coefficients] } A s-metric is
diagonal on a shell $s$ if there are satisfied the conditions%
\begin{equation}
\ ^{\shortmid }N_{i_{s-1}}^{a_{s}}=\ ^{\shortmid }\eta _{i_{s-1}a_{s}}\
^{\shortmid }\mathring{N}_{i_{s-1}a_{s}}=0.  \label{diagonalcond}
\end{equation}
\end{definition}

The solutions of equations (\ref{diagonalcond}) depend on the type of
generating and polarization functions used for explicit constructions. By
straightforward computations, we prove

\begin{corollary}
\label{conddiagsm}\textsf{[conditions of diagonal s-metrics] } There are
diagonal quasi-stationary phase space configurations determined by such
nonholonomic constraints on $\eta $-polarization functions for a s-metric (%
\ref{offdiagpolf}):%
\begin{eqnarray}
\ ^{\shortmid }\eta _{4} &=&(\ ^{\shortmid }\mathring{g}_{4})^{-1}\int
dy^{3}\ \ ^{\shortmid }\eta _{4}^{[1]}(y^{3})/(\ _{2}^{\shortmid }\widehat{%
\Upsilon })+\ \ ^{\shortmid }\eta _{4}^{[2]}(x^{k}),  \label{diagsmeta} \\
&& \mbox{ for integration functions }\ ^{\shortmid }\eta _{4}^{[1]}(y^{3})\ %
\mbox{ and }\ ^{\shortmid }\eta _{4}^{[2]}(x^{k_{1}}),\mbox{ where }\
_{1}n_{k_{1}} =\ _{2}n_{k_{1}}=0;  \notag \\
\ \ ^{\shortmid }\eta ^{5} &=&(\ ^{\shortmid }\mathring{g}^{5})^{-1}\int
dp_{6}\ \ ^{\shortmid }\eta _{\lbrack 1]}^{5}(p_{6})/(\ _{3}^{\shortmid }%
\widehat{\Upsilon })+\ \ ^{\shortmid }\eta _{\lbrack 2]}^{5}(x^{k_{2}}),
\notag \\
&& \mbox{ for integration functions }\ \ ^{\shortmid }\eta _{\lbrack
1]}^{5}(p_{6})\ \mbox{ and }\ \ ^{\shortmid }\eta _{\lbrack
2]}^{5}(x^{k_{2}}),\mbox{ where }\ _{1}n_{k_{2}} =\ _{2}n_{k_{2}}=0;  \notag
\\
\ ^{\shortmid }\eta ^{7} &=&(\ ^{\shortmid }\mathring{g}^{7})^{-1}\int dE\ \
^{\shortmid }\eta _{\lbrack 1]}^{7}(E)/(\ _{4}^{\shortmid }\widehat{\Upsilon
})+\ \ ^{\shortmid }\eta _{\lbrack 2]}^{7}(x^{k_{3}}),  \notag \\
&&\mbox{ for integration functions }\ \ ^{\shortmid }\eta _{\lbrack
1]}^{7}(E)\ \mbox{ and }\ \ ^{\shortmid }\eta _{\lbrack 2]}^{7}(x^{k_{3}}),%
\mbox{ where }\ _{1}n_{k_{3}} =\ _{2}n_{k_{3}}=0.  \notag
\end{eqnarray}
\end{corollary}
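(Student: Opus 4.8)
\textbf{Proof proposal for Corollary \ref{conddiagsm}.}

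The plan is to impose the diagonalization condition (\ref{diagonalcond}) shell by shell on the explicit $\eta$-polarized solution (\ref{offdiagpolf}) of Corollary \ref{nqelgravpol}, and read off which constraints on the generating functions $\ ^{\shortmid }\eta _{4},\ ^{\shortmid }\eta ^{5},\ ^{\shortmid }\eta ^{7}$ make every N-connection coefficient vanish. First I would observe that the N-connection coefficients in (\ref{offdiagpolf}) split into two families on each shell: the $w$-type coefficients (\ref{noffdiagpolf}) proportional to $\partial _{i_{s}}[\int (\ _{s}^{\shortmid }\widehat{\Upsilon })(\ ^{\shortmid }\eta \ ^{\shortmid }\mathring{g})^{\partial}]/[(\ _{s}^{\shortmid }\widehat{\Upsilon })(\ ^{\shortmid }\eta \ ^{\shortmid }\mathring{g})^{\partial}]$, and the $n$-type coefficients built from the pair of integration functions $\ _{1}n_{k_{s}},\ _{2}n_{k_{s}}$. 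The $n$-type coefficients are killed immediately by prescribing $\ _{1}n_{k_{s}}=\ _{2}n_{k_{s}}=0$ on the relevant shell, which accounts for the side conditions in each line of (\ref{diagsmeta}).

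The core of the argument is then to force the $w$-type coefficient to zero. On shell $s=2$, for example, the relevant coefficient is $\ ^{\shortmid }\eta _{i_{1}}^{3}\ ^{\shortmid }\mathring{N}_{i_{1}}^{3}=\partial _{i_{1}}[\int dy^{3}(\ _{2}^{\shortmid }\widehat{\Upsilon })(\ ^{\shortmid }\eta _{4}\ ^{\shortmid }\mathring{g}_{4})^{\diamond }]/[(\ _{2}^{\shortmid }\widehat{\Upsilon })(\ ^{\shortmid }\eta _{4}\ ^{\shortmid }\mathring{g}_{4})^{\diamond }]$; its numerator is the $x^{i_{1}}$-derivative of $\int dy^{3}(\ _{2}^{\shortmid }\widehat{\Upsilon })(\ ^{\shortmid }\eta _{4}\ ^{\shortmid }\mathring{g}_{4})^{\diamond }$, so it vanishes precisely when that integral is independent of the base coordinates $x^{k_{1}}$. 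Solving $(\ _{2}^{\shortmid }\widehat{\Upsilon })(\ ^{\shortmid }\eta _{4}\ ^{\shortmid }\mathring{g}_{4})^{\diamond }=\ ^{\shortmid }\eta _{4}^{[1]}(y^{3})$ for an arbitrary integration function $\ ^{\shortmid }\eta _{4}^{[1]}$, then integrating once in $y^{3}$ and adding a base-dependent integration function $\ ^{\shortmid }\eta _{4}^{[2]}(x^{k_{1}})$ (which drops out of the $\diamond$-derivative), yields exactly the first formula in (\ref{diagsmeta}). The same computation, with $\diamond\to\partial^6$ and the $p_6$-integration on $s=3$, and with $\diamond\to{}^{*}$ and the $E$-integration on $s=4$, gives the remaining two formulas; one also checks that with these choices $\ ^{\shortmid }\eta^5$, $\ ^{\shortmid }\eta^7$ make $g^6$, $g^8$ well-defined nonzero coefficients, so the resulting s-metric is genuinely diagonal and still a solution.

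The main obstacle I anticipate is not the algebra but a consistency check: one must verify that after imposing these constraints the remaining equations of Theorem \ref{theordecoupl} (in particular (\ref{e2a}), (\ref{e3a}), (\ref{eq4a}) relating $\ _{s}^{\shortmid }\varpi$ to $\ _{s}^{\shortmid }\widehat{\Upsilon }$) are still solvable and the $v$-metric coefficients $g_3,g^6,g^8$ obtained from the vanishing-$w$ branch do not degenerate (i.e. $g_4^{\diamond}\neq 0$, $\partial^6 g^5\neq 0$, $(g^7)^{*}\neq 0$ must be preserved, otherwise the decoupling ansatz breaks down). This amounts to showing the diagonal branch is a nonempty sub-family of the general off-diagonal solution rather than an over-constrained empty set; since the integration functions $\ ^{\shortmid }\eta^{[1]}$, $\ ^{\shortmid }\eta^{[2]}$ remain free, a generic choice avoids the degenerate locus, which closes the proof by straightforward verification.
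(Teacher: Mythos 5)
Your proposal is correct and follows essentially the same route as the paper, whose entire proof is the one-line remark that one checks ``by direct computations'' that the stated constraints annihilate the N-connection coefficients (\ref{noffdiagpolf}); you simply run that computation in the derivation direction, killing the $n$-type coefficients by $\ _{1}n_{k_{s}}=\ _{2}n_{k_{s}}=0$ and the $w$-type ones by forcing $(\ _{s}^{\shortmid }\widehat{\Upsilon })(\ ^{\shortmid }\eta\, \ ^{\shortmid }\mathring{g})^{\diamond}$ to depend only on the fiber coordinate. Your added non-degeneracy check ($g_{4}^{\diamond}\neq 0$, etc.) is a sensible extra that the paper omits.
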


\begin{proof}
We can check by direct computations that if such conditions are satisfied
the N-connection coefficients (\ref{noffdiagpolf}) became zero. $\square $ %
\vskip5pt
\end{proof}

Diagonal configurations consist a special class of space nonlinear systems
when the generalized gravitational dynamics is defined by diagonal s-metrics
self-consistently embedded into diagonal phase spaces backgrounds. Such
solutions are important in modern cosmology and astrophysics.

\subsubsection{Diagonalization for small parametric $\protect\varepsilon $%
--decompositions}

Using formulas (\ref{epsilongenfdecomp}) from theorem \ref{epsilongeneration}%
, and related formulas (\ref{offdncelepsilon}) from consequence \ref%
{nqelgravpoleps}, in (\ref{diagsmeta}), we obtain

\begin{consequence}
\label{diagsmepsilo}\textsf{[conditions of diagonal s-metrics with} $%
\varepsilon $-\textsf{polarization} \textsf{] } For nonholonomic $%
\varepsilon $--deformations, quasi-stationary diagonal configurations are
generated by such $\varepsilon $-polarization functions for a s-metric (\ref%
{offdncelepsilon})
\begin{eqnarray}
\ \ ^{\shortmid }\eta _{4} &=&\zeta _{4}(1+\varepsilon \chi _{4}):\ \
^{\shortmid }\zeta _{4}=(\ ^{\shortmid }\mathring{g}_{4})^{-1}\int dy^{3}\ \
^{\shortmid }\zeta _{4}^{[1]}(y^{3})/(\ _{2}^{\shortmid }\widehat{\Upsilon }%
)+\ \ ^{\shortmid }\zeta _{4}^{[2]}(x^{k}),  \label{diagsmepsilon} \\
&& \mbox{ for integration functions }\ ^{\shortmid }\zeta _{4}^{[1]}(y^{3})\ %
\mbox{ and }\ ^{\shortmid }\zeta _{4}^{[2]}(x^{k_{1}}),  \notag
\end{eqnarray}
\begin{eqnarray}
\ ^{\shortmid }\chi _{4} &=&(\ ^{\shortmid }\mathring{g}_{4})^{-1}\int
dy^{3}\ \ ^{\shortmid }\chi _{4}^{[1]}(y^{3})/(\ _{2}^{\shortmid }\widehat{%
\Upsilon })+\ \ ^{\shortmid }\chi _{4}^{[2]}(x^{k}),  \notag \\
&&\mbox{ for integration functions }\ ^{\shortmid }\chi _{4}^{[1]}(y^{3})\ %
\mbox{ and }\ ^{\shortmid }\chi _{4}^{[2]}(x^{k_{1}}),\mbox{ where }\
_{1}n_{k_{1}}=\ _{2}n_{k_{1}}=0;  \notag
\end{eqnarray}
\begin{eqnarray}
\ ^{\shortmid }\eta ^{5} &=&\zeta ^{5}(1+\varepsilon \ \chi ^{5}):\ \
^{\shortmid }\zeta ^{5}=(\ ^{\shortmid }\mathring{g}^{5})^{-1}\int dp_{6}\ \
^{\shortmid }\zeta _{\lbrack 1]}^{5}(p_{6})/(\ _{3}^{\shortmid }\widehat{%
\Upsilon })+\ \ ^{\shortmid }\zeta _{\lbrack 2]}^{5}(x^{k_{2}}),  \notag \\
&&\mbox{ for integration functions }\ \ ^{\shortmid }\zeta _{\lbrack
1]}^{5}(p_{6})\ \mbox{ and }\ \ ^{\shortmid }\zeta _{\lbrack
2]}^{5}(x^{k_{2}}),  \notag \\
\ \ ^{\shortmid }\chi ^{5} &=&(\ ^{\shortmid }\mathring{g}^{5})^{-1}\int
dp_{6}\ \ ^{\shortmid }\chi _{\lbrack 1]}^{5}(p_{6})/(\ _{3}^{\shortmid }%
\widehat{\Upsilon })+\ \ ^{\shortmid }\chi _{\lbrack 2]}^{5}(x^{k_{2}}),
\notag \\
&&\mbox{ for integration functions }\ \ ^{\shortmid }\chi _{\lbrack
1]}^{5}(p_{6})\ \mbox{ and }\ \ ^{\shortmid }\chi _{\lbrack
2]}^{5}(x^{k_{2}}),\mbox{ where }\ _{1}n_{k_{2}}=\ _{2}n_{k_{2}}=0;  \notag
\end{eqnarray}%
\begin{eqnarray}
\ ^{\shortmid }\eta ^{7} &=&\zeta ^{7}(1+\varepsilon \ \chi ^{7}):\ \
^{\shortmid }\zeta ^{7}=(\ ^{\shortmid }\mathring{g}^{7})^{-1}\int dE\ \
^{\shortmid }\zeta _{\lbrack 1]}^{7}(E)/(\ _{4}^{\shortmid }\widehat{%
\Upsilon })+\ \ ^{\shortmid }\zeta _{\lbrack 2]}^{7}(x^{k_{3}}),  \notag \\
&&\mbox{ for integration functions }\ \ ^{\shortmid }\zeta _{\lbrack
1]}^{7}(E)\ \mbox{ and }\ \ ^{\shortmid }\zeta _{\lbrack 2]}^{7}(x^{k_{3}}),
\notag \\
\ ^{\shortmid }\chi ^{7} &=&(\ ^{\shortmid }\mathring{g}^{7})^{-1}\int dE\ \
^{\shortmid }\chi _{\lbrack 1]}^{7}(E)/(\ _{4}^{\shortmid }\widehat{\Upsilon
})+\ \ ^{\shortmid }\chi _{\lbrack 2]}^{7}(x^{k_{3}}),  \notag
\end{eqnarray}
\end{consequence}

We note that we can separate the formulas for $\zeta $- and $\chi $%
-coefficients only for linear configurations on $\varepsilon .$ In general,
such formulas are nonlinear and with mixed $\zeta $- and $\chi $%
-deformations. In such cases, we transform the deformations into certain not
exact but small parametric decompositions of solutions. Formulas with small
parameters are important for analysing, for instance, small deformations of
horizons and nonlinear polarizations of black hole solutions in GR extended
on cotangent Lorentz bundles.

\subsection{Vacuum quasi-stationary phase space configurations}

\label{ssvacuumfc}The configurations with zero (effective) source and/or
cosmological constants can not be generated as particular cases of (off--)
diagonal solutions constructed in previous sections. This is because for $\
\Upsilon ,\Lambda \rightarrow 0$ these limits are not smooth for the
coefficients of s-metrics.

For quasi-stationary ansatz (\ref{ansatz1}), we can construct solutions when
the nontrivial coefficients of the Ricci s--tensor (\ref{riccist2})--(\ref%
{riccist4a}) in Lemma (\ref{lemmaricci}) are zero, see also respective
formulas (\ref{e2a})--(\ref{eq4a}). The equation (\ref{eq1}) is a 2--d
Laplace equation, when a solution can be expressed in the form $%
g_{i}=e^{\psi (x^{k},\Upsilon =0)}=e^{\ ^{0}\psi (x^{k})}.$ For a 4-d vacuum
spacetime, there are three general classes of off--diagonal stationary
metrics which result in zero coefficients of (\ref{e2a}). Any such class can
be extended for 8-d phase spaces with zero (effective) source.

\subsubsection{Off-diagonal vacuum phase spaces of type 1}

\begin{definition}
\label{vacuumt1} \textsf{[vacuum phase spaces of type 1] } The type 1 vacuum
off-diagonal quasi-stationary phase space configurations are defined by the
conditions $g_{4}^{\diamond }=0$ but $g_{4}\neq 0,$ $g_{3}^{\diamond }\neq 0
$ and $g_{3}\neq 0;\partial ^{6}(\ ^{\shortmid }g^{5})=0$ but $\ ^{\shortmid
}g^{5}\neq 0,$ $\partial ^{6}(\ ^{\shortmid }g^{6})\neq 0$ and $\
^{\shortmid }g^{6}\neq 0;$ and $(\ ^{\shortmid }g^{7})^{\ast }=0$ but $\
^{\shortmid }g^{7}\neq 0,$ $\partial ^{6}(\ ^{\shortmid }g^{8})\neq 0$ and $%
\ ^{\shortmid }g^{8}\neq 0.$
\end{definition}

Using equations (\ref{riccist2})-(\ref{riccist4a}) and (\ref{e2a})-(\ref%
{eq4a}), we formulate

\begin{corollary}
\label{integrvarvac1}\textsf{[vacuum integral varieties of type 1] } The
off-diagonal stationary vacuum spacetime of type 1 are defined by target
metrics generated by such geometric data:%
\begin{eqnarray}
s=2: &&\mbox{ arbitrary generating functions }:\ g_{3}(x^{i_{1}},y^{3})%
\mbox{ and }w_{k_{1}}(x^{i_{1}},y^{3}),  \notag \\
&&\mbox{ arbitrary integration functions }:\ g_{4}(x^{i_{1}})\mbox{ and }\
_{1}n_{k_{1}}(x^{i_{1}}),\ _{2}n_{k_{1}}(x^{i_{1}}),  \notag \\
&&\qquad n_{k_{1}}=\ _{1}n_{k_{1}}+\ _{2}n_{k_{1}}\int dy^{3}/g_{3};
\label{nvacsol1}
\end{eqnarray}
\begin{eqnarray*}
s=3: &&\mbox{ arbitrary generating functions }:\ \ ^{\shortmid
}g^{6}(x^{i_{1}},y^{a_{2}},p_{6})\mbox{ and }\ \ ^{\shortmid
}w_{k_{2}}(x^{i_{1}},y^{a_{2}},p_{6}),  \notag \\
&&\mbox{ arbitrary integration functions }:\ \ ^{\shortmid }g^{5}(x^{i_{2}})%
\mbox{ and }\ _{1}n_{k_{2}}(x^{i_{2}}),\ _{2}n_{k_{2}}(x^{i_{2}}),%
\mbox{ for
}x^{i_{2}}=(x^{i_{1}},y^{a_{2}}),  \notag \\
&&\qquad \ ^{\shortmid }n_{k_{2}}=\ _{1}n_{k_{2}}+\ _{2}n_{k_{2}}\int
dp_{6}/(\ ^{\shortmid }g^{6});  \notag \\
s=4: &&\mbox{ arbitrary generating functions }:\ \ ^{\shortmid
}g^{8}(x^{i_{1}},y^{a_{2}},p_{a_{3}},E)\mbox{ and }\ \ ^{\shortmid
}w_{k_{2}}(x^{i_{1}},y^{a_{2}},p_{a_{3}},E),  \notag \\
&&\mbox{ arbitrary integration functions }:\ \ ^{\shortmid }g^{7}(x^{i_{3}})%
\mbox{ and }\ _{1}n_{k_{3}}(x^{i_{3}}),\ _{2}n_{k_{3}}(x^{i_{3}}),%
\mbox{ for
}x^{i_{3}}=(x^{i_{2}},p_{a_{3}}),  \notag \\
&&\qquad \ ^{\shortmid }n_{k_{3}}=\ _{1}n_{k_{3}}+\ _{2}n_{k_{3}}\int dE/(\
^{\shortmid }g^{8}).  \notag
\end{eqnarray*}
\end{corollary}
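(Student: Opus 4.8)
\textbf{Proof proposal for Corollary \ref{integrvarvac1}.}
The plan is to set each of the three shell equations \eqref{e2a}, \eqref{e3a}, \eqref{eq4a} to zero and integrate them directly, shell by shell, exploiting the fact that the defining conditions of a type~1 configuration (Definition \ref{vacuumt1}) eliminate the ``nonlinear'' term in each. First I would treat the shell $s=2$: the vacuum limit of \eqref{riccist2} gives $\ ^{\shortmid }\widehat{R}_{3}^{3}=\ ^{\shortmid }\widehat{R}_{4}^{4}=0$, which after imposing $g_{4}^{\diamond}=0$ collapses the bracket $\frac{(g_{4}^{\diamond})^{2}}{2g_{4}}+\frac{g_{3}^{\diamond}g_{4}^{\diamond}}{2g_{3}}-g_{4}^{\diamond\diamond}$ to zero identically; hence $g_{3}$ is left completely free as a generating function of $(x^{i_{1}},y^{3})$, while $g_{4}$ is an arbitrary integration function of $x^{i_{1}}$ only. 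The $w_{k_{1}}$-equation \eqref{e2b}, i.e. the vacuum limit of $\ ^{\shortmid }\widehat{R}_{3k_{1}}=0$, is then also satisfied for arbitrary $w_{k_{1}}(x^{i_{1}},y^{3})$ because $g_{4}^{\diamond}=0$ kills the coefficient $\alpha_{j_{1}}$ and $\ _{2}^{\shortmid}\beta$; and the $n_{k_{1}}$-equation \eqref{e2c} with $\ _{2}^{\shortmid}\gamma=(\ln|g_{4}|^{3/2}/|g_{3}|)^{\diamond}=-(\ln|g_{3}|)^{\diamond}$ (since $g_{4}^{\diamond}=0$) integrates once to $n_{k_{1}}^{\diamond}=\ _{2}n_{k_{1}}(x^{i_{1}})/g_{3}$ and once more to $n_{k_{1}}=\ _{1}n_{k_{1}}+\ _{2}n_{k_{1}}\int dy^{3}/g_{3}$, which is exactly \eqref{nvacsol1}.

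Next I would repeat the same three-step integration on shell $s=3$, now with the roles of $(g_{3},g_{4},y^{3})$ played by $(\ ^{\shortmid}g^{6},\ ^{\shortmid}g^{5},p_{6})$ and derivatives $\partial^{6}$: the condition $\partial^{6}(\ ^{\shortmid}g^{5})=0$ makes \eqref{e3a} trivial, leaves $\ ^{\shortmid}g^{6}$ and $\ ^{\shortmid}w_{k_{2}}$ free as functions of $(x^{i_{1}},y^{a_{2}},p_{6})$, forces $\ ^{\shortmid}g^{5}$ to be an integration function of $x^{i_{2}}=(x^{i_{1}},y^{a_{2}})$, and turns \eqref{e3b} into $\ ^{\shortmid}n_{k_{2}}=\ _{1}n_{k_{2}}+\ _{2}n_{k_{2}}\int dp_{6}/(\ ^{\shortmid}g^{6})$. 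Then I would do shell $s=4$ identically, with $(\ ^{\shortmid}g^{8},\ ^{\shortmid}g^{7},E)$ and $\partial^{8}=(\ )^{\ast}$: the condition $(\ ^{\shortmid}g^{7})^{\ast}=0$ trivializes \eqref{eq4a}, $\ ^{\shortmid}g^{8}$ and $\ ^{\shortmid}w_{k_{3}}$ remain free, $\ ^{\shortmid}g^{7}$ is an integration function of $x^{i_{3}}=(x^{i_{2}},p_{a_{3}})$, and \eqref{eq4b} yields $\ ^{\shortmid}n_{k_{3}}=\ _{1}n_{k_{3}}+\ _{2}n_{k_{3}}\int dE/(\ ^{\shortmid}g^{8})$. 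Finally the shell $s=1$ equation \eqref{eq1} with zero source is the $2$-d Laplace equation, giving $g_{i_{1}}=e^{\ ^{0}\psi(x^{k_{1}})}$, which together with the above data assembles the full quadratic line element.

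The only genuine subtlety — and the step I expect to be the main obstacle — is verifying that the $w_{k_{s}}$-equations on each shell are \emph{identically} satisfied rather than imposing a hidden constraint. In the generic (non-vacuum) case \eqref{e2b} is an algebraic relation $\ _{2}^{\shortmid}\beta\ ^{\shortmid}w_{j_{1}}=\alpha_{j_{1}}$ that determines $w_{j_{1}}$; when $g_{4}^{\diamond}=0$ one has $\ _{2}^{\shortmid}\beta=0$ and $\alpha_{j_{1}}=0$ simultaneously, so the relation reads $0=0$ and $w_{j_{1}}$ is free — but this must be checked directly from the N-adapted coefficient formulas for $\ ^{\shortmid}\widehat{R}_{3k_{1}}$ in \eqref{riccist2}, not merely from the schematic form in Theorem \ref{theordecoupl}, because a term of the type $\partial_{k_{1}}(g_{3}^{\diamond})/2g_{3}$ survives and one needs $g_{4}^{\diamond}=0$ to also kill the remaining $\frac{g_{4}^{\diamond}}{4g_{4}}(\dots)$ piece. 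So the heart of the proof is a careful bookkeeping of which terms in \eqref{riccist2}--\eqref{riccist4a} vanish under Definition \ref{vacuumt1}; once that is done, the double integration producing the $n$-functions is routine, and one records the result as \eqref{nvacsol1}. I would also remark (as the authors do for such cases) that these limits are genuinely singular — one cannot obtain Corollary \ref{integrvarvac1} by letting $\ _{s}^{\shortmid}\widehat{\Upsilon}\to 0$ in Theorem \ref{gensoltorsion}, since $\ _{2}^{\shortmid}\widehat{\Upsilon}$ appears in denominators — which justifies treating the vacuum case separately.
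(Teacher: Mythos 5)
Your proposal takes essentially the same route as the paper's own (very brief) proof: impose the type-1 conditions of Definition \ref{vacuumt1} shell by shell, observe that the first equation of each shell ((\ref{e2a}), (\ref{e3a}), (\ref{eq4a})) with vanishing source is then identically satisfied, so that $g_{3}$ and $w_{k_{1}}$ (and their shell analogues) remain arbitrary while $g_{4},\ ^{\shortmid }g^{5},\ ^{\shortmid }g^{7}$ become integration functions, and finally integrate the remaining linear ODE for the $n$-coefficients twice on the respective fiber variable. Two concrete points deserve attention. First, having correctly found $\ _{2}^{\shortmid }\gamma =-(\ln |g_{3}|)^{\diamond }$ when $g_{4}^{\diamond }=0$, the equation $n_{k_{1}}^{\diamond \diamond }+\ _{2}^{\shortmid }\gamma \,n_{k_{1}}^{\diamond }=0$ integrates to $n_{k_{1}}^{\diamond }=\ _{2}n_{k_{1}}\,g_{3}$ and hence $n_{k_{1}}=\ _{1}n_{k_{1}}+\ _{2}n_{k_{1}}\int dy^{3}\,g_{3}$, not $\int dy^{3}/g_{3}$ as you wrote; the same conclusion follows by specializing the general formula (\ref{gn}) to $g_{4}^{\diamond }=0$, so this sign tension sits in the printed corollary and in the intermediate equation of the paper's proof rather than in your method, but your derivation should not silently flip it. Second, your worry about the term $-\partial _{k_{1}}(g_{3}^{\diamond })/2g_{3}$ in $\ ^{\shortmid }\widehat{R}_{3k_{1}}$ of (\ref{riccist2}) is well founded: with $g_{4}^{\diamond }=0$ that term is all that survives, so either it must be shown to cancel against contributions hidden in the reduction to (\ref{e2b}), or it imposes $\partial _{k_{1}}(g_{3}^{\diamond })=0$ as a genuine constraint on the allegedly arbitrary $g_{3}(x^{i_{1}},y^{3})$ — you flag this as the main obstacle but leave it unresolved, and the paper's proof does not address it either.
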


\begin{proof}
If $g_{4}^{\diamond }=0$ and $\ _{2}^{\shortmid }\widehat{\Upsilon },$ the
equations (\ref{e2a}) are solved for any generating date $%
g_{3}(x^{i_{1}},y^{3})$ and $w_{k_{1}}(x^{i_{1}},y^{3})$ and $%
g_{4}(x^{i_{1}})$ which can be considered as an integrating function. We
have to solve only the decoupled equations for the rest of N-coefficients,
which can be written $n_{k_{1}}^{\diamond \diamond }+n_{k_{1}}^{\diamond }\
(\ln |g_{3}|)^{\diamond }=0.$ Integrating this equation (two times on $y^{3}$%
), \ we obtain the solution for $n_{k_{1}}$ in (\ref{e2a}). Similarly, we
find the solutions for $s=3$ and 4. $\square $\vskip5pt
\end{proof}

For adapted coefficients in above vacuum phase space s-metrics, we obtain

\begin{consequence}
\label{nqelvacuum1}\textsf{[nonlinear quadratic elements for vacuum phase
s-space metric of type 1] } Nonlinear quadratic elements $d\ ^{\shortmid
}s_{v1,8d}^{2}$ for "pure" quasi-stationary vacuum s-metrics are constructed
as a sum of quadratic stationary vacuum space metrics $ds_{v1,s=2}^{2}$ and
additional shells $s=3$ and $s=4$ vacuum contributions, denoted
respectively, $ds_{v1,s=3}^{2}$ and $ds_{v1,s=4}^{2},$ all being of type 1,
\begin{equation*}
d\ ^{\shortmid
}s_{v1,8d}^{2}=ds_{v,s1}^{2}+ds_{v1,s2}^{2}+ds_{v1,s3}^{2}+ds_{v1,s4}^{2},
\end{equation*}%
\begin{eqnarray*}
\mbox{ where }ds_{v,s1}^{2} &=&e^{\ ^{0}\psi
(x^{k})}[(dx^{1})^{2}+(dx^{2})^{2}],\  \\
ds_{v1,s2}^{2}
&=&g_{3}(x^{i_{1}},y^{3})[dy^{3}+w_{k_{1}}(x^{i_{1}},y^{3})dx^{k_{1}}]^{2}+
g_{4}(x^{i_{1}})[dy^{4}+(\ _{1}n_{k_{1}}(x^{i_{1}})+\
_{2}n_{k_{1}}(x^{i_{1}})\int dy^{3}/g_{3})dx^{k_{1}}]^{2},
\end{eqnarray*}%
\begin{eqnarray*}
ds_{v1,s3}^{2} &=&\ \ ^{\shortmid }g^{5}(x^{i_{2}})[dp_{5}+(\
_{1}n_{k_{2}}(x^{i_{2}})+ \ _{2}n_{k_{2}}(x^{i_{2}})\int dp_{6}/\
^{\shortmid }g^{6} dx^{k_{2}}]^{2}+ \ ^{\shortmid
}g^{6}(x^{i_{2}},p_{6})[dp_{6}+\
^{\shortmid}w_{k_{2}}(x^{i_{2}},p_{6})dx^{k_{2}}]^{2}, \\
ds_{v1,s4}^{2} &=&\ \ ^{\shortmid }g^{7}(x^{i_{3}})[dp_{7}+(\
_{1}n_{k_{3}}(x^{i_{3}}) +\ _{2}n_{k_{3}}(x^{i_{3}})\int dE/\ ^{\shortmid
}g^{8})dx^{k_{3}}]^{2} + \ ^{\shortmid }g^{8}(x^{i_{3}},E)[dE+\
^{\shortmid}w_{k_{3}}(x^{i_{3}},E)dx^{k_{3}}]^{2}.
\end{eqnarray*}
\end{consequence}

In above formulas, for instance, the labels $v1,s3$ states that such a
quadratic element extension is for the shall 3 with vacuum solutions of type
one.

\subsubsection{Off-diagonal vacuum phase spaces of type 2}

\begin{definition}
\label{vacuumt2} \textsf{[vacuum phase spaces of type 2] } The type 2 vacuum
off-diagonal quasi-stationary phase space configurations are defined by the
conditions $g_{3}^{\diamond }\neq 0$ and $g_{4}^{\diamond }\neq 0$; $%
\partial ^{6}(\ ^{\shortmid }g^{5})\neq 0$ and $\ ^{\shortmid }g^{6}\neq 0$ $%
;$ and $(\ ^{\shortmid }g^{7})^{\ast }\neq 0$ \ and $\partial ^{6}(\
^{\shortmid }g^{8})\neq 0.$
\end{definition}

For such configurations, we formulate

\begin{corollary}
\label{integrvarvac2}\textsf{[vacuum integral varieties of type 2] } The
off-diagonal stationary vacuum spacetime of type 2 are defined by target
metrics generated by such geometric data:%
\begin{eqnarray*}
s=2: &&\mbox{ arbitrary generating functions }:\ g_{4}(x^{i_{1}},y^{3})%
\mbox{ and }w_{k_{1}}(x^{i_{1}},y^{3}), \\
&&\mbox{ arbitrary integration functions }:\ \ ^{0}g_{3}(x^{k_{1}})%
\mbox{
and }\ _{1}n_{k_{1}}(x^{i_{1}}),\ _{2}n_{k_{1}}(x^{i_{1}}), \\
&&\qquad g_{3}=\ ^{0}g_{3}(x^{k_{1}})[(\sqrt{|g_{4}|})^{\diamond }]^{2},\
n_{k_{1}}=\ _{1}n_{k_{1}}+\ _{2}n_{k_{2}}\int
dy^{3}[(|g_{4}|^{3/4})^{\diamond }]^{2};
\end{eqnarray*}%
\begin{eqnarray*}
s=3: &&\mbox{ arbitrary generating functions }:\ \ ^{\shortmid
}g^{5}(x^{i_{1}},y^{a_{2}},p_{6})\mbox{ and }\ \ ^{\shortmid
}w_{k_{2}}(x^{i_{1}},y^{a_{2}},p_{6}), \\
&&\mbox{ arbitrary integration functions }:\ \ ^{0}g^{6}(x^{i_{2}})%
\mbox{
and }\ _{1}n_{k_{2}}(x^{i_{2}}),\ _{2}n_{k_{2}}(x^{i_{2}}),\mbox{ for
}x^{i_{2}}=(x^{i_{1}},y^{a_{2}}), \\
&&\qquad \ ^{\shortmid }g^{6}=\ ^{0}g^{6}(x^{k_{2}})[(\sqrt{\partial ^{6}|\
^{\shortmid }g^{5}|})]^{2},\ \ ^{\shortmid }n_{k_{2}}=\ _{1}n_{k_{2}}+\
_{2}n_{k_{2}}\int dp_{6}[\partial ^{6}(|\ ^{\shortmid }g^{5}|^{3/4})]^{2};
\end{eqnarray*}%
\begin{eqnarray*}
s=4: &&\mbox{ arbitrary generating functions }:\ \ ^{\shortmid
}g^{7}(x^{i_{1}},y^{a_{2}},p_{a_{3}},E)\mbox{ and }\ \ ^{\shortmid
}w_{k_{3}}(x^{i_{1}},y^{a_{2}},p_{a_{3}},E), \\
&&\mbox{ arbitrary integration functions }:\ \ \ ^{0}g^{8}(x^{i_{3}})%
\mbox{
and }\ _{1}n_{k_{3}}(x^{i_{3}}),\ _{2}n_{k_{3}}(x^{i_{3}}),\mbox{ for
}x^{i_{3}}=(x^{i_{2}},p_{a_{3}}), \\
&&\qquad \ \ ^{\shortmid }g^{8}=\ ^{0}g^{8}(x^{k_{2}})[(\sqrt{|\ ^{\shortmid
}g^{7}|^{\ast }})]^{2},\ \ ^{\shortmid }n_{k_{2}}=\ _{1}n_{k_{2}}+\
_{2}n_{k_{2}}\int dE[(|\ ^{\shortmid }g^{7}|^{3/4})^{\ast }]^{2}.
\end{eqnarray*}
\end{corollary}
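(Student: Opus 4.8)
\textbf{Proof plan for Corollary \ref{integrvarvac2}.}
The plan is to redo, shell by shell, the same reduction that proved Corollary \ref{integrvarvac1}, but now for the type 2 configurations of Definition \ref{vacuumt2}, where the ``vertical derivatives'' of the even-index coefficients no longer vanish. First I would set the effective sources to zero, $\ _{s}^{\shortmid }\widehat{\Upsilon }=0$, in the decoupled system of Theorem \ref{theordecoupl}; equation (\ref{eq1}) then becomes the 2-d Laplace equation and is solved by $g_{i_{1}}=e^{\ ^{0}\psi (x^{k_{1}})}$, exactly as recorded in Section \ref{ssvacuumfc}. The substantive work is on the shells $s=2,3,4$: with a zero source, equation (\ref{e2a}), i.e. $(\ _{2}^{\shortmid }\varpi )^{\diamond }g_{4}^{\diamond }=2g_{3}g_{4}\ _{2}^{\shortmid }\widehat{\Upsilon }$, collapses to $(\ _{2}^{\shortmid }\varpi )^{\diamond }g_{4}^{\diamond }=0$. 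Since we are in the type 2 branch where $g_{4}^{\diamond }\neq 0$, this forces $(\ _{2}^{\shortmid }\varpi )^{\diamond }=0$, and recalling $\ _{2}^{\shortmid }\varpi =\ln |g_{4}^{\diamond }/\sqrt{|g_{3}g_{4}|}|$ this is precisely the first-order ODE whose integration yields $g_{3}=\ ^{0}g_{3}(x^{k_{1}})[(\sqrt{|g_{4}|})^{\diamond }]^{2}$, with $g_{4}$ and the integration function $\ ^{0}g_{3}$ free. The $w$-equation (\ref{e2b}) becomes $0\cdot w_{j_{1}}-\alpha _{j_{1}}=0$ with $\alpha _{j_{1}}=g_{4}^{\diamond }\partial _{j_{1}}(\ _{2}^{\shortmid }\varpi )=0$ automatically, so the $w_{k_{1}}$ are unconstrained generating functions; and (\ref{e2c}), $\ ^{\shortmid }n_{k_{1}}^{\diamond \diamond }+\ _{2}^{\shortmid }\gamma \ ^{\shortmid }n_{k_{1}}^{\diamond }=0$ with $\ _{2}^{\shortmid }\gamma =(\ln |g_{4}|^{3/2}/|g_{3}|)^{\diamond }$, integrates twice to give $n_{k_{1}}=\ _{1}n_{k_{1}}+\ _{2}n_{k_{1}}\int dy^{3}[(|g_{4}|^{3/4})^{\diamond }]^{2}$ after substituting the just-found $g_{3}$.

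Next I would carry out the shell $s=3$ computation by the same template, replacing $\diamond =\partial _{3}$ by $\partial ^{6}$, the pair $(g_{3},g_{4})$ by $(\ ^{\shortmid }g^{6},\ ^{\shortmid }g^{5})$ (note the role-reversal forced by the cofiber index placement, as already flagged in the proof of Theorem \ref{theordecoupl}), and the coordinate set by $x^{i_{2}}=(x^{i_{1}},y^{a_{2}})$. Equation (\ref{e3a}) with zero source gives $\partial ^{6}(\ _{3}^{\shortmid }\varpi )=0$ (using $\partial ^{6}g^{5}\neq 0$), whence $\ ^{\shortmid }g^{6}=\ ^{0}g^{6}(x^{k_{2}})[\sqrt{|\partial ^{6}(\ ^{\shortmid }g^{5})|}]^{2}$; (\ref{e3c}) leaves the $\ ^{\shortmid }w_{k_{2}}$ free, and (\ref{e3b}) integrates to $\ ^{\shortmid }n_{k_{2}}=\ _{1}n_{k_{2}}+\ _{2}n_{k_{2}}\int dp_{6}[\partial ^{6}(|\ ^{\shortmid }g^{5}|^{3/4})]^{2}$. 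Shell $s=4$ is identical once more, with $\partial ^{6}$ replaced by $\ast =\partial ^{8}=\partial _{E}$, the pair $(\ ^{\shortmid }g^{7},\ ^{\shortmid }g^{8})$ playing the $(g^{5},g^{6})$ roles, and $x^{i_{3}}=(x^{i_{2}},p_{a_{3}})$; equations (\ref{eq4a})--(\ref{eq4c}) with zero source yield $\ ^{\shortmid }g^{8}=\ ^{0}g^{8}(x^{k_{3}})[\sqrt{|(\ ^{\shortmid }g^{7})^{\ast }|}]^{2}$, free $\ ^{\shortmid }w_{k_{3}}$, and $\ ^{\shortmid }n_{k_{3}}=\ _{1}n_{k_{3}}+\ _{2}n_{k_{3}}\int dE[(|\ ^{\shortmid }g^{7}|^{3/4})^{\ast }]^{2}$. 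Assembling the three shell contributions with the horizontal block $e^{\ ^{0}\psi }[(dx^{1})^{2}+(dx^{2})^{2}]$ then produces the claimed geometric data, and the analogue of Consequence \ref{nqelvacuum1} records the corresponding nonlinear quadratic element.

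The only genuine subtlety — the ``hard part'' — is bookkeeping rather than analysis: one must verify that the chain of integrations is self-consistent with the ordered diadic dependence $s=1,2,3,4$, i.e. that the integration functions introduced on a given shell depend only on that shell's base coordinates $x^{i_{s}}$ (and lower), so that the resulting s-metric is genuinely adapted to the nonholonomic shell splitting and the Killing symmetry on $\partial ^{7}$ is preserved. A second point to check carefully is that the degenerate equations $(\ _{s}^{\shortmid }\varpi )^{\diamond }g^{\bullet }=0$ really force $(\ _{s}^{\shortmid }\varpi )^{\diamond }=0$ under the type 2 nondegeneracy assumptions, and that the limit $\ _{s}^{\shortmid }\widehat{\Upsilon }\to 0$ is taken at the level of the PDEs (\ref{e2a})--(\ref{eq4c}) and \emph{not} by setting $\ _{s}^{\shortmid }\widehat{\Upsilon }=0$ in the generating-function solution formulas of Theorem \ref{gensoltorsion}, since, as remarked at the start of Section \ref{ssvacuumfc}, that limit is not smooth in the coefficients of the off-diagonal s-metrics. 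Modulo these consistency checks, the proof is a routine shell-by-shell integration, and I would present it in the compressed form ``if $g_{4}^{\diamond }\neq 0$ then (\ref{e2a}) gives $(\ _{2}^{\shortmid }\varpi )^{\diamond }=0$, integrate; similarly for $s=3,4$. $\square$''.
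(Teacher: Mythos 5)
Your plan is correct and follows essentially the same route as the paper's own proof: shell by shell, the vacuum form of (\ref{e2a}) together with the type‑2 nondegeneracy $g_{4}^{\diamond}\neq 0$ reduces to $\ _{2}^{\shortmid}\varpi=\mathrm{const}$ (the paper posits this as an ansatz, you derive $(\ _{2}^{\shortmid}\varpi)^{\diamond}=0$ as forced, which is the same relation read in the opposite direction), integration of which gives $g_{3}=\ ^{0}g_{3}[(\sqrt{|g_{4}|})^{\diamond}]^{2}$, while (\ref{e2b}) degenerates to leave $w_{k_{1}}$ free and the double integration of (\ref{e2c}) yields $n_{k_{1}}$, with the shells $s=3,4$ treated by the same symbol substitutions. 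Your closing remarks on taking the limit $\ _{s}^{\shortmid}\widehat{\Upsilon}\rightarrow 0$ at the level of the PDEs rather than in the solution formulas of Theorem \ref{gensoltorsion} match the caveat the paper itself makes at the start of Section \ref{ssvacuumfc}.
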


\begin{proof}
Let us consider the shell $s=2.$ Taking%
\begin{equation}
\ _{2}^{\shortmid }\varpi =\ln |g_{4}^{\diamond }/\sqrt{|g_{3}g_{4}}|=\
_{2}^{\shortmid }\varpi _{0}=const  \label{auxil2s}
\end{equation}%
for $g_{3}^{\diamond }\neq 0$ and $g_{4}^{\diamond }\neq 0,$ we solve the
equations (\ref{e2a}) and (\ref{e2b}) because $\ _{2}^{\shortmid }\alpha
_{i_{1}}=0$ and $\ _{2}^{\shortmid }\beta =0$ but
\begin{equation}
\ _{2}^{\shortmid }\gamma =(\ln |g_{4}|^{3/4}/|g_{3}|)^{\diamond }\neq 0.
\label{auxil2sa}
\end{equation}%
We can rewrite (\ref{auxil2s}) as $\ g_{3}=\ ^{0}g_{3}(x^{k_{1}})[(\sqrt{%
|g_{4}|})^{\diamond }]^{2}$ for an integration function $\
^{0}g_{3}(x^{k_{1}})$ and generating function $g_{4}(x^{k_{1}},y^{3}).$ For
zero coefficients of the linear algebraic system (\ref{e2b}) , we can
consider arbitrary generating functions $w_{k_{1}}(x^{i_{1}},y^{3}).$
Integrating (\ref{e2c}) for (\ref{auxil2sa}), we obtain
\begin{equation*}
n_{k_{1}}(x^{i_{1}},y^{3})=\ _{1}n_{k_{1}}(x^{i_{1}})+\
_{2}n_{k}(x^{i_{1}})\int dy^{3}[(|g_{4}|^{3/4})^{\diamond }]^{2}.
\end{equation*}

Proofs for shells $s=3,4$ can be performed in similar forms but with
corresponding redefinitions of symbols and indices. $\square $\vskip5pt
\end{proof}

For adapted coefficients in above vacuum phase space s-metrics, we obtain

\begin{consequence}
\label{nqelvacuum2}\textsf{[nonlinear quadratic elements for vacuum phase
s-space metric of type 2] } Nonlinear quadratic elements $d\ ^{\shortmid
}s_{v2,8d}^{2}$ \ for "pure" quasi-stationary vacuum s-metrics are
constructed as a sum of quadratic stationary vacuum space metrics $%
ds_{v2,s2}^{2}$ and additional shells $s=3$ and $s=4$ vacuum contributions,
denoted respectively, $ds_{v2,s=3}^{2}$ and $ds_{v2,s=4}^{2},$ all being of
type 2,
\begin{equation*}
d\ ^{\shortmid
}s_{v2,8d}^{2}=ds_{v,s1}^{2}+ds_{v2,s2}^{2}+ds_{v2,s3}^{2}+ds_{v2,s4}^{2},
\end{equation*}%
where
\begin{eqnarray*}
ds_{v2,s1}^{2} &=&e^{\ ^{0}\psi (x^{k})}[(dx^{1})^{2}+(dx^{2})^{2}],\  \\
ds_{v2,s2}^{2} &=&\ ^{0}g_{3}(x^{k_{1}})[(\sqrt{|g_{4}|})^{\diamond
}]^{2}[dy^{3}+w_{k_{1}}(x^{i_{1}},y^{3})dx^{k_{1}}]^{2} \\
&&+g_{4}(x^{i_{1}},y^{3})[dy^{4}+(\ _{1}n_{k_{1}}(x^{i_{1}})+\
_{2}n_{k_{1}}(x^{i_{1}})\int dy^{3}[(|g_{4}|^{3/4})^{\diamond
}]^{2})dx^{k_{1}}]^{2},
\end{eqnarray*}%
\begin{eqnarray*}
ds_{v2,s3}^{2} &=&\ \ ^{\shortmid
}g^{5}(x^{i_{1}},y^{a_{2}},p_{6})[dp_{5}+(\ _{1}n_{k_{2}}(x^{i_{2}})+\
_{2}n_{k_{2}}(x^{i_{2}})\int dp_{6}[\partial ^{6}(|\ ^{\shortmid
}g^{5}|^{3/4})]^{2})dx^{k_{2}}]^{2} \\
&&+\ ^{0}g^{6}(x^{k_{2}})[(\sqrt{\partial ^{6}|\ ^{\shortmid }g^{5}|}%
)]^{2}[dp_{6}+\ ^{\shortmid }w_{k_{2}}(x^{i_{2}},p_{6})dx^{k_{2}}]^{2}, \\
ds_{v2,s4}^{2} &=&\ \ \ ^{\shortmid
}g^{7}(x^{i_{1}},y^{a_{2}},p_{a_{3}},E)[dp_{7}+(\ _{1}n_{k_{3}}(x^{i_{3}})+\
_{2}n_{k_{3}}(x^{i_{3}})\int dE[(|\ ^{\shortmid }g^{7}|^{3/4})^{\ast
}]^{2})dx^{k_{3}}]^{2} \\
&&+\ ^{0}g^{8}(x^{k_{2}})[(\sqrt{|\ ^{\shortmid }g^{7}|^{\ast }})]^{2}[dE+\
^{\shortmid }w_{k_{3}}(x^{i_{3}},E)dx^{k_{3}}]^{2}.
\end{eqnarray*}
\end{consequence}

In above formulas, for instance, the labels $v2,s4$ states that we consider
extensions for the shall 4 for vacuum solutions of type two.

\subsubsection{Off-diagonal vacuum spacetimes and phase spaces of type 3}

\begin{definition}
\label{vacuumt3} \textsf{[vacuum phase spaces of type 3] } The type 3 vacuum
off-diagonal quasi-stationary phase space configurations are defined by the
conditions $g_{3}^{\diamond }=0$ and $g_{4}^{\diamond }\neq 0;$ $\
^{\shortmid }g^{6}=0$ $\ $and $\partial ^{6}(\ ^{\shortmid }g^{5})\neq 0;$
and $\partial ^{6}(\ ^{\shortmid }g^{8})\neq 0$ and $(\ ^{\shortmid
}g^{7})^{\ast }\neq 0.$
\end{definition}

For such configurations, we formulate

\begin{corollary}
\label{integrvarvac3}\textsf{[vacuum integral varieties of type 3] } The
off-diagonal stationary vacuum spacetime of type 3 are defined by such
coefficients of target s-metrics:%
\begin{eqnarray*}
s=2: &&\mbox{ arbitrary generating functions }:\ w_{k_{1}}(x^{i_{1}},y^{3}),
\\
&&\mbox{ arbitrary integration functions }:\ \ ^{0}g_{3}(x^{k_{1}}),\
_{1}n_{k_{1}}(x^{i_{1}}),\
_{2}n_{k_{1}}(x^{i_{1}}),c_{1}(x^{k_{1}}),c_{2}(x^{k_{1}}), \\
&&\qquad g_{4}=\left[ c_{1}(x^{k_{1}})+c_{2}(x^{k_{1}})y^{3}\right] ^{2},\
n_{k_{1}}=\ _{1}n_{i_{1}}(x^{k_{1}})+\ \frac{\ _{2}n_{i_{1}}(x^{k_{1}})}{%
\left[ c_{1}(x^{k_{1}})+c_{2}(x^{k_{1}})y^{3}\right] ^{2}}; \\
s=3: &&\mbox{ arbitrary generating functions }:\ ^{\shortmid
}w_{k_{2}}(x^{i_{1}},y^{a_{2}},p_{6}), \\
&&\mbox{ arbitrary integration functions }:\ \ _{0}^{\shortmid
}g^{6}(x^{k_{2}}),\ _{1}n_{k_{2}}(x^{i_{2}}),\ _{2}n_{k_{2}}(x^{i_{2}}),\ \
^{3}c_{1}(x^{k_{2}}),\ \ ^{3}c_{2}(x^{k_{2}}), \\
&&\qquad \ ^{\shortmid }g^{5}=\left[ \ \ ^{3}c_{1}(x^{k_{2}})+\ \
^{3}c_{2}(x^{k_{2}})p_{6}\right] ^{2},\ n_{i_{2}}=\ _{1}n_{i_{2}}(x^{k_{2}})+%
\frac{\ _{2}n_{i_{2}}(x^{k_{2}})}{\left[ \ \ ^{3}c_{1}(x^{k_{2}})+\ \
^{3}c_{2}(x^{k_{2}})p_{6}\right] ^{2}};
\end{eqnarray*}%
\begin{eqnarray*}
s=4: &&\mbox{ arbitrary generating functions }:\ \ ^{\shortmid
}w_{k_{3}}(x^{i_{1}},y^{a_{2}},p_{a_{3}},E), \\
&&\mbox{ arbitrary integration functions }:\ \ \ _{0}^{\shortmid
}g^{8}(x^{k_{3}}),\ _{1}n_{k_{3}}(x^{i_{3}}),\ _{2}n_{k_{3}}(x^{i_{3}}),\
^{4}c_{1}(x^{k_{3}}),\ \ ^{4}c_{2}(x^{k_{3}}), \\
&&\qquad \ \ \ ^{\shortmid }g^{7}=\left[ \ ^{4}c_{1}(x^{k_{3}})+\ \
^{4}c_{2}(x^{k_{3}})E\right] ^{2},\ n_{i_{3}}=\ _{1}n_{i_{3}}(x^{k_{3}})+%
\frac{\ _{2}n_{i_{3}}(x^{k_{3}})}{\left[ \ ^{4}c_{1}(x^{k_{3}})+\
^{4}c_{2}(x^{k_{3}})E\right] ^{2}}.
\end{eqnarray*}
\end{corollary}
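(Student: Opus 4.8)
\textbf{Proof plan for Corollary \ref{integrvarvac3} (vacuum integral varieties of type 3).}

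The plan is to repeat the argument already used for Corollaries \ref{integrvarvac1} and \ref{integrvarvac2}, shell by shell, starting from the decoupled equations of Lemma \ref{lemmaricci} with the right-hand sources set to zero, i.e. from the reduced system (\ref{riccist2})--(\ref{riccist4a}) and the equivalent form (\ref{e2a})--(\ref{eq4a}). On shell $s=1$ the equation (\ref{eq1}) becomes the $2$-d Laplace equation $\psi^{\bullet\bullet}+\psi^{\prime\prime}=0$, so $g_{i_{1}}=e^{\ ^{0}\psi(x^{k_{1}})}$; this is identical to the type 1 and 2 cases and needs no new work. The substance is on the shells $s=2,3,4$ under the defining conditions of Definition \ref{vacuumt3}: $g_{3}^{\diamond}=0$, $g_{4}^{\diamond}\neq 0$ on $s=2$, and the dual analogues $\ ^{\shortmid}g^{6}=0$, $\partial^{6}(\ ^{\shortmid}g^{5})\neq 0$ on $s=3$, and $\ ^{\shortmid}g^{8}\neq 0$, $(\ ^{\shortmid}g^{7})^{\ast}\neq 0$ on $s=4$.

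First I would treat $s=2$. With $g_{3}^{\diamond}=0$ the combination $\ _{2}^{\shortmid}\varpi=\ln|g_{4}^{\diamond}/\sqrt{|g_{3}g_{4}|}|$ satisfies $\ _{2}^{\shortmid}\alpha_{i_{1}}=0$ and $\ _{2}^{\shortmid}\beta=0$, so (\ref{e2a}) with zero source and (\ref{e2b}) are automatically fulfilled for arbitrary $w_{k_{1}}(x^{i_{1}},y^{3})$; the only nontrivial equation left is (\ref{e2c}), namely $\ ^{\shortmid}n_{k_{1}}^{\diamond\diamond}+\ _{2}^{\shortmid}\gamma\ ^{\shortmid}n_{k_{1}}^{\diamond}=0$ with $\ _{2}^{\shortmid}\gamma=(\ln|g_{4}|^{3/2})^{\diamond}$ since $g_{3}$ is $\varphi$-independent. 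Here, because the source in (\ref{e2a}) vanishes, one does not get an ODE forcing $g_{4}$; instead one must directly solve the remaining second-order condition on $g_{4}$ coming from $\ ^{\shortmid}\widehat{R}_{3}^{3}=\ ^{\shortmid}\widehat{R}_{4}^{4}=0$ in (\ref{riccist2}), which with $g_{3}^{\diamond}=0$ reduces to $g_{4}^{\diamond\diamond}-(g_{4}^{\diamond})^{2}/2g_{4}=0$; integrating this twice in $\varphi$ gives the linear-in-$y^{3}$ root $g_{4}=[c_{1}(x^{k_{1}})+c_{2}(x^{k_{1}})y^{3}]^{2}$ for integration functions $c_{1},c_{2}$. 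Substituting this $g_{4}$ into (\ref{e2c}) and integrating twice yields $n_{k_{1}}=\ _{1}n_{k_{1}}+\ _{2}n_{k_{1}}/[c_{1}+c_{2}y^{3}]^{2}$, which is the stated formula. Then I would carry out the dual computations for $s=3$ (with $p_{6}$ in place of $y^{3}$, $\partial^{6}$ in place of $\partial_{\varphi}$, and $\ ^{\shortmid}g^{6}=0$) and for $s=4$ (with $E$, $\partial_{E}=\ ^{\ast}$, and the dual role of $\ ^{\shortmid}g^{7},\ ^{\shortmid}g^{8}$), obtaining $\ ^{\shortmid}g^{5}=[\ ^{3}c_{1}+\ ^{3}c_{2}p_{6}]^{2}$ and $\ ^{\shortmid}g^{7}=[\ ^{4}c_{1}+\ ^{4}c_{2}E]^{2}$, with the corresponding $n$-coefficients; the cofiber index bookkeeping (upper versus lower shell indices, the factor conventions of (\ref{riccist3}) and (\ref{riccist4a})) is where care is needed but the structure is identical to $s=2$ after relabelling.

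The main obstacle I anticipate is not any single calculation but making sure the right subset of the decoupled equations is actually exhausted: with vanishing source the ``$g$-equation'' on each shell no longer pins down the large metric component via a simple first integral as in Theorem \ref{theordecoupl}, so one must return to the full quasi-stationary Ricci coefficients of Lemma \ref{lemmaricci}, verify that the type 3 ansatz conditions kill exactly the algebraic ($\ _{s}^{\shortmid}\beta\ w_{j}-\alpha_{j}=0$) block, and then solve the residual nonlinear second-order ODE for $g_{4}$ (resp. $\ ^{\shortmid}g^{5}$, $\ ^{\shortmid}g^{7}$) whose general solution is the perfect square of a function affine in the fiber coordinate. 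Once that ODE is integrated, the $n$-equations are linear and integrate in closed form, and assembling the four shell pieces into the nonlinear quadratic element (as in Consequences \ref{nqelvacuum1} and \ref{nqelvacuum2}) completes the proof; I would end with the $\square$.
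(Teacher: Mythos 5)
Your proposal follows essentially the same route as the paper's own proof: under $g_{3}^{\diamond}=0$ the vacuum condition on shell $s=2$ reduces to $g_{4}^{\diamond\diamond}-(g_{4}^{\diamond})^{2}/2g_{4}=0$, whose general solution is the perfect square $[c_{1}(x^{k_{1}})+c_{2}(x^{k_{1}})y^{3}]^{2}$, after which the linear $n$-equation integrates to the stated form and the shells $s=3,4$ follow by relabelling momentum coordinates. The one point to tighten is your claim that $\ _{2}^{\shortmid}\alpha_{i_{1}}=0$ holds automatically: since $\ _{2}^{\shortmid}\varpi$ may still depend on $x^{k_{1}}$ through $c_{2}$ and $\ ^{0}g_{3}$, one must, as the paper does, additionally fix $\ _{2}^{\shortmid}\varpi=\ _{2}^{\shortmid}\varpi_{0}=const$ (not merely $(\ _{2}^{\shortmid}\varpi)^{\diamond}=0$) so that the algebraic system (\ref{e2b}) indeed leaves $w_{k_{1}}$ arbitrary.
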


\begin{proof}
Let us consider, for instance, the shell $s=2.$ For the conditions $%
g_{4}^{\diamond }\neq 0$ and $g_{3}^{\diamond }=0,$ the first equation in (%
\ref{riccist2}), and (\ref{e2a}), with zero source transforms into $%
g_{4}^{\diamond \diamond }-\frac{\left( g_{4}^{\diamond }\right) ^{2}}{2g_{4}%
}=0,$ which can be solved all together by some v-coefficients
\begin{equation*}
g_{3}=\ ^{0}g_{3}(x^{k_{1}})\mbox{ and }g_{4}(x^{k_{1}},y^{3})=\left[
c_{1}(x^{k_{1}})+c_{2}(x^{k_{1}})y^{3}\right] ^{2},
\end{equation*}%
with generating functions $c_{1}(x^{k_{1}}),c_{2}(x^{k_{2}})$, and $g_{3}=\
^{0}g_{3}(x^{k_{1}}).$ Fixing the condition $\ _{2}^{\shortmid }\varpi =\ln
|g_{4}^{\diamond }/\sqrt{|g_{3}g_{4}}|=\ _{2}^{\shortmid }\varpi _{0}=const,$
when $\ _{2}^{\shortmid }\alpha _{i_{1}}=0$ and $\ _{2}^{\shortmid }\beta =0$
for the linear algebraic system (\ref{e2b}), we can consider arbitrary
generating functions $w_{k_{1}}(x^{i_{1}},y^{3}).$ Integrating (\ref{e2c})
for (\ref{auxil2sa}), we obtain
\begin{equation*}
n_{i_{1}}(x^{k_{1}},y^{3})=\ _{1}n_{i_{1}}(x^{k_{1}})+\
_{2}n_{i_{1}}(x^{k_{1}})\int dy^{3}|g_{4}|^{-3/2}=\
_{1}n_{i_{1}}(x^{k_{1}})+\ _{2}n_{i_{1}}(x^{k_{1}})\left[
c_{1}(x^{k_{1}})+c_{2}(x^{k_{1}})y^{3}\right] ^{-2},
\end{equation*}%
with redefined integration constants. Proofs for shells $s=3,4$ can be
performed in similar forms but with corresponding redefinitions of symbols
and indices. $\square $\vskip5pt
\end{proof}

Using above coefficients for vacuum phase space s-metrics, we obtain

\begin{consequence}
\label{nqelvacuum3}\textsf{[nonlinear quadratic elements for vacuum phase
s-space metric of type 3] } Nonlinear quadratic elements $d\ ^{\shortmid
}s_{v3,8d}^{2}$ for "pure" quasi-stationary vacuum s-metrics of type 3 are
constructed as a sum of quadratic stationary vacuum space metrics $%
ds_{v3,s2}^{2}$ and additional shells $s=3$ and $s=4$ vacuum contributions,
denoted respectively, $ds_{v3,s3}^{2}$ and $ds_{v3,s4}^{2},$ all being of
type 3,
\begin{equation*}
d\ ^{\shortmid
}s_{v3,8d}^{2}=ds_{v,s1}^{2}+ds_{v3,s2}^{2}+ds_{v3,s3}^{2}+ds_{v3,s4}^{2},
\end{equation*}%
\begin{eqnarray*}
\mbox{ where }ds_{v3,s1}^{2} &=&e^{\ ^{0}\psi
(x^{k})}[(dx^{1})^{2}+(dx^{2})^{2}],\  \\
ds_{v3,s2}^{2} &=&\ \
^{0}g_{3}(x^{k_{1}})[dy^{3}+w_{k_{1}}(x^{i_{1}},y^{3})dx^{k_{1}}]^{2} \\
&&+\left[ c_{1}(x^{k_{1}})+c_{2}(x^{k_{1}})y^{3}\right] ^{2}[dy^{4}+(\
_{1}n_{i_{1}}(x^{k_{1}})+\ \frac{\ _{2}n_{i_{1}}(x^{k_{1}})}{\left[
c_{1}(x^{k_{1}})+c_{2}(x^{k_{1}})y^{3}\right] ^{2}})dx^{i_{1}}]^{2},
\end{eqnarray*}%
\begin{eqnarray*}
ds_{v2,s3}^{2} &=&\left[ \ \ ^{3}c_{1}(x^{k_{2}})+\ \
^{3}c_{2}(x^{k_{2}})p_{6}\right] ^{2}\ [dp_{5}+(\ _{1}n_{i_{2}}(x^{k_{2}})+%
\frac{\ _{2}n_{i_{2}}(x^{k_{2}})}{\left[ \ \ ^{3}c_{1}(x^{k_{2}})+\ \
^{3}c_{2}(x^{k_{2}})p_{6}\right] ^{2}})dx^{i_{2}}]^{2} \\
&&+\ \ _{0}^{\shortmid }g^{6}(x^{k_{2}})[dp_{6}+\ ^{\shortmid
}w_{k_{2}}(x^{i_{2}},p_{6})dx^{k_{2}}]^{2}, \\
ds_{v2,s4}^{2} &=&\ \left[ \ ^{4}c_{1}(x^{k_{3}})+\ \ ^{4}c_{2}(x^{k_{3}})E%
\right] ^{2}[dp_{7}+(\ _{1}n_{i_{3}}(x^{k_{3}})+\frac{\
_{2}n_{i_{3}}(x^{k_{3}})}{\left[ \ ^{4}c_{1}(x^{k_{3}})+\
^{4}c_{2}(x^{k_{3}})E\right] ^{2}})dx^{i_{3}}]^{2} \\
&&+\ \ \ _{0}^{\shortmid }g^{8}(x^{k_{3}})[dE+\ ^{\shortmid
}w_{k_{3}}(x^{i_{3}},E)dx^{k_{3}}]^{2}.
\end{eqnarray*}
\end{consequence}

In above formulas, for instance, the labels $v3,s4$ states that we consider
extensions for the shall 4 for vacuum solutions of third type.

\subsubsection{Phase spaces with mixed vacuum shells, polarization
functions, LC-conditions etc.}

Considering, for instance, a type 1 vacuum solution for shells 1 and 2, we
can construct vacuum solutions when the shell is described by solutions of
type 1, 2, or 3; or by solutions of type 1, 2, 3. Summarizing Conclusions %
\ref{nqelvacuum1}, \ref{nqelvacuum2}, and \ref{nqelvacuum3} for vacuum
solutions of type $v[\overline{a}],$ $\overline{a},\overline{b},\overline{c},%
\overline{d}=1,2,3$, we prove

\begin{theorem}
\label{vacuumsolutions}\textsf{[generating vacuum solutions on dyadic shells
of cotangent Lorentz bundles] } Using nonholonomic frame deformations, we
can generate $3\times 3\times 3=27$ types of quadratic nonlinear elements
for target vacuum s-metrics
\begin{equation*}
d\ ^{\shortmid }s_{vac,8d}^{2}=ds_{v,s1}^{2}+ds_{v[\overline{b}%
],s2}^{2}+ds_{v[\overline{c}],s3}^{2}+ds_{v[\overline{d}],s4}^{2}.
\end{equation*}
\end{theorem}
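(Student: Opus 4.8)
\textbf{Proof plan for Theorem \ref{vacuumsolutions}.}

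The plan is to assemble the target vacuum solution shell by shell, invoking the three vacuum integral varieties already established for a single conventional 2-dimensional cofiber in Corollaries \ref{integrvarvac1}, \ref{integrvarvac2} and \ref{integrvarvac3}. The key structural observation is the decoupling property of Theorem \ref{theordecoupl}: with all generating sources set to zero, $\ _{s}^{\shortmid}\widehat{\Upsilon}=0$, the generalized Einstein equations $\ ^{\shortmid}\widehat{\mathbf{R}}_{\alpha_s\beta_s}=0$ split into a base-shell 2-d Laplace equation (\ref{eq1}) giving $g_{i_1}=e^{\ ^{0}\psi(x^{k_1})}$, and then three essentially independent blocks of shell equations (\ref{e2a})--(\ref{e2c}), (\ref{e3a})--(\ref{e3c}), (\ref{eq4a})--(\ref{eq4c}), one for each of $s=2,3,4$, each block coupling only coordinates on its own and lower shells. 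The three "types" of vacuum behavior correspond to the three admissible degeneracy patterns of the v-component equation $(\ _{s}^{\shortmid}\varpi)^{\diamond}g_{a_s}^{\diamond}=0$ on each shell: either the "vertical" derivative of the second metric coefficient vanishes (type 1), or neither vanishes but $\ _{s}^{\shortmid}\varpi=\mathrm{const}$ (type 2), or the first coefficient has zero vertical derivative while the second does not (type 3). Since the shells $s=2,3,4$ are mutually independent once $s=1$ is fixed, one may choose the type freely and independently on each of shells $2,3,4$.

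First I would fix the base: solve (\ref{eq1}) with zero source to get $ds_{v,s1}^2=e^{\ ^{0}\psi(x^k)}[(dx^1)^2+(dx^2)^2]$, which is common to all $27$ cases. Then, for shell $s=2$, I would record that Corollaries \ref{integrvarvac1}--\ref{integrvarvac3} give, for each choice $\overline{b}\in\{1,2,3\}$, an explicit integral variety $ds_{v[\overline{b}],s2}^2$ with the stated arbitrary generating and integration functions. Next I would repeat verbatim the same three-way analysis on shell $s=3$, where the roles of $(y^3,\partial_{y^3})$ are played by $(p_6,\partial^6)$ and $(g_3,g_4)$ by $(g^5,g^6)$; the equations (\ref{e3a})--(\ref{e3c}) are formally identical to (\ref{e2a})--(\ref{e2c}) up to these relabelings (as already noted in the proof of Theorem \ref{theordecoupl} and its references to Table 4 of \cite{vbubuianu17}), so the same three vacuum varieties $ds_{v[\overline{c}],s3}^2$ apply. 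Likewise on shell $s=4$ with $(E,\partial_E=\ast)$ and $(g^7,g^8)$, giving $ds_{v[\overline{d}],s4}^2$. Finally I would glue: because the Ricci s-tensor coefficients in Lemma \ref{lemmaricci} on shell $s$ involve only the s-metric and N-connection coefficients of shells $\le s$, setting each block to zero independently produces a global solution $\ ^{\shortmid}\widehat{\mathbf{R}}_{\alpha_s\beta_s}=0$; summing the quadratic elements yields $d\ ^{\shortmid}s_{vac,8d}^2 = ds_{v,s1}^2 + ds_{v[\overline{b}],s2}^2 + ds_{v[\overline{c}],s3}^2 + ds_{v[\overline{d}],s4}^2$, and counting the independent choices gives $3\times3\times3=27$ types.

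The main obstacle, and the point requiring genuine care rather than bookkeeping, is verifying that the shell-independence is not spoiled by the N-connection coefficients: the elongated coframes $\ ^{\shortmid}\mathbf{e}_{a_s}$ on higher shells carry $w$- and $n$-coefficients that depend on lower-shell coordinates, so one must confirm that the off-diagonal N-terms generated on shell $s$ (from $\ ^{\shortmid}\widehat{R}_{a_s k_{s-1}}=0$) remain consistent with the already-fixed shells $<s$ and do not feed back into their equations. This is handled by the "triangular" structure of $\ _{s}^{\shortmid}\widehat{\mathbf{D}}$ and the ordered dependence $x^{i_s}=(x^{i_{s-1}},\text{new shell coords})$ built into the diadic ansatz (\ref{ansatz1})--(\ref{ansatz1nc}); I would state this explicitly, pointing to Lemma \ref{lemmaricci} and the fact that in (\ref{riccist3}), (\ref{riccist4a}) the shell-$s$ equations contain partial derivatives only in the shell-$s$ vertical variable. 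A secondary subtlety is the degenerate-limit remark at the start of Section \ref{ssvacuumfc}: one must note that these vacuum families are genuinely new integral varieties, not $\Upsilon,\Lambda\to 0$ limits of (\ref{offdsolgenfgcosmc}), which justifies treating each type separately rather than as a specialization. With those two points addressed, the enumeration is immediate and the proof reduces to the collation of Corollaries \ref{integrvarvac1}, \ref{integrvarvac2}, \ref{integrvarvac3} applied on each of the three upper shells. $\square$
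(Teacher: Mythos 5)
Your proposal is correct and follows essentially the same route as the paper: the paper "proves" this theorem simply by summarizing Consequences \ref{nqelvacuum1}, \ref{nqelvacuum2} and \ref{nqelvacuum3} (which rest on Corollaries \ref{integrvarvac1}--\ref{integrvarvac3}), choosing one of the three vacuum types independently on each of shells $s=2,3,4$ over the common Laplace base $ds_{v,s1}^{2}$, exactly as you describe. Your added remarks on the triangular shell structure of the decoupled equations and on the non-smoothness of the $\Upsilon,\Lambda\to 0$ limit are both consistent with (and make explicit) points the paper states only in passing.
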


\begin{example}
\textsf{[target prime metrics with mixed \ vacuum configurations] } A
nonlinear quadratic line element for vacuum solutions determined by
s-metrics of respective types 1, 2, 3 on shells 2,3,4 is parameterized
\begin{equation*}
d\ ^{\shortmid
}s_{v[1,2,3],8d}^{2}=ds_{v,s1}^{2}+ds_{v1,s2}^{2}+ds_{v2,s3}^{2}+ds_{v3,s4}^{2},
\end{equation*}%
\begin{eqnarray*}
\mbox{ where }ds_{v,s1}^{2} &=&e^{\ ^{0}\psi
(x^{k})}[(dx^{1})^{2}+(dx^{2})^{2}],\  \\
ds_{v1,s2}^{2}
&=&g_{3}(x^{i_{1}},y^{3})[dy^{3}+w_{k_{1}}(x^{i_{1}},y^{3})dx^{k_{1}}]^{2}+g_{4}(x^{i_{1}})[dy^{4}+(\ _{1}n_{k_{1}}(x^{i_{1}})+\ _{2}n_{k_{1}}(x^{i_{1}})\int dy^{3}/g_{3})dx^{k_{1}}]^{2},
\end{eqnarray*}%
\begin{eqnarray*}
ds_{v2,s3}^{2} &=&\ \ ^{\shortmid
}g^{5}(x^{i_{1}},y^{a_{2}},p_{6})[dp_{5}+(\ _{1}n_{k_{2}}(x^{i_{2}})+\
_{2}n_{k_{2}}(x^{i_{2}})\int dp_{6}[\partial ^{6}(|\ ^{\shortmid
}g^{5}|^{3/4})]^{2})dx^{k_{2}}]^{2} \\
&&+\ ^{0}g^{6}(x^{k_{2}})[(\sqrt{\partial ^{6}|\ ^{\shortmid }g^{5}|}%
)]^{2}[dp_{6}+\ ^{\shortmid }w_{k_{2}}(x^{i_{2}},p_{6})dx^{k_{2}}]^{2},
\end{eqnarray*}%
\begin{eqnarray*}
ds_{v3,s4}^{2} &=&\ \left[ \ ^{4}c_{1}(x^{k_{3}})+\ \ ^{4}c_{2}(x^{k_{3}})E%
\right] ^{2}[dp_{7}+(\ _{1}n_{i_{3}}(x^{k_{3}})+\frac{\
_{2}n_{i_{3}}(x^{k_{3}})}{\left[ \ ^{4}c_{1}(x^{k_{3}})+\
^{4}c_{2}(x^{k_{3}})E\right] ^{2}})dx^{i_{3}}]^{2} \\
&&+\ \ \ _{0}^{\shortmid }g^{8}(x^{k_{3}})[dE+\ ^{\shortmid
}w_{k_{3}}(x^{i_{3}},E)dx^{k_{3}}]^{2}.
\end{eqnarray*}
\end{example}

Similarly, \ we can generate quadratic elements with another type mixing of
vacuum shell configurations. \ We conclude this section with two important
remarks:

\begin{remark}
\textsf{[off-diagonal vacuum solutions and zero torsion] } \ We can restrict
vacuum integral varieties in various forms which result in LC-configurations
like in subsection \ref{ssslconf}. For instance, we can generate LC-vacuum
configurations with nonlinear quadratic elements
\begin{equation*}
d\ ^{\shortmid
}s_{v[2]LC,8d}^{2}=ds_{vLC,s1}^{2}+ds_{v1LC,s2}^{2}+ds_{v2LC,s3}^{2}+ds_{v2LC,s4}^{2},
\end{equation*}%
\begin{eqnarray*}
\mbox{ where }ds_{vLC,s1}^{2} &=&e^{\ ^{0}\psi
(x^{k})}[(dx^{1})^{2}+(dx^{2})^{2}], \\
ds_{v1LC,s2}^{2} &=&g_{3}(x^{i_{1}},y^{3})[dy^{3}+\partial _{i_{1}}[\
_{2}^{\shortmid }\check{A}%
(x^{i_{1}},y^{3})]dx^{i_{1}}]^{2}+g_{4}(x^{i_{1}})[dy^{4}+\partial
_{i_{1}}[\ ^{2}n(x^{k_{1}})]dx^{i_{1}}]^{2},
\end{eqnarray*}%
\begin{eqnarray*}
ds_{v2LC,s3}^{2} &=&\ \ ^{\shortmid
}g^{5}(x^{i_{1}},y^{a_{2}},p_{6})[dp_{5}+\partial _{i_{2}}[\
^{3}n(x^{k_{2}})]dx^{i_{2}}]^{2} \\
&&+\ ^{0}g^{6}(x^{k_{2}})[(\sqrt{\partial ^{6}|\ ^{\shortmid }g^{5}|}%
)]^{2}[dp_{6}+\partial _{k_{2}}(\ _{3}^{\shortmid }\check{A}%
(x^{i_{2}},p_{6}))dx^{k_{2}}]^{2},
\end{eqnarray*}%
\begin{eqnarray*}
ds_{v2LC,s4}^{2} &=&\ \ \ ^{\shortmid
}g^{7}(x^{i_{1}},y^{a_{2}},p_{a_{3}},E)[dp_{7}+\partial _{i_{3}}[\
^{3}n(x^{k_{3}})]dx^{i_{3}}]^{2} \\
&&+\ ^{0}g^{8}(x^{k_{2}})[(\sqrt{|\ ^{\shortmid }g^{7}|^{\ast }}%
)]^{2}[dE+\partial _{k_{3}}(\ _{4}^{\shortmid }\check{A}%
(x^{i_{3}},E))dx^{k_{3}}]^{2}.
\end{eqnarray*}
\end{remark}

\begin{remark}
\textsf{[off-diagonal vacuum solutions and polarization and generation
functions] } We can parameterize s-metrics for vacuum phase configurations
in terms of conventional $\Psi $-generating, $\widetilde{\Psi }$-generating
and/or other type generating functions and/or using gravitationala $\eta $%
-polarization or $\varepsilon $-polarization functions considered in above
subsections.
\end{remark}

\subsection{Generating solutions for Einstein-Hamilton spaces and
Finsler-Lagrange geometry}

In principle, Finsler like nonholonomic variables can be always introduced
on (co) tangent Lorentz bundles and on base Lorentz manifolds. Any s-metric
structure can be rewritten as a Finsler like d-metric one, and inversely,
using corresponding frame transforms. For instance, Finsler-Lagrange and
Hamilton spaces are characterized by different N- and d-connection
structures which results in different types of modifications of the Einstein
equations. Nevertheless, using nonlinear symmetries between generating
functions and generating source and corresponding transforms (studied in
section \ref{ssnonlsym}) we can redefine various classes of off-diagonal
solutions in a MGT into a corresponding classes of generalized
Einstein-Hamilton gravity models.

On a $\mathbf{TV}$ and its dual $\mathbf{T}^{\ast }\mathbf{V,}$ we can
always consider respective canonical data for Lagrange and Hamilton spaces, $%
(\widetilde{L},\ \widetilde{\mathbf{N}};\widetilde{\mathbf{e}}_{\alpha },%
\widetilde{\mathbf{e}}^{\alpha })$ and/or $(\widetilde{H},\ ^{\shortmid }%
\widetilde{\mathbf{N}};\ ^{\shortmid }\widetilde{\mathbf{e}}_{\alpha },\
^{\shortmid }\widetilde{\mathbf{e}}^{\alpha }),$ determined, for instance,
by a Hamilton generating function $H(x,p)$ (\ref{nqed}), with defined
Legendre and inverse Legendre transforms, respectively, (\ref{legendre}) and
(\ref{invlegendre}), see details \cite{v18a} and Appendix \ref{appendixb}.

\begin{definition}
\label{defehvariab} \textsf{[Hamilton and Finsler-Lagrange variables on
cotangent Lorentz bundles with dyadic structure] } Let us consider an open
region $U\subset $ $\mathbf{T}^{\ast }\mathbf{V}\ $ enabled with a
nonholonomic dyadic structure $\ ^{\shortmid }\mathbf{e}_{\alpha _{s}}$ (\ref%
{nadapbds}) following conditions of Lemma \ref{lemadap}. Sets of respective
nonholonomic variables (vielbein structures), $\ \mathbf{e}_{\ \alpha
_{s}}^{\alpha }\ $\ and $\ ^{\shortmid }\mathbf{e}_{\ \alpha _{s}}^{\alpha
},\ \ $ determined by a $H(x,p)$ (\ref{nqed}) and $\widetilde{\mathbf{e}}%
_{\alpha }$ (\ref{ccnadap}) are stated by frame transforms
\begin{eqnarray*}
\mathbf{e}_{\alpha _{s}} &=&\mathbf{e}_{\ \alpha _{s}}^{\alpha }\
^{\shortmid }\widetilde{\mathbf{e}}_{\alpha },%
\mbox{ for Lagrange (Finsler)
variables} \\
\ ^{\shortmid }\mathbf{e}_{\alpha _{s}} &=&\ ^{\shortmid }\mathbf{e}_{\
\alpha _{s}}^{\alpha }\ ^{\shortmid }\widetilde{\mathbf{e}}_{\alpha },\ %
\mbox{ for Hamilton variables}.
\end{eqnarray*}%
Considering dual bases and respective inverse matrices, $\ \mathbf{e}%
_{\alpha ^{\prime }\ }^{\ \alpha _{s}}\ $\ and $\ ^{\shortmid }\mathbf{e}%
_{\alpha ^{\prime }\ }^{\ \alpha _{s}},$ we can always (inversely) transform
geometric data
\begin{equation*}
(\ ^{s}\mathbf{N};\mathbf{e}_{\alpha _{s}},\mathbf{e}^{\alpha
_{s}})\longleftrightarrow (\widetilde{L},\ \widetilde{\mathbf{N}};\widetilde{%
\mathbf{e}}_{\alpha },\widetilde{\mathbf{e}}^{\alpha })\mbox{ and/or }(\
_{s}^{\shortmid }\widetilde{\mathbf{N}};\ ^{\shortmid }\widetilde{\mathbf{e}}%
_{\alpha _{s}},\ ^{\shortmid }\widetilde{\mathbf{e}}^{\alpha
_{s}})\longleftrightarrow (\widetilde{H},\ ^{\shortmid }\widetilde{\mathbf{N}%
};\ ^{\shortmid }\widetilde{\mathbf{e}}_{\alpha },\ ^{\shortmid }\widetilde{%
\mathbf{e}}^{\alpha }).
\end{equation*}%
Using an atlas covering\ $_{s}\mathbf{T}^{\ast }\mathbf{V,}$ we can
introduce Lagrange-Hamilton variables on (co) tangent bundles with dyadic
structure.
\end{definition}

Hereafter, we shall work only with Hamilton like variables considering that
we can always introduce Lagrange (Finsler) variables using necessary type
Legendre transforms and homogeneity conditions. Probing particles and waves
in phase spaces enabled with Hamilton like variables are described directly
by a corresponding geometric mechanics encoding in explicitly form MDR (\ref%
{mdrg}). In another turn, dyadic configurations allow us to decouple
generalized Einstein equations.

\begin{corollary}
\textsf{[s-metrics in Hamilton variables] } Any s-metric (\ref{dm2and2}) and
respective off-diagonal metric structure (\ref{offd}) can be described
equivalently as Hamilton canonical d-metric (\ref{cdmds}).
\end{corollary}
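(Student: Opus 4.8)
The plan is to establish the claimed equivalence by constructing, at each point of a coordinate chart on the open region $U\subset\mathbf{T}^{\ast}\mathbf{V}$, the explicit frame transform that carries the s-metric coefficients of $\ _{s}^{\shortmid}\mathbf{g}$ (as in (\ref{dm2and2}) and its off-diagonal rewriting (\ref{offd})) into the canonical Hamilton d-metric form $\ ^{\shortmid}\widetilde{\mathbf{g}}_{\alpha\beta}$ (\ref{cdmds}) determined by a Hamilton generating function $H(x,p)$. First I would recall that, by Definition \ref{defehvariab}, on $U$ we already have at our disposal the Hamilton nonholonomic variables $(\ _{s}^{\shortmid}\widetilde{\mathbf{N}};\ ^{\shortmid}\widetilde{\mathbf{e}}_{\alpha_{s}},\ ^{\shortmid}\widetilde{\mathbf{e}}^{\alpha_{s}})$ together with the vielbein matrices $\ ^{\shortmid}\mathbf{e}_{\ \alpha_{s}}^{\alpha}$ and their inverses $\ ^{\shortmid}\mathbf{e}_{\alpha'\ }^{\ \alpha_{s}}$; the content of the corollary is then essentially that these same transforms act correctly on the metric (a $(0,2)$-tensor) and not merely on the frames.

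The key steps, in order, are as follows. (i) Write the given s-metric in the shell-adapted co-frame basis $\ ^{\shortmid}\mathbf{e}^{\alpha_{s}}$ as $\ ^{\shortmid}\mathbf{g}=\ ^{\shortmid}\mathbf{g}_{\alpha_{s}\beta_{s}}\ ^{\shortmid}\mathbf{e}^{\alpha_{s}}\otimes\ ^{\shortmid}\mathbf{e}^{\beta_{s}}$, which is possible by Assumption \ref{assumpt3}. (ii) Apply the frame transform of Definition \ref{defehvariab}, $\ ^{\shortmid}\mathbf{e}_{\alpha_{s}}=\ ^{\shortmid}\mathbf{e}_{\ \alpha_{s}}^{\alpha}\ ^{\shortmid}\widetilde{\mathbf{e}}_{\alpha}$, to pull the metric tensor into the tilde co-frame: the transformed coefficients are $\ ^{\shortmid}\widetilde{\mathbf{g}}_{\alpha\beta}=\ ^{\shortmid}\mathbf{e}_{\alpha\ }^{\ \alpha_{s}}\ ^{\shortmid}\mathbf{e}_{\beta\ }^{\ \beta_{s}}\ ^{\shortmid}\mathbf{g}_{\alpha_{s}\beta_{s}}$. (iii) Verify that the horizontal block reproduces $\ ^{\shortmid}\widetilde{g}_{ij}$ and the (co-)vertical block reproduces $\ ^{\shortmid}\widetilde{g}^{ab}=\frac{1}{2}\ ^{\shortmid}\partial^{a}\ ^{\shortmid}\partial^{b}H$, i.e. the canonical Hessian of the Hamiltonian, by invoking Theorem [equivalent representations of d- and s-metrics as canonical d-metrics for Lagrange-Hamilton spaces] which already guarantees this up to frame/coordinate transform, together with the explicit form (\ref{cdmds}). (iv) Note that the N-connection coefficients transform compatibly, so that $\ ^{\shortmid}\widetilde{\mathbf{N}}$ is precisely the canonical Hamilton N-connection derived from $H$ via the associated semispray; this is the step where the $\mathcal{L}$-duality (\ref{legendre})--(\ref{invlegendre}) enters if one wishes to pass between Lagrange and Hamilton descriptions. (v) Finally observe that since the construction is pointwise and the vielbeins are smooth, gluing over an atlas covering $\ _{s}\mathbf{T}^{\ast}\mathbf{V}$ yields a global equivalence of geometric data, which is exactly the assertion of Definition \ref{defehvariab}.

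The main obstacle I expect is step (iii): ensuring that the transformed (co-)vertical block is genuinely a Hessian $\frac{1}{2}\ ^{\shortmid}\partial^{a}\ ^{\shortmid}\partial^{b}H$ of some well-defined generating function $H$, rather than merely an arbitrary symmetric nondegenerate matrix. For a \emph{given} s-metric one must reconstruct $H$ by integration, and this requires an integrability (closedness) condition on $\ ^{\shortmid}\mathbf{g}^{ab}$ as a function of the momenta; the correct statement is that locally this can always be arranged up to a frame transform (which is why the corollary is phrased with "equivalently" and "up to frame/coordinate transform"), and the existence of such an $H$ for nondegenerate, suitably regular (co-)vertical metrics is precisely the content already packaged in the earlier Theorem. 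So in the write-up I would not attempt to reconstruct $H$ from scratch but rather reduce to that theorem, and the remaining work is the routine bookkeeping of indices across the shell decomposition $s=1,2,3,4$, which I would only sketch. A secondary, purely notational obstacle is keeping the left-label conventions ($\ ^{\shortmid}$ for cotangent, tilde for canonical, shell subscripts) consistent when combining (\ref{dm2and2}), (\ref{offd}) and (\ref{cdmds}); I would handle this by fixing one shell explicitly and stating that the others follow by the same computation with relabelled indices.
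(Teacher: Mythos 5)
Your proposal is essentially correct but follows a genuinely different route from the paper's own proof. The paper does not start from a given frame transform and then check that the transformed covertical block is a Hessian; instead it prescribes $H(x,p)$ first --- so the canonical d-metric $\ ^{\shortmid}\widetilde{\mathbf{g}}_{\alpha\beta}$ of (\ref{cdmds}), with its $20$ Hessian-determined components, is fixed from the outset --- and then treats the relation $\ ^{\shortmid}\mathbf{g}_{\alpha_{s}\beta_{s}}=\ ^{\shortmid}\mathbf{e}_{\ \alpha_{s}}^{\alpha}\ ^{\shortmid}\mathbf{e}_{\ \beta_{s}}^{\beta}\ ^{\shortmid}\widetilde{\mathbf{g}}_{\alpha\beta}$ as a system of $12$ independent quadratic algebraic equations for the $8\times 8=64$ vielbein coefficients, concluding solvability (up to classes of frame/coordinate transforms) by a count of degrees of freedom. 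This ordering dissolves the obstacle you single out in your step (iii): because $H$ is given in advance rather than reconstructed from the transformed metric, no closedness/integrability condition on $\ ^{\shortmid}\mathbf{g}^{ab}$ in the momenta ever has to be verified; one only needs that two nondegenerate symmetric forms of the same signature are linearly related pointwise, together with smoothness of the resulting vielbein. Your route --- pushing the s-metric through the transforms of Definition \ref{defehvariab} and then deferring the Hessian question to the equivalence stated in (\ref{eqgd}) --- is legitimate, but it makes the corollary a near-restatement of that earlier theorem (whose proof the paper itself only asserts), whereas the counting argument is self-contained. If you retain your approach, make explicit that the vielbein is being \emph{chosen}, not given, so that the transformed coefficients equal the prescribed Hessian block; that is precisely the paper's formulation, and it removes the apparent circularity in relying on the frame transforms of Definition \ref{defehvariab} to "act correctly on the metric."
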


\begin{proof}
Let us consider frame transforms $\ ^{\shortmid }\mathbf{g}_{\alpha
_{s}\beta _{s}}=\ ^{\shortmid }\mathbf{e}_{\ \alpha _{s}}^{\alpha }\
^{\shortmid }\mathbf{e}_{\ \beta _{s}}^{\beta }\ ^{\shortmid }\widetilde{%
\mathbf{g}}_{\alpha \beta }.$ For a prescribed $H(x,p),$ i.e. a symmetric$\
^{\shortmid }\widetilde{\mathbf{g}}_{\alpha \beta },$ with 20 components
determined by a respective Hessian, and $\ ^{\shortmid }\mathbf{g}_{\alpha
_{s}\beta _{s}}$ (with 12 independent components\footnote{%
in GR, there are 6 independent degrees of freedom of a pseudo-Riemannian
metric; because of Bianchi identities we can eliminate via coordinate
transforms 4 coefficients from 10 coefficients of a symmetric second rank
tensor; 6 other independent degrees are obtained for (co) fiber components}
which may define an exact solution in MGT with MDRs). We can consider the
frame transforms of a d-metric into a s-metric as algebraic equations for
some $\ ^{\shortmid }\mathbf{e}_{\ \alpha _{s}}^{\alpha }$ (totally, there
are $8\times 8$ such coefficients). So, we have 12 independent algebraic
quadratic equations relating 20 coefficients of a Hamiltonian d-metric, 12
coefficients of s-metric and 64 coefficients of $\ ^{\shortmid }\mathbf{e}%
_{\ \alpha _{s}}^{\alpha }.$ Such a system can be always solved up to
certain classes of frame/coordinate transforms. $\square $\vskip5pt
\end{proof}

Different Finsler like and Hamilton MGTs are characterized by corresponding
d-connection distortions of type (\ref{candistr}), see Theorem \ref{thdistr}.

\begin{lemma}
Canonical distortion relations $\ _{s}^{\shortmid }\widehat{\mathbf{D}}=\
^{\shortmid }\widetilde{\mathbf{D}}+$ $\ _{s}^{\shortmid }\widehat{\mathbf{Z}%
}$ result in effective sources of type
\begin{equation*}
\ _{\shortmid }^{e}\widehat{\Upsilon }_{\alpha _{s}\beta _{s}}:=\varkappa (\
\ _{\shortmid }^{e}\widehat{\mathbf{T}}_{\alpha _{s}\beta _{s}}-\frac{1}{2}\
^{\shortmid }\mathbf{g}_{\alpha _{s}\beta _{s}}\ \ \ _{\shortmid }^{e}%
\widehat{\mathbf{T}}),
\end{equation*}%
where $\varkappa \ \ \ _{\shortmid }^{e}\widehat{\mathbf{T}}_{\alpha
_{s}\beta _{s}}=-\ ^{\shortmid }\widehat{\mathbf{Z}}_{\alpha _{s}\beta
_{s}}[\ _{s}^{\shortmid }\mathbf{g[\ ^{\shortmid }\widetilde{\mathbf{g}}}%
_{\alpha \beta }\mathbf{]},\ \ ^{\shortmid }\widetilde{\mathbf{D}}].$
\end{lemma}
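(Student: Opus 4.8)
The plan is to derive the claimed effective source formula directly from the canonical distortion relation $\ _{s}^{\shortmid }\widehat{\mathbf{D}}=\ ^{\shortmid }\widetilde{\mathbf{D}}+\ _{s}^{\shortmid }\widehat{\mathbf{Z}}$ established in Theorem \ref{thdistr}, combined with the Ricci-distortion statement of Theorem \ref{thcandist}. First I would write the canonical Ricci s-tensor as $\ ^{\shortmid }\widehat{\mathbf{R}}_{\alpha _{s}\beta _{s}}[\ _{s}^{\shortmid }\widehat{\mathbf{D}}]=\ ^{\shortmid }\widetilde{\mathbf{R}}_{\alpha _{s}\beta _{s}}[\ ^{\shortmid }\widetilde{\mathbf{D}}]+\ ^{\shortmid }\widehat{\mathbf{Z}}_{\alpha _{s}\beta _{s}}[\ _{s}^{\shortmid }\mathbf{g}[\ ^{\shortmid }\widetilde{\mathbf{g}}_{\alpha \beta }],\ ^{\shortmid }\widetilde{\mathbf{D}}]$, where the distortion d-tensor $\ ^{\shortmid }\widehat{\mathbf{Z}}_{\alpha _{s}\beta _{s}}$ is the algebraic/differential combination of the coefficients of $\ _{s}^{\shortmid }\widehat{\mathbf{Z}}$ contracted in the pattern that produces the Ricci tensor, exactly as in the formulas (\ref{driccisd}) applied to a distortion rather than a connection.

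Second, I would insert this splitting into the canonical modified Einstein equations (\ref{meinsteqtbcand}), $\ ^{\shortmid }\widehat{\mathbf{R}}_{\alpha _{s}\beta _{s}}[\ _{s}^{\shortmid }\widehat{\mathbf{D}}]=\ ^{\shortmid }\widehat{\Upsilon }_{\alpha _{s}\beta _{s}}$, and move the distortion term to the right-hand side, so that the equations become $\ ^{\shortmid }\widetilde{\mathbf{R}}_{\alpha _{s}\beta _{s}}[\ ^{\shortmid }\widetilde{\mathbf{D}}]=\ ^{\shortmid }\widehat{\Upsilon }_{\alpha _{s}\beta _{s}}-\ ^{\shortmid }\widehat{\mathbf{Z}}_{\alpha _{s}\beta _{s}}$. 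Following the definition (\ref{effdtsourc}), I identify $\varkappa \ _{\shortmid }^{e}\widehat{\mathbf{T}}_{\alpha _{s}\beta _{s}}=-\ ^{\shortmid }\widehat{\mathbf{Z}}_{\alpha _{s}\beta _{s}}[\ _{s}^{\shortmid }\mathbf{g}[\ ^{\shortmid }\widetilde{\mathbf{g}}_{\alpha \beta }],\ ^{\shortmid }\widetilde{\mathbf{D}}]$ as the effective energy-momentum s-tensor carried by the distortion field. Then the standard passage from an energy-momentum tensor to an Einstein-type source — subtracting one half the metric times the trace — is applied: taking the trace $\ _{\shortmid }^{e}\widehat{\mathbf{T}}=\ ^{\shortmid }\mathbf{g}^{\alpha _{s}\beta _{s}}\ _{\shortmid }^{e}\widehat{\mathbf{T}}_{\alpha _{s}\beta _{s}}$ and forming $\ _{\shortmid }^{e}\widehat{\Upsilon }_{\alpha _{s}\beta _{s}}:=\varkappa (\ _{\shortmid }^{e}\widehat{\mathbf{T}}_{\alpha _{s}\beta _{s}}-\frac{1}{2}\ ^{\shortmid }\mathbf{g}_{\alpha _{s}\beta _{s}}\ _{\shortmid }^{e}\widehat{\mathbf{T}})$ exactly as in Consequence \ref{conseymhs}, which yields the asserted formula.

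Third, to make the identification rigorous I would verify that $\ _{s}^{\shortmid }\mathbf{g}$ is indeed expressible as $\ _{s}^{\shortmid }\mathbf{g}[\ ^{\shortmid }\widetilde{\mathbf{g}}_{\alpha \beta }]$ — i.e. that the s-metric and the Hamilton canonical d-metric are related by the frame transform of the Corollary preceding this Lemma ("s-metrics in Hamilton variables"), so that the distortion functional $\ ^{\shortmid }\widehat{\mathbf{Z}}_{\alpha _{s}\beta _{s}}$ is well-defined as a functional of $\ ^{\shortmid }\widetilde{\mathbf{g}}_{\alpha \beta }$ through $\ _{s}^{\shortmid }\mathbf{g}$ and through $\ ^{\shortmid }\widetilde{\mathbf{D}}=\ ^{\shortmid }\widetilde{\mathbf{D}}[\ ^{\shortmid }\widetilde{\mathbf{g}}]$. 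This is essentially bookkeeping once Theorems \ref{thdistr} and \ref{thcandist} are invoked, and the argument can be run shell by shell on an atlas covering $\ _{s}\mathbf{T}^{\ast }\mathbf{V}$.

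\textbf{Main obstacle.} The routine-looking but genuinely delicate step is ensuring internal consistency of the two distortion decompositions: the relation $\ _{s}^{\shortmid }\widehat{\mathbf{D}}=\ ^{\shortmid }\widetilde{\mathbf{D}}+\ _{s}^{\shortmid }\widehat{\mathbf{Z}}$ must be compatible with the one used in (\ref{effdtsourc}) (where $\widetilde{\mathbf{Z}}$ appears), and the effective source $\ _{\shortmid }^{e}\widehat{\mathbf{T}}_{\alpha _{s}\beta _{s}}$ must be the one built from the distortion measured \emph{relative to the Cartan-Hamilton connection $\ ^{\shortmid }\widetilde{\mathbf{D}}$}, not relative to $\ ^{\shortmid }\nabla $. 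Keeping these two reference connections straight — and checking that the trace operation commutes with the shell decomposition so that $\ _{\shortmid }^{e}\widehat{\mathbf{T}}$ is shell-adapted — is where care is needed; everything else follows by substitution from results already proved in the excerpt.
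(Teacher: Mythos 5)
Your proposal is correct and follows essentially the same route as the paper's own proof: both use the frame transform $\ ^{\shortmid }\mathbf{g}_{\alpha _{s}\beta _{s}}=\ ^{\shortmid }\mathbf{e}_{\ \alpha _{s}}^{\alpha }\ ^{\shortmid }\mathbf{e}_{\ \beta _{s}}^{\beta }\ ^{\shortmid }\widetilde{\mathbf{g}}_{\alpha \beta }$ to make $\ _{s}^{\shortmid }\mathbf{g}$ a functional of $\ ^{\shortmid }\widetilde{\mathbf{g}}_{\alpha \beta }$, then invoke Theorem \ref{thcandist} to split the canonical Ricci s-tensor into the Hamilton Ricci tensor plus a distortion term, which is moved to the source side and identified via (\ref{effdtsourc}) and the trace-reversal of Consequence \ref{conseymhs}. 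The consistency issue you flag between the two reference connections is exactly the point the paper settles by computing all distortions for the single metric $\ ^{\shortmid }\mathbf{g}=\ _{s}^{\shortmid }\mathbf{g}=\ ^{\shortmid }\mathbf{\tilde{g}}$ along the chain $\ _{s}^{\shortmid }\widehat{\mathbf{D}}=\nabla +\ _{s}^{\shortmid }\widehat{\mathbf{Z}}=\ ^{\shortmid }\widetilde{\mathbf{D}}+\ _{s}^{\shortmid }\widetilde{\mathbf{Z}}$.
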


\begin{proof}
Let us introduce on $_{s}\mathbf{T}^{\ast }\mathbf{V}$ both dyadic and
Hamilton variables consider frame transforms $\ ^{\shortmid }\mathbf{g}%
_{\alpha _{s}\beta _{s}}=\ ^{\shortmid }\mathbf{e}_{\ \alpha _{s}}^{\alpha
}\ ^{\shortmid }\mathbf{e}_{\ \beta _{s}}^{\beta }\ ^{\shortmid }\widetilde{%
\mathbf{g}}_{\alpha \beta },$ which defined a functional relation $\
_{s}^{\shortmid }\mathbf{g[\ ^{\shortmid }\widetilde{\mathbf{g}}}_{\alpha
\beta }].$ Following the conditions of Theorem \ref{thcandist}, we
can work with distortions $\ _{s}^{\shortmid }\widehat{\mathbf{D}}=\nabla +\
_{s}^{\shortmid }\widehat{\mathbf{Z}}$ $=\ ^{\shortmid }\widetilde{\mathbf{D}%
}+\ _{s}^{\shortmid }\widetilde{\mathbf{Z}},$ which allows to compute $\
_{s}^{\shortmid }\widehat{\mathbf{D}}=\ ^{\shortmid }\widetilde{\mathbf{D}}+$
$\ _{s}^{\shortmid }\widehat{\mathbf{Z}}.$ Such distortions are computed for
$\ ^{\shortmid }\mathbf{g}=\ _{s}^{\shortmid }\mathbf{g=}\ ^{\shortmid }%
\mathbf{\tilde{g},}$ and allow us to express the distortions of the Ricci
d-tensor of type $\ ^{\shortmid }\mathbf{R}_{\alpha _{s}\beta _{s}}$ (\ref%
{driccisd}) for distortions
\begin{equation*}
\ ^{\shortmid }\widehat{\mathbf{R}}_{\alpha _{s}\beta _{s}}=\ ^{\shortmid }%
\widetilde{\mathbf{R}}_{\alpha _{s}\beta _{s}}[\ \ ^{\shortmid }\mathbf{%
\tilde{g}},\ \ ^{\shortmid }\widetilde{\mathbf{D}}]+\ ^{\shortmid }%
\widetilde{\mathbf{Z}}_{\alpha _{s}\beta _{s}}[\ \ ^{\shortmid }\mathbf{%
\tilde{g}},\ \ ^{\shortmid }\widetilde{\mathbf{D}}],\mbox{ where }\
_{\shortmid }^{e}\widehat{\Upsilon }_{\alpha _{s}\beta _{s}}=-\ ^{\shortmid }%
\widetilde{\mathbf{Z}}_{\alpha _{s}\beta _{s}},
\end{equation*}%
see also formulas $\ _{\shortmid }\Upsilon _{\alpha _{s}\beta _{s}}$ (\ref%
{totaldiadsourcd}). $\square $\vskip5pt
\end{proof}

Using above Lemma, we prove

\begin{theorem}
\label{thehshell} \textsf{[generating Finsler like and Einstein-Hamilton
solutions on dyadic shells of cotangent Lorentz bundles] } The solutions of
canonically distorted modified Einstein equations, $\ ^{\shortmid }\widehat{%
\mathbf{R}}_{\alpha _{s}\beta _{s}}[\ _{s}^{\shortmid }\widehat{\mathbf{D}}%
]=\ ^{\shortmid }\widehat{\Upsilon }_{\alpha _{s}\beta _{s}}$ (\ref%
{meinsteqtbcand}), model solutions of generalized Einstein-Hamilton
equations, $\ ^{\shortmid }\widetilde{\mathbf{R}}_{\alpha \beta }[\
^{\shortmid }\widetilde{\mathbf{D}}]=\ ^{\shortmid }\widetilde{\Upsilon }%
_{\alpha \beta }$ (\ref{modifeinsthameqdiad}), where the dyadic indices can
be omitted for $\ ^{\shortmid }\mathbf{g}_{\alpha _{s}\beta _{s}}=\
^{\shortmid }\mathbf{e}_{\ \alpha _{s}}^{\alpha }\ ^{\shortmid }\mathbf{e}%
_{\ \beta _{s}}^{\beta }\ ^{\shortmid }\widetilde{\mathbf{g}}_{\alpha \beta
},$ if the sources are correspondingly related by formulas
\begin{equation*}
\ ^{\shortmid }\widehat{\Upsilon }_{\alpha _{s}\beta _{s}}=\ ^{\shortmid }%
\widetilde{\Upsilon }_{\alpha _{s}\beta _{s}}+\ _{\shortmid }^{e}\widehat{%
\Upsilon }_{\alpha _{s}\beta _{s}}.
\end{equation*}
\end{theorem}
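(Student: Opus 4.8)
The plan is to establish Theorem \ref{thehshell} as a direct consequence of the distortion identities in Theorem \ref{thcandist}, the Lemma on canonical distortion relations just proved, and the re-definition of effective sources in (\ref{effdtsourc}). First I would fix the geometric data on an open region $U\subset \ _{s}\mathbf{T}^{\ast }\mathbf{V}$ endowed with both a diadic nonholonomic structure and Hamilton canonical variables, related through the vielbein transforms $\ ^{\shortmid }\mathbf{g}_{\alpha _{s}\beta _{s}}=\ ^{\shortmid }\mathbf{e}_{\ \alpha _{s}}^{\alpha }\ ^{\shortmid }\mathbf{e}_{\ \beta _{s}}^{\beta }\ ^{\shortmid }\widetilde{\mathbf{g}}_{\alpha \beta }$ of the preceding Corollary, so that $\ _{s}^{\shortmid }\mathbf{g}=\ ^{\shortmid }\mathbf{\tilde g}$ as an abstract tensor but expressed in two different frames. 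This identification is what makes the two sets of field equations comparable.

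The main body of the argument then proceeds in three steps. Step one: start from a quasi-stationary solution $\ _{s}^{\shortmid }\widehat{\mathbf{D}}$ of (\ref{meinsteqtbcand}) decoupled and integrated via Theorems \ref{theordecoupl} and \ref{gensoltorsion}; apply Theorem \ref{thcandist} to write $\ ^{\shortmid }\widehat{\mathbf{R}}_{\alpha _{s}\beta _{s}}=\ ^{\shortmid }\widetilde{\mathbf{R}}_{\alpha _{s}\beta _{s}}[\ ^{\shortmid }\mathbf{\tilde g},\ ^{\shortmid }\widetilde{\mathbf{D}}]+\ ^{\shortmid }\widetilde{\mathbf{Z}}_{\alpha _{s}\beta _{s}}$ using the canonical distortion $\ _{s}^{\shortmid }\widehat{\mathbf{D}}=\ ^{\shortmid }\widetilde{\mathbf{D}}+\ _{s}^{\shortmid }\widehat{\mathbf{Z}}$ (here one uses that all three connections $\nabla ,\widetilde{\mathbf{D}},\widehat{\mathbf{D}}$ are determined by the same metric, so the distortion d-tensors compose consistently, cf. (\ref{candistr})). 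Step two: invoke the just-proved Lemma to identify $\ _{\shortmid }^{e}\widehat{\Upsilon }_{\alpha _{s}\beta _{s}}=-\ ^{\shortmid }\widetilde{\mathbf{Z}}_{\alpha _{s}\beta _{s}}$, i.e. the distortion Ricci term is exactly the effective energy-momentum contribution encoding the MDR via (\ref{effdtsourc}). Step three: substitute into (\ref{meinsteqtbcand}), $\ ^{\shortmid }\widehat{\mathbf{R}}_{\alpha _{s}\beta _{s}}=\ ^{\shortmid }\widehat{\Upsilon }_{\alpha _{s}\beta _{s}}$, and rearrange to obtain $\ ^{\shortmid }\widetilde{\mathbf{R}}_{\alpha _{s}\beta _{s}}[\ ^{\shortmid }\widetilde{\mathbf{D}}]=\ ^{\shortmid }\widehat{\Upsilon }_{\alpha _{s}\beta _{s}}-\ _{\shortmid }^{e}\widehat{\Upsilon }_{\alpha _{s}\beta _{s}}$; defining $\ ^{\shortmid }\widetilde{\Upsilon }_{\alpha _{s}\beta _{s}}:=\ ^{\shortmid }\widehat{\Upsilon }_{\alpha _{s}\beta _{s}}-\ _{\shortmid }^{e}\widehat{\Upsilon }_{\alpha _{s}\beta _{s}}$ gives precisely the Einstein-Hamilton system (\ref{modifeinsthameqdiad}) and the claimed source relation $\ ^{\shortmid }\widehat{\Upsilon }_{\alpha _{s}\beta _{s}}=\ ^{\shortmid }\widetilde{\Upsilon }_{\alpha _{s}\beta _{s}}+\ _{\shortmid }^{e}\widehat{\Upsilon }_{\alpha _{s}\beta _{s}}$. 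One also checks that the diadic shell indices can be dropped on the tilde side because $\ ^{\shortmid }\widetilde{\mathbf{R}}_{\alpha \beta }$ is built from $\ ^{\shortmid }\widetilde{\mathbf{D}}$ which does not reference the shell splitting (the hat values were only needed as an intermediate device for decoupling).

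I would then close by remarking that the nonlinear symmetries of Theorem \ref{nsymgfs} let one re-absorb the generating sources into effective shell cosmological constants, so the correspondence also holds at the level of the explicit parametric families (\ref{qeltors}) and (\ref{offdiagcosmcsh}), and that Finsler-Lagrange variants follow by the Legendre transforms of Definition \ref{defehvariab}. The step I expect to be the main obstacle is the bookkeeping in Step one: verifying that the frame transform $\ ^{\shortmid }\mathbf{e}_{\ \alpha _{s}}^{\alpha }$ relating diadic and Hamilton frames is compatible with the abstract distortion identity of Theorem \ref{thcandist} — i.e. that $\ ^{\shortmid }\widetilde{\mathbf{Z}}_{\alpha _{s}\beta _{s}}$ computed "shell by shell" genuinely coincides with the canonical Hamilton distortion Ricci tensor and does not pick up extra anholonomy terms from the non-N-adapted part of the transform. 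This is where one must use carefully that the transform is linear in the N-connection coefficients (the canonical class from Lemma \ref{lemadap}) so that no additional curvature contributions arise; once this is granted the remaining steps are purely algebraic rearrangements.
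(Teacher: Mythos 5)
Your proposal follows essentially the same route as the paper: the paper's proof of Theorem \ref{thehshell} consists precisely of the immediately preceding Lemma (the frame-transform identification $\ ^{\shortmid }\mathbf{g}_{\alpha _{s}\beta _{s}}=\ ^{\shortmid }\mathbf{e}_{\ \alpha _{s}}^{\alpha }\ ^{\shortmid }\mathbf{e}_{\ \beta _{s}}^{\beta }\ ^{\shortmid }\widetilde{\mathbf{g}}_{\alpha \beta }$, the composed distortion $\ _{s}^{\shortmid }\widehat{\mathbf{D}}=\ ^{\shortmid }\widetilde{\mathbf{D}}+\ _{s}^{\shortmid }\widehat{\mathbf{Z}}$ from Theorem \ref{thcandist}, and the identification $\ _{\shortmid }^{e}\widehat{\Upsilon }_{\alpha _{s}\beta _{s}}=-\ ^{\shortmid }\widetilde{\mathbf{Z}}_{\alpha _{s}\beta _{s}}$ via (\ref{effdtsourc})) followed by exactly the algebraic rearrangement you describe. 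The only blemish is a sign slip in your Step three (writing $\ ^{\shortmid }\widetilde{\mathbf{R}}_{\alpha _{s}\beta _{s}}=\ ^{\shortmid }\widehat{\Upsilon }_{\alpha _{s}\beta _{s}}-\ _{\shortmid }^{e}\widehat{\Upsilon }_{\alpha _{s}\beta _{s}}$ is inconsistent with your own identification $\ _{\shortmid }^{e}\widehat{\Upsilon }=-\ ^{\shortmid }\widetilde{\mathbf{Z}}$, which would produce a plus sign), but the paper's own statement carries the same ambiguity, so this does not affect the verdict.
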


For generating exact solutions, it is preferred to work with the canonical
s-connection $\ _{s}^{\shortmid }\widehat{\mathbf{D}}$ which allow a very
general decoupling of gravitational field equations. Introducing
nonholonomic deformations to $\ ^{\shortmid }\widetilde{\mathbf{D}},$ we
constrain the possibilities for decoupling by can distinguish Finsler like
variables for Hamilton configurations encoding directly MDRs (\ref{mdrg}).

For quasi-stationary solutions on \ $_{s}\mathbf{T}^{\ast }\mathbf{V,}$ we
prove this

\begin{consequence}
\textsf{[quasi-stationary s-metrics with cotangent bundle Hamilton
variables] } Any quasi-stationary phase space s-metric (not depending on
time like variables on base spacetime Lorentz manifold) can be described in
Hamilton variables with respective nonlinear symmetries (\ref{nonltransf})
with effective sources and effective cosmological constants of type (\ref%
{sheelsplitcan}) associated to respective co-fiber degrees of freedom, $\ $
\begin{equation*}
\ _{\shortmid }\widehat{\Upsilon }_{\alpha _{s}\beta _{s}}=\ _{\shortmid
}^{\phi }\Upsilon _{\alpha _{s}\beta _{s}}+\ _{\shortmid }^{e}\widehat{%
\Upsilon }_{\alpha _{s}\beta _{s}};\ \mbox{ and/or }\ \ \ _{\shortmid
s}\Lambda =\ \ _{\shortmid s}^{\phi }\Lambda +\ \ _{\shortmid s}^{e}%
\widetilde{\Lambda },
\end{equation*}%
where tilde in $\ \ _{\shortmid s}^{e}\widetilde{\Lambda }$ points to
possible contributions to the cosmological constant \ resulting from
Hamilton like degrees of freedom.
\end{consequence}

Finally, we note that any type of vacuum and/or non-vacuum quasi-stationary
off-diagonal solutions constructed in dyadic variables admit equivalent
descriptions in Hamilton like variables.

\section{Summary of Main Results and Final Remarks}

\label{s6} In this work, we elaborated on a dyadic nonholonomic geometric
formulation of modified gravity theories, MGTs, with modified distortion
relations, MDRs, modelled on (co) tangent Lorentz bundles. This is a second
partner paper to \cite{v18a}, where an axiomatic approach to such MGTs
(which can geometrized as Finsler-Lagrange-Hamilton spaces) was formulated.
Our main goal was to complete the anholonomic frame deformation method,
AFDM, with a new geometric technique for constructing exact and parametric
quasi-stationary solutions for relativistic phase spaces and
Einstein-Hamilton gravity theories.\footnote{%
for reviews of former results and bibliography, see sections B.4.10 - B.4.12
and B.4.18 - B.4.20 in \cite{v18a}}

Extensions of GR to gravity theories with MDRs are characterized by
effective metrics $g_{\alpha \beta }(x^{k},p_{a})$ determined by a
non-quadratic form, see formula (\ref{nqed}) in Appendix B, on cotangent
Lorentz bundles. Corresponding geometric models allow descriptions in
certain canonical Hamilton and/or almost symplectic variables and involve
non-Riemannian geometries. In order to elaborate a self-consistent causal
approach, we have to consider pseudo-Euclidean local signatures both on base
a spacetime Lorentz manifold and on (in general, curved) typical cofiber.
MDRs determine naturally the fundamental geometric and physical objects such
as nonlinear quadratic elements, nonlinear connections and adapted frames
and certain classes of distinguished connections. In result, we can
elaborate on geometric and axiomatic principles and physical motivations for
constructing MGTs with MDRs and corresponding generalizations of the
Einstein equations for metrics and connections depending additionally on
velocity/momentum variables. Our nonholonomic geometric approach is
elaborated in a general form for any types of MDRs with indicators encoding
possible LIVs, noncommutative and nonassociative classical and quantum
deformations, string theory contributions etc.

To construct exact solutions with nontrivial dependence both on base
spacetime coordinates and fiber type velocity/ momentum type coordinates is
important to develop new geometric methods for generating solutions of
systems of nonlinear partial differential equations, PDEs, and ordinary
differential equations, ODEs. We can not apply certain diagonal ansatz for
metrics depending, for instance, on one radial type variable, $g_{\alpha
\beta }\sim diag[g_{\gamma }(r)]$, (like for constructing the Schwarzschild
solution) and allowing us to transform the Einstein equations into a
nonlinear system of ODEs which can be integrated in exact form. Even in the
simplest case of stationary cotangent bundle configurations $g_{\alpha \beta
}(r,p_{8}=E)$ we have to work with generic off-diagonal systems subjected to
certain additional nonholonomic constraints and when the (modified) Einstein
equations transforms into nonlinear systems of PDEs. The solutions of such
equations are defined not only by integration constants (for instance, like
for ODEs and black hole solutions) but determined by various classes of
generating functions, generating (effective) sources, integration functions
and integration constants. This is typical for nonlinear dynamical and
evolution systems with generic off-diagonal and locally anisotropic
interactions.

The main results of this article consist from a series of important theorems
on quasi-stationary solutions which (in certain adapted variables) are
parameterized by dyadic shell s-metrics not depending on the time like
coordinate at least for projections on the base spacetime Lorentz manifold:

\begin{itemize}
\item Theorem \ref{theordecoupl} \textsf{[decoupling property] } states the
possibilty to decouple generalized Einstein equations encoding MDRs in
general off-diagonal forms and with dependence both on spacetime and (co)
fiber coordinates.

\item Theorem \ref{gensoltorsion} \textsf{[general solutions with
nonholonomically induced torsion] } defines the quadratic nonlinear elements
for quasi-stationary off-diagonal solutions on cotangent Lorentz bundles.
Using additional assumptions, we can extract Levi-Civita configurations
(with zero torsion).

\item Theorem \ref{nsymgfs} \textsf{[nonlinear symmetries of generating
functions and sources] } defines a new type of nonlinear symmetries relating
generating functions, generating (effective) sources, and certain effective
cosmological constants, which are very important for constructing exact
solutions with off-diagonal metrics and generalized connections.

\item Theorem \ref{epsilongeneration} \textsf{[generating exact solutions
with small parameters for generating functions and sources] }shows how small
parametric solutions can be generated in a self-consistent form with
deformation of certain prime metrics into new types of target one.

\item Theorem \ref{vacuumsolutions} \textsf{[generating vacuum solutions on
dyadic shells of cotangent Lorentz bundles] }classifies 27 dyadic types of
vacuum phase spaces defined my MDRs on cotangent Lorentz bundles.

\item Theorem \ref{thehshell} \textsf{[generating Finsler like and
Einstein-Hamilton solutions on dyadic shells of cotangent Lorentz bundles] }
proves that exact solutions in the Einstein-Hamilton gravity can be
constructed using nonlinear symmetries (with re-definition of generating
functions and generating sources, in corresponding nonholonomic variables)
of analogous solutions form MGTs with MDRs on cotangent Lorentz bundles.
\end{itemize}

These results on geometric and analytic methods for constructing general
quasi-stationary solutions will be applied in our further works for research
of black hole and cosmological solutions in theories with MDRs and
Einstein-Hamilton MGTs. During the review process, the methods from the
preprint version of this work, arXiv: 1806.045600, have been applied in a
series of partner recently published articles:\ \cite{bvepjc18}, containing
an updated version of the axiomatic part of \cite{v18a}, but not the Appendix B of that work containing a historical review on geometry and physics of Finsler-Lagrange-Hamilton modified gravity theories;\ \cite{bvap19}, on explicit black
hole solutions in modified Finsler-Lagrange-Hamilton gravity theories and
the G. Perelman W-entropy considered additionally to (or instead of) the
Bekenstein-Hawking entropy; and \cite{vcqg18}, for locally anisotropic
cosmological solutions with quasi-periodic and spacetime quasi-crystal
structures and further developments for MDRs with explicit Hamilton like
momentum type dependencies.

\vskip5pt \textbf{Acknowledgments:}\ During 2012-2018, S. Vacaru's research
program on modified Finsler gravity theories was supported and related to a
project IDEI in Romania, PN-II-ID-PCE-2011-3-0256, a series of research
fellowships for CERN and DAAD and an adjunct position at Fresno State
University, California, the USA. He is grateful to D. Singleton, V. A.
Kostelecky, and P. Stavrinos for important discussions and support.

\appendix

\setcounter{equation}{0} \renewcommand{\theequation}
{A.\arabic{equation}} \setcounter{subsection}{0}
\renewcommand{\thesubsection}
{A.\arabic{subsection}}

\section{ N-adapted Dyadic Coefficient Formulas}

\label{appendixa} We provide some important N-adapted formulas and examples
of proofs, see similar methods in \cite{v18a,gvvepjc14} and references
therein.

\subsection{Proof of Lemma \protect\ref{lemadap} for canonical N--connection
coefficients}

\label{appendixa1}There are canonical N-elongated bases and dual bases with
nonholonomic (4+4) splitting on $\mathbf{TV}$ enabled with N-connection
structure (\ref{ncon}) are constructed by definition,
\begin{eqnarray}
\mathbf{e}_{\alpha } &=&(\mathbf{e}_{i}=\frac{\partial }{\partial x^{i}}%
-N_{i}^{a}(x,v)\frac{\partial }{\partial v^{a}},e_{b}=\frac{\partial }{%
\partial v^{b}}),\mbox{ on }\ T\mathbf{TV;}  \label{nadap} \\
\mathbf{e}^{\alpha } &=&(e^{i}=dx^{i},\mathbf{e}%
^{a}=dv^{a}+N_{i}^{a}(x,v)dx^{i}),\mbox{ on }T^{\ast }\mathbf{TV}  \notag
\end{eqnarray}%
Here we note that a local basis $\mathbf{e}_{\alpha }$ is nonholonomic if
the commutators
\begin{equation}
\mathbf{e}_{[\alpha }\mathbf{e}_{\beta ]}:=\mathbf{e}_{\alpha }\mathbf{e}%
_{\beta }-\mathbf{e}_{\beta }\mathbf{e}_{\alpha }=C_{\alpha \beta }^{\gamma
}(u)\mathbf{e}_{\gamma }  \label{anhrel}
\end{equation}%
contain nontrivial anholonomy coefficients $C_{\alpha \beta }^{\gamma
}=\{C_{ia}^{b}=\partial _{a}N_{i}^{b},C_{ji}^{a}=\mathbf{e}_{j}N_{i}^{a}-%
\mathbf{e}_{i}N_{j}^{a}\}.$

Similarly, we define N-elongated dual bases, cobase, for cotangent bundles,
\begin{eqnarray}
\ ^{\shortmid }\mathbf{e}_{\alpha } &=&(\ ^{\shortmid }\mathbf{e}_{i}=\frac{%
\partial }{\partial x^{i}}-\ ^{\shortmid }N_{ia}(x,p)\frac{\partial }{%
\partial p_{a}},\ ^{\shortmid }e^{b}=\frac{\partial }{\partial p_{b}}),%
\mbox{ on }\ T\mathbf{T}^{\ast }\mathbf{V;}  \label{nadapd} \\
\ \ ^{\shortmid }\mathbf{e}^{\alpha } &=&(\ ^{\shortmid }e^{i}=dx^{i},\
^{\shortmid }\mathbf{e}_{a}=dp_{a}+\ ^{\shortmid }N_{ia}(x,p)dx^{i}),%
\mbox{
on }T^{\ast }\mathbf{T}^{\ast }\mathbf{V.}  \notag
\end{eqnarray}

For "shell by shell" decompositions, we consider N-adapted bases
\begin{equation*}
\mathbf{e}_{\alpha _{s}}=(\ \mathbf{e}_{i_{s}}=\frac{\partial }{\partial
x^{i_{s}}}-N_{i_{s}}^{a_{s}}\frac{\partial }{\partial v^{a_{s}}},e_{b_{s}}=%
\frac{\partial }{\partial v^{b_{s}}})\mbox{ on }\ \ _{s}T\mathbf{TV},
\end{equation*}
where (respectively, for $s=1,2,3,4$)
\begin{eqnarray}
\mathbf{e}_{\alpha _{1}} &=&e_{i_{1}}=\frac{\partial }{\partial x^{i_{1}}},\
\mathbf{e}_{\alpha _{2}}=(\ \mathbf{e}_{i_{2}}=\frac{\partial }{\partial
x^{i_{2}}}-N_{i_{2}}^{a_{2}}\frac{\partial }{\partial x^{a_{2}}},\ e_{b_{2}}=%
\frac{\partial }{\partial x^{b_{2}}}),\mbox{ where }i=(i_{1},i_{2}),  \notag
\\
\ \mathbf{e}_{\alpha _{3}} &=&(\ \mathbf{e}_{i_{3}}=\frac{\partial }{%
\partial x^{i_{3}}}-N_{i_{3}}^{a_{3}}\frac{\partial }{\partial v^{a_{3}}},\
e_{b_{3}}=\frac{\partial }{\partial v^{b_{3}}}),\ \mathbf{e}_{\alpha
_{4}}=(\ \mathbf{e}_{i_{4}}=\frac{\partial }{\partial x^{i_{4}}}%
-N_{i_{4}}^{a_{4}}\frac{\partial }{\partial v^{a_{4}}},e_{b_{4}}=\frac{%
\partial }{\partial v^{b_{4}}}).  \label{nadapbs}
\end{eqnarray}%
Similarly, there are defined N-elongated bases
\begin{equation*}
\ ^{\shortmid }\mathbf{e}_{\alpha _{s}}=(\ \ ^{\shortmid }\mathbf{e}%
_{i_{s}}=\ \frac{\partial }{\partial x^{i_{s}}}-\ ^{\shortmid }N_{\
i_{s}a_{s}}\frac{\partial }{\partial p_{a_{s}}},\ \ ^{\shortmid }e^{b_{s}}=%
\frac{\partial }{\partial p_{b_{s}}})\mbox{ on }\ \ _{s}T\mathbf{T}^{\ast }%
\mathbf{V,}
\end{equation*}%
when
\begin{eqnarray}
\ ^{\shortmid }\mathbf{e}_{\alpha _{1}} &=&\ \ ^{\shortmid }e_{i_{1}}=\frac{%
\partial }{\partial x^{i_{1}}},\ ^{\shortmid }\mathbf{e}_{\alpha _{2}}=(\
^{\shortmid }\mathbf{e}_{i_{2}}=\frac{\partial }{\partial x^{i_{2}}}-\
^{\shortmid }N_{i_{2}}^{a_{2}}\frac{\partial }{\partial x^{a_{2}}},\ \
^{\shortmid }e_{b_{2}}=\frac{\partial }{\partial x^{b_{2}}}),
\label{nadapbds} \\
\ \ ^{\shortmid }\mathbf{e}_{\alpha _{3}} &=&(\ \ ^{\shortmid }\mathbf{e}%
_{i_{3}}=\ \frac{\partial }{\partial x^{i_{3}}}-\ ^{\shortmid }N_{i_{3}a_{3}}%
\frac{\partial }{\partial p_{a_{3}}},\ \ ^{\shortmid }e^{b_{3}}=\frac{%
\partial }{\partial p_{b_{3}}}),\ \ ^{\shortmid }\mathbf{e}_{\alpha _{4}}=(\
\ ^{\shortmid }\mathbf{e}_{i_{4}}=\ \frac{\partial }{\partial x^{i_{4}}}-\
^{\shortmid }N_{\ i_{4}a_{4}}\frac{\partial }{\partial p_{a_{4}}},\ \
^{\shortmid }e^{b_{4}}=\frac{\partial }{\partial p_{b_{4}}}).  \notag
\end{eqnarray}%
In dual form, we introduce (we omit explicit shell parameterizations)%
\begin{eqnarray}
\mathbf{e}^{\alpha _{s}} &=&(e^{i_{s}}=dx^{i_{s}},\mathbf{e}%
^{a_{s}}=dv^{a_{s}}+N_{j_{s}}^{a_{s}}dx^{j_{s}}),\mbox{ on }\ \ _{s}T^{\ast }%
\mathbf{TV;}  \label{nadapbdss} \\
\ ^{\shortmid }\mathbf{e}^{\alpha _{s}} &=&(\ ^{\shortmid }e^{i}=dx^{i},\
^{\shortmid }\mathbf{e}_{a}=dp_{a}+\ ^{\shortmid }N_{ia}(x,p)dx^{i})%
\mbox{
on }\ \ _{s}T^{\ast }\mathbf{T}^{\ast }\mathbf{V}  \notag
\end{eqnarray}

Finally, we note a N--connection on $\mathbf{TV,}$ or $\mathbf{T}^{\ast }%
\mathbf{V,}$ is characterized by such coefficients of N--connection
curvature (called also Neijenhuis tensors)
\begin{equation}
\Omega _{ij}^{a}=\frac{\partial N_{i}^{a}}{\partial x^{j}}-\frac{\partial
N_{j}^{a}}{\partial x^{i}}+N_{i}^{b}\frac{\partial N_{j}^{a}}{\partial y^{b}}%
-N_{j}^{b}\frac{\partial N_{i}^{a}}{\partial y^{b}},\mbox{\ or\ }\ \ \mathbf{%
\ ^{\shortmid }}\Omega _{ija}=\frac{\partial \mathbf{\ ^{\shortmid }}N_{ia}}{%
\partial x^{j}}-\frac{\partial \mathbf{\ ^{\shortmid }}N_{ja}}{\partial x^{i}%
}+\ \mathbf{^{\shortmid }}N_{ib}\frac{\partial \mathbf{\ ^{\shortmid }}N_{ja}%
}{\partial p_{b}}-\mathbf{\ ^{\shortmid }}N_{jb}\frac{\partial \mathbf{\
^{\shortmid }}N_{ia}}{\partial p_{b}}.  \label{neijtc}
\end{equation}%
Similar formulas can be written for dyadic decompositions considering shell
indices and respective coordinates.

\subsection{Off-diagonal local coefficients for d-metrics}

Introducing formulas of type (\ref{cnddapb}) and (\ref{ccnadap}),
respectively, into (\ref{dmt}), (\ref{dmct}) and (\ref{dmts}), (\ref{dmcts})
and regrouping with respect to local coordinate bases, one proves

\begin{corollary}
\textsf{[equivalent re-writing of d-metrics and s-metrics as off-diagonal
metrics]} With respect to local coordinate frames, any d--metric structures
on $\mathbf{TV,}$ $\mathbf{T}^{\ast }\mathbf{V}$ and s-metric on $\mathbf{T}%
^{\ast }\mathbf{V}$
\begin{eqnarray*}
\mathbf{g} &=&\mathbf{g}_{\alpha \beta }(x,y)\mathbf{e}^{\alpha }\mathbf{%
\otimes e}^{\beta }=g_{\underline{\alpha }\underline{\beta }}(x,y)du^{%
\underline{\alpha }}\mathbf{\otimes }du^{\underline{\beta }}\mbox{
and/or } \\
\ ^{\shortmid }\mathbf{g} &=&\ ^{\shortmid }\mathbf{g}_{\alpha \beta }(x,p)\
^{\shortmid }\mathbf{e}^{\alpha }\mathbf{\otimes \ ^{\shortmid }e}^{\beta
}=\ ^{\shortmid }g_{\underline{\alpha }\underline{\beta }}(x,p)d\
^{\shortmid }u^{\underline{\alpha }}\mathbf{\otimes }d\ ^{\shortmid }u^{%
\underline{\beta }},
\end{eqnarray*}%
can be parameterized via frame transforms, $\mathbf{g}_{\alpha \beta }=e_{\
\alpha }^{\underline{\alpha }}e_{\ \beta }^{\underline{\beta }}g_{\underline{%
\alpha }\underline{\beta }},$ $\ ^{\shortmid }\mathbf{g}_{\alpha \beta }=\
^{\shortmid }e_{\ \alpha }^{\underline{\alpha }}\ ^{\shortmid }e_{\ \beta }^{%
\underline{\beta }}\ ^{\shortmid }g_{\underline{\alpha }\underline{\beta }},$
in respective off-diagonal forms:
\begin{eqnarray}
g_{\underline{\alpha }\underline{\beta }} &=&\left[
\begin{array}{cc}
g_{ij}(x)+g_{ab}(x,y)N_{i}^{a}(x,y)N_{j}^{b}(x,y) & g_{ae}(x,y)N_{j}^{e}(x,y)
\\
g_{be}(x,y)N_{i}^{e}(x,y) & g_{ab}(x,y)%
\end{array}%
\right] ,  \notag \\
\ ^{\shortmid }g_{\underline{\alpha }\underline{\beta }} &=&\left[
\begin{array}{cc}
\ ^{\shortmid }g_{ij}(x)+\ ^{\shortmid }g^{ab}(x,p)\ ^{\shortmid
}N_{ia}(x,p)\ ^{\shortmid }N_{jb}(x,p) & \ ^{\shortmid }g^{ae}\ ^{\shortmid
}N_{je}(x,p) \\
\ ^{\shortmid }g^{be}\ ^{\shortmid }N_{ie}(x,p) & \ ^{\shortmid
}g^{ab}(x,p)\
\end{array}%
\right]  \label{offd}
\end{eqnarray}%
and, for nonholonomic (2+2)+(2+2) splitting on $\ _{s}\mathbf{T}^{\ast }%
\mathbf{V}$ determined by respective data (\ref{ncon2}) and with coordinate
parameterizations in footnote \ref{fnotnconcoef2}, using frame transforms $\
^{\shortmid }\mathbf{g}_{\alpha _{s}\beta _{s}}=\ ^{\shortmid }e_{\ \alpha
_{s}}^{\underline{\alpha }}\ ^{\shortmid }e_{\ \beta _{s}}^{\underline{\beta
}}\ ^{\shortmid }g_{\underline{\alpha }\underline{\beta }}$ the coefficients
of (\ref{dmcts}) are parameterized as follow:%
\begin{eqnarray}
\ _{s}^{\shortmid }\mathbf{g} &=&\ ^{\shortmid }\mathbf{g}_{\alpha _{s}\beta
_{s}}(x,y,\ _{s}p)\ ^{\shortmid }\mathbf{e}^{\alpha _{s}}\mathbf{\otimes \
^{\shortmid }e}^{\beta _{s}}=\ ^{\shortmid }\mathbf{g}%
_{i_{s}j_{s}}(x^{k_{s}})\ ^{\shortmid }\mathbf{e}^{i_{s}}\mathbf{\otimes \
^{\shortmid }e}^{j_{s}}+\ ^{\shortmid }\mathbf{g}%
^{a_{s}b_{s}}(x^{k_{s}},p_{c_{s}})\ ^{\shortmid }\mathbf{e}_{a_{s}}\mathbf{%
\otimes \ ^{\shortmid }e}_{b_{s}}  \notag \\
&=&\ ^{\shortmid }\mathbf{g}_{i_{1}j_{1}}(x^{k_{1}})\ dx^{i_{1}}\mathbf{%
\otimes \ }dx^{j_{1}}+\ ^{\shortmid }\mathbf{g}%
_{a_{2}b_{2}}(x^{k_{1}},x^{c_{2}})\ ^{\shortmid }\mathbf{e}^{a_{2}}\mathbf{%
\otimes \ ^{\shortmid }e}^{b_{2}}  \label{dm2and2} \\
&&+\ ^{\shortmid }\mathbf{g}^{a_{3}b_{3}}(x^{k_{1}},x^{c_{2}},p_{c_{3}})\
^{\shortmid }\mathbf{e}_{a_{3}}\mathbf{\otimes \ ^{\shortmid }e}_{b_{3}}+\
^{\shortmid }\mathbf{g}%
^{a_{4}b_{4}}(x^{k_{1}},x^{c_{2}},p_{c_{3}},p_{c_{4}})\ ^{\shortmid }\mathbf{%
e}_{a_{4}}\mathbf{\otimes \ ^{\shortmid }e}_{b_{4}},  \notag \\
&&\mbox{ where }  \notag \\
\ ^{\shortmid }\mathbf{e}^{i_{1}} &=&dx^{i_{1}},\mbox{ for }i_{1}=1,2  \notag
\\
\ ^{\shortmid }\mathbf{e}^{a_{2}} &=&dx^{a_{2}}+\ ^{\shortmid
}N_{j_{1}}^{a_{2}}(x^{k_{1}},x^{c_{2}})\ dx^{i_{1}},\mbox{ for }j_{1}=1,2%
\mbox{ and }a_{2},c_{2}=3,4;  \notag \\
\ ^{\shortmid }\mathbf{e}_{a_{3}} &=&dp_{a_{3}}+\ ^{\shortmid }N_{j_{2}\
a_{3}}(x^{k_{1}},x^{c_{2}},p_{c_{3}})\ dx^{e_{2}};\mbox{ for }%
j_{2}=1,2,3,4;e_{2}=3,4;a_{3},c_{3}=5,6;  \notag \\
\ ^{\shortmid }\mathbf{e}_{a_{4}} &=&dp_{a_{4}}+\ ^{\shortmid }N_{j_{3}\
a_{4}}(x^{k_{1}},x^{k_{2}},p_{c_{3}},p_{c_{4}})\ dp_{e_{3}};\mbox{
for }j_{3}=1,2,3,4,5,6;a_{4},c_{4}=7,8;  \notag
\end{eqnarray}%
and, in generic off-diagonal local coordinate form,%
\begin{equation*}
\ ^{\shortmid }\mathbf{g}=\ _{s}^{\shortmid }\mathbf{g}=\ _{s}^{\shortmid
}g_{\underline{\alpha }\underline{\beta }}(x,p)d\ ^{\shortmid }u^{\underline{%
\alpha }}\mathbf{\otimes }d\ ^{\shortmid }u^{\underline{\beta }},
\end{equation*}%
the coefficients can be expressed in matrix forms{\small
\begin{equation}
\ _{s}^{\shortmid }g_{\underline{\alpha }\underline{\beta }}=\left[
\begin{array}{cccc}
\begin{array}{c}
\lbrack \ ^{\shortmid }g_{i_{1}j_{1}}+ \\
\ ^{\shortmid }g_{a_{2}b_{2}}\ ^{\shortmid }N_{i_{1}}^{a_{2}}\ ^{\shortmid
}N_{j_{1}}^{a_{2}}+ \\
\ ^{\shortmid }g^{c_{3}f_{3}}\ ^{\shortmid }N_{i_{1}c_{3}}\ ^{\shortmid
}N_{j_{1}f_{3}}+ \\
\ ^{\shortmid }g^{c_{4}f_{4}}\ ^{\shortmid }N_{i_{1}c_{4}}\ ^{\shortmid
}N_{j_{1}f_{4}}]%
\end{array}
& \ ^{\shortmid }g_{a_{2}b_{2}}\ ^{\shortmid }N_{j_{1}}^{b_{2}} & \ \
^{\shortmid }g^{a_{3}e_{3}}\ ^{\shortmid }N_{j_{1}e_{3}} & \ ^{\shortmid
}g^{a_{4}e_{4}}\ ^{\shortmid }N_{j_{1}e_{4}} \\
\ ^{\shortmid }g_{a_{2}b_{2}}\ ^{\shortmid }N_{i_{1}}^{b_{2}} &
\begin{array}{c}
\lbrack \ ^{\shortmid }g_{a_{2}b_{2}}+ \\
\ ^{\shortmid }g^{c_{3}f_{3}}\ ^{\shortmid }N_{a_{2}c_{3}}\ ^{\shortmid
}N_{b_{2}f_{3}}+ \\
\ ^{\shortmid }g^{c_{4}f_{4}}\ ^{\shortmid }N_{a_{2}c_{4}}\ ^{\shortmid
}N_{b_{2}f_{4}}]%
\end{array}
& \ \ ^{\shortmid }g^{a_{3}e_{3}}\ ^{\shortmid }N_{a_{2}e_{3}} & \
^{\shortmid }g^{a_{4}e_{4}}\ ^{\shortmid }N_{a_{2}e_{4}} \\
\ ^{\shortmid }g^{b_{3}e_{3}}\ ^{\shortmid }N_{i_{1}e_{3}} & \ ^{\shortmid
}g^{a_{3}e_{3}}\ ^{\shortmid }N_{i_{2}e_{3}} & [\ ^{\shortmid
}g^{a_{3}b_{3}}+\ ^{\shortmid }g^{c_{4}f_{4}}\ ^{\shortmid }N_{\
c_{4}}^{a_{3}}\ ^{\shortmid }N_{\ f_{4}}^{b_{3}}] & \ ^{\shortmid
}g^{a_{4}e_{4}}\ ^{\shortmid }N_{\ e_{4}}^{a_{3}} \\
\ ^{\shortmid }g^{b_{4}e_{4}}\ ^{\shortmid }N_{i_{1}e_{4}} & \ ^{\shortmid
}g^{b_{4}e_{4}}\ ^{\shortmid }N_{i_{2}e_{4}} & \ ^{\shortmid
}g^{b_{4}e_{4}}\ ^{\shortmid }N_{\ e_{4}}^{a_{3}} & \ ^{\shortmid
}g^{a_{4}b_{4}}%
\end{array}%
\right] .  \label{offds}
\end{equation}%
}
\end{corollary}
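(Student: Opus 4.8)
The plan is to proceed exactly as the surrounding text instructs: substitute the explicit local forms of the N-adapted (co)frames into the d-metric and s-metric expansions and then regroup the resulting bilinear forms with respect to the holonomic coordinate cobases $du^{\underline{\alpha}}$ (respectively $d\,^{\shortmid}u^{\underline{\alpha}}$). First I would recall from Appendix~\ref{appendixa1} that on $T\mathbf{TV}$ one has $\mathbf{e}^{a}=dv^{a}+N^{a}_{i}(x,v)\,dx^{i}$ and $e^{i}=dx^{i}$, and dually on $T\mathbf{T}^{\ast}\mathbf{V}$ that $\ ^{\shortmid}\mathbf{e}_{a}=dp_{a}+\ ^{\shortmid}N_{ia}(x,p)\,dx^{i}$ and $\ ^{\shortmid}e^{i}=dx^{i}$. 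Then I would insert these into $\mathbf{g}=g_{ij}e^{i}\otimes e^{j}+g_{ab}\mathbf{e}^{a}\otimes\mathbf{e}^{b}$ (and likewise into the $\ ^{\shortmid}\mathbf{g}$ expansion with $\ ^{\shortmid}g^{ab}\ ^{\shortmid}\mathbf{e}_{a}\otimes\ ^{\shortmid}\mathbf{e}_{b}$), expand the tensor products, and collect the $dx^{i}\otimes dx^{j}$, $dx^{i}\otimes dv^{a}$, $dv^{a}\otimes dv^{b}$ blocks. This is a direct bookkeeping computation: the $dx\otimes dx$ block picks up the extra term $g_{ab}N^{a}_{i}N^{b}_{j}$, the mixed blocks produce $g_{ae}N^{e}_{j}$ and $g_{be}N^{e}_{i}$, and the fiber block is unchanged; this yields the two displayed $4\times4$ block matrices in \eqref{offd}. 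The frame-transform statement $\mathbf{g}_{\alpha\beta}=e_{\ \alpha}^{\underline{\alpha}}e_{\ \beta}^{\underline{\beta}}g_{\underline{\alpha}\underline{\beta}}$ is then just the observation that the vielbein $e_{\ \alpha}^{\underline{\alpha}}$ carrying the N-coefficients is invertible.

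For the diadic $(2+2)+(2+2)$ case on $\ _{s}\mathbf{T}^{\ast}\mathbf{V}$, the plan is to iterate the same substitution shell by shell using the N-adapted cobases \eqref{nadapbdss}: on shell $s=1$ nothing happens ($\ ^{\shortmid}\mathbf{e}^{i_{1}}=dx^{i_{1}}$); on shell $s=2$ one substitutes $\ ^{\shortmid}\mathbf{e}^{a_{2}}=dx^{a_{2}}+\ ^{\shortmid}N_{j_{1}}^{a_{2}}dx^{j_{1}}$; on shell $s=3$ one substitutes $\ ^{\shortmid}\mathbf{e}_{a_{3}}=dp_{a_{3}}+\ ^{\shortmid}N_{j_{2}a_{3}}dx^{e_{2}}$, where the index $e_{2}$ already runs over the shell-$2$ coordinates which are themselves not yet expressed in holonomic form; and on shell $s=4$ one substitutes $\ ^{\shortmid}\mathbf{e}_{a_{4}}=dp_{a_{4}}+\ ^{\shortmid}N_{j_{3}a_{4}}dp_{e_{3}}$, with $e_{3}$ running over shell-$3$ momenta. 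The key point is that one must perform the expansions in the correct (reverse-shell) order so that, when a lower-shell cobasis one-form $\ ^{\shortmid}\mathbf{e}^{a_{2}}$ or $\ ^{\shortmid}\mathbf{e}_{a_{3}}$ appears inside an N-elongation of a higher shell, it is itself re-expanded in terms of the genuine coordinate differentials $dx^{i_{1}},dx^{a_{2}},dp_{a_{3}},dp_{a_{4}}$. Carrying this through and collecting coefficients of each product of coordinate differentials produces precisely the $4\times4$ block matrix \eqref{offds}, where the appearance of the cumulative sums $\ ^{\shortmid}g^{c_{3}f_{3}}\ ^{\shortmid}N_{i_{1}c_{3}}\ ^{\shortmid}N_{j_{1}f_{3}}$, $\ ^{\shortmid}g^{c_{4}f_{4}}\ ^{\shortmid}N_{i_{1}c_{4}}\ ^{\shortmid}N_{j_{1}f_{4}}$, etc., in the top-left entry is exactly the trace of this iterated substitution over all four shells.

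The main obstacle is not any analytic difficulty but rather the combinatorial bookkeeping of the diadic index conventions: one has to keep straight which shell's N-coefficients carry which pairs of indices (note $\ ^{\shortmid}N_{i_{1}}^{a_{2}}$ has an upper co-fiber index because shell $2$ is a tangent-type shell, whereas $\ ^{\shortmid}N_{i_{2}a_{3}}$, $\ ^{\shortmid}N_{i_{3}a_{4}}$ have lower co-fiber indices since shells $3,4$ are cotangent-type), and one must verify that cross-shell contributions (e.g.\ a shell-$3$ N-coefficient contracted against a shell-$4$ one) either vanish by the ordered structure or are already incorporated in the listed entries. I would handle this by first writing out the $s=2$ reduction alone (recovering the tangent-bundle block structure of \eqref{offd} restricted to shells $1,2$), then appending shell $3$, then shell $4$, checking at each stage that the newly generated off-diagonal entries match the corresponding row/column of \eqref{offds}, and finally invoking invertibility of $\ ^{\shortmid}e_{\ \alpha_{s}}^{\underline{\alpha}}$ to conclude the frame-transform relation $\ _{s}^{\shortmid}\mathbf{g}_{\alpha_{s}\beta_{s}}=\ ^{\shortmid}e_{\ \alpha_{s}}^{\underline{\alpha}}\ ^{\shortmid}e_{\ \beta_{s}}^{\underline{\beta}}\ _{s}^{\shortmid}g_{\underline{\alpha}\underline{\beta}}$. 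As with Lemma~\ref{lemadap} and the other coefficient corollaries in this Appendix, the proof is a "straightforward computation" whose only real content is organizational, so I would present it in a sketched form with the shell-by-shell substitution displayed and the final block matrix verified entry by entry.
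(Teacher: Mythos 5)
Your proposal is correct and coincides with the paper's own (one-line) proof: the Corollary is established precisely by substituting the N-elongated cobases of Appendix \ref{appendixa1} into the N-adapted expansions (\ref{dmt})--(\ref{dmcts}) and regrouping with respect to the holonomic coordinate differentials, shell by shell in the diadic case. Your additional remarks on the reverse-shell order of expansion and on tracking the upper versus lower co-fiber index placement of the $N$-coefficients are exactly the organizational points the paper leaves implicit.
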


Parameterizations of type (\ref{offd}) are considered, for instance, in the
Kaluza--Klein theory for associated vector bundles when gauge fields
interactions are modelled for extra dimension theories. These types of d-
and/or s-metrics are generic off-diagonal if the corresponding N-adapted
structure is not integrable. For MDR-generalizations of the Einstein
gravity, we can consider that the h-metrics $g_{ij}(x)=\ ^{\shortmid
}g_{ij}(x)$ are determined by a solution of standard Einstein equations but
the terms with $N$--coefficients are determined by solutions of certain
generalized gravitational field equations on nonholonomic phase spaces with
a corresponding shell by shell splitting. In general, such nonholonomic
dyadic solutions are not compactified on velocity/ momentum like
coordinates, $v^{a} $ / $p_{a}$ like in the standard Kaluza-Klein models.

Finally, we note that a s-metric written in a form (\ref{dm2and2}), describe
a 8-d phase space with curved both the base manifold and typical fiber
subspaces which is different from (\ref{lqed}) defined for extensions of a
base metric $\mathbf{g}_{ij}(x^{k})$ on a Lorentz manifold $\mathbf{V}$ in
GR to $\ ^{\shortmid }\mathbf{g}_{\alpha \beta }=[\mathbf{g}%
_{ij}(x^{k}),\eta ^{ab}]=[\mathbf{g}_{ij}(x^{k}),\eta ^{a_{3}b_{3}},\eta
^{a_{4}b_{4}}]$ on $T^{\ast }\mathbf{V}$ or $\ _{s}T^{\ast }\mathbf{V}$ with
a flat typical fiber characterized by $\eta ^{ab}=[\eta
^{a_{3}b_{3}}=diag(1,1),\eta ^{a_{4}b_{4}}=diag(1,-1)]$.

\subsection{Curvatures of d--connections with dyadic structure}

By explicit computations for $\mathbf{X}=\mathbf{e}_{\alpha },\mathbf{Y}=%
\mathbf{e}_{\beta },\mathbf{D}=\{\mathbf{\Gamma }_{\ \alpha \beta }^{\gamma
}\},$ $\ ^{\shortmid }\mathbf{X}=\ ^{\shortmid }\mathbf{e}_{\alpha },$ $\
^{\shortmid }\mathbf{Y}=\ ^{\shortmid }\mathbf{e}_{\beta },\ ^{\shortmid }%
\mathbf{D}=\{\ ^{\shortmid }\mathbf{\Gamma }_{\ \alpha \beta }^{\gamma }\},$
and $\ _{s}^{\shortmid }\mathbf{X}=\ ^{\shortmid }\mathbf{e}_{\alpha _{s}},$
$\ _{s}^{\shortmid }\mathbf{Y}=\ ^{\shortmid }\mathbf{e}_{\beta _{s}},\
_{s}^{\shortmid }\mathbf{D}=\{\ ^{\shortmid }\mathbf{\Gamma }_{\ \alpha
_{s}\beta _{s}}^{\gamma _{s}}\}$ introduced respectively in (\ref{dcurvabstr}%
) and (\ref{scurvabstr}), we prove

\begin{corollary}
\label{acorolcurv}For a d--connection $\mathbf{D}$ or $\ ^{\shortmid }%
\mathbf{D,}$ \ there are computed corresponding N--adapted coefficients:
\newline
d-curvature, $\mathcal{R}=\mathbf{\{R}_{\ \beta \gamma \delta }^{\alpha
}=(R_{\ hjk}^{i},R_{\ bjk}^{a},P_{\ hja}^{i},P_{\ bja}^{c},S_{\
hba}^{i},S_{\ bea}^{c})\},$ for
\begin{eqnarray}
R_{\ hjk}^{i} &=&\mathbf{e}_{k}L_{\ hj}^{i}-\mathbf{e}_{j}L_{\ hk}^{i}+L_{\
hj}^{m}L_{\ mk}^{i}-L_{\ hk}^{m}L_{\ mj}^{i}-C_{\ ha}^{i}\Omega _{\ kj}^{a},
\notag \\
R_{\ bjk}^{a} &=&\mathbf{e}_{k}\acute{L}_{\ bj}^{a}-\mathbf{e}_{j}\acute{L}%
_{\ bk}^{a}+\acute{L}_{\ bj}^{c}\acute{L}_{\ ck}^{a}-\acute{L}_{\ bk}^{c}%
\acute{L}_{\ cj}^{a}-C_{\ bc}^{a}\Omega _{\ kj}^{c},  \label{dcurv} \\
P_{\ jka}^{i} &=&e_{a}L_{\ jk}^{i}-D_{k}\acute{C}_{\ ja}^{i}+\acute{C}_{\
jb}^{i}T_{\ ka}^{b},\ P_{\ bka}^{c}=e_{a}\acute{L}_{\ bk}^{c}-D_{k}C_{\
ba}^{c}+C_{\ bd}^{c}T_{\ ka}^{c},  \notag \\
S_{\ jbc}^{i} &=&e_{c}\acute{C}_{\ jb}^{i}-e_{b}\acute{C}_{\ jc}^{i}+\acute{C%
}_{\ jb}^{h}\acute{C}_{\ hc}^{i}-\acute{C}_{\ jc}^{h}\acute{C}_{\ hb}^{i},%
\hspace{0in}\ S_{\ bcd}^{a}=e_{d}C_{\ bc}^{a}-e_{c}C_{\ bd}^{a}+C_{\
bc}^{e}C_{\ ed}^{a}-C_{\ bd}^{e}C_{\ ec}^{a},  \notag
\end{eqnarray}%
or $\ ^{\shortmid }\mathcal{R}=\mathbf{\{\ ^{\shortmid }R}_{\ \beta \gamma
\delta }^{\alpha }=(\ ^{\shortmid }R_{\ hjk}^{i},\ ^{\shortmid }R_{a\ jk}^{\
b},\ ^{\shortmid }P_{\ hj}^{i\ \ \ a},\ ^{\shortmid }P_{c\ j}^{\ b\ a},\
^{\shortmid }S_{\ hba}^{i},\ ^{\shortmid }S_{\ bea}^{c})\},$ for
\begin{eqnarray*}
\ ^{\shortmid }R_{\ hjk}^{i} &=&\ ^{\shortmid }\mathbf{e}_{k}\ ^{\shortmid
}L_{\ hj}^{i}-\ ^{\shortmid }\mathbf{e}_{j}\ ^{\shortmid }L_{\ hk}^{i}+\
^{\shortmid }L_{\ hj}^{m}\ ^{\shortmid }L_{\ mk}^{i}-\ ^{\shortmid }L_{\
hk}^{m}\ ^{\shortmid }L_{\ mj}^{i}-\ ^{\shortmid }C_{\ h}^{i\ a}\
^{\shortmid }\Omega _{akj}, \\
\ ^{\shortmid }R_{a\ jk}^{\ b} &=&\ ^{\shortmid }\mathbf{e}_{k}\ ^{\shortmid
}\acute{L}_{a\ j}^{\ b}-\ ^{\shortmid }\mathbf{e}_{j}\ ^{\shortmid }\acute{L}%
_{a\ k}^{\ b}+\ ^{\shortmid }\acute{L}_{c\ j}^{\ b}\ ^{\shortmid }\acute{L}%
_{a\ k}^{\ c}-\ ^{\shortmid }\acute{L}_{c\ k}^{\ b}\ ^{\shortmid }\acute{L}%
_{a\ j}^{\ c}-\ ^{\shortmid }C_{a\ }^{\ bc}\ ^{\shortmid }\Omega _{ckj}, \\
\ ^{\shortmid }P_{\ jk}^{i\ \ \ a} &=&\ ^{\shortmid }e^{a}\ ^{\shortmid
}L_{\ jk}^{i}-\ ^{\shortmid }D_{k}\ ^{\shortmid }\acute{C}_{\ j}^{i\ a}+\
^{\shortmid }\acute{C}_{\ j}^{i\ b}\ ^{\shortmid }T_{bk}^{\ \ \ a},\ \
^{\shortmid }P_{c\ k}^{\ b\ a}=\ ^{\shortmid }e^{a}\ ^{\shortmid }\acute{L}%
_{c\ k}^{\ b}-\ ^{\shortmid }D_{k}\ ^{\shortmid }C_{c\ }^{\ ba}+\
^{\shortmid }C_{\ bd}^{c}\ ^{\shortmid }T_{\ ka}^{c}, \\
\ ^{\shortmid }S_{\ j}^{i\ bc} &=&\ ^{\shortmid }e^{c}\ ^{\shortmid }\acute{C%
}_{\ j}^{i\ b}-\ ^{\shortmid }e^{b}\ ^{\shortmid }\acute{C}_{\ j}^{i\ c}+\
^{\shortmid }\acute{C}_{\ j}^{h\ b}\ ^{\shortmid }\acute{C}_{\ h}^{i\ c}-\
^{\shortmid }\acute{C}_{\ j}^{h\ c}\ ^{\shortmid }\acute{C}_{\ h}^{i\ b}, \\
\ ^{\shortmid }S_{a\ }^{\ bcd} &=&\ ^{\shortmid }e^{d}\ ^{\shortmid }C_{a\
}^{\ bc}-\ ^{\shortmid }e^{c}\ ^{\shortmid }C_{a\ }^{\ bd}+\ ^{\shortmid
}C_{a\ }^{\ bc}\ ^{\shortmid }C_{b\ }^{\ ed}-\ ^{\shortmid }C_{e}^{\ bd}\
^{\shortmid }C_{a}^{\ ec};
\end{eqnarray*}%
d-torsion, $\ \mathcal{T}=\{\mathbf{T}_{\ \alpha \beta }^{\gamma }=(T_{\
jk}^{i},T_{\ ja}^{i},T_{\ ji}^{a},T_{\ bi}^{a},T_{\ bc}^{a})\},$ for
\begin{equation}
T_{\ jk}^{i}=L_{jk}^{i}-L_{kj}^{i},T_{\ jb}^{i}=C_{jb}^{i},T_{\
ji}^{a}=-\Omega _{\ ji}^{a},\ T_{aj}^{c}=L_{aj}^{c}-e_{a}(N_{j}^{c}),T_{\
bc}^{a}=C_{bc}^{a}-C_{cb}^{a},  \label{dtors}
\end{equation}%
or $\ ^{\shortmid }\mathcal{T}=\{\ ^{\shortmid }\mathbf{T}_{\ \alpha \beta
}^{\gamma }=(\ ^{\shortmid }T_{\ jk}^{i},\ ^{\shortmid }T_{\ j}^{i\ a},\
^{\shortmid }T_{aji},\ ^{\shortmid }T_{a\ i}^{\ b},\ ^{\shortmid }T_{a\ }^{\
bc})\},$ for
\begin{equation*}
\ ^{\shortmid }T_{\ jk}^{i}=\ ^{\shortmid }L_{jk}^{i}-\ ^{\shortmid
}L_{kj}^{i},\ ^{\shortmid }T_{\ j}^{i\ a}=\ ^{\shortmid }C_{j}^{ia},\
^{\shortmid }T_{aji}=-\ ^{\shortmid }\Omega _{aji},\ \ ^{\shortmid }T_{c\
j}^{\ a}=\ ^{\shortmid }L_{c\ j}^{\ a}-\ ^{\shortmid }e^{a}(\ ^{\shortmid
}N_{cj}),\ ^{\shortmid }T_{a\ }^{\ bc}=\ ^{\shortmid }C_{a}^{\ bc}-\
^{\shortmid }C_{a}^{\ cb};
\end{equation*}%
d-nonmetricity, $\ \mathcal{Q}=\mathbf{\{Q}_{\gamma \alpha \beta }=\left(
Q_{kij},Q_{kab},Q_{cij},Q_{cab}\right) \},$ for
\begin{equation}
Q_{kij}=D_{k}g_{ij},Q_{kab}=D_{k}g_{ab},Q_{cij}=D_{c}g_{ij},Q_{cab}=D_{c}g_{ab}
\label{dnonm}
\end{equation}%
or $\ ^{\shortmid }\mathcal{Q}=\mathbf{\{\ ^{\shortmid }Q}_{\gamma \alpha
\beta }=\left( \ ^{\shortmid }Q_{kij},\ ^{\shortmid }Q_{kab},\ ^{\shortmid
}Q_{cij},\ ^{\shortmid }Q_{cab}\right) \},$ for
\begin{equation*}
\ ^{\shortmid }Q_{kij}=\ ^{\shortmid }D_{k}\ ^{\shortmid }g_{ij},\
^{\shortmid }Q_{k}^{\ ab}=\ ^{\shortmid }D_{k}\ ^{\shortmid }g^{ab},\
^{\shortmid }Q_{\ ij}^{c}=\ ^{\shortmid }D^{c}\ ^{\shortmid }g_{ij},\
^{\shortmid }Q^{cab}=\ ^{\shortmid }D^{c}\ ^{\shortmid }g^{ab}.
\end{equation*}%
N-adapted formulas for $\ _{s}\mathbf{T}^{\ast }\mathbf{V}$ are written
respectively: \newline
$\ _{s}^{\shortmid }\mathcal{R}=\mathbf{\{\ ^{\shortmid }R}_{\ \beta
_{s}\gamma _{s}\delta _{s}}^{\alpha _{s}}=(\ ^{\shortmid }R_{\
h_{s}j_{s}k_{s}}^{i_{s}},\ ^{\shortmid }R_{a_{s}\ j_{s}k_{s}}^{\ b_{s}},\
^{\shortmid }P_{\ h_{s}j_{s}}^{i_{s}\ \ \ a_{s}},\ ^{\shortmid }P_{c_{s}\
j_{s}}^{\ b_{s}\ a_{s}},\ ^{\shortmid }S_{\ h_{s}b_{s}a_{s}}^{i_{s}},\
^{\shortmid }S_{\ b_{s}e_{s}a_{s}}^{c_{s}})\},$ for
\begin{eqnarray*}
\ ^{\shortmid }R_{\ h_{s}j_{s}k_{s}}^{i_{s}} &=&\ ^{\shortmid }\mathbf{e}%
_{k_{s}}\ ^{\shortmid }L_{\ h_{s}j_{s}}^{i_{s}}-\ ^{\shortmid }\mathbf{e}%
_{j_{s}}\ ^{\shortmid }L_{\ h_{s}k_{s}}^{i_{s}}+\ ^{\shortmid }L_{\
h_{s}j_{s}}^{m_{s}}\ ^{\shortmid }L_{\ m_{s}k_{s}}^{i_{s}}-\ ^{\shortmid
}L_{\ h_{s}k_{s}}^{m_{s}}\ ^{\shortmid }L_{\ m_{s}j_{s}}^{i_{s}}-\
^{\shortmid }C_{\ h_{s}}^{i_{s}\ a_{s}}\ ^{\shortmid }\Omega
_{a_{s}k_{s}j_{s}}, \\
\ ^{\shortmid }R_{a_{s}\ j_{s}k_{s}}^{\ b_{s}} &=&\ ^{\shortmid }\mathbf{e}%
_{k_{s}}\ ^{\shortmid }\acute{L}_{a_{s}\ j_{s}}^{\ b_{s}}-\ ^{\shortmid }%
\mathbf{e}_{j_{s}}\ ^{\shortmid }\acute{L}_{a_{s}\ k_{s}}^{\ b_{s}}+\
^{\shortmid }\acute{L}_{c_{s}\ j_{s}}^{\ b_{s}}\ ^{\shortmid }\acute{L}%
_{a_{s}\ k_{s}}^{\ c_{s}}-\ ^{\shortmid }\acute{L}_{c_{s}\ k_{s}}^{\ b_{s}}\
^{\shortmid }\acute{L}_{a_{s}\ j_{s}}^{\ c_{s}}-\ ^{\shortmid }C_{a_{s}\
}^{\ b_{s}c_{s}}\ ^{\shortmid }\Omega _{c_{s}k_{s}j_{s}}, \\
\ ^{\shortmid }P_{\ j_{s}k_{s}}^{i_{s}\ \ \ a_{s}} &=&\ ^{\shortmid
}e^{a_{s}}\ ^{\shortmid }L_{\ j_{s}k_{s}}^{i_{s}}-\ ^{\shortmid }D_{k_{s}}\
^{\shortmid }\acute{C}_{\ j_{s}}^{i_{s}\ a_{s}}+\ ^{\shortmid }\acute{C}_{\
j_{s}}^{i_{s}\ b_{s}}\ ^{\shortmid }T_{b_{s}k_{s}}^{\ \ \ a_{s}}, \\
\ \ ^{\shortmid }P_{c_{s}\ k_{s}}^{\ b_{s}\ a_{s}} &=&\ ^{\shortmid
}e^{a_{s}}\ ^{\shortmid }\acute{L}_{c_{s}\ k_{s}}^{\ b_{s}}-\ ^{\shortmid
}D_{k_{s}}\ ^{\shortmid }C_{c_{s}\ }^{\ b_{s}a_{s}}+\ ^{\shortmid }C_{\
b_{s}d_{s}}^{c_{s}}\ ^{\shortmid }T_{\ k_{s}a_{s}}^{c_{s}}, \\
\ ^{\shortmid }S_{\ j_{s}}^{i_{s}\ b_{s}c_{s}} &=&\ ^{\shortmid }e^{c_{s}}\
^{\shortmid }\acute{C}_{\ j_{s}}^{i_{s}\ b_{s}}-\ ^{\shortmid }e^{b_{s}}\
^{\shortmid }\acute{C}_{\ j_{s}}^{i_{s}\ c_{s}}+\ ^{\shortmid }\acute{C}_{\
j_{s}}^{h_{s}\ b_{s}}\ ^{\shortmid }\acute{C}_{\ h_{s}}^{i_{s}\ c_{s}}-\
^{\shortmid }\acute{C}_{\ j_{s}}^{h_{s}\ c_{s}}\ ^{\shortmid }\acute{C}_{\
h_{s}}^{i_{s}\ b_{s}}, \\
\ ^{\shortmid }S_{a_{s}\ }^{\ b_{s}c_{s}d_{s}} &=&\ ^{\shortmid }e^{d_{s}}\
^{\shortmid }C_{a_{s}\ }^{\ b_{s}c_{s}}-\ ^{\shortmid }e^{c_{s}}\
^{\shortmid }C_{a_{s}\ }^{\ b_{s}d_{s}}+\ ^{\shortmid }C_{a_{s}\ }^{\
b_{s}c_{s}}\ ^{\shortmid }C_{b_{s}\ }^{\ e_{s}d_{s}}-\ ^{\shortmid
}C_{e_{s}}^{\ b_{s}d_{s}}\ ^{\shortmid }C_{a_{s}}^{\ e_{s}c_{s}};
\end{eqnarray*}%
s-torsion $\ _{s}^{\shortmid }\mathcal{T}=\{\ ^{\shortmid }\mathbf{T}_{\
\alpha _{s}\beta _{s}}^{\gamma _{s}}=(\ ^{\shortmid }T_{\
j_{s}k_{s}}^{i_{s}},\ ^{\shortmid }T_{\ j_{s}}^{i_{s}\ a_{s}},\ ^{\shortmid
}T_{a_{s}j_{s}i_{s}},\ ^{\shortmid }T_{a_{s}\ i_{s}}^{\ b_{s}},\ ^{\shortmid
}T_{a_{s}\ }^{\ b_{s}c_{s}})\},$ for
\begin{eqnarray*}
\ ^{\shortmid }T_{\ j_{s}k_{s}}^{i_{s}} &=&\ ^{\shortmid
}L_{j_{s}k_{s}}^{i_{s}}-\ ^{\shortmid }L_{k_{s}j_{s}}^{i_{s}},\ ^{\shortmid
}T_{\ j_{s}}^{i_{s}\ a_{s}}=\ ^{\shortmid }C_{j_{s}}^{i_{s}a_{s}},\
^{\shortmid }T_{a_{s}j_{s}i_{s}}=-\ ^{\shortmid }\Omega _{a_{s}j_{s}i_{s}},
\\
\ \ ^{\shortmid }T_{c_{s}\ j_{s}}^{\ a_{s}} &=&\ ^{\shortmid }L_{c_{s}\
j_{s}}^{\ a_{s}}-\ ^{\shortmid }e^{a_{s}}(\ ^{\shortmid }N_{c_{s}j_{s}}),\
^{\shortmid }T_{a_{s}\ }^{\ b_{s}c_{s}}=\ ^{\shortmid }C_{a_{s}}^{\
b_{s}c_{s}}-\ ^{\shortmid }C_{a_{s}}^{\ c_{s}b_{s}};
\end{eqnarray*}%
s-nonmetricity $\ _{s}^{\shortmid }\mathcal{Q}=\mathbf{\{\ ^{\shortmid }Q}%
_{\gamma _{s}\alpha _{s}\beta _{s}}=\left( \ ^{\shortmid
}Q_{k_{s}i_{s}j_{s}},\ ^{\shortmid }Q_{k_{s}a_{s}b_{s}},\ ^{\shortmid
}Q_{c_{s}i_{s}j_{s}},\ ^{\shortmid }Q_{c_{s}a_{s}b_{s}}\right) \},$ for
\begin{equation*}
\ ^{\shortmid }Q_{k_{s}i_{s}j_{s}}=\ ^{\shortmid }D_{k_{s}}\ ^{\shortmid
}g_{i_{s}j_{s}},\ ^{\shortmid }Q_{k_{s}}^{\ a_{s}b_{s}}=\ ^{\shortmid
}D_{k_{s}}\ ^{\shortmid }g^{a_{s}b_{s}},\ ^{\shortmid }Q_{\
i_{s}j_{s}}^{c_{s}}=\ ^{\shortmid }D^{c_{s}}\ ^{\shortmid }g_{i_{s}j_{s}},\
^{\shortmid }Q^{c_{s}a_{s}b_{s}}=\ ^{\shortmid }D^{c_{s}}\ ^{\shortmid
}g^{a_{s}b_{s}}.
\end{equation*}
\end{corollary}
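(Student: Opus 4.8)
\textbf{Proof plan for Corollary~\ref{acorolcurv}.}
The plan is to derive all N-adapted coefficient formulas by straightforward but careful substitution of the N-adapted (co)frame structures into the abstract definitions of curvature, torsion, and nonmetricity given in (\ref{dcurvabstr}) and (\ref{scurvabstr}), organized so that the three levels of generality ($\mathbf{TV}$, $\mathbf{T}^{\ast}\mathbf{V}$, and the diadic $\ _{s}\mathbf{T}^{\ast}\mathbf{V}$) are handled by a single computation with the extra structure added by bookkeeping on indices and bars. First I would fix the N-adapted bases $\mathbf{e}_{\alpha}$ and $\mathbf{e}^{\alpha}$ from (\ref{nadap}) together with the anholonomy relations (\ref{anhrel}) and the N-connection curvature (Neijenhuis) tensor (\ref{neijtc}); the commutators $\mathbf{e}_{[\alpha}\mathbf{e}_{\beta]}=C_{\alpha\beta}^{\gamma}\mathbf{e}_{\gamma}$ with $C_{ia}^{b}=\partial_{a}N_{i}^{b}$ and $C_{ji}^{a}=-\Omega_{\ ji}^{a}$ are the only nontrivial inputs beyond the connection coefficients themselves.

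Next I would expand $\mathcal{T}(\mathbf{e}_{\alpha},\mathbf{e}_{\beta})=\mathbf{D}_{\mathbf{e}_{\alpha}}\mathbf{e}_{\beta}-\mathbf{D}_{\mathbf{e}_{\beta}}\mathbf{e}_{\alpha}-[\mathbf{e}_{\alpha},\mathbf{e}_{\beta}]$ componentwise, using the parametrization $\mathbf{D}=\{L_{\ jk}^{i},\acute{L}_{\ bk}^{a},\acute{C}_{\ jc}^{i},C_{\ bc}^{a}\}$ from Definition~2.8; substituting the anholonomy coefficients reproduces exactly the five families in (\ref{dtors}), and the $h$-$v$ projection of $[\mathbf{e}_{i},\mathbf{e}_{j}]$ yields the $T_{\ ji}^{a}=-\Omega_{\ ji}^{a}$ term. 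For the curvature I would compute $\mathcal{R}(\mathbf{e}_{\gamma},\mathbf{e}_{\delta})\mathbf{e}_{\beta}$ by iterating the covariant derivative, being careful to keep the inhomogeneous $-C_{\ ha}^{i}\Omega_{\ kj}^{a}$ type terms coming from $\mathbf{D}_{[\mathbf{e}_{\gamma},\mathbf{e}_{\delta}]}$; collecting $hhh$, $vvh$, $hhv$, $vhv$, $hvv$, $vvv$ projections gives the six blocks $R_{\ hjk}^{i}$, $R_{\ bjk}^{a}$, $P_{\ jka}^{i}$, $P_{\ bka}^{c}$, $S_{\ jbc}^{i}$, $S_{\ bcd}^{a}$ in (\ref{dcurv}). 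The nonmetricity (\ref{dnonm}) is immediate from $\mathcal{Q}(\mathbf{e}_{\gamma})=\mathbf{D}_{\mathbf{e}_{\gamma}}\mathbf{g}$ applied to the block-diagonal d-metric (\ref{dmt}). The cotangent-bundle formulas follow by repeating the computation with the dual bases (\ref{nadapd}), where the fiber index is now ``down'' on $p_{a}$ and ``up'' on $\ ^{\shortmid}e^{b}=\partial/\partial p_{b}$; this is precisely the abstract-index dictionary (add a bar, raise/lower the fiber index) promised in Section~\ref{s2}, so no genuinely new calculation is needed. For the diadic $\ _{s}\mathbf{T}^{\ast}\mathbf{V}$ formulas I would invoke the shell-adapted bases (\ref{nadapbds}) and (\ref{nadapbdss}) and observe that, because each shell distribution is 2-dimensional and the s-connection preserves the shell splitting, the same identities hold shell by shell with all indices carrying the label $s$; the key point to check is that cross-shell contributions vanish by the definition of s-connection, which is exactly the statement ``geometric objects of higher dimension depend only on components of the same and lower dimension.''

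The main obstacle I anticipate is not any single calculation but the consistent handling of index positions and sign conventions across the three settings simultaneously: on $\mathbf{T}^{\ast}\mathbf{V}$ the vertical covariant derivatives act as $\ ^{\shortmid}e^{c}=\partial/\partial p_{c}$ (raising an index), so terms like $\mathbf{D}_{e_{c}}e_{b}:=C_{\ bc}^{a}e_{a}$ become $\ ^{\shortmid}\mathbf{D}_{\ ^{\shortmid}e^{c}}\ ^{\shortmid}e^{b}:=-\ ^{\shortmid}C_{a}^{\ bc}\ ^{\shortmid}e^{a}$ with a sign flip, and these signs must be propagated faithfully through the second covariant derivative in $\mathcal{R}$ and through $\mathcal{Q}$ (where $\ ^{\shortmid}Q_{k}^{\ ab}=\ ^{\shortmid}D_{k}\ ^{\shortmid}g^{ab}$ now carries upper fiber indices). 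I would control this by first doing the full computation abstractly on $\mathbf{TV}$, recording every place a fiber index appears, and then applying the duality substitution mechanically; cross-checking a few components against the known 4-d and extra-dimension formulas in \cite{v18a,gvvepjc14} provides a sanity check. Since all of these are routine N-adapted expansions of the abstract definitions already in the text, no deeper idea is required, and the corollary follows.
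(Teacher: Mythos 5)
Your proposal is correct and follows essentially the same route as the paper, which itself only states that the formulas are obtained "by explicit computations for $\mathbf{X}=\mathbf{e}_{\alpha},\mathbf{Y}=\mathbf{e}_{\beta}$" substituted into the abstract definitions (\ref{dcurvabstr})--(\ref{scurvabstr}); your plan simply fills in the componentwise expansion, the role of the anholonomy/Neijenhuis coefficients, the sign bookkeeping for the dual fiber indices on $\mathbf{T}^{\ast}\mathbf{V}$, and the shell-by-shell extension, all of which is what the paper implicitly relies on.
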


Similar formulas as in this Remarks can be proven for $\ _{s}\mathbf{TV}$
(the coefficients are written without label "$\ ^{\shortmid }$" and $%
a,b,c... $ symbols go up, or down, comparing with corresponding low, or up,
ones, and inversely).

\subsection{The coefficients of canonical d-connections and dyadic splitting}

Such d-connections are very important for elaborating MGTs on (co) tangent
bundles because they allow a very general decoupling and integration of
generalized Einstein and matter field equations. By explicit computations in
N-adapted frames, we can prove that necessary conditions for defining and
constructing, respectively, $\widehat{\mathbf{D}}$ (\ref{canondcl}) and $\
^{\shortmid }\widehat{\mathbf{D}}$ (\ref{canondch}), are satisfied following

\begin{corollary}
\label{acoroldand}The N-adapted coefficients of canonical Lagrange and
Hamilton d-connections are computed respectively:
\begin{eqnarray}
\mbox{ on }T\mathbf{TV},\ \widehat{\mathbf{D}} &=&\{\widehat{\mathbf{\Gamma }%
}_{\ \alpha \beta }^{\gamma }=(\widehat{L}_{jk}^{i},\widehat{L}_{bk}^{a},%
\widehat{C}_{jc}^{i},\widehat{C}_{bc}^{a})\},\mbox{ for }\lbrack \mathbf{g}%
_{\alpha \beta }=(g_{jr},g_{ab})\mathbf{,N=\{}N_{i}^{a}\mathbf{\}]},  \notag
\\
\widehat{L}_{jk}^{i} &=&\frac{1}{2}g^{ir}\left( \mathbf{e}_{k}g_{jr}+\mathbf{%
e}_{j}g_{kr}-\mathbf{e}_{r}g_{jk}\right) ,\ \widehat{L}%
_{bk}^{a}=e_{b}(N_{k}^{a})+\frac{1}{2}g^{ac}(e_{k}g_{bc}-g_{dc}\
e_{b}N_{k}^{d}-g_{db}\ e_{c}N_{k}^{d}),  \notag \\
\widehat{C}_{jc}^{i} &=&\frac{1}{2}g^{ik}e_{c}g_{jk},\ \widehat{C}_{bc}^{a}=%
\frac{1}{2}g^{ad}\left( e_{c}g_{bd}+e_{b}g_{cd}-e_{d}g_{bc}\right)
\label{canlc}
\end{eqnarray}%
\begin{eqnarray}
\mbox{ and, on }T\mathbf{T}^{\ast }\mathbf{V},\ \ ^{\shortmid }\widehat{%
\mathbf{D}} &=&\{\ ^{\shortmid }\widehat{\mathbf{\Gamma }}_{\ \alpha \beta
}^{\gamma }=(\ ^{\shortmid }\widehat{L}_{jk}^{i},\ ^{\shortmid }\widehat{L}%
_{a\ k}^{\ b},\ ^{\shortmid }\widehat{C}_{\ j}^{i\ c},\ ^{\shortmid }%
\widehat{C}_{\ j}^{i\ c})\},\mbox{ for }\lbrack \ ^{\shortmid }\mathbf{g}%
_{\alpha \beta }=(\ ^{\shortmid }g_{jr},\ ^{\shortmid }g^{ab})\mathbf{,\
^{\shortmid }N=\{}\ ^{\shortmid }N_{ai}\mathbf{\}]},  \notag \\
\ ^{\shortmid }\widehat{L}_{jk}^{i} &=&\frac{1}{2}\ ^{\shortmid }g^{ir}(\
^{\shortmid }\mathbf{e}_{k}\ ^{\shortmid }g_{jr}+\ ^{\shortmid }\mathbf{e}%
_{j}\ ^{\shortmid }g_{kr}-\ ^{\shortmid }\mathbf{e}_{r}\ ^{\shortmid
}g_{jk}),\   \notag \\
\ ^{\shortmid }\widehat{L}_{a\ k}^{\ b} &=&\ ^{\shortmid }e^{b}(\
^{\shortmid }N_{ak})+\frac{1}{2}\ ^{\shortmid }g_{ac}(\ ^{\shortmid }e_{k}\
^{\shortmid }g^{bc}-\ ^{\shortmid }g^{dc}\ \ ^{\shortmid }e^{b}\ ^{\shortmid
}N_{dk}-\ ^{\shortmid }g^{db}\ \ ^{\shortmid }e^{c}\ ^{\shortmid }N_{dk}),
\notag \\
\ ^{\shortmid }\widehat{C}_{\ j}^{i\ c} &=&\frac{1}{2}\ ^{\shortmid }g^{ik}\
^{\shortmid }e^{c}\ ^{\shortmid }g_{jk},\ \ ^{\shortmid }\widehat{C}_{\
a}^{b\ c}=\frac{1}{2}\ ^{\shortmid }g_{ad}(\ ^{\shortmid }e^{c}\ ^{\shortmid
}g^{bd}+\ ^{\shortmid }e^{b}\ ^{\shortmid }g^{cd}-\ ^{\shortmid }e^{d}\
^{\shortmid }g^{bc}).  \label{canhc}
\end{eqnarray}%
We use formulas (\ref{canlc}) for the shells $s=1,2$ (with $%
i_{1},j_{1},...=1,2$ and $a_{2},b_{2},...=3,4)$ of the cotangent Lorentz
bundle,
\begin{eqnarray}
\mbox{ on }\ \ _{2}T\mathbf{T}^{\ast }\mathbf{V},\ \ _{2}^{\shortmid }%
\widehat{\mathbf{D}} &=&\{\widehat{\mathbf{\Gamma }}_{\ \alpha _{2}\beta
_{2}}^{\gamma _{2}}=(\widehat{L}_{j_{1}k_{1}}^{i_{1}},\widehat{L}%
_{b_{2}k_{1}}^{a_{2}},\widehat{C}_{j_{1}c_{2}}^{i_{1}},\widehat{C}%
_{b_{2}c_{2}}^{a_{2}})\},\mbox{ for }\lbrack \mathbf{g}_{\alpha _{1}\beta
_{1}}=(g_{j_{1}r_{1}},g_{a_{2}b_{2}})\mathbf{,}\ \ _{2}^{\shortmid }\mathbf{%
N=\{}N_{i_{1}}^{a_{2}}\mathbf{\}]},  \notag \\
\widehat{L}_{j_{1}k_{1}}^{i_{1}} &=&\frac{1}{2}g^{i_{1}r_{1}}\left( \mathbf{e%
}_{k_{1}}g_{j_{1}r_{1}}+\mathbf{e}_{j_{1}}g_{k_{1}r_{1}}-\mathbf{e}%
_{r_{1}}g_{j_{1}k_{1}}\right) ,\   \notag \\
\widehat{L}_{b_{2}k_{1}}^{a_{2}} &=&e_{b_{2}}(N_{k_{1}}^{a_{2}})+\frac{1}{2}%
g^{a_{2}c_{2}}(e_{k_{1}}g_{b_{2}c_{2}}-g_{d_{2}c_{2}}\
e_{b_{2}}N_{k_{1}}^{d_{2}}-g_{d_{2}b_{2}}\ e_{c_{2}}N_{k_{1}}^{d_{2}}),
\notag \\
\widehat{C}_{j_{1}c_{2}}^{i_{1}} &=&\frac{1}{2}%
g^{i_{1}k_{1}}e_{c_{2}}g_{j_{1}k_{1}},\ \widehat{C}_{b_{2}c_{2}}^{a_{2}}=%
\frac{1}{2}g^{a_{2}d_{2}}\left(
e_{c_{2}}g_{b_{2}d_{2}}+e_{b_{2}}g_{c_{2}d_{2}}-e_{d_{2}}g_{b_{2}c_{2}}%
\right) ,  \label{candcons12}
\end{eqnarray}%
and formulas (\ref{canhc}) for shells $s=3,4$ ($i_{3},j_{3},...=1,2,3,4$ and
$a_{3},b_{3}=5,6;i_{4},j_{4},...=1,2,3,4,5,6$ and $a_{4},b_{4}=7,8)$ and%
\begin{eqnarray}
\mbox{on }\ _{s}T\mathbf{T}^{\ast }\mathbf{V},\ \ _{s}^{\shortmid }\widehat{%
\mathbf{D}} &=&\{\ ^{\shortmid }\widehat{\mathbf{\Gamma }}_{\ \alpha
_{s}\beta _{s}}^{\gamma _{s}}=(\ ^{\shortmid }\widehat{L}%
_{j_{s}k_{s}}^{i_{s}},\ ^{\shortmid }\widehat{L}_{a_{s}\ k_{s}}^{\ b_{s}},\
^{\shortmid }\widehat{C}_{\ j_{s}}^{i_{s}\ c_{s}},\ ^{\shortmid }\widehat{C}%
_{\ j_{s}}^{i_{s}\ c_{s}})\},\mbox{ where, }  \notag \\
&&\mbox{ for }\lbrack \ ^{\shortmid }\mathbf{g}_{\alpha _{s}\beta _{s}}=(\
^{\shortmid }g_{j_{s}r_{s}},\ ^{\shortmid }g^{a_{s}b_{s}})\mathbf{,\
_{s}^{\shortmid }N=\{}\ ^{\shortmid }N_{a_{s}i_{s}}\mathbf{\}]}  \notag \\
\ ^{\shortmid }\widehat{L}_{j_{s}k_{s}}^{i_{s}} &=&\frac{1}{2}\ ^{\shortmid
}g^{i_{s}r_{s}}(\ ^{\shortmid }\mathbf{e}_{k_{s}}\ ^{\shortmid
}g_{j_{s}r_{s}}+\ ^{\shortmid }\mathbf{e}_{j_{s}}\ ^{\shortmid
}g_{k_{s}r_{s}}-\ ^{\shortmid }\mathbf{e}_{r_{s}}\ ^{\shortmid
}g_{j_{s}k_{s}}),\   \notag \\
\ ^{\shortmid }\widehat{L}_{a_{s}\ k_{s}}^{\ b_{s}} &=&\ ^{\shortmid
}e^{b_{s}}(\ ^{\shortmid }N_{a_{s}k_{s}})+\frac{1}{2}\ ^{\shortmid
}g_{a_{s}c_{s}}(\ ^{\shortmid }e_{k_{s}}\ ^{\shortmid }g^{b_{s}c_{s}}-\
^{\shortmid }g^{d_{s}c_{s}}\ \ ^{\shortmid }e^{b_{s}}\ ^{\shortmid
}N_{d_{s}k_{s}}-\ ^{\shortmid }g^{d_{s}b_{s}}\ \ ^{\shortmid }e^{c_{s}}\
^{\shortmid }N_{d_{s}k_{s}}),  \notag \\
\ ^{\shortmid }\widehat{C}_{\ j_{s}}^{i_{s}\ c_{s}} &=&\frac{1}{2}\
^{\shortmid }g^{i_{s}k_{s}}\ ^{\shortmid }e^{c_{s}}\ ^{\shortmid
}g_{j_{s}k_{s}},\ \ ^{\shortmid }\widehat{C}_{\ a_{s}}^{b_{s}\ c_{s}}=\frac{1%
}{2}\ ^{\shortmid }g_{a_{s}d_{s}}(\ ^{\shortmid }e^{c_{s}}\ ^{\shortmid
}g^{b_{s}d_{s}}+\ ^{\shortmid }e^{b_{s}}\ ^{\shortmid }g^{c_{s}d_{s}}-\
^{\shortmid }e^{d_{s}}\ ^{\shortmid }g^{b_{s}c_{s}}).  \label{candcons34}
\end{eqnarray}
\end{corollary}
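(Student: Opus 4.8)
\textbf{Proof proposal for Corollary \ref{acoroldand}.}

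The plan is to verify that the listed N-adapted coefficients really do satisfy the three defining conditions that single out $\widehat{\mathbf{D}}$ and $\ ^{\shortmid}\widehat{\mathbf{D}}$ among all N-adapted linear connections: namely (i) N-adaptedness (the connection preserves the $h$-$v$, resp. $h$-$cv$, splitting under parallelism, so the nonzero coefficients are exactly those four blocks), (ii) metric compatibility with the respective d-metric, $\widehat{\mathbf{D}}\mathbf{g}=0$ and $\ ^{\shortmid}\widehat{\mathbf{D}}\ ^{\shortmid}\mathbf{g}=0$, computed blockwise via the d-nonmetricity formulas (\ref{dnonm}), and (iii) the vanishing of the "pure" mixed-index torsion components, i.e. $h\widehat{T}=0$ inside the $h$-subspace and $v\widehat{T}=0$ inside the $v$-subspace, which pins down the remaining symmetry ambiguity exactly as for the Levi-Civita connection of a (pseudo-)Riemannian metric. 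First I would fix an N-adapted frame from Lemma \ref{lemadap} (bases (\ref{nadap}), (\ref{nadapd})) and write a general N-adapted d-connection as $\mathbf{D}_{\mathbf{e}_{k}}\mathbf{e}_{j}=L_{\ jk}^{i}\mathbf{e}_{i}$, etc., with the four unknown coefficient blocks; then I would impose (ii) and (iii) and solve the resulting linear algebraic system for those blocks, obtaining precisely (\ref{canlc}). The cotangent-bundle case (\ref{canhc}) is obtained by the "abstract geometric" dictionary: replace $\partial/\partial v^{a}$ by $\partial/\partial p_{a}$, raise/lower the fiber indices with $\ ^{\shortmid}g^{ab}$ instead of $g_{ab}$, and carry the label $\ ^{\shortmid}$; the same computation goes through verbatim, and one checks that the Hamilton canonical d-connection so obtained is metric compatible with $\ ^{\shortmid}\mathbf{g}$ and has vanishing $h$- and $cv$-pure torsion.

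Next I would pass to the diadic refinement. Here the key observation is purely combinatorial: a shell splitting $\ _{s}^{\shortmid}\mathbf{N}$ (\ref{ncon2}) is, for each pair of adjacent shells, again an $h$-$v$ (resp. $h$-$cv$) splitting, so the shell d-connection $\ _{s}^{\shortmid}\widehat{\mathbf{D}}$ restricted to a given shell is nothing but the ordinary canonical d-connection of the corresponding $2{+}2$ block. Concretely, shells $s=1,2$ of $\ _{s}\mathbf{T}^{\ast}\mathbf{V}$ form a tangent-bundle-type pair (fiber indices $a_{2},b_{2}=3,4$ sit "downstairs"), so formula (\ref{canlc}) applies literally with indices decorated by subscript $1$ or $2$, giving (\ref{candcons12}); shells $s=3,4$ are genuine cotangent-type pairs (momentum indices $a_{3},b_{3}=5,6$ and $a_{4},b_{4}=7,8$ "upstairs"), so formula (\ref{canhc}) applies, giving (\ref{candcons34}). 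I would justify this by checking that the off-diagonal-in-$s$ torsion conditions $h_{s}v_{s'}\widehat{\mathbf{T}}=0$ and $h_{s}cv_{s'}\ ^{\shortmid}\widehat{\mathbf{T}}=0$ for $s\neq s'$, which are part of the defining data in (\ref{canondcl})–(\ref{canondch}), are automatically satisfied by the blockwise formulas because the cross-shell coefficients are built from N-connection derivatives exactly as the within-shell ones, and that the metric-compatibility conditions (d-nonmetricity (\ref{dnonm}) with shell indices) hold block by block.

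The only genuinely laborious part — and the step I expect to be the main obstacle in writing out a fully rigorous proof rather than the present sketch — is bookkeeping the N-elongated partial derivatives $\mathbf{e}_{k}=\partial_{x^{k}}-N_{k}^{a}\partial_{v^{a}}$ (and its cotangent dual) consistently through the metric-compatibility equations on all four shells simultaneously, in particular making sure that when one differentiates a coefficient carrying shell-$s$ indices the N-connection pieces from lower shells are handled correctly and that the anholonomy coefficients $C_{\alpha\beta}^{\gamma}$ from (\ref{anhrel}) and N-curvatures $\Omega$, $\ ^{\shortmid}\Omega$ from (\ref{neijtc}) enter only through the torsion blocks, not the $L$-, $C$-blocks. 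Since this is exactly the computation already carried out for $\mathbf{TV}$ (with possible diadic structure and extra dimensions) in \cite{v18a,gvvepjc14}, I would present the tangent-bundle identities as established there, verify the cotangent modification by the abstract-index substitution rule, and then read off (\ref{candcons12}) and (\ref{candcons34}) as the shell-indexed specializations. This reduces the proof to: (a) recalling the $\mathbf{TV}$ result, (b) one substitution argument for the $\ ^{\shortmid}$ case, and (c) a short combinatorial check that shell splitting preserves the defining conditions block by block. $\square$
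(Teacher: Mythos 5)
Your proposal is correct and follows essentially the same route as the paper, which only gestures at "explicit computations in N-adapted frames" verifying that the defining conditions of Definition-Theorem \ref{phidc} (metric compatibility $\widehat{\mathbf{D}}\mathbf{g}=0$, $\ ^{\shortmid }\widehat{\mathbf{D}}\ ^{\shortmid }\mathbf{g}=0$ together with vanishing of the pure $h$- and $v$-, resp. $cv$-, torsion blocks) are satisfied by the displayed coefficients; your added detail on uniqueness via the Levi-Civita-type linear system and the blockwise shell specialization is exactly the computation the paper leaves implicit.
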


In a similar form, we can prove that all N-adapted coefficient formulas
necessary formulating and finding solutions of physically important field
and evolution equations in theories with MDRs and LIVs.

\subsection{Proof of Theorem \protect\ref{gensoltorsion}}

\label{assgensolt}The coefficients $g_{i_{1}}=e^{\psi (x^{k_{1}})}$ for the
first dyadic shell $s=1$ are defined by solutions of the corresponding 2-d
Poisson equation (\ref{eq1}) for any given source $\ _{1}^{\shortmid }%
\widehat{\Upsilon }(x^{k_{1}})$.

The system (\ref{e2a})-(\ref{e2c}) for the second dyadic shell $s=2$ can be
solved following the same procedure following formulas (49) - (58) in
section 2.3.6 of \cite{vacaruepjc17} (similar results were published in \cite%
{vijtp10} and \cite{gvvepjc14}). We have to re-define the coordinates and
letters for the d-metric and s-connection coefficients following conventions
of this paper. We obtain, respectively, from formula for $\ _{2}^{\shortmid
}\gamma $ and equations (\ref{e2a})-(\ref{e2c}), this nonlinear system
\begin{eqnarray}
(\ _{2}^{\shortmid }\Psi )^{\diamond }g_{4}^{\diamond } &=&2g_{3}g_{4}(\
_{2}^{\shortmid }\widehat{\Upsilon })(\ _{2}^{\shortmid }\Psi ),
\label{auxa1} \\
\sqrt{|g_{3}g_{4}|}\ _{2}^{\shortmid }\Psi &=&g_{4}^{\diamond }
\label{auxa2} \\
\ (\ _{2}^{\shortmid }\Psi )^{\diamond }w_{i_{1}}-\partial _{i_{1}}(\
_{2}^{\shortmid }\Psi ) &=&\ 0  \label{aux1ab} \\
\ n_{i_{1}}^{\diamond \diamond }+\left( \ln \frac{|\ g_{4}|^{3/2}}{|g_{3}|}%
\right) ^{\diamond }n_{i_{1}}^{\diamond } &=&0,\   \label{aux1ac}
\end{eqnarray}%
for $g_{4}^{\diamond }=\partial _{3}g_{4}=\partial g_{4}/\partial
y^{3}=\partial g_{4}/\partial \varphi .$ Prescribing generating function and
source, $\ _{2}^{\shortmid }\Psi $ and $\ _{2}^{\shortmid }\widehat{\Upsilon
},$ we can integrate in general form this system with decoupling of
equations. Let us prove this in new variables which are different from those
used in section 2.3.6 of \cite{vacaruepjc17}. Let introduce
\begin{equation}
\rho ^{2}:=-g_{3}g_{4}  \label{rho}
\end{equation}%
which allows us to write (\ref{auxa1}) and (\ref{auxa2}), respectively, in
the form%
\begin{equation}
(\ _{2}^{\shortmid }\Psi )^{\diamond }g_{4}^{\diamond }=-2\rho ^{2}(\
_{2}^{\shortmid }\widehat{\Upsilon })(\ _{2}^{\shortmid }\Psi )\mbox{ and }%
g_{4}^{\diamond }=\rho \ _{2}^{\shortmid }\Psi .  \label{auxa3a}
\end{equation}%
Substituting in this line the value of $g_{4}^{\diamond }$ from the second
equation into the first equation, we get
\begin{equation}
\rho =-(\ _{2}^{\shortmid }\Psi )^{\diamond }/2(\ _{2}^{\shortmid }\widehat{%
\Upsilon }).  \label{rho1}
\end{equation}%
Introducing this $\rho $ into the second equation in (\ref{auxa3a}) and
integrating on $y^{3},$ we obtain
\begin{equation}
\ g_{4}=g_{4}^{[0]}(x^{k_{1}})-\int dy^{3}[(\ _{2}^{\shortmid }\Psi
)^{2}]^{\diamond }/4(\ _{2}^{\shortmid }\widehat{\Upsilon }).  \label{g4}
\end{equation}%
This formula can be used in (\ref{rho}) and (\ref{rho1}), which allows us to
compute%
\begin{equation}
g_{3}=-\frac{1}{4g_{4}}\left( \frac{(\ _{2}^{\shortmid }\Psi )^{\diamond }}{%
\ _{2}^{\shortmid }\widehat{\Upsilon }}\right) ^{2}=-\left( \frac{(\
_{2}^{\shortmid }\Psi )^{\diamond }}{2\ _{2}^{\shortmid }\widehat{\Upsilon }}%
\right) ^{2}\left( g_{4}^{[0]}(x^{k_{1}})-\int dy^{3}\frac{[(\
_{2}^{\shortmid }\Psi )^{2}]^{\diamond }}{4(\ _{2}^{\shortmid }\widehat{%
\Upsilon })}\right) ^{-1}.  \label{g3}
\end{equation}%
Having computed $g_{3}$ and $g_{4},$ we can integrate two times on $y^{3}$
the equation (\ref{aux1ac}), $\ n_{i_{1}}^{\diamond \diamond }+\left( \ln
\frac{|g_{4}|^{3/2}}{|\ g_{3}|}\right) ^{\diamond }n_{i_{1}}^{\diamond }=0,$
when
\begin{eqnarray}
n_{k_{1}}(x^{k_{1}},y^{3}) &=&\ _{1}n_{k_{1}}+\ _{2}n_{k_{1}}\int dy^{3}\
\frac{g_{3}}{|\ g_{4}|^{3/2}}=\ _{1}n_{k_{1}}+\ _{2}n_{k_{1}}\int
dy^{3}\left( \frac{(\ _{2}^{\shortmid }\Psi )^{\diamond }}{2\
_{2}^{\shortmid }\widehat{\Upsilon }}\right) ^{2}|\ g_{4}|^{-5/2}  \notag \\
&=&\ _{1}n_{k_{1}}+\ _{2}n_{k_{1}}\int dy^{3}\left( \frac{(\ _{2}^{\shortmid
}\Psi )^{\diamond }}{2\ _{2}^{\shortmid }\widehat{\Upsilon }}\right)
^{2}\left\vert g_{4}^{[0]}(x^{k_{1}})-\int dy^{3}[(\ _{2}^{\shortmid }\Psi
)^{2}]^{\diamond }/4(\ _{2}^{\shortmid }\widehat{\Upsilon })\right\vert
^{-5/2},  \label{gn}
\end{eqnarray}%
with two integration functions $\ _{1}n_{k_{1}}(x^{i_{1}})$ and a redefined $%
\ _{2}\widetilde{n}_{k_{1}}(x^{i_{1}})=\ _{2}\widetilde{n}%
_{k_{1}}(x^{i_{1}}).$ Finally, for the shell $s=2,$ we can solve the
algebraic system (\ref{aux1ab}) and find%
\begin{equation}
w_{i_{1}}=\partial _{i_{1}}\ (\ _{2}^{\shortmid }\Psi )/(\ _{2}^{\shortmid
}\Psi )^{\diamond }.  \label{gw}
\end{equation}

At the next step, we provide a proof for the shell $s=4$ (for constructions
for $s=3$ are similar but with less fiber momentum coordinates). Such
constructions have not yet considered for cotangent bundles or Hamilton
geometries. Introducing the values of coefficients $\alpha _{i_{3}},\
_{4}^{\shortmid }\beta ,\ _{4}^{\shortmid }\gamma $ into (\ref{eq4a})-(\ref%
{eq4c}), we obtain a system of nonlinear PDEs with a decoupling property
\begin{eqnarray}
\ \ \ _{4}^{\shortmid }\Psi ^{\ast }\ (g^{7})^{\ast } &=&2\ g^{7}g^{8}\ (\
_{4}^{\shortmid }\widehat{\Upsilon })\ _{4}^{\shortmid }\Psi ,  \label{auxd1}
\\
\sqrt{|g^{7}g^{8}|}\ _{4}^{\shortmid }\Psi &=&(g^{7})^{\ast },  \notag \\
\ n_{i_{3}}^{\ast \ast }+\left( \ln \frac{|g^{8}|}{|\ g^{7}|^{3/2}}\right)
^{\ast }n_{i_{3}}^{\ast } &=&0,\   \notag \\
\ \ _{4}^{\shortmid }\Psi ^{\ast }w_{i_{3}}-\partial _{i_{3}}(\
_{4}^{\shortmid }\Psi ) &=&\ 0,\   \notag
\end{eqnarray}%
for $(g^{7})^{\ast }=\partial ^{8}g^{7}=\partial g^{7}/\partial
p_{8}=\partial g^{7}/\partial E.$ This system can be integrated in very
general forms by prescribing $\ _{4}^{\shortmid }\Psi
(x^{k_{1}},y^{3},t,p_{a_{3}},E)$ and $\ \ \ _{4}^{\shortmid }\widehat{%
\Upsilon }(x^{k_{1}},y^{3},t,p_{a_{3}},E),$ where $%
k_{1}=1,2;a_{3}=5,6;i_{3}=1,2,...6;a_{4}=7,8;$ and $x^{4}=y^{4}=t$ and $%
p_{8}=E.$ Introducing the function
\begin{equation}
\ (\ _{4}\rho )^{2}:=-g^{7}g^{8},  \label{aux2ba}
\end{equation}%
(the sign - is motivated by the pseudo-Euclidean signature, this sign is +
for $s=3$), we express\ the first two equations in above system in the form
\begin{equation}
\ _{4}^{\shortmid }\Psi ^{\ast }\ (g^{7})^{\ast }=-2\ (\ _{4}\rho )^{2}\ (\
_{4}^{\shortmid }\widehat{\Upsilon })\ _{4}^{\shortmid }\Psi \mbox{ and }\ \
(g^{7})^{\ast }=\ _{4}\rho (\ _{4}^{\shortmid }\Psi ).  \label{aux1da}
\end{equation}%
Then we substitute in the first equation of (\ref{auxd1}) the value of $%
(g^{7})^{\ast }$ given by the the second equation (\ref{aux1da}) and obtain
\begin{equation}
\ _{4}\rho =-\ _{4}^{\shortmid }\Psi ^{\ast }/2\ (\ _{4}^{\shortmid }%
\widehat{\Upsilon }).  \label{qf1d}
\end{equation}%
Using this value and the second equation of (\ref{aux1da}) and integrating
on $E,$ we obtain
\begin{equation}
\ g^{7}=g_{[0]}^{7}(x^{k_{1}},y^{3},t,p_{a_{3}})-\int dE[(\ _{4}^{\shortmid
}\Psi )^{2}]^{\ast }/4(\ _{4}^{\shortmid }\widehat{\Upsilon }),  \label{g7d}
\end{equation}%
where $g_{[0]}^{7}(x^{k_{3}})$ is an integration function. At the next step,
considering formulas (\ref{aux2ba}), (\ref{qf1d}) (\ref{g7d}), we compute%
\begin{equation}
\ g^{8}=-\frac{1}{4\ g^{7}}(\frac{\ \ _{4}^{\shortmid }\Psi ^{\ast }}{\
_{4}^{\shortmid }\widehat{\Upsilon }})^{2}=-\frac{1}{4}(\frac{\ \
_{4}^{\shortmid }\Psi ^{\ast }}{\ _{4}^{\shortmid }\widehat{\Upsilon }}%
)^{2}/\left( g_{[0]}^{7}-\frac{1}{4}\int dE\frac{[(\ _{4}^{\shortmid }\Psi
)^{2}]^{\ast }}{\ _{4}^{\shortmid }\widehat{\Upsilon }}\right) .  \label{g8d}
\end{equation}

Now, can perform an integration of the third equation in (\ref{auxd1}). The
first subset of the N--connection coefficients, $n_{i_{3}}^{\ast },$ are
found by integrating two times on $E$ in that equations written in the form
\begin{equation*}
\ n_{i_{3}}^{\ast \ast }=(n_{i_{3}}^{\ast })^{\ast }=-\ n_{i_{3}}^{\ast
}(\ln |g^{7}|^{3/2}/|g^{8}|)^{\ast }
\end{equation*}%
for the coefficient $\ _{4}^{\shortmid }\gamma $ defined in for the equation
(\ref{eq4b}) of the theorem. \ Using explicit values (\ref{g7d}) and (\ref%
{g8d}), we find
\begin{eqnarray*}
n_{k_{3}}(x^{k_{1}},y^{3},t,p_{a_{3}},E) &=&\ _{1}n_{k_{3}}+\
_{2}n_{k_{3}}\int dE\ \frac{|\ g^{8}|}{|g^{7}|^{3/2}}=\ _{1}n_{k_{3}}+\ _{2}%
\widetilde{n}_{k_{3}}\int dE\ \left( \frac{\ \ _{4}^{\shortmid }\Psi ^{\ast }%
}{\ _{4}^{\shortmid }\widehat{\Upsilon }}\right) ^{2}|\ g_{7}|^{-5/2} \\
&=&\ _{1}n_{k_{1}}+\ _{2}n_{k_{1}}\int dE\left( \frac{\ \ _{4}^{\shortmid
}\Psi ^{\ast }}{\ _{4}^{\shortmid }\widehat{\Upsilon }}\right)
^{2}\left\vert g_{[0]}^{7}(x^{k_{1}},y^{3},t,p_{a_{3}})-\int dE[(\
_{4}^{\shortmid }\Psi )^{2}]^{\ast }/4(\ _{4}^{\shortmid }\widehat{\Upsilon }%
)\right\vert ^{-5/2},
\end{eqnarray*}%
where two integration functions are parameterized $\
_{1}n_{k_{3}}(x^{i_{3}})=$ $\ _{1}n_{k_{3}}(x^{k_{1}},y^{3},t,p_{a_{3}})$
and a redefined $\ _{2}\widetilde{n}_{k_{3}}(x^{i_{3}})=\ _{2}\widetilde{n}%
_{k_{3}}(x^{k_{1}},y^{3},t,p_{a_{3}}).$

The second subset of N-connection coefficients, $w_{i_{3}},$ can be easily
found as a solution of the linear algebraic equations in (\ref{auxd1}),
  $w_{i_{3}}=\partial _{i_{3}}\ (\ _{4}^{\shortmid }\Psi )/(\ _{4}^{\shortmid
}\Psi )^{\ast }$.

Putting together above formulas for the s-metric and N-connection
coefficients on the shell $s=4,$ we construct a general solution of the
system (\ref{eq4a})-(\ref{eq4c}),
\begin{eqnarray}
\ \ g^{7} &=&g_{[0]}^{7}-\int dE[(\ _{4}^{\shortmid }\Psi )^{2}]^{\ast }/4(\
_{4}^{\shortmid }\widehat{\Upsilon });\   \label{gnw78} \\
\ g^{8} &=&-\frac{1}{4\ g^{7}}(\frac{\ \ _{4}^{\shortmid }\Psi ^{\ast }}{\
_{4}^{\shortmid }\widehat{\Upsilon }})^{2}=-\frac{1}{4}(\frac{\ \
_{4}^{\shortmid }\Psi ^{\ast }}{\ _{4}^{\shortmid }\widehat{\Upsilon }}%
)^{2}/\left( g_{[0]}^{7}-\frac{1}{4}\int dE\frac{[(\ _{4}^{\shortmid }\Psi
)^{2}]^{\ast }}{\ _{4}^{\shortmid }\widehat{\Upsilon }}\right) ;  \notag \\
n_{k_{3}} &=&\ _{1}n_{k_{1}}+\ _{2}n_{k_{1}}\int dE\left( \frac{\ \
_{4}^{\shortmid }\Psi ^{\ast }}{\ _{4}^{\shortmid }\widehat{\Upsilon }}%
\right) ^{2}\left\vert g_{[0]}^{7}(x^{k_{1}},y^{3},t,p_{a_{3}})-\int dE[(\
_{4}^{\shortmid }\Psi )^{2}]^{\ast }/4(\ _{4}^{\shortmid }\widehat{\Upsilon }%
)\right\vert ^{5/2};  \notag \\
w_{i_{3}} &=&\partial _{i_{3}}\ (\ _{4}^{\shortmid }\Psi )/(\
_{4}^{\shortmid }\Psi )^{\ast }.  \notag
\end{eqnarray}

Finally, introducing all s-metric and N-connection elements for all shell,
we can construct the quadratic line element (\ref{qeltors}).

\subsection{Proof of Theorem \protect\ref{epsilongeneration}}

\label{assepsilon} We provide some details of geometric computations in
order to show how such nonholonomic parametric deformations can be
constructed in explicit form. There used the formulas (\ref{offdiagpolf}), (%
\ref{noffdiagpolf}), and Definition \ref{defsmalpnd}.

Let us consider the shell $s=1.$ Writing
\begin{equation*}
\ ^{\shortmid }g_{i_{1}}=\zeta _{i_{1}}(1+\varepsilon \chi _{i_{1}})\
^{\shortmid }\mathring{g}_{i_{1}}=e^{\psi (x^{k_{1}})}\approx e^{\psi
_{0}(x^{k_{1}})(1+\varepsilon \ ^{\psi }\chi (x^{k_{1}}))}\approx e^{\psi
_{0}(x^{k_{1}})}(1+\varepsilon \ ^{\psi }\chi ),
\end{equation*}
we obtain $\zeta _{i_{1}}\mathring{g}_{i_{1}}=e^{\psi _{0}(x^{k_{1}})}$ and $%
\chi _{i_{1}}\ ^{\shortmid }\mathring{g}_{i_{1}}=\ ^{\psi }\chi .$

For $s=2,$ when $\ ^{\shortmid }\eta _{\alpha _{2}}=\zeta _{\alpha
_{2}}(1+\varepsilon \chi _{\alpha _{2}})$ and$\ ^{\shortmid }\eta
_{i_{1}}^{a_{2}}=\zeta _{i_{1}}^{a_{2}}(1+\varepsilon \chi
_{i_{1}}^{a_{2}}), $ i.e. for $\ ^{\shortmid }\eta _{3}=\zeta
_{3}(1+\varepsilon \chi _{3})$ and $\ ^{\shortmid }\eta _{4}=\zeta
_{4}(1+\varepsilon \chi _{4}),$ we can express all $\varepsilon $%
--polarizations as functionals on generating data $\zeta _{4}$ and $\chi
_{4}.$ We perform such computations for respective coefficients of a
d-metric and N-connections:%
\begin{eqnarray*}
\ ^{\shortmid }\eta _{3}\ ^{\shortmid }\mathring{g}_{3} &=&-\frac{[(\
^{\shortmid }\eta _{4}\ ^{\shortmid }\mathring{g}_{4})^{\diamond }]^{2}}{%
|\int dy^{3}(\ _{2}^{\shortmid }\widehat{\Upsilon })(\ ^{\shortmid }\eta
_{4}\ ^{\shortmid }\mathring{g}_{4})^{\diamond }|\ (\ ^{\shortmid }\eta
_{4}\ ^{\shortmid }\mathring{g}_{4})}=-4\frac{[(|\ ^{\shortmid }\eta _{4}\
^{\shortmid }\mathring{g}_{4}|^{1/2})^{\diamond }]^{2}}{|\int dy^{3}(\
_{2}^{\shortmid }\widehat{\Upsilon })(\ ^{\shortmid }\eta _{4}\ ^{\shortmid }%
\mathring{g}_{4})^{\diamond }|\ }\simeq \\
\zeta _{3}(1+\varepsilon \chi _{3})\ ^{\shortmid }\mathring{g}_{3} &=&-4%
\frac{[(|\ \zeta _{4}\ ^{\shortmid }\mathring{g}_{4}(1+\varepsilon \chi
_{4})|^{1/2})^{\diamond }]^{2}}{|\int dy^{3}(\ _{2}^{\shortmid }\widehat{%
\Upsilon })[(\ \zeta _{4}\ ^{\shortmid }\mathring{g}_{4})(1+\varepsilon \chi
_{4})]^{\diamond }|}
\end{eqnarray*}%
\begin{equation*}
=-4\frac{[(|\ \zeta _{4}\ ^{\shortmid }\mathring{g}_{4}|^{1/2}\
|1+\varepsilon \chi _{4}\ |^{1/2})^{\diamond }]^{2}}{|\int dy^{3}(\
_{2}^{\shortmid }\widehat{\Upsilon })[(\ \zeta _{4}\ ^{\shortmid }\mathring{g%
}_{4})+\varepsilon \ (\zeta _{4}\chi _{4}\ ^{\shortmid }\mathring{g}%
_{4})]^{\diamond }|}=-4\frac{[(|\ \zeta _{4}\ ^{\shortmid }\mathring{g}%
_{4}|^{1/2}|1+\frac{\varepsilon }{2}\chi _{4}|)^{\diamond }]^{2}}{|\int
dy^{3}(\ _{2}^{\shortmid }\widehat{\Upsilon })\{(\ \zeta _{4}\ ^{\shortmid }%
\mathring{g}_{4})^{\diamond }+\varepsilon \lbrack (\zeta _{4}\chi _{4})\
^{\shortmid }\mathring{g}_{4})]^{\diamond }\}|}
\end{equation*}%
\begin{eqnarray*}
&=&-4\frac{[(|\ \zeta _{4}\ ^{\shortmid }\mathring{g}_{4}|^{1/2}+\frac{%
\varepsilon }{2}\chi _{4}|\ \zeta _{4}\ ^{\shortmid }\mathring{g}%
_{4}|^{1/2})^{\diamond }]^{2}}{|\int dy^{3}[(\ _{2}^{\shortmid }\widehat{%
\Upsilon })(\ \zeta _{4}\ ^{\shortmid }\mathring{g}_{4})^{\diamond
}]+\varepsilon \int dy^{3}\{(\ _{2}^{\shortmid }\widehat{\Upsilon })[(\zeta
_{4}\chi _{4})\ ^{\shortmid }\mathring{g}_{4})]^{\diamond }\}|} \\
&=&-4\frac{[(|\ \zeta _{4}\ ^{\shortmid }\mathring{g}_{4}|^{1/2})^{\diamond
}+\frac{\varepsilon }{2}(\chi _{4}|\ \zeta _{4}\ ^{\shortmid }\mathring{g}%
_{4}|^{1/2})^{\diamond }]^{2}}{|\int dy^{3}[(\ _{2}^{\shortmid }\widehat{%
\Upsilon })(\ \zeta _{4}\ ^{\shortmid }\mathring{g}_{4})^{\diamond
}\}|(1+\varepsilon \frac{\int dy^{3}\{(\ _{2}^{\shortmid }\widehat{\Upsilon }%
)[\ (\zeta _{4}\chi _{4})\ ^{\shortmid }\mathring{g}_{4})]^{\diamond }\}}{%
\int dy^{3}\{(\ _{2}^{\shortmid }\widehat{\Upsilon })(\ \zeta _{4}\
^{\shortmid }\mathring{g}_{4})^{\diamond }\}}\}|}
\end{eqnarray*}%
\begin{equation*}
=-4\frac{[(|\ \zeta _{4}\ ^{\shortmid }\mathring{g}_{4}|^{1/2})^{\diamond }+%
\frac{\varepsilon }{2}(\chi _{4}|\ \zeta _{4}\ ^{\shortmid }\mathring{g}%
_{4}|^{1/2})^{\diamond }]^{2}}{|\int dy^{3}\{(\ _{2}^{\shortmid }\widehat{%
\Upsilon })(\ \zeta _{4}\ ^{\shortmid }\mathring{g}_{4})^{\diamond
}\}|(1+\varepsilon \frac{\int dy^{3}\{(\ _{2}^{\shortmid }\widehat{\Upsilon }%
)[(\zeta _{4}\chi _{4})\ ^{\shortmid }\mathring{g}_{4})]^{\diamond }\}}{\int
dy^{3}[(\ _{2}^{\shortmid }\widehat{\Upsilon })(\ \zeta _{4}\ ^{\shortmid }%
\mathring{g}_{4})^{\diamond }]})}
\end{equation*}%
\begin{eqnarray*}
&=&-4\frac{[(|\ \zeta _{4}\ ^{\shortmid }\mathring{g}_{4}|^{1/2})^{\diamond
}]^{2}[1+\frac{\varepsilon }{2}\frac{(\chi _{4}|\ \zeta _{4}\ ^{\shortmid }%
\mathring{g}_{4}|^{1/2})^{\diamond }}{(|\ \zeta _{4}\ ^{\shortmid }\mathring{%
g}_{4}|^{1/2})^{\diamond }}]^{2}(1-\varepsilon \frac{\int dy^{3}\{(\
_{2}^{\shortmid }\widehat{\Upsilon })[\ (\zeta _{4}\chi _{4})\ ^{\shortmid }%
\mathring{g}_{4})]^{\diamond }\}}{\int dy^{3}[(\ _{2}^{\shortmid }\widehat{%
\Upsilon })(\ \zeta _{4}\ ^{\shortmid }\mathring{g}_{4})^{\diamond }]})}{%
|\int dy^{3}[(\ _{2}^{\shortmid }\widehat{\Upsilon })(\ \zeta _{4}\
^{\shortmid }\mathring{g}_{4})^{\diamond }]|} \\
&=&-4\frac{[(|\ \zeta _{4}\ ^{\shortmid }\mathring{g}_{4}|^{1/2})^{\diamond
}]^{2}}{|\int dy^{3}[(\ _{2}^{\shortmid }\widehat{\Upsilon })(\ \zeta _{4}\
^{\shortmid }\mathring{g}_{4})^{\diamond }]|}\left[ 1+\varepsilon \left(
\frac{(\chi _{4}|\ \zeta _{4}\ ^{\shortmid }\mathring{g}_{4}|^{1/2})^{%
\diamond }}{(|\ \zeta _{4}\ ^{\shortmid }\mathring{g}_{4}|^{1/2})^{\diamond }%
}-\frac{\int dy^{3}\{(\ _{2}^{\shortmid }\widehat{\Upsilon })[(\zeta
_{4}\chi _{4})\ ^{\shortmid }\mathring{g}_{4})]^{\diamond }\}}{\int
dy^{3}[(\ _{2}^{\shortmid }\widehat{\Upsilon })(\ \zeta _{4}\ ^{\shortmid }%
\mathring{g}_{4})^{\diamond }]}\right) \right]
\end{eqnarray*}

From these formulas, we express
\begin{eqnarray*}
&&\zeta _{3}\ ^{\shortmid }\mathring{g}_{3}=-4\frac{[(|\ \zeta _{4}\
^{\shortmid }\mathring{g}_{4}|^{1/2})^{\diamond }]^{2}}{|\int dy^{3}[(\
_{2}^{\shortmid }\widehat{\Upsilon })(\ \zeta _{4}\ ^{\shortmid }\mathring{g}%
_{4})^{\diamond }]|}\mbox{ and } \\
\mbox{ proportional to }\varepsilon &:&\chi _{3}=\frac{(\chi _{4}|\ \zeta
_{4}\ ^{\shortmid }\mathring{g}_{4}|^{1/2})^{\diamond }}{4(|\ \zeta _{4}\
^{\shortmid }\mathring{g}_{4}|^{1/2})^{\diamond }}-\frac{\int dy^{3}\{(\
_{2}^{\shortmid }\widehat{\Upsilon })[(\zeta _{4}\ ^{\shortmid }\mathring{g}%
_{4})\chi _{4}]^{\diamond }\}}{\int dy^{3}[(\ _{2}^{\shortmid }\widehat{%
\Upsilon })(\ \zeta _{4}\ ^{\shortmid }\mathring{g}_{4})^{\diamond }]}.
\end{eqnarray*}%
For the N--connection coefficients, we have%
\begin{eqnarray*}
\ ^{\shortmid }\eta _{i_{1}}^{3}\ ^{\shortmid }\mathring{N}_{i_{1}}^{3} &=&%
\frac{\partial _{i_{1}}\ \int dy^{3}(\ _{2}^{\shortmid }\widehat{\Upsilon }%
)\ (\ ^{\shortmid }\eta _{4}\ ^{\shortmid }\mathring{g}_{4})^{\diamond }}{(\
_{2}^{\shortmid }\widehat{\Upsilon })\ (\ ^{\shortmid }\eta _{4}\
^{\shortmid }\mathring{g}_{4})^{\diamond }}\simeq \\
\zeta _{i_{1}}^{3}(1+\varepsilon \chi _{i_{1}}^{3})\ ^{\shortmid }\mathring{N%
}_{i_{1}}^{3} &=&\frac{\partial _{i_{1}}\ \int dy^{3}(\ _{2}^{\shortmid }%
\widehat{\Upsilon })\ [\zeta _{4}(1+\varepsilon \chi _{4})]^{\diamond }}{(\
_{2}^{\shortmid }\widehat{\Upsilon })\ [\zeta _{4}(1+\varepsilon \chi
_{4})]^{\diamond }}=\frac{\partial _{i_{1}}\ \int dy^{3}(\ _{2}^{\shortmid }%
\widehat{\Upsilon })\ \{(\zeta _{4})^{\diamond }+\varepsilon \lbrack (\zeta
_{4}\chi _{4})]^{\diamond }\}}{(\ _{2}^{\shortmid }\widehat{\Upsilon })\
\{(\zeta _{4})^{\diamond }+\varepsilon \lbrack (\zeta _{4}\chi
_{4})]^{\diamond }\}}
\end{eqnarray*}%
\begin{equation*}
=\frac{\partial _{i_{1}}\ \{\int dy^{3}(\ _{2}^{\shortmid }\widehat{\Upsilon
})(\zeta _{4})^{\diamond }\ [1+\varepsilon \frac{(\zeta _{4}\chi
_{4})^{\diamond }}{(\zeta _{4})^{\diamond }}]\}}{(\ _{2}^{\shortmid }%
\widehat{\Upsilon })\ (\zeta _{4})^{\diamond }[1+\varepsilon \frac{(\zeta
_{4}\chi _{4})^{\diamond }}{(\zeta _{4})^{\diamond }}]}=\frac{\partial
_{i_{1}}\ [\int dy^{3}(\ _{2}^{\shortmid }\widehat{\Upsilon })(\zeta
_{4})^{\diamond }]\ +\varepsilon \partial _{i_{1}}[\int dy^{3}(\
_{2}^{\shortmid }\widehat{\Upsilon })(\zeta _{4}\chi _{4})^{\diamond }]}{(\
_{2}^{\shortmid }\widehat{\Upsilon })\ (\zeta _{4})^{\diamond }}\times
\lbrack 1-\varepsilon \frac{(\zeta _{4}\chi _{4})^{\diamond }}{(\zeta
_{4})^{\diamond }}]
\end{equation*}%
\begin{eqnarray*}
&=&\frac{\{\partial _{i_{1}}\ [\int dy^{3}(\ _{2}^{\shortmid }\widehat{%
\Upsilon })(\zeta _{4})^{\diamond }]\}\{1+\varepsilon \frac{\partial
_{i_{1}}[\int dy^{3}(\ _{2}^{\shortmid }\widehat{\Upsilon })(\zeta _{4}\chi
_{4})^{\diamond }]}{\partial _{i_{1}}\ [\int dy^{3}(\ _{2}^{\shortmid }%
\widehat{\Upsilon })(\zeta _{4})^{\diamond }]}\}}{(\ _{2}^{\shortmid }%
\widehat{\Upsilon })\ (\zeta _{4})^{\diamond }}\{1-\varepsilon \frac{(\zeta
_{4}\chi _{4})^{\diamond }}{(\zeta _{4})^{\diamond }}\} \\
&=&\frac{\partial _{i_{1}}\ [\int dy^{3}(\ _{2}^{\shortmid }\widehat{%
\Upsilon })(\zeta _{4})^{\diamond }]}{(\ _{2}^{\shortmid }\widehat{\Upsilon }%
)\ (\zeta _{4})^{\diamond }}\{1+\varepsilon \left( \frac{\partial
_{i_{1}}[\int dy^{3}(\ _{2}^{\shortmid }\widehat{\Upsilon })(\zeta _{4}\chi
_{4})^{\diamond }]}{\partial _{i_{1}}\ [\int dy^{3}(\ _{2}^{\shortmid }%
\widehat{\Upsilon })(\zeta _{4})^{\diamond }]}-\frac{(\zeta _{4}\chi
_{4})^{\diamond }}{(\zeta _{4})^{\diamond }}\right) \}.
\end{eqnarray*}%
It should be noted that there is not summation on repeating indices because
they are not arraged on the rule "up-low". \ From these formulas, we obtain
\begin{equation*}
\zeta _{i_{1}}^{3}\ ^{\shortmid }\mathring{N}_{i_{1}}^{3}=\frac{\partial
_{i_{1}}[\int dy^{3}(\ _{2}^{\shortmid }\widehat{\Upsilon })\ (\zeta
_{4})^{\diamond }]}{(\ _{2}^{\shortmid }\widehat{\Upsilon })(\zeta
_{4})^{\diamond }}\mbox{ and }\varepsilon :\chi _{i_{1}}^{3}=\frac{\partial
_{i_{1}}[\int dy^{3}(\ _{2}^{\shortmid }\widehat{\Upsilon })(\zeta _{4}\chi
_{4})^{\diamond }]}{\partial _{i_{1}}\ [\int dy^{3}(\ _{2}^{\shortmid }%
\widehat{\Upsilon })(\zeta _{4})^{\diamond }]}-\frac{(\zeta _{4}\chi
_{4})^{\diamond }}{(\zeta _{4})^{\diamond }}.
\end{equation*}%
Finally (for this shell), we compute$\ $%
\begin{eqnarray*}
\ ^{\shortmid }\eta _{k_{1}}^{4}\ ^{\shortmid }\mathring{N}_{k_{1}}^{4} &=&\
_{1}n_{k_{1}}+\ _{2}n_{k_{1}}\int dy^{3}\frac{[(\ ^{\shortmid }\eta _{4}\
^{\shortmid }\mathring{g}_{4})^{\diamond }]^{2}}{|\int dy^{3}(\
_{2}^{\shortmid }\widehat{\Upsilon })(\ ^{\shortmid }\eta _{4}\ ^{\shortmid }%
\mathring{g}_{4})^{\diamond }|\ (\ ^{\shortmid }\eta _{4}\ ^{\shortmid }%
\mathring{g}_{4})^{5/2}} \\
&=&\ _{1}n_{k_{1}}+16\ \ _{2}n_{k_{1}}\int dy^{3}\frac{\left( [(\
^{\shortmid }\eta _{4}\ ^{\shortmid }\mathring{g}_{4})^{-1/4}]^{\diamond
}\right) ^{2}}{|\int dy^{3}(\ _{2}^{\shortmid }\widehat{\Upsilon })(\
^{\shortmid }\eta _{4}\ ^{\shortmid }\mathring{g}_{4})^{\diamond }|\ }\simeq
\end{eqnarray*}%
\begin{eqnarray*}
\zeta _{k_{1}}^{4}(1+\varepsilon \chi _{k_{1}}^{4})\ ^{\shortmid }\mathring{N%
}_{k_{1}}^{4} &=&\ _{1}n_{k_{1}}+16\ _{2}n_{k_{1}}\left[ \int dy^{3}\frac{%
\left( \{[(\ \zeta _{4}\ ^{\shortmid }\mathring{g}_{4})(1+\varepsilon \chi
_{4})]^{-1/4}\}^{\diamond }\right) ^{2}}{|\int dy^{3}(\ _{2}^{\shortmid }%
\widehat{\Upsilon })[(\zeta _{4}(1+\varepsilon \chi _{4})\ ^{\shortmid }%
\mathring{g}_{4})]^{\diamond }|\ }\right] \\
&=&\ _{1}n_{k_{1}}+16\ _{2}n_{k_{1}}\left[ \int dy^{3}\frac{\{[(\ \zeta
_{4}\ ^{\shortmid }\mathring{g}_{4})^{-1/4}(1-\frac{\varepsilon }{4}\chi
_{4})]^{\diamond }\}^{2}}{|\int dy^{3}\{(\ _{2}^{\shortmid }\widehat{%
\Upsilon })(\zeta _{4}\ ^{\shortmid }\mathring{g}_{4})^{\diamond
}+\varepsilon (\ _{2}^{\shortmid }\widehat{\Upsilon })(\zeta _{4}\chi _{4}\
^{\shortmid }\mathring{g}_{4})^{\diamond }\}|\ }\right]
\end{eqnarray*}%
\begin{eqnarray*}
&=&\ _{1}n_{k_{1}}+16\ _{2}n_{k_{1}}\left[ \int dy^{3}\frac{\{[(\ \zeta
_{4}\ ^{\shortmid }\mathring{g}_{4})^{-1/4}-\frac{\varepsilon }{4}(\ \zeta
_{4}\ ^{\shortmid }\mathring{g}_{4})^{-1/4}\chi _{4})]^{\diamond }\}^{2}}{%
|\int dy^{3}(\ _{2}^{\shortmid }\widehat{\Upsilon })(\zeta _{4}\ ^{\shortmid
}\mathring{g}_{4})^{\diamond }|\{1+\varepsilon \frac{\int dy^{3}(\
_{2}^{\shortmid }\widehat{\Upsilon })(\zeta _{4}\chi _{4}\ ^{\shortmid }%
\mathring{g}_{4})^{\diamond }}{\int dy^{3}(\ _{2}^{\shortmid }\widehat{%
\Upsilon })(\zeta _{4}\ ^{\shortmid }\mathring{g}_{4})^{\diamond }}\}}\right]
\\
&=&\ _{1}n_{k_{1}}+16\ _{2}n_{k_{1}}\left[ \int dy^{3}\frac{\{[(\ \zeta
_{4}\ ^{\shortmid }\mathring{g}_{4})^{-1/4}]^{\diamond }-\frac{\varepsilon }{%
4}[(\ \zeta _{4}\ ^{\shortmid }\mathring{g}_{4})^{-1/4}\chi _{4})]^{\diamond
}\}^{2}}{|\int dy^{3}(\ _{2}^{\shortmid }\widehat{\Upsilon })(\zeta _{4}\
^{\shortmid }\mathring{g}_{4})^{\diamond }|\{1+\varepsilon \frac{\int
dy^{3}(\ _{2}^{\shortmid }\widehat{\Upsilon })(\zeta _{4}\chi _{4}\
^{\shortmid }\mathring{g}_{4})^{\diamond }}{\int dy^{3}(\ _{2}^{\shortmid }%
\widehat{\Upsilon })(\zeta _{4}\ ^{\shortmid }\mathring{g}_{4})^{\diamond }}%
\}}\right]
\end{eqnarray*}%
\begin{eqnarray*}
&=&\ _{1}n_{k_{1}}+16\ _{2}n_{k_{1}}\left[ \int dy^{3}\frac{\{[(\ \zeta
_{4}\ ^{\shortmid }\mathring{g}_{4})^{-1/4}]^{\diamond }\left( 1-\frac{%
\varepsilon }{4}\frac{[(\ \zeta _{4}\ ^{\shortmid }\mathring{g}%
_{4})^{-1/4}\chi _{4})]^{\diamond }}{[(\ \zeta _{4}\ ^{\shortmid }\mathring{g%
}_{4})^{-1/4}]^{\diamond }}\right) \}^{2}}{|\int dy^{3}(\ _{2}^{\shortmid }%
\widehat{\Upsilon })(\zeta _{4}\ ^{\shortmid }\mathring{g}_{4})^{\diamond
}|\{1+\varepsilon \frac{\int dy^{3}(\ _{2}^{\shortmid }\widehat{\Upsilon }%
)(\zeta _{4}\chi _{4}\ ^{\shortmid }\mathring{g}_{4})^{\diamond }}{\int
dy^{3}(\ _{2}^{\shortmid }\widehat{\Upsilon })(\zeta _{4}\ ^{\shortmid }%
\mathring{g}_{4})^{\diamond }}\}}\right] \\
&=&\ _{1}n_{k_{1}}+16\ _{2}n_{k_{1}}\left[ \int dy^{3}\frac{\left( [(\ \zeta
_{4}\ ^{\shortmid }\mathring{g}_{4})^{-1/4}]^{\diamond }\right) ^{2}\left( 1-%
\frac{\varepsilon }{2}\frac{[(\ \zeta _{4}\ ^{\shortmid }\mathring{g}%
_{4})^{-1/4}\chi _{4})]^{\diamond }}{[(\ \zeta _{4}\ ^{\shortmid }\mathring{g%
}_{4})^{-1/4}]^{\diamond }}\right) }{|\int dy^{3}(\ _{2}^{\shortmid }%
\widehat{\Upsilon })(\zeta _{4}\ ^{\shortmid }\mathring{g}_{4})^{\diamond
}|\{1+\varepsilon \frac{\int dy^{3}(\ _{2}^{\shortmid }\widehat{\Upsilon }%
)(\zeta _{4}\chi _{4}\ ^{\shortmid }\mathring{g}_{4})^{\diamond }}{\int
dy^{3}(\ _{2}^{\shortmid }\widehat{\Upsilon })(\zeta _{4}\ ^{\shortmid }%
\mathring{g}_{4})^{\diamond }}\}}\right]
\end{eqnarray*}%
\begin{eqnarray*}
&=&\ _{1}n_{k_{1}}+16\ _{2}n_{k_{1}}\{\int dy^{3}\frac{\left( [(\ \zeta
_{4}\ ^{\shortmid }\mathring{g}_{4})^{-1/4}]^{\diamond }\right) ^{2}}{|\int
dy^{3}(\ _{2}^{\shortmid }\widehat{\Upsilon })(\zeta _{4}\ ^{\shortmid }%
\mathring{g}_{4})^{\diamond }|}[1-\varepsilon (\frac{[(\ \zeta _{4}\
^{\shortmid }\mathring{g}_{4})^{-1/4}\chi _{4})]^{\diamond }}{2[(\ \zeta
_{4}\ ^{\shortmid }\mathring{g}_{4})^{-1/4}]^{\diamond }}+\frac{\int
dy^{3}(\ _{2}^{\shortmid }\widehat{\Upsilon })(\zeta _{4}\chi _{4}\
^{\shortmid }\mathring{g}_{4})^{\diamond }}{\int dy^{3}(\ _{2}^{\shortmid }%
\widehat{\Upsilon })(\zeta _{4}\ ^{\shortmid }\mathring{g}_{4})^{\diamond }}%
)]\} \\
&=&\ _{1}n_{k_{1}}+16\ _{2}n_{k_{1}}[\int dy^{3}\frac{\left( [(\ \zeta _{4}\
^{\shortmid }\mathring{g}_{4})^{-1/4}]^{\diamond }\right) ^{2}}{|\int
dy^{3}(\ _{2}^{\shortmid }\widehat{\Upsilon })(\zeta _{4}\ ^{\shortmid }%
\mathring{g}_{4})^{\diamond }|}]- \\
&&\varepsilon 16\ _{2}n_{k_{1}}\int dy^{3}\frac{\left( [(\ \zeta _{4}\
^{\shortmid }\mathring{g}_{4})^{-1/4}]^{\diamond }\right) ^{2}}{|\int
dy^{3}(\ _{2}^{\shortmid }\widehat{\Upsilon })(\zeta _{4}\ ^{\shortmid }%
\mathring{g}_{4})^{\diamond }|}\left( \frac{[(\ \zeta _{4}\ ^{\shortmid }%
\mathring{g}_{4})^{-1/4}\chi _{4})]^{\diamond }}{2[(\ \zeta _{4}\
^{\shortmid }\mathring{g}_{4})^{-1/4}]^{\diamond }}+\frac{\int dy^{3}(\
_{2}^{\shortmid }\widehat{\Upsilon })(\zeta _{4}\chi _{4}\ ^{\shortmid }%
\mathring{g}_{4})^{\diamond }}{\int dy^{3}(\ _{2}^{\shortmid }\widehat{%
\Upsilon })(\zeta _{4}\ ^{\shortmid }\mathring{g}_{4})^{\diamond }}\right)
\end{eqnarray*}%
\begin{eqnarray*}
&=&\{\ _{1}n_{k_{1}}+16\ _{2}n_{k_{1}}[\int dy^{3}\frac{\left( [(\ \zeta
_{4}\ ^{\shortmid }\mathring{g}_{4})^{-1/4}]^{\diamond }\right) ^{2}}{|\int
dy^{3}(\ _{2}^{\shortmid }\widehat{\Upsilon })(\zeta _{4}\ ^{\shortmid }%
\mathring{g}_{4})^{\diamond }|}]\}\times \\
&&\left[ 1-\varepsilon \frac{16\ _{2}n_{k_{1}}\int dy^{3}\frac{\left( [(\
\zeta _{4}\ ^{\shortmid }\mathring{g}_{4})^{-1/4}]^{\diamond }\right) ^{2}}{%
|\int dy^{3}(\ _{2}^{\shortmid }\widehat{\Upsilon })(\zeta _{4}\ ^{\shortmid
}\mathring{g}_{4})^{\diamond }|}\left( \frac{[(\ \zeta _{4}\ ^{\shortmid }%
\mathring{g}_{4})^{-1/4}\chi _{4})]^{\diamond }}{2[(\ \zeta _{4}\
^{\shortmid }\mathring{g}_{4})^{-1/4}]^{\diamond }}+\frac{\int dy^{3}(\
_{2}^{\shortmid }\widehat{\Upsilon })(\zeta _{4}\chi _{4}\ ^{\shortmid }%
\mathring{g}_{4})^{\diamond }}{\int dy^{3}(\ _{2}^{\shortmid }\widehat{%
\Upsilon })(\zeta _{4}\ ^{\shortmid }\mathring{g}_{4})^{\diamond }}\right) }{%
\ _{1}n_{k_{1}}+16\ _{2}n_{k_{1}}[\int dy^{3}\frac{\left( [(\ \zeta _{4}\
^{\shortmid }\mathring{g}_{4})^{-1/4}]^{\diamond }\right) ^{2}}{|\int
dy^{3}(\ _{2}^{\shortmid }\widehat{\Upsilon })(\zeta _{4}\ ^{\shortmid }%
\mathring{g}_{4})^{\diamond }|}]}\right]
\end{eqnarray*}%
\begin{eqnarray*}
\mbox{ where } &&\zeta _{k_{1}}^{4}\ ^{\shortmid }\mathring{N}_{k_{1}}^{4}=\
_{1}n_{k_{1}}+16\ _{2}n_{k_{1}}[\int dy^{3}\{\frac{\left( [(\ \zeta _{4}\
^{\shortmid }\mathring{g}_{4})^{-1/4}]^{\diamond }\right) ^{2}}{|\int
dy^{3}(\ _{2}^{\shortmid }\widehat{\Upsilon })(\zeta _{4}\ ^{\shortmid }%
\mathring{g}_{4})^{\diamond }|}\mbox{ and } \\
\varepsilon &:&\chi _{k_{1}}^{4}=\ -\frac{16\ _{2}n_{k_{1}}\int dy^{3}\frac{%
\left( [(\ \zeta _{4}\ ^{\shortmid }\mathring{g}_{4})^{-1/4}]^{\diamond
}\right) ^{2}}{|\int dy^{3}(\ _{2}^{\shortmid }\widehat{\Upsilon })(\zeta
_{4}\ ^{\shortmid }\mathring{g}_{4})^{\diamond }|}\left( \frac{[(\ \zeta
_{4}\ ^{\shortmid }\mathring{g}_{4})^{-1/4}\chi _{4})]^{\diamond }}{2[(\
\zeta _{4}\ ^{\shortmid }\mathring{g}_{4})^{-1/4}]^{\diamond }}+\frac{\int
dy^{3}(\ _{2}^{\shortmid }\widehat{\Upsilon })[(\zeta _{4}\chi _{4}\
^{\shortmid }\mathring{g}_{4})]^{\diamond }}{\int dy^{3}(\ _{2}^{\shortmid }%
\widehat{\Upsilon })(\zeta _{4}\ ^{\shortmid }\mathring{g}_{4})^{\diamond }}%
\right) }{\ _{1}n_{k_{1}}+16\ _{2}n_{k_{1}}[\int dy^{3}\frac{\left( [(\
\zeta _{4}\ ^{\shortmid }\mathring{g}_{4})^{-1/4}]^{\diamond }\right) ^{2}}{%
|\int dy^{3}(\ _{2}^{\shortmid }\widehat{\Upsilon })(\zeta _{4}\ ^{\shortmid
}\mathring{g}_{4})^{\diamond }|}]}.
\end{eqnarray*}

Computations for shells $s=3$ and $s=4$ are similar but with re-definition
of corresponding symbols for shall coordinates and indices.

\setcounter{equation}{0} \renewcommand{\theequation}
{B.\arabic{equation}} \setcounter{subsection}{0}
\renewcommand{\thesubsection}
{B.\arabic{subsection}}

\section{Relativistic models of Finsler-Lagrange-Hamilton Phase Spaces}

\label{appendixb}In this Appendix, we summarize necessary results on
Finsler-Lagrange-Hamilton geometries modelled on (co) tangent bundles $TV$
and $T^{\ast }V$ on 4-d Lorentz manifolds, see detail with proofs in \cite%
{v18a} and references therein.

\subsection{Generating functions determined by MDR-indicators}

For any MDR of type (\ref{mdrg}), we can construct an effective Hamiltonian
\begin{equation}
H(p):=E=\pm (c^{2}\overrightarrow{\mathbf{p}}^{2}+c^{4}m^{2}-\varpi (E,%
\overrightarrow{\mathbf{p}},m;\ell _{P}))^{1/2}.  \label{hamfp}
\end{equation}%
This Hamiltonian describes relativistic point particles propagating in a
typical co-fiber of a $T^{\ast }V$ over a point $x=\{x^{i}\}\in V.$
Globalizing the constructions for a Lorentz manifold basis, with an
effective phase space endowed with local coordinates $(x^{i},p_{a}),$ we
obtain indicators $\varpi (x^{i},E,\overrightarrow{\mathbf{p}},m;\ell _{P})$
depending both on spacetime and phase space coordinates. Considering general
systems of frames/coordinates and their transforms on total dual bundle $%
T^{\ast }V,$ the Hamiltonian (\ref{hamfp}) can be written in the form $%
H(x,p).$

Similarly to relativistic mechanics, we can define (inverse) Legendre
transforms and the concept of $L$-duality for certain Lagrange and Hamilton
densities. The Legendre transforms, $L\rightarrow H,$ are constructed
\begin{equation}
H(x,p):=p_{a}v^{a}-L(x,v)  \label{legendre}
\end{equation}%
and $v^{a}$ determining solutions of the equations $p_{a}=\partial
L(x,v)/\partial v^{a}.$ In a similar manner, the inverse Legendre
transforms, $H\rightarrow L,$ are constructed
\begin{equation}
L(x,v):=p_{a}v^{a}-H(x,p)  \label{invlegendre}
\end{equation}%
and $p_{a}$ determining solutions of the equations $y^{a}=\partial
H(x,p)/\partial p_{a}.$

Our main assumption is that MGTs with MDRs are described by basic Lorentzian
and non-Riemannian total phase space geometries determined by nonlinear
quadratic line elements
\begin{eqnarray}
ds_{L}^{2} &=&L(x,y),\mbox{ for models on  }TV;  \label{nqe} \\
d\ ^{\shortmid }s_{H}^{2} &=&H(x,p),\mbox{ for models on  }T^{\ast }V.
\label{nqed}
\end{eqnarray}
The functions (\ref{nqe}) and (\ref{nqed}) are corresponding called Lagrange
and Hamilton fundamental (equivalent, generating) functions. For localized
zero indicators in (\ref{mdrg}), $\varpi =0,$ (\ref{nqe}) and (\ref{nqed})
transform correspondingly into linear quadratic elements (\ref{lqe}) and (%
\ref{lqed}) and in (pseudo) Riemannian geometry extended on tangent bundles.

A relativistic 4-d model of Lagrange space $L^{3,1}=(TV,L(x,y))$ is defined
by a fundamental/generating Lagrange function, $TV\ni (x,y)\rightarrow
L(x,y)\in \mathbb{R},$ which is a real valued and differentiable function on
$\widetilde{TV}:=TV/\{0\},$ for $\{0\}$ being the null section of $TV,$ and
continuous on the null section of $\pi :TV\rightarrow V.$ We say that such a
model is regular if the vertical metric (v-metric, Hessian)
\begin{equation}
\widetilde{g}_{ab}(x,y):=\frac{1}{2}\frac{\partial ^{2}L}{\partial
y^{a}\partial y^{b}}  \label{hessls}
\end{equation}%
is non-degenerate, i.e. $\det |\widetilde{g}_{ab}|\neq 0,$ and of constant
signature.

In a similar form, we can introduce a 4-d relativistic model of Hamilton
space $H^{3,1}=(T^{\ast }V,H(x,p))$ which is determined by a
fundamental/generating Hamilton function on a Lorentz manifold $V.$ Such a
real valued function is constructed as $T^{\ast }V\ni (x,p)\rightarrow
H(x,p)\in \mathbb{R}$ subjected to the conditions that it is differentiable
on $\widetilde{T^{\ast }V}:=T^{\ast }V/\{0^{\ast }\},$ for $\{0^{\ast }\}$
being the null section of $T^{\ast }V,$ and continuous on the null section
of $\pi ^{\ast }:\ T^{\ast }V\rightarrow V.$ Such a model is regular if the
co-vertical metric (cv-metric, Hessian),
\begin{equation}
\ \ ^{\shortmid }\widetilde{g}^{ab}(x,p):=\frac{1}{2}\frac{\partial ^{2}H}{%
\partial p_{a}\partial p_{b}}  \label{hesshs}
\end{equation}%
is non-degenerate, i.e. $\det |\ ^{\shortmid }\widetilde{g}^{ab}|\neq 0,$
and of constant signature.

We shall use tilde "\symbol{126}", for instance, for values $\widetilde{g}%
_{ab}$ and $\ ^{\shortmid }\widetilde{g}^{ab},$ in order to emphasize that
certain geometric objects are defined canonically by respective Lagrange and
Hamilton generating functions. In their turn, such fundamental functions may
encode various types of MDRs and LIVs terms etc. For general frame/
coordinate transforms on $TV$ and/or $T^{\ast }V,$ we can express any
"tilde" value in a "non-tilde" form. In such cases, we shall write $%
g_{ab}(x,v),$ for a v-metric, and $\ ^{\shortmid }g^{ab}(x,p),$ for a
cv-metric. Inversely, prescribing any v-metric or cv-metric structure, we
can consider such (co) frame /coordinate systems, when the geometric values
can be transformed into certain canonical ones with "tilde". Here we note
that, in general, a v-metric $g_{ab}$ is different from the inverse of $\
^{\shortmid }g^{ab}$ and from $\ ^{\shortmid }g_{ab}.$

It is important to emphasize that a relativistic 4-d model of Finsler space
is an example of relativistic Lagrange space when a regular $L=F^{2}$ is
defined by a fundamental (generating) Finsler function subjected to certain
additional conditions:

\begin{itemize}
\item the fundamental/generating Finsler function $F$ is a real positive
valued one which is differential on $\widetilde{TV}$ and continuous on the
null section of the projection $\pi :TV\rightarrow V;$

\item $F$ satisfies the homogeneity condition $F(x,\lambda v)=|\lambda |$ $%
F(x,v),$ for a nonzero real value $\lambda ;$ and

\item Hessian (\ref{hessls}) is defined by $F^{2}$ in such a form that in
any point $(x_{(0)},v_{(0)})$ the v-metric is of signature $(+++-).$
\end{itemize}

In a similar form, there are defined relativistic 4-d Cartan spaces $%
C^{3,1}=(V,C(x,p)),$ when $H=C^{2}(x,p)$ is 1-homogeneous on co-fiber
coordinates $p_{a}.$ In principle, we can always introduce on a $TV$ and/or $%
T^{\ast }V$ a subclass of Finsler and/or Cartan variables using respective
classes for frame/coordinate transforms. For simplicity, we shall prefer to
use in this work terms like Lagrange and/or Hamilton geometry, or
Lagrange-Hamilton geometry, considering that constructions with Finsler
and/or Cartan structures as certain particular examples.

\subsection{Canonical N--connections and adapted metrics}

There are possible coordinate-free and N-adapted, or local, coefficient
forms for defining geometric objects in Lagrange-Hamilton geometry. Certain
formulas and results will be written in coefficient forms, with respect to
N--adapted frames, which is important for constructing in explicit form
exact and parametric solutions following the AFDM.

\subsubsection{N-connections for Lagrange-Hamilton spaces}

Any MDR (\ref{mdrg}) defines $L$--dual, i.e. related via Legendre
transforms, canonical relativistic models of Hamilton space $\widetilde{H}%
^{3,1}=(T^{\ast }V,\widetilde{H}(x,p))$ and Lagrange space $\widetilde{L}%
^{3,1}=(TV,\widetilde{L}(x,y)).$

Following standard geometric and variational calculus, we can prove such
results:\ The dynamics of a probing point particle in $L$-dual effective
phase spaces $\widetilde{H}^{3,1}$ and $\widetilde{L}^{3,1}$ is described by
fundamental generating functions $\widetilde{H}$ and $\widetilde{L}$ \
subjected to solve respective via Hamilton-Jacobi equations
\begin{equation*}
\frac{dx^{i}}{d\tau }=\frac{\partial \widetilde{H}}{\partial p_{i}}%
\mbox{
and }\frac{dp_{i}}{d\tau }=-\frac{\partial \widetilde{H}}{\partial x^{i}},
\end{equation*}
\begin{equation*}
\mbox{ or Euler-Lagrange equations, }\frac{d}{d\tau }\frac{\partial
\widetilde{L}}{\partial y^{i}}-\frac{\partial \widetilde{L}}{\partial x^{i}}%
=0.
\end{equation*}%
In their turn, such relativistic effective mechanics equations are
equivalent to the \textbf{nonlinear geodesic (semi-spray) equations}
\begin{equation}
\frac{d^{2}x^{i}}{d\tau ^{2}}+2\widetilde{G}^{i}(x,y)=0,\mbox{ for }%
\widetilde{G}^{i}=\frac{1}{2}\widetilde{g}^{ij}(\frac{\partial ^{2}%
\widetilde{L}}{\partial y^{i}}y^{k}-\frac{\partial \widetilde{L}}{\partial
x^{i}}),  \label{ngeqf}
\end{equation}%
$\,\ $ with $\widetilde{g}^{ij}$ being inverse to $\widetilde{g}_{ij}$ (\ref%
{hessls}).

The equations (\ref{ngeqf}) prove that point like probing particles in a
relativistic effective phase space do not move along usual geodesics as on
Lorentz manifolds but follow some nonlinear geodesic equations determined
canonically by MDRs.

By explicit constructions on open sets covering $V,TV$ and $T^{\ast }V,$ it
was proven that there are canonical N--connections determined by MDRs in $L$%
--dual form when
\begin{equation*}
\ \widetilde{\mathbf{N}}=\left\{ \widetilde{N}_{i}^{a}:=\frac{\partial
\widetilde{G}}{\partial y^{i}}\right\} \mbox{ and }\ ^{\shortmid }\widetilde{%
\mathbf{N}}=\left\{ \ ^{\shortmid }\widetilde{N}_{ij}:=\frac{1}{2}\left[ \{\
\ ^{\shortmid }\widetilde{g}_{ij},\widetilde{H}\}-\frac{\partial ^{2}%
\widetilde{H}}{\partial p_{k}\partial x^{i}}\ ^{\shortmid }\widetilde{g}%
_{jk}-\frac{\partial ^{2}\widetilde{H}}{\partial p_{k}\partial x^{j}}\
^{\shortmid }\widetilde{g}_{ik}\right] \right\}
\end{equation*}%
where $\ \ ^{\shortmid }\widetilde{g}_{ij}$ is inverse to $\ \ ^{\shortmid }%
\widetilde{g}^{ab}$ (\ref{hesshs}).

The canonical N--connections s $\widetilde{\mathbf{N}}$ and $\ ^{\shortmid }%
\widetilde{\mathbf{N}}$ define respective canonical systems of N--adapted
(co) frames
\begin{eqnarray}
\widetilde{\mathbf{e}}_{\alpha } &=&(\widetilde{\mathbf{e}}_{i}=\frac{%
\partial }{\partial x^{i}}-\widetilde{N}_{i}^{a}(x,y)\frac{\partial }{%
\partial y^{a}},e_{b}=\frac{\partial }{\partial y^{b}}),\mbox{ on }TV;
\label{cnddapb} \\
\widetilde{\mathbf{e}}^{\alpha } &=&(\widetilde{e}^{i}=dx^{i},\widetilde{%
\mathbf{e}}^{a}=dy^{a}+\widetilde{N}_{i}^{a}(x,y)dx^{i}),\mbox{ on }%
(TV)^{\ast };\mbox{and \ }  \notag \\
\ ^{\shortmid }\widetilde{\mathbf{e}}_{\alpha } &=&(\ ^{\shortmid }%
\widetilde{\mathbf{e}}_{i}=\frac{\partial }{\partial x^{i}}-\ ^{\shortmid }%
\widetilde{N}_{ia}(x,p)\frac{\partial }{\partial p_{a}},\ ^{\shortmid }e^{b}=%
\frac{\partial }{\partial p_{b}}),\mbox{ on }T^{\ast }V;  \label{ccnadap} \\
\ \ ^{\shortmid }\widetilde{\mathbf{e}}^{\alpha } &=&(\ ^{\shortmid
}e^{i}=dx^{i},\ ^{\shortmid }\mathbf{e}_{a}=dp_{a}+\ ^{\shortmid }\widetilde{%
N}_{ia}(x,p)dx^{i})\mbox{ on }(T^{\ast }V)^{\ast }.  \notag
\end{eqnarray}

We conclude that the nonholonomic structure of a Lorentz manifold and
respective (co) tangent bundles nonholonomically deformed by MDR (\ref{mdrg}%
) can be described in equivalent forms using canonical data $(\widetilde{L}%
,\ \widetilde{\mathbf{N}};\widetilde{\mathbf{e}}_{\alpha },\widetilde{%
\mathbf{e}}^{\alpha })$ and/or $(\widetilde{H},\ ^{\shortmid }\widetilde{%
\mathbf{N}};\ ^{\shortmid }\widetilde{\mathbf{e}}_{\alpha },\ ^{\shortmid }%
\widetilde{\mathbf{e}}^{\alpha }).$ For general frame and coordinate
transforms, considering a general N-splitting without effective
Lagrangians/Hamiltonians, the nonholonomic geometry is described in terms of
geometric data $(\mathbf{N};\mathbf{e}_{\alpha },\mathbf{e}^{\alpha })$
and/or $(\ ^{\shortmid }\mathbf{N};\ ^{\shortmid }\mathbf{e}_{\alpha },\
^{\shortmid }\mathbf{e}^{\alpha }).$

Vector fields on \ $T\mathbf{V}$ and $T^{\ast }\mathbf{V}$ are called
d--vectors if they are written in a form adapted to a prescribed
N--connection structure. For instance, we consider tilde and non-tilde
decompositions,
\begin{eqnarray}
\mathbf{X} &=&\widetilde{\mathbf{X}}^{\alpha }\widetilde{\mathbf{e}}_{\alpha
}=\widetilde{\mathbf{X}}^{i}\widetilde{\mathbf{e}}_{i}+X^{b}e_{b}=\mathbf{X}%
^{\alpha }\mathbf{e}_{\alpha }=\mathbf{X}^{i}\mathbf{e}_{i}+X^{b}e_{b}\in T%
\mathbf{TV},  \label{candvector} \\
\ ^{\shortmid }\mathbf{X} &=&\ ^{\shortmid }\widetilde{\mathbf{X}}^{\alpha }%
\widetilde{\mathbf{e}}_{\alpha }=\ ^{\shortmid }\widetilde{\mathbf{X}}^{i}\
^{\shortmid }\widetilde{\mathbf{e}}_{i}+\ ^{\shortmid }X_{b}\ ^{\shortmid
}e^{b}=\ ^{\shortmid }\mathbf{X}^{\alpha }\ ^{\shortmid }\mathbf{e}_{\alpha
}=\ ^{\shortmid }\mathbf{X}^{i}\ ^{\shortmid }\mathbf{e}_{i}+\ ^{\shortmid
}X_{b}\ ^{\shortmid }e^{b}\in T\mathbf{T}^{\ast }\mathbf{V.}  \notag
\end{eqnarray}%
In brief, we can writte conventional h-v and/or h-cv decompositions, $%
\mathbf{X}^{\alpha }=\widetilde{\mathbf{X}}^{\alpha }=(\widetilde{\mathbf{X}}%
^{i},X^{b})=(\mathbf{X}^{i},X^{b}),\ ^{\shortmid }\mathbf{X}^{\alpha }=\
^{\shortmid }\widetilde{\mathbf{X}}^{\alpha }=(\ ^{\shortmid }\widetilde{%
\mathbf{X}}^{i},\ ^{\shortmid }X_{b})=(\ ^{\shortmid }\mathbf{X}^{i},\
^{\shortmid }X_{b}).$ It is possible to consider $\mathbf{X}$ and $\
^{\shortmid }\mathbf{X}$ as 1-forms,
\begin{eqnarray*}
\mathbf{X} &=&\widetilde{\mathbf{X}}_{\alpha }\ \mathbf{e}^{\alpha }=X_{i}\
e^{i}+\widetilde{\mathbf{X}}^{a}\widetilde{\mathbf{e}}_{a}=\widetilde{%
\mathbf{X}}_{\alpha }\mathbf{e}^{\alpha }=X_{i}e^{i}+\mathbf{X}^{a}\mathbf{e}%
_{a}\ \in T^{\ast }\mathbf{TV} \\
\ ^{\shortmid }\mathbf{X} &=&\ ^{\shortmid }\widetilde{\mathbf{X}}_{\alpha
}\ ^{\shortmid }\mathbf{e}^{\alpha }=\ ^{\shortmid }X_{i}\ ^{\shortmid
}e^{i}+\ ^{\shortmid }\widetilde{\mathbf{X}}^{a}\ ^{\shortmid }\widetilde{%
\mathbf{e}}_{a}=\ ^{\shortmid }\widetilde{\mathbf{X}}_{\alpha }\ ^{\shortmid
}\mathbf{e}^{\alpha }=\ ^{\shortmid }X_{i}\ ^{\shortmid }e^{i}+\ ^{\shortmid
}\mathbf{X}^{a}\ ^{\shortmid }\mathbf{e}_{a}\ \in T^{\ast }\mathbf{T}^{\ast }%
\mathbf{V,}
\end{eqnarray*}%
or, in brief, $\mathbf{X}_{\alpha }=\widetilde{\mathbf{X}}_{\alpha }=(X_{i},%
\widetilde{\mathbf{X}}^{a})=(X_{i},\mathbf{X}^{a}),\ ^{\shortmid }\mathbf{%
X_{\alpha }=}\ ^{\shortmid }\widetilde{\mathbf{X}}_{\alpha }=(\ ^{\shortmid
}X_{i},\ ^{\shortmid }\widetilde{\mathbf{X}}^{a})=(\ ^{\shortmid }X_{i},\
^{\shortmid }\mathbf{X}^{a}).$

Using tensor products of N-adapted (co) frames, we can parameterize in
various N-adapted forms arbitrary tensors fields, called as d-tensors
(similarly, for d-connections, d-tensors etc.).

\subsubsection{Canonical d-metrics and d-connections for Lagrange-Hamilton
spaces}

There are canonical d-metric structures $\widetilde{\mathbf{g}}$ and $\
^{\shortmid }\widetilde{\mathbf{g}}$ completely determined by a MDR (\ref%
{mdrg}) and respective data $(\widetilde{L},\ \ \widetilde{\mathbf{N}};%
\widetilde{\mathbf{e}}_{\alpha },\widetilde{\mathbf{e}}^{\alpha };\widetilde{%
g}_{jk},\widetilde{g}^{jk})$ and $(\widetilde{H},\ ^{\shortmid }\widetilde{%
\mathbf{N}};\ ^{\shortmid }\widetilde{\mathbf{e}}_{\alpha },\ ^{\shortmid }%
\widetilde{\mathbf{e}}^{\alpha };\ \ ^{\shortmid }\widetilde{g}^{ab},\ \
^{\shortmid }\widetilde{g}_{ab}),$
\begin{eqnarray}
\widetilde{\mathbf{g}} &=&\widetilde{\mathbf{g}}_{\alpha \beta }(x,y)%
\widetilde{\mathbf{e}}^{\alpha }\mathbf{\otimes }\widetilde{\mathbf{e}}%
^{\beta }=\widetilde{g}_{ij}(x,y)e^{i}\otimes e^{j}+\widetilde{g}_{ab}(x,y)%
\widetilde{\mathbf{e}}^{a}\otimes \widetilde{\mathbf{e}}^{a}\mbox{
and/or }  \label{cdms} \\
\ ^{\shortmid }\widetilde{\mathbf{g}} &=&\ ^{\shortmid }\widetilde{\mathbf{g}%
}_{\alpha \beta }(x,p)\ ^{\shortmid }\widetilde{\mathbf{e}}^{\alpha }\mathbf{%
\otimes \ ^{\shortmid }}\widetilde{\mathbf{e}}^{\beta }=\ \ ^{\shortmid }%
\widetilde{g}_{ij}(x,p)e^{i}\otimes e^{j}+\ ^{\shortmid }\widetilde{g}%
^{ab}(x,p)\ ^{\shortmid }\widetilde{\mathbf{e}}_{a}\otimes \ ^{\shortmid }%
\widetilde{\mathbf{e}}_{b}.  \label{cdmds}
\end{eqnarray}

By straightforward N-adapted calculus with $\widetilde{\mathbf{e}}_{\alpha
}=(\widetilde{\mathbf{e}}_{i},e_{b})$ (\ref{cnddapb}) and$\ ^{\shortmid }%
\widetilde{\mathbf{e}}_{\alpha }=(\ ^{\shortmid }\widetilde{\mathbf{e}}%
_{i},\ ^{\shortmid }e^{b})$ (\ref{ccnadap}) and (\ref{neijtc}), we prove
nontrivial indicators for MDRs and respective canonical N-connection
structures induce canonical nonholonomic frame structures on $\mathbf{TV}$
and/or $\mathbf{T}^{\ast }\mathbf{V.}$ Such nonholonomic frame bases are
characterized by corresponding anholonomy relations
\begin{equation*}
\lbrack \widetilde{\mathbf{e}}_{\alpha },\widetilde{\mathbf{e}}_{\beta }]=%
\widetilde{\mathbf{e}}_{\alpha }\widetilde{\mathbf{e}}_{\beta }-\widetilde{%
\mathbf{e}}_{\beta }\widetilde{\mathbf{e}}_{\alpha }=\widetilde{W}_{\alpha
\beta }^{\gamma }\widetilde{\mathbf{e}}_{\gamma },
\end{equation*}%
with (antisymmetric) anholonomy coefficients $\widetilde{W}%
_{ia}^{b}=\partial _{a}\widetilde{N}_{i}^{b}$ and $\widetilde{W}_{ji}^{a}=%
\widetilde{\Omega }_{ij}^{a},$ and
\begin{equation*}
\lbrack \ ^{\shortmid }\widetilde{\mathbf{e}}_{\alpha },\ ^{\shortmid }%
\widetilde{\mathbf{e}}_{\beta }]=\ ^{\shortmid }\widetilde{\mathbf{e}}%
_{\alpha }\ ^{\shortmid }\widetilde{\mathbf{e}}_{\beta }-\ ^{\shortmid }%
\widetilde{\mathbf{e}}_{\beta }\ ^{\shortmid }\widetilde{\mathbf{e}}_{\alpha
}=\ ^{\shortmid }\widetilde{W}_{\alpha \beta }^{\gamma }\ ^{\shortmid }%
\widetilde{\mathbf{e}}_{\gamma },
\end{equation*}%
with anholonomy coefficients $\ ^{\shortmid }\widetilde{W}_{ib}^{a}=\partial
\ ^{\shortmid }\widetilde{N}_{ib}/\partial p_{a}$ and $\ ^{\shortmid }%
\widetilde{W}_{jia}=\ \mathbf{\ ^{\shortmid }}\widetilde{\Omega }_{ija}.$
Explicit definitions and formulas for the Neijenhuis tensors $\widetilde{%
\Omega }_{ij}^{a}$ and $\ \mathbf{\ ^{\shortmid }}\widetilde{\Omega }_{ija}$
can be found, for instance, in \cite{v18a}, see also (\ref{neijtc}).

There are so-called Cartan-Lagrange and Cartan-Hamilton d-connections
induced directly by an indicator of MDR and determined respectively by
coefficients of Lagrange and Hamilton d-metrics (\ref{cdms}) and (\ref{cdmds}%
), when all coefficients are generated by "tilde" objects with
identifications of d-metric coefficients with corresponding base and (co)
fiber indices:
\begin{eqnarray}
\mbox{ on }T\mathbf{TV},\ \widetilde{\mathbf{D}} &=&\{\widetilde{\mathbf{%
\Gamma }}_{\ \alpha \beta }^{\gamma }=(\widetilde{L}_{jk}^{i},\widetilde{L}%
_{bk}^{a},\widetilde{C}_{jc}^{i},\widetilde{C}_{bc}^{a})\},\mbox{ for }%
\mathbf{[}\widetilde{\mathbf{g}}_{\alpha \beta }=(\widetilde{g}_{jr},%
\widetilde{g}_{ab}),\widetilde{\mathbf{N}}_{i}^{a}=\widetilde{N}_{i}^{a}],
\notag \\
\widetilde{L}_{jk}^{i} &=&\frac{1}{2}\widetilde{g}^{ir}\left( \widetilde{%
\mathbf{e}}_{k}\widetilde{g}_{jr}+\widetilde{\mathbf{e}}_{j}\widetilde{g}%
_{kr}-\widetilde{\mathbf{e}}_{r}\widetilde{g}_{jk}\right) ,\ \widetilde{L}%
_{bk}^{a}\mbox{ as }\widetilde{L}_{jk}^{i},  \notag \\
\ \widetilde{C}_{bc}^{a} &=&\frac{1}{2}\widetilde{g}^{ad}\left( e_{c}%
\widetilde{g}_{bd}+e_{b}\widetilde{g}_{cd}-e_{d}\widetilde{g}_{bc}\right) %
\mbox{ being similar to }\widetilde{C}_{jc}^{i};  \label{carlc}
\end{eqnarray}%
\begin{eqnarray}
\mbox{ and, on }T\mathbf{T}^{\ast }\mathbf{V},\ \ ^{\shortmid }\widetilde{%
\mathbf{D}} &=&\{\ ^{\shortmid }\widetilde{\mathbf{\Gamma }}_{\ \alpha \beta
}^{\gamma }=(\ ^{\shortmid }\widetilde{L}_{jk}^{i},\ ^{\shortmid }\widetilde{%
L}_{a\ k}^{\ b},\ ^{\shortmid }\widetilde{C}_{\ j}^{i\ c},\ ^{\shortmid }%
\widetilde{C}_{\ j}^{i\ c})\},\mbox{ for }\mathbf{[}\ ^{\shortmid }%
\widetilde{\mathbf{g}}_{\alpha \beta }=(\ ^{\shortmid }\widetilde{g}_{jr},\
^{\shortmid }\widetilde{g}^{ab}),\ ^{\shortmid }\widetilde{\mathbf{N}}%
_{ai}=\ ^{\shortmid }\widetilde{N}_{ai}],  \notag \\
\ ^{\shortmid }\widetilde{L}_{jk}^{i} &=&\frac{1}{2}\ ^{\shortmid }%
\widetilde{g}^{ir}(\ ^{\shortmid }\widetilde{\mathbf{e}}_{k}\ ^{\shortmid }%
\widetilde{g}_{jr}+\ ^{\shortmid }\widetilde{\mathbf{e}}_{j}\ ^{\shortmid }%
\widetilde{g}_{kr}-\ ^{\shortmid }\widetilde{\mathbf{e}}_{r}\ ^{\shortmid }%
\widetilde{g}_{jk}),\ \mbox{ with similar }\ ^{\shortmid }\widetilde{L}_{a\
k}^{\ b},  \notag \\
\ \ \ ^{\shortmid }\widetilde{C}_{\ a}^{b\ c} &=&\frac{1}{2}\ ^{\shortmid }%
\widetilde{g}_{ad}(\ ^{\shortmid }e^{c}\ ^{\shortmid }\widetilde{g}^{bd}+\
^{\shortmid }e^{b}\ ^{\shortmid }\widetilde{g}^{cd}-\ ^{\shortmid }e^{d}\
^{\shortmid }\widetilde{g}^{bc})\mbox{ being similar to }^{\shortmid }%
\widetilde{C}_{\ j}^{i\ c}.  \label{carhc}
\end{eqnarray}%
These d-connections are similar respectively to the canonical d-connections (%
\ref{canlc}) and (\ref{canhc}) from Corollary \ref{acoroldand} but possess
an important property that they are also canonical almost symplectic
connections. It is difficult to find rich classes of exact physically
important solutions working directly with $\widetilde{\mathbf{D}}$ (\ref%
{carlc}) \ or $\ ^{\shortmid }\widetilde{\mathbf{D}}$ (\ref{carhc}).

\end{document}